\documentclass[11pt]{article}
\usepackage[a4paper,margin=1in]{geometry}
\usepackage{amsmath,amssymb,amsthm}
\usepackage{mathtools}
\usepackage{bm}
\usepackage{xcolor}
\usepackage{booktabs}
\usepackage{graphicx}
\usepackage{enumitem}
\usepackage{microtype}
\usepackage{setspace}
\usepackage[authoryear,round]{natbib}
\usepackage{hyperref}
\hypersetup{
  colorlinks=true,
  linkcolor=blue,
  citecolor=blue,
  urlcolor=blue
}
\newtheorem{theorem}{Theorem}
\newtheorem{proposition}{Proposition}
\newtheorem{lemma}{Lemma}
\newtheorem{corollary}{Corollary}
\theoremstyle{definition}
\newtheorem{assumption}{Assumption}
\newtheorem{remark}{Remark}
\title{Robust Global Testing for Mean Functions in Sparse and Dense Functional and Longitudinal Data}
\author{
Behdad Mostafaiy$^{1}$\thanks{Corresponding author. Email: \texttt{behdad.mostafaiy@gmail.com}}
\and
Mustafa Abduljabbar Dawood$^{2}$\thanks{Email: \texttt{mostafa.abduljabar1201a@coadec.uobaghdad.edu.iq}}
}
\date{}

\begin{document}
\maketitle

\begin{center}
$^{1}$Department of Statistics and Applications, University of Mohaghegh Ardabili, Iran\\
$^{2}$Department of Statistics, College of Administration and Economics, University of Baghdad, Iraq
\end{center}
\begin{abstract}
We develop a robust quadratic score test for a specified mean function in functional and longitudinal data. The construction combines a convex loss with subject level scores, retaining dependence among repeated measurements. An explicit score centering condition ensures that the target remains the population mean. We establish the null distribution and Sobolev separation rates across sparse and dense sampling, identifying a transition
from a nonparametric testing rate to the root $n$ boundary. A Gaussian multiplier bootstrap remains valid across these regimes, including settings with non-Gaussian quadratic limits. For implementation, we use a finite spline approximation and estimate the curvature through symmetric score increments. Under the stated conditions, the feasible procedure preserves null calibration and attains a consistency separation rate matching the lower bound over model classes containing a Gaussian submodel. Simulations illustrate the performance of squared, Huber, and absolute losses under regular errors and contamination, and an analysis of longitudinal CD4 counts demonstrates the method.
\end{abstract}
\noindent\textbf{Keywords:} functional and longitudinal data; global hypothesis testing; robust $M$-estimation; sparse and dense sampling.

\section{Introduction}
Functional and longitudinal studies often ask whether a population mean curve agrees with a specified function. Let $g$ denote the mean of a stochastic process observed at subject specific, irregular time points. We consider the global hypothesis $H_0:g=g_0$. When each subject contributes only a few measurements, the test must account for both dependence within subjects and measurement variability. Heavy tailed errors and atypical responses
can further affect procedures based on squared loss. Estimation of functional means across sparse and dense sampling has been studied extensively; see \citet{yao2005}, \citet{lihsing2010}, \citet{caiyuan2011}, \citet{zhangwang2016}, \citet{xiao2020}, \citet{kalogridisvanaelst2023}, \citet{bergeretal2025}, and \citet{dawoodmostafaiy2025}. These contributions cover smoothing methods, convergence rates, and robust estimation.
Global testing, however, requires separate calibration and power analysis.

Existing tests approach this problem through integrated squared deviations, supremum statistics, or likelihood ratio type comparisons. Integrated statistics accumulate evidence across the domain, whereas supremum statistics emphasize the largest standardized departure. In nonparametric regression, generalized likelihood ratio methods can have Wilks type limits and attain optimal testing rates; see \citet{fan2001} and \citet{shangcheng2013}. For longitudinal and functional data, the independent sampling unit is the subject, so the covariance structure of repeated measurements must also enter the analysis.

\citet{staicu2014plrt} developed a pseudo likelihood ratio test that estimates and pre-whitens the within subject covariance, and \citet{staicu2015} studied significance tests under complex functional dependence. Supremum procedures include the $F_{\max}$ construction of \citet{zhangetal2019}. More recently, \citet{louwu2026} developed simultaneous inference for mean curves across sparse and increasingly dense sampling, using subject weighted local linear smoothing, jackknife bias correction, and Gaussian multiplier calibration. Their statistic is a studentized supremum. Here we study how to construct and calibrate a robust quadratic score test when the sampling density changes both the covariance scale and the form of the null distribution.

The paper makes three connected contributions. First, we construct a subject level quadratic score test from a convex loss, with an explicit score centering condition that keeps the population mean as its target. Second, we establish its null behavior and Sobolev separation rates across sparse, transition, and dense sampling, and prove validity of a subject multiplier bootstrap without assuming a common limiting null law. Third, we justify a feasible spline implementation by controlling projection and curvature estimation errors on the testing scale. Under the additional assumptions needed for power under alternatives, the feasible test attains a consistency separation rate matching the lower bound over model classes containing the specified Gaussian submodel.

The construction starts from $g=g_0+h$ and penalizes the roughness of $h$. The penalty derivative then vanishes under $H_0:h=0$, giving a score statistic evaluated directly at the null. Averaging scores within each subject retains both dependence among repeated observations and the diagonal measurement contribution, which remains first order under sparse sampling. We also study an objective improvement statistic. Its null equivalence to the score statistic follows from a functional Bahadur representation, with separate remainder arguments for Lipschitz scores and the discontinuous score of absolute loss.

The testing rate reflects the balance between measurement variability and variation across trajectories. Write $m_H$ for the harmonic mean of the cluster sizes and $r$ for the Sobolev smoothness order. The consistency separation rate is
\[
\rho_n^*=(nm_H)^{-2r/(4r+1)}\vee n^{-1/2}.
\]
The two terms balance at $m_H\asymp n^{1/(4r)}$. Below this scale, the rate agrees with regression testing based on an effective sample size $nm_H$. Above it, additional measurements reveal each trajectory more accurately, but the number of independent subjects remains $n$. A Gaussian random intercept plus measurement error submodel supplies the matching lower bound and explains the root $n$ barrier.

The null distribution also changes with sampling density. Under sparse sampling, the normalized covariance spectrum is diffuse and the centered quadratic statistic has a Gaussian limit. With dense sampling, nonnegligible trajectory eigenvalues may persist, producing a weighted quadratic limit; transition sequences may contain both components. Assigning one Gaussian multiplier to each subject preserves the covariance structure needed for calibration across these regimes. The proof compares the full score statistic with a Gaussian quadratic form and then controls the effect of estimating its covariance.

For computation, we use a B-spline Galerkin approximation and estimate the curvature matrix through symmetric score increments. This accommodates smooth scores and absolute loss within the same implementation. Explicit conditions on spline dimension and curvature estimation ensure asymptotic null calibration. A shifted secant condition provides the additional control needed to establish power for the feasible statistic under alternatives.

The numerical study compares the quadratic score test with supremum and pseudo likelihood ratio procedures under regular errors and contamination. Diffuse, localized, and oscillatory alternatives show how performance depends on the shape of the departure as well as the loss. An analysis of longitudinal CD4 counts illustrates the procedure with irregular repeated measurements.

Section~\ref{sec:model} introduces the model, robust target, and Sobolev geometry. Sections~\ref{sec:criterion} and~\ref{sec:test-statistics} define the criterion, subject scores, and test statistics. Sections~\ref{sec:asymptotic-theory}--\ref{sec:finite-basis-theory} develop the null theory, power analysis, sparse to dense transition, bootstrap validity, and feasible spline approximation. Section~\ref{sec:simulation} reports the simulations, Section~\ref{sec:cd4-application} presents the CD4 application, and Section~\ref{sec:discussion} concludes.

\section{Observation model, robust target, and Sobolev geometry}
\label{sec:model}

Let \(X_1,\ldots,X_n\) be independent copies of a process \(X\)
on \([0,1]\), with mean \(g(t)=E\{X(t)\}\). For subject \(i\),
we observe
\[
U_{ij}=g(T_{ij})+Z_i(T_{ij})+\varepsilon_{ij},
\qquad j=1,\ldots,m_i,
\]
where \(Z_i(t)=X_i(t)-g(t)\) is the centered subject specific
process and \(\varepsilon_{ij}\) is measurement error.
The blocks
\(\{(Z_i,T_{i1},\ldots,T_{im_i},
\varepsilon_{i1},\ldots,\varepsilon_{im_i})\}_{i=1}^n\)
are independent across subjects. Measurements within a subject
may be dependent, since they share the trajectory \(Z_i\).
This sampling structure is standard in sparse longitudinal and
functional data analysis; see \citet{yao2005} and
\citet{kalogridisvanaelst2023}.

Conditional on \(m_i\), the observation times are independent
and identically distributed, independently of \(Z_i\) and the
measurement errors, with density \(\pi\) satisfying
\[
0<c_\pi\leq\pi(t)\leq C_\pi<\infty,
\qquad t\in[0,1].
\]
The conditional residual law is assumed common across subjects and measurement indices. When cluster sizes are random, the stated marginal conditions also hold conditional on those sizes.
Write \(e_{ij}=U_{ij}-g(T_{ij})
=Z_i(T_{ij})+\varepsilon_{ij}\).
Let \(\rho:\mathbb R\to[0,\infty)\) be convex, and choose a score
\(\psi(u)\in\partial\rho(u)\); when \(\rho\) is differentiable,
\(\psi=\rho'\).

\begin{assumption}\label{ass:mean-consistency}
For every \(t\in[0,1]\), the following conditional expectations
exist and satisfy
\[
\begin{aligned}
E(e_{ij}\mid T_{ij}=t)&=0,\\
E\{\psi(e_{ij})\mid T_{ij}=t\}&=0.
\end{aligned}
\]
\end{assumption}

The first condition centers the observations at the mean, while
the second makes that mean a solution of the population score
equation. Both hold under conditional symmetry about zero when
\(\psi\) is odd and the expectations exist. Without score
centering, the criterion generally targets an \(M\)-location
that differs from the mean; see \citet{huberronchetti2009}.

Define \(\Psi_t(a)=E\{\psi(e_{ij}-a)\mid T_{ij}=t\}\).
The local curvature \(I(t)\) is determined by
\[
\Psi_t(a)=-I(t)a+o(|a|),
\qquad a\to0.
\]
Assumption~\ref{ass:loss-primitive} specifies the uniform remainder
and score increment bounds used below. For differentiable scores,
\(I(t)=E\{\psi'(e_{ij})\mid T_{ij}=t\}\), provided differentiation
can be interchanged with conditional expectation.

For an integer \(r\geq1\), we work in the Sobolev space
\[
W^{r,2}[0,1]
=
\left\{
f:f^{(0)},\ldots,f^{(r-1)}
\text{ are absolutely continuous},
\ f^{(r)}\in L^2[0,1]
\right\},
\]
using the usual absolutely continuous representatives and weak
derivatives; see \citet{adamsfournier2003}.
The roughness penalty is defined by
\[
J(f,h)=\int_0^1 f^{(r)}(t)h^{(r)}(t)\,dt.
\]
The curvature and design density determine the weighted inner product
\[
V(f,h)=\int_0^1 I(t)f(t)h(t)\pi(t)\,dt.
\]
Under the assumed upper and positive lower bounds on \(I\) and
\(\pi\), \(V(f,f)^{1/2}\) is equivalent to the \(L^2\) norm.
For \(\lambda>0\), set
\[
\langle f,h\rangle_\lambda=V(f,h)+\lambda J(f,h),
\]
with induced norm
\(\|f\|_\lambda=\langle f,f\rangle_\lambda^{1/2}\).
For each fixed \(\lambda>0\), this norm is equivalent to the
Sobolev norm. Point evaluation is therefore continuous, and
\(\mathcal H_\lambda
=(W^{r,2}[0,1],\langle\cdot,\cdot\rangle_\lambda)\)
is a reproducing kernel Hilbert space; see \citet{wahba1990}
and \citet{gu2013}. Denote its kernel by \(K_\lambda(s,t)\).

\begin{proposition}\label{prop:eigensystem}
Suppose \(w=I\pi\in C^{2r-1}[0,1]\) and
\(0<c_w\leq w(t)\leq C_w<\infty\) on \([0,1]\).
There exist functions
\(\{\phi_\nu\}_{\nu\geq1}\subset W^{r,2}[0,1]\)
and nonnegative eigenvalues \(\{\gamma_\nu\}_{\nu\geq1}\),
ordered nondecreasingly, such that
\begin{align}
V(\phi_\mu,\phi_\nu)&=\delta_{\mu\nu},
\label{eq:eigensystem-V}\\
J(\phi_\mu,\phi_\nu)&=\gamma_\nu\delta_{\mu\nu}.
\label{eq:eigensystem-J}
\end{align}
The null eigenspace is \(\mathcal P_{r-1}\), with
\(\gamma_1=\cdots=\gamma_r=0\), and the positive eigenvalues
satisfy \(\gamma_\nu\asymp\nu^{2r}\) as \(\nu\to\infty\).
The family \(\{\phi_\nu\}_{\nu\geq1}\) is complete in
\(L^2([0,1],w(t)\,dt)\). For every \(\lambda>0\), the functions
\(e_{\nu,\lambda}=(1+\lambda\gamma_\nu)^{-1/2}\phi_\nu\)
form an orthonormal basis of \(\mathcal H_\lambda\), and
\begin{equation}
K_\lambda(s,t)
=
\sum_{\nu\geq1}
\frac{\phi_\nu(s)\phi_\nu(t)}{1+\lambda\gamma_\nu}.
\label{eq:kernel-spectral-expansion}
\end{equation}
\end{proposition}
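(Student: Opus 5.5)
The plan is to reduce the simultaneous diagonalization to a compact self-adjoint operator problem, and then read off the eigenvalue asymptotics from a Sturm--Liouville (Birkhoff-type) analysis of the associated boundary value problem.

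\textbf{Step 1: the operator and its eigenpairs.} Work in $\mathcal H_1=(W^{r,2}[0,1],\langle\cdot,\cdot\rangle_1)$, where $\langle f,h\rangle_1=V(f,h)+J(f,h)$ is equivalent to the Sobolev inner product. This equivalence holds because $w$ is bounded above and below, and $J(f,f)+\|f\|_{L^2}^2$ controls the full Sobolev norm. By the Riesz representation theorem, define a bounded operator $A$ on $\mathcal H_1$ through $\langle Af,h\rangle_1=V(f,h)$. Then $A$ is self-adjoint, nonnegative and injective, since $V(f,f)=0$ forces $f=0$. It is also compact, because $V$ is continuous on $L^2$ and the embedding $W^{r,2}\hookrightarrow L^2$ is compact by Rellich. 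The spectral theorem then gives an $\mathcal H_1$-orthonormal eigenbasis $\{u_\nu\}$ with $Au_\nu=\kappa_\nu u_\nu$, $\kappa_\nu\in(0,1]$, $\kappa_\nu\downarrow0$. Set $\phi_\nu=\kappa_\nu^{-1/2}u_\nu$ and $\gamma_\nu=\kappa_\nu^{-1}-1\ge0$. Then $V(\phi_\mu,\phi_\nu)=\delta_{\mu\nu}$, and $J=\langle\cdot,\cdot\rangle_1-V$ gives \eqref{eq:eigensystem-J}.

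\textbf{Step 2: null space, completeness and the kernel.} The eigenvalue $\gamma_\nu=0$ occurs exactly when $J(\phi_\nu,\phi_\nu)=0$, that is, $\phi_\nu^{(r)}=0$. So the null eigenspace is $\mathcal P_{r-1}$, which has dimension $r$. Completeness in $L^2(w\,dt)$ follows in two moves. First, $\{\phi_\nu\}$ spans a dense subset of $W^{r,2}$. Second, $W^{r,2}$ is dense in $L^2$, and the $V$-norm is equivalent to the $L^2(w)$-norm. For fixed $\lambda$, we have $\langle\phi_\mu,\phi_\nu\rangle_\lambda=(1+\lambda\gamma_\nu)\delta_{\mu\nu}$. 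Hence the $e_{\nu,\lambda}$ are orthonormal in $\mathcal H_\lambda$, and they are complete because their span is the same dense set as in Step 1 and $\|\cdot\|_\lambda\asymp\|\cdot\|_1$. The kernel expansion \eqref{eq:kernel-spectral-expansion} is the standard RKHS identity $K_\lambda(s,\cdot)=\sum_\nu e_{\nu,\lambda}(s)e_{\nu,\lambda}$. The series converges in $\mathcal H_\lambda$ because $\sum_\nu e_{\nu,\lambda}(s)^2=K_\lambda(s,s)<\infty$, and pointwise convergence follows from continuity of evaluation.

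\textbf{Step 3: the asymptotics $\gamma_\nu\asymp\nu^{2r}$.} This is the main obstacle. I would handle it with a min--max comparison, which avoids any ODE analysis. By Courant--Fischer,
\[
\gamma_\nu=\min_{\dim S=\nu}\ \max_{f\in S\setminus\{0\}}\frac{J(f,f)}{V(f,f)}.
\]
Since $c_w\int f^2\le V(f,f)\le C_w\int f^2$, the $\gamma_\nu$ are bounded above and below by constant multiples of the eigenvalues for the reference case $w\equiv1$. In that case the problem is $(-1)^r f^{(2r)}=\gamma f$ with natural boundary conditions $f^{(k)}(0)=f^{(k)}(1)=0$ for $k=r,\dots,2r-1$. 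Its eigenvalues are known to satisfy $\gamma_\nu\sim(\pi\nu)^{2r}$; for $r=1$ the eigenfunctions are cosines. For general $r$, I would cite the spline smoothing literature (Utreras; \citet{gu2013}) or bound the reference eigenvalues directly. The upper bound comes from testing with a $\nu$-dimensional space of smooth bumps, which gives $J/V\lesssim\nu^{2r}$. The lower bound uses a piecewise polynomial projection: on a partition into $\nu$ cells, local Bramble--Hilbert estimates show that any $f$ orthogonal to the space of piecewise degree-$(r-1)$ polynomials satisfies $\int f^2\lesssim\nu^{-2r}J(f,f)$. Because that space has dimension $r\nu$, this yields $\gamma_{r\nu+1}\gtrsim\nu^{2r}$.

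The smoothness hypothesis $w\in C^{2r-1}$ is needed only if one also wants the eigenfunctions to solve the Euler--Lagrange equation $(-1)^r\phi^{(2r)}=\gamma w\phi$ classically, and hence to have the uniform boundedness used later. The min--max argument itself requires only the bounds on $w$.
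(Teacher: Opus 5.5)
Your proof is correct. For the key claim $\gamma_\nu\asymp\nu^{2r}$ it takes a genuinely different route from the paper.

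\textbf{Steps 1 and 2.} These are the paper's argument in different clothing. Your compact operator $A$ on $\mathcal H_1$ is the resolvent $(\mathrm{Id}+L_J)^{-1}$ of the operator $L_J$ associated with the closed form $J$ on $L^2(w\,dt)$. For completeness of $\{e_{\nu,\lambda}\}$ you use equivalence of $\|\cdot\|_\lambda$ and $\|\cdot\|_1$. The paper instead tests an orthogonal $f$ against the weak eigenequation and invokes completeness in $L^2(w)$. Both work. To get multiplicity exactly $r$, you should also note that $Ap=p$ for every $p\in\mathcal P_{r-1}$.

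\textbf{Step 3.} This is where the two proofs differ. The paper integrates the weak eigenequation by parts, which is where $w\in C^{2r-1}$ enters. It then identifies the natural boundary conditions and imports Birkhoff's asymptotics for regular self-adjoint problems of order $2r$. You instead compare with $w\equiv1$ through Courant--Fischer and bound the reference eigenvalues by hand. The upper bound uses disjoint rescaled bumps. The lower bound uses a local Bramble--Hilbert estimate against discontinuous piecewise polynomials of degree $r-1$. That test space is not contained in $W^{r,2}$, but the argument is still legitimate: every $(r\nu+1)$-dimensional subspace of $W^{r,2}$ contains a nonzero element $L^2$-orthogonal to an $r\nu$-dimensional space. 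So the min--max formula you wrote already yields $\gamma_{r\nu+1}\gtrsim\nu^{2r}$.

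\textbf{What each route buys.} Your route is elementary and self-contained, and it uses only $c_w\le w\le C_w$, with constants depending on $(c_w,C_w,r)$ alone. The paper's route delivers more than the statement asks. It gives sharp asymptotics in principle, and it produces the explicit boundary value problem. The paper uses the differential equation later, in Lemma~\ref{lem:spline-ritz-tail}, to bound $\|\phi_\nu\|_{H^{2r}}\le C(1+\gamma_\nu)$.

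That bound also follows in your framework without any smoothness of $w$. Testing the weak equation against compactly supported smooth functions gives $(-1)^r\phi_\nu^{(2r)}=\gamma_\nu w\phi_\nu$ in the distributional sense. Hence $\|\phi_\nu^{(2r)}\|_{L^2}\le C\gamma_\nu$ whenever $w$ is merely bounded. Your closing remark therefore slightly overstates what the smoothness of $w$ is needed for.
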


\begin{proof}
The bounds on \(w\) make \(V(f,f)^{1/2}\) equivalent to the
ordinary \(L^2\) norm. The form \(J\), with domain
\(W^{r,2}[0,1]\), is closed and nonnegative in the weighted
\(L^2\) space. Its null space is \(\mathcal P_{r-1}\), since
\(J(f,f)=0\) precisely when \(f^{(r)}=0\).
The compact Sobolev embedding and the spectral theorem for
closed forms therefore give a complete \(V\)-orthonormal
family satisfying
\begin{equation}
J(\phi_\nu,f)=\gamma_\nu V(\phi_\nu,f),
\qquad f\in W^{r,2}[0,1].
\label{eq:weak-eigenproblem}
\end{equation}
The zero eigenvalue has multiplicity \(r\).

For a positive eigenvalue, the weak equation implies
\[
(-1)^r\phi_\nu^{(2r)}(t)
=
\gamma_\nu w(t)\phi_\nu(t),
\qquad 0<t<1.
\]
The regularity of \(w\) gives the regularity needed to integrate
by parts. Since the endpoint traces of the test functions
are arbitrary, the resulting boundary conditions are
\[
\phi_\nu^{(j)}(0)=\phi_\nu^{(j)}(1)=0,
\qquad j=r,\ldots,2r-1.
\]
This is a regular self adjoint boundary value problem of order
\(2r\). Its eigenvalue asymptotics give
\(\gamma_\nu\asymp\nu^{2r}\); see \citet{birkhoff1908}
and \citet{coddingtonlevinson1987}. Related spline arguments
appear in \citet{utreras1988} and \citet{shangcheng2013}.

Equations \eqref{eq:eigensystem-V} and
\eqref{eq:eigensystem-J} yield
\[
\langle\phi_\mu,\phi_\nu\rangle_\lambda
=
(1+\lambda\gamma_\nu)\delta_{\mu\nu}.
\]
Thus the normalized functions \(e_{\nu,\lambda}\) are
orthonormal in \(\mathcal H_\lambda\). To establish completeness,
suppose \(f\in\mathcal H_\lambda\) is orthogonal to every
\(e_{\nu,\lambda}\). By \eqref{eq:weak-eigenproblem},
\(\langle f,\phi_\nu\rangle_\lambda
=(1+\lambda\gamma_\nu)V(f,\phi_\nu)\), so \(f\) is
\(V\)-orthogonal to every \(\phi_\nu\).
Completeness in the weighted \(L^2\) space then gives \(f=0\).

Expanding the evaluation representer \(K_\lambda(s,\cdot)\)
in this orthonormal basis gives
\eqref{eq:kernel-spectral-expansion}.
For each fixed argument, the series converges in
\(\mathcal H_\lambda\), and hence pointwise by continuity
of point evaluation.
\end{proof}

\section{Null centered penalized \texorpdfstring{\(M\)}{M}-estimation and subject level score structure}
\label{sec:criterion}

Let \(g_0\in W^{r,2}[0,1]\) be fixed and known. Writing
\(g=g_0+h\), we test \(H_0:h=0\) against \(H_1:h\neq0\).
For the null residuals \(e_{ij}^{0}=U_{ij}-g_0(T_{ij})\), define
\begin{equation*}
Q_{n,\lambda}(h)
=
\frac{1}{n}\sum_{i=1}^{n}\frac{1}{m_i}
\sum_{j=1}^{m_i}\rho\{e_{ij}^{0}-h(T_{ij})\}
+\frac{\lambda}{2}J(h,h).
\end{equation*}
Each subject receives equal weight. For \(\lambda>0\), the estimator is
\begin{equation*}
\widehat h_{n,\lambda}
\in\operatorname*{arg\,min}_{h\in W^{r,2}[0,1]}Q_{n,\lambda}(h),
\end{equation*}
with corresponding mean estimator
\(\widehat g_{n,\lambda}=g_0+\widehat h_{n,\lambda}\).

Under the assumed convexity and coercivity of \(\rho\), a minimizer
exists whenever there are at least \(r\) distinct design points.
Indeed, bounded criterion values control the Sobolev seminorm and
the fitted values at these points. Since evaluation at \(r\) distinct
points is injective on the null space \(\mathcal P_{r-1}\) of \(J\),
these bounds control the full Sobolev norm. Weak compactness and
weak lower semicontinuity then yield existence. Under the continuous
design assumptions, distinctness holds almost surely when the total
sample size is at least \(r\). Uniqueness is unnecessary for the
subsequent localization arguments.

The criterion follows the penalized \(M\)-spline approach of
\citet{kalogridis2022} and \citet{kalogridisvanaelst2023}.
Penalizing \(h\) makes the penalty derivative vanish under the null,
so the null score contains no term \(\lambda J(g_0,v)\).
For \(v\in\mathcal H_\lambda\), differentiation gives
\[
DQ_{n,\lambda}(h)[v]
=
-\frac{1}{n}\sum_{i=1}^{n}\frac{1}{m_i}
\sum_{j=1}^{m_i}\psi\{e_{ij}^{0}-h(T_{ij})\}v(T_{ij})
+\lambda J(h,v).
\]
At points of nondifferentiability, choosing
\(\psi(u)\in\partial\rho(u)\) gives a subgradient functional.

Under \(H_0\), \(e_{ij}^{0}=e_{ij}\), and the reproducing property
represents the subject score by
\begin{equation}
G_{i,\lambda}
=
\frac{1}{m_i}\sum_{j=1}^{m_i}
\psi(e_{ij})K_\lambda(T_{ij},\cdot).
\label{eq:cluster-score-element}
\end{equation}
Thus \(\mathcal S_i(v)=\langle G_{i,\lambda},v\rangle_\lambda\).
Conditional score centering and integrability imply
\(E(G_{i,\lambda})=0\). The first order stochastic term is therefore
determined by
\(\overline G_{n,\lambda}=n^{-1}\sum_{i=1}^{n}G_{i,\lambda}\),
an average of independent subject contributions.

Assuming \(E\|G_{i,\lambda}\|_\lambda^2<\infty\), define
\(\Omega_{i,\lambda}=E(G_{i,\lambda}\otimes G_{i,\lambda})\),
where \((a\otimes b)v=\langle b,v\rangle_\lambda a\).
The covariance form is
\(\Gamma_{i,\lambda}(v,w)
=\langle\Omega_{i,\lambda}v,w\rangle_\lambda\).
Independence across subjects gives the covariance operator of
\(\sqrt n\,\overline G_{n,\lambda}\) as
\begin{equation*}
\Omega_{n,\lambda}
=
\frac{1}{n}\sum_{i=1}^{n}\Omega_{i,\lambda}.
\end{equation*}
This allows unequal cluster sizes and different subject distributions.
Cluster sizes are treated as fixed; for random sizes, the calculations
and assumptions are understood conditionally on the observed sizes.

Expanding the subject scores yields
\begin{equation*}
\Gamma_{i,\lambda}(v,w)
=
\frac{1}{m_i^2}
E\left\{
\sum_{j,k=1}^{m_i}
\psi(e_{ij})\psi(e_{ik})v(T_{ij})w(T_{ik})
\right\}.
\end{equation*}
Suppose the design points within each subject are independent with
density \(\pi\), and the conditional moments
\(q_\psi(t)=E\{\psi^2(e_{ij})\mid T_{ij}=t\}\) and
\(c_\psi(s,t)=E\{\psi(e_{ij})\psi(e_{ik})
\mid T_{ij}=s,T_{ik}=t\}\), for \(j\neq k\), are common across
subjects, indices, and cluster sizes. Separating diagonal and
off diagonal terms gives
\begin{equation*}
\begin{aligned}
\Gamma_{i,\lambda}(v,w)
={}&\frac{1}{m_i}\int_0^1 q_\psi(t)v(t)w(t)\pi(t)\,dt\\
&+\frac{m_i-1}{m_i}\int_0^1\int_0^1
c_\psi(s,t)v(s)w(t)\pi(s)\pi(t)\,ds\,dt.
\end{aligned}
\end{equation*}
Subject specific moments give the same decomposition when these
moments vary across subjects. The diagonal term retains individual
score variation and cannot generally be omitted when \(m_i\) is
bounded. For fixed \(v,w\), it vanishes as \(m_i\to\infty\) if its
integral remains bounded. Negligibility of its trace contribution
when \(\lambda\to0\) requires additional bounds.

The trace identity
\(\operatorname{tr}(\Omega_{i,\lambda})
=E\|G_{i,\lambda}\|_\lambda^2\) and the reproducing property imply
\[
\operatorname{tr}(\Omega_{i,\lambda})
=
\frac{1}{m_i^2}E\left\{
\sum_{j,k=1}^{m_i}
\psi(e_{ij})\psi(e_{ik})K_\lambda(T_{ij},T_{ik})
\right\}.
\]
Consequently, the quadratic score statistic has exact mean
\(\mu_{n,\lambda}=\operatorname{tr}(\Omega_{n,\lambda})\).
Its reference variance is
\(\sigma_{n,\lambda}^2=2\operatorname{tr}(\Omega_{n,\lambda}^2)\),
which is the leading variance under the fourth moment negligibility
conditions in Section~\ref{sec:asymptotic-theory}.

The effective smoothing dimension
\(d_\lambda=\sum_{\nu\geq1}(1+\lambda\gamma_\nu)^{-1}\)
is analogous to the effective degrees of freedom in smoothing spline
theory; see \citet{wahba1990}, \citet{gu2013}, and
\citet{shangcheng2013}. By Lemma~\ref{lem:kernel-orders},
\(d_\lambda\asymp\lambda^{-1/(2r)}\) as \(\lambda\to0\).
It describes the spectral contribution of smoothing, while the
operators \(\Omega_{i,\lambda}\) retain dependence within subjects.

\section{Robust global test and finite sample implementation}
\label{sec:test-statistics}
\label{sec:test}

Under the null hypothesis, the subject scores \(G_{i,\lambda}\)
in \eqref{eq:cluster-score-element} are independent and have mean
zero. We measure their aggregate departure from zero through
\begin{equation}
S_{n,\lambda}
=
\left\|\frac{1}{\sqrt n}\sum_{i=1}^nG_{i,\lambda}\right\|_\lambda^2.
\label{eq:score-statistic}
\end{equation}
Large values provide evidence against \(H_0\). This statistic
measures a quadratic departure from the null, whereas
\citet{louwu2026} use a studentized supremum for simultaneous
inference on the mean curve. Section~\ref{sec:simulation} compares
their behavior under localized alternatives.

The score statistic uses only null residuals, so its null analysis
does not require an expansion of the estimator under alternatives.
Its expectation is
\(E(S_{n,\lambda})=\mu_{n,\lambda}
=\operatorname{tr}(\Omega_{n,\lambda})\).
Under the sparse regime conditions of Theorem~\ref{thm:score-clt},
the centered and standardized statistic has a normal limit, with
\(\mu_{n,\lambda}\asymp m^{-1}\lambda^{-1/(2r)}\) and
\(\operatorname{Var}(S_{n,\lambda})
\asymp m^{-2}\lambda^{-1/(2r)}\).

An alternative measures the reduction in the penalized criterion:
\[
T_{n,\lambda}
=
2n\{Q_{n,\lambda}(0)-Q_{n,\lambda}(\widehat h_{n,\lambda})\}.
\]
For a general convex loss, this is an objective improvement
statistic rather than a likelihood ratio statistic.
Transferring the standardized null limit from \(S_{n,\lambda}\)
requires
\(T_{n,\lambda}-S_{n,\lambda}=o_p(\sigma_{n,\lambda})\).
When \(m\) is bounded, the required scale is
\(o_p\{\lambda^{-1/(4r)}\}\).
The functional Bahadur remainder provides the basis for the
comparison in Theorem~\ref{thm:statistic-equivalence}.

For computation, take a spline space
\(\mathcal S_q=\operatorname{span}\{B_1,\ldots,B_q\}
\subset W^{r,2}[0,1]\) containing \(\mathcal P_{r-1}\), and write
\(B(t)=\{B_1(t),\ldots,B_q(t)\}^{\prime}\).
This finite dimensional construction follows standard spline
approximation arguments; see \citet{wahba1990}, \citet{gu2013},
and \citet{sandeetal2020}. The matrix representing \(V\) is
\[
A_q=\int_0^1 I(t)B(t)B(t)^{\prime}\pi(t)\,dt.
\]
Let \(P_q\) have entries
\((P_q)_{k\ell}=\int_0^1
B_k^{(r)}(t)B_\ell^{(r)}(t)\,dt\). The penalized inner product
is then represented by
\begin{equation}
H_{q,\lambda}=A_q+\lambda P_q.
\label{eq:population-basis-hessian}
\end{equation}
For subject \(i\), form the score vector
\(s_i=m_i^{-1}\sum_{j=1}^{m_i}\psi(e_{ij}^{0})B(T_{ij})\)
and its sample average
\begin{equation}
\bar s=\frac{1}{n}\sum_{i=1}^n s_i.
\label{eq:subject-score-vector}
\end{equation}
The resulting score statistic is
\[
S_{n,\lambda}^{(q)}
=
n\bar s^{\prime}H_{q,\lambda}^{-1}\bar s.
\]
Proposition~\ref{prop:finite-basis-projection} identifies this
quantity as the squared \(\lambda\) norm of the orthogonal
projection of \(\sqrt n\,\overline G_{n,\lambda}\) onto
\(\mathcal S_q\).

To estimate \(A_q\), we use the symmetric score increment
\begin{equation}
w_\delta(u)
=
\frac{\psi(u+\delta)-\psi(u-\delta)}{2\delta},
\qquad \delta>0.
\label{eq:score-increment-curvature-weight}
\end{equation}
This gives the empirical curvature matrix
\[
\widehat A_{q,\delta}
=
\frac{1}{n}\sum_{i=1}^n\frac{1}{m_i}\sum_{j=1}^{m_i}
w_\delta(e_{ij}^{0})B(T_{ij})B(T_{ij})^{\prime}.
\]
Monotonicity of \(\psi\) ensures that \(w_\delta\geq0\).
Under \(H_0\) and Assumption~\ref{ass:loss-primitive},
\(E\{w_\delta(e_{ij})\mid T_{ij}=t\}
=I(t)+O(\delta^\alpha)\), uniformly in \(t\).
For a Lipschitz score, \(w_\delta\) is uniformly bounded.
For a bounded score with jumps, it can instead grow at rate
\(\delta^{-1}\) near a jump.

Set \(\widehat H_{q,\lambda}
=\widehat A_{q,\delta}+\lambda P_q\).
Whenever this matrix is invertible, the feasible statistic is
\[
\widehat S_{n,\lambda}^{(q)}
=
n\bar s^{\prime}\widehat H_{q,\lambda}^{-1}\bar s.
\]
We estimate the covariance of the subject scores by
\[
\widehat C_q
=
\frac{1}{n}\sum_{i=1}^n
(s_i-\bar s)(s_i-\bar s)^{\prime}.
\]
Writing
\(\widehat M_{q,\lambda}
=\widehat H_{q,\lambda}^{-1}\widehat C_q\),
the estimated centering is
\(\widehat\mu_{n,\lambda}^{(q)}
=\operatorname{tr}(\widehat M_{q,\lambda})\), and the estimated
reference variance is
\(\{\widehat\sigma_{n,\lambda}^{(q)}\}^2
=2\operatorname{tr}(\widehat M_{q,\lambda}^2)\).
The standardized statistic is therefore
\[
\widehat Z_{n,\lambda}^{(q)}
=
\frac{\widehat S_{n,\lambda}^{(q)}
-\widehat\mu_{n,\lambda}^{(q)}}
{\widehat\sigma_{n,\lambda}^{(q)}}.
\]
Section~\ref{sec:finite-basis-theory} controls the errors introduced
by spline truncation and estimation of the metric.

For bootstrap calibration, let \(\xi_1,\ldots,\xi_n\) be independent
\(N(0,1)\) variables, independent of the data. Assigning one
multiplier to each subject keeps that subject's observations
together. Form the bootstrap score vector
\[
S_n^*
=
\frac{1}{\sqrt n}\sum_{i=1}^n\xi_i(s_i-\bar s).
\]
Its conditional covariance is \(\widehat C_q\), and the bootstrap
statistic is
\[
\widehat S_{n,\lambda}^{*(q)}
=
(S_n^*)^{\prime}\widehat H_{q,\lambda}^{-1}S_n^*.
\]
From \(B\) independent multiplier replicates, compute
\[
p_{\mathrm{boot}}
=
\frac{1+\sum_{b=1}^B
\mathbf{1}\{\widehat S_{n,\lambda}^{*(q,b)}
\geq\widehat S_{n,\lambda}^{(q)}\}}
{B+1}.
\]
Section~\ref{sec:bootstrap-validity} establishes conditional
validity under the stated conditions for the sparse, transition,
and dense regimes. Section~\ref{sec:finite-basis-theory} extends
the result to the spline statistic.

The smoothing parameter determines which spectral directions
contribute most to the test. Since the effective dimension has
order \(\lambda^{-1/(2r)}\), decreasing \(\lambda\) gives greater
weight to higher frequency departures while increasing null
variability. Theorem~\ref{thm:score-clt} requires
\(\lambda\to0\) and \(n\lambda^{1/(2r)}\to\infty\), together with
its other assumptions. Section~\ref{sec:power} derives the testing
scale by balancing smoothing bias against stochastic variation.

\section{Asymptotic theory for the robust global score test}
\label{sec:asymptotic-theory}

We first derive the null distribution of the score statistic under
sparse sampling. We then study the estimator and use its local
quadratic expansion to compare the score and objective improvement
statistics. Throughout, write \(s_\lambda=\lambda^{1/(2r)}\);
the effective dimension is of order \(s_\lambda^{-1}\).
All results below are under \(H_0\). When cluster sizes are random,
expectations and probability statements are understood conditionally
on the observed sizes.

\subsection{Assumptions and preliminary bounds}

The assumptions control the local behavior of the loss and the
dependence within each subject. Recall the conditional score map
\(\Psi_t(a)=E\{\psi(e_{ij}-a)\mid T_{ij}=t\}\).

\begin{assumption}\label{ass:loss-primitive}
The loss \(\rho\) is proper, finite, lower semicontinuous, convex,
and coercive. Let \(\psi\) be a measurable nondecreasing selection
from its subgradient. The following conditions hold.
\begin{enumerate}
\item[(i)]
For some \(\alpha\in(0,1]\), \(\beta\in(0,2]\), \(a_0>0\),
and \(C<\infty\), the score is centered:
\begin{equation*}
\Psi_t(0)=0.
\end{equation*}
Uniformly in \(t\in[0,1]\) and \(|a|\leq a_0\), its local
linearization satisfies
\begin{equation}
|\Psi_t(a)+I(t)a|\leq C|a|^{1+\alpha},
\label{eq:score-linearization}
\end{equation}
and its increments satisfy
\begin{equation}
E\!\left[
\{\psi(e_{ij}-a)-\psi(e_{ij})\}^2
\mid T_{ij}=t
\right]\leq C|a|^\beta.
\label{eq:score-increment}
\end{equation}

\item[(ii)]
For some \(\eta>0\) and every fixed \(M<\infty\), there is a
constant \(C_M<\infty\) such that
\[
\sup_{t\in[0,1]}\sup_{|a|\leq M}
E\{|\psi(e_{ij}-a)|^{4+\eta}\mid T_{ij}=t\}
\leq C_M.
\]

\item[(iii)]
The curvature satisfies \(I\in C^{2r-1}[0,1]\) and
\(0<c_I\leq I(t)\leq C_I<\infty\) on \([0,1]\).

\item[(iv)]
The conditional distribution of \(e_{ij}\) given \(T_{ij}=t\)
assigns zero probability to every discontinuity point of \(\psi\).
\end{enumerate}
\end{assumption}

\begin{assumption}\label{ass:sparse-cluster}
The following conditions hold.
\begin{enumerate}
\item[(i)]
The cluster sizes are deterministic or independent of the
observations, with \(1\leq m_i\leq m_{\max}\) for a fixed
\(m_{\max}<\infty\).

\item[(ii)]
Conditional on \(m_i\), the design points are independent and
identically distributed with density
\(\pi\in C^{2r-1}[0,1]\), where
\(0<c_\pi\leq\pi(t)\leq C_\pi<\infty\).
The design independence conditions of Section~\ref{sec:model}
remain in force.

\item[(iii)]
The conditional second moment
\(q_\psi(t)=E\{\psi^2(e_{ij})\mid T_{ij}=t\}\) satisfies
\(0<c_q\leq q_\psi(t)\leq C_q<\infty\).

\item[(iv)]
Let \(\Sigma_i(\boldsymbol T_i)\) be the conditional covariance
matrix of the score vector given
\(\boldsymbol T_i=(T_{i1},\ldots,T_{im_i})\), and set
\(D_i=\operatorname{diag}\{q_\psi(T_{i1}),\ldots,
q_\psi(T_{im_i})\}\). For some \(c_\Sigma>0\),
\begin{equation*}
D_i^{-1/2}\Sigma_i(\boldsymbol T_i)D_i^{-1/2}
\succeq c_\Sigma I_{m_i}
\end{equation*}
almost surely, uniformly in \(i\) and \(n\).
\end{enumerate}
\end{assumption}

The last condition permits dependence within subjects but rules
out degenerating conditional score covariance. By design
independence and score centering, \(D_i\) is the diagonal of
\(\Sigma_i(\boldsymbol T_i)\). Thus
\(R_i=D_i^{-1/2}\Sigma_i(\boldsymbol T_i)D_i^{-1/2}\)
has unit diagonal and satisfies
\[
R_i\preceq\operatorname{tr}(R_i)I_{m_i}
=m_iI_{m_i}\preceq m_{\max}I_{m_i}.
\]
An upper covariance bound therefore follows from bounded cluster
sizes; only the lower bound needs to be imposed.

\begin{lemma}\label{lem:population-loss}
Under Assumption~\ref{ass:loss-primitive}(i)--(ii), uniformly in
\(t\in[0,1]\) and \(|a|\leq a_0\),
\begin{equation*}
E\{\rho(e_{ij}-a)-\rho(e_{ij})\mid T_{ij}=t\}
=
\frac12 I(t)a^2+O(|a|^{2+\alpha}).
\end{equation*}
\end{lemma}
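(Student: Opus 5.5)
The plan is to write the loss increment as an integral of the score and then integrate the linearization \eqref{eq:score-linearization} in $a$. Since $\rho$ is convex and finite on $\mathbb R$, it is absolutely continuous on compact intervals. Any measurable selection $\psi$ from $\partial\rho$ agrees with $\rho'$ almost everywhere, so for every $u$ and $a$,
\[
\rho(u-a)-\rho(u)=-\int_0^a\psi(u-b)\,db .
\]
We apply this with $u=e_{ij}$, take conditional expectations given $T_{ij}=t$, and interchange expectation and integral by Fubini. This gives
\[
E\{\rho(e_{ij}-a)-\rho(e_{ij})\mid T_{ij}=t\}=-\int_0^a\Psi_t(b)\,db .
\]

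The one technical point is justifying Fubini. For $|b|\le a_0$, Assumption~\ref{ass:loss-primitive}(ii) with $M=a_0$ and Jensen give
\[
E\{|\psi(e_{ij}-b)|\mid T_{ij}=t\}\le C_{a_0}^{1/(4+\eta)},
\]
uniformly in $t$ and $b$. Hence
\[
\int_0^{|a|}E|\psi(e_{ij}\mp b)|\,db\le |a|\,C_{a_0}^{1/(4+\eta)}<\infty .
\]
Joint measurability of $(u,b)\mapsto\psi(u-b)$ holds because $\psi$ is monotone, hence Borel. The same bound shows that the left-hand conditional expectation is finite: by the integral identity, $|\rho(e_{ij}-a)-\rho(e_{ij})|$ is dominated by an integrable quantity, even though $\rho(e_{ij})$ itself need not be integrable. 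This is the step I would state most carefully, but it poses no real obstacle.

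Next, substitute $\Psi_t(b)=-I(t)b+R_t(b)$, where $|R_t(b)|\le C|b|^{1+\alpha}$ uniformly for $|b|\le a_0$ by \eqref{eq:score-linearization}. Then
\[
-\int_0^a\Psi_t(b)\,db=\frac12 I(t)a^2-\int_0^aR_t(b)\,db ,
\]
and the remainder satisfies
\[
\Bigl|\int_0^aR_t(b)\,db\Bigr|\le \frac{C}{2+\alpha}|a|^{2+\alpha}.
\]
This bound is uniform in $t$ and in $|a|\le a_0$, which gives the claim. The centering condition $\Psi_t(0)=0$ is used only implicitly, since it is consistent with the linearization at $b=0$.

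Negative $a$ needs no separate treatment: the oriented integral $\int_0^a$ handles both signs in the same estimates.
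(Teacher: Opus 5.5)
Your proposal is correct and follows the paper's proof essentially step for step. Both write the loss increment as $-\int_0^a\psi(e-u)\,du$ using local absolute continuity of the convex loss, use the moment condition to justify conditional Fubini, and then integrate the linearization \eqref{eq:score-linearization} to obtain the $C|a|^{2+\alpha}/(2+\alpha)$ remainder. Your version only spells out the integrability and measurability details that the paper states briefly.
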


\begin{proof}
A finite convex function is locally absolutely continuous, so
\[
\rho(e-a)-\rho(e)=-\int_0^a\psi(e-u)\,du.
\]
The selected subgradient can differ from the derivative only on
a set of Lebesgue measure zero. The moment assumption permits
conditional Fubini, giving
\[
E\{\rho(e_{ij}-a)-\rho(e_{ij})\mid T_{ij}=t\}
=-\int_0^a\Psi_t(u)\,du.
\]
Write \(\Psi_t(u)=-I(t)u+r_t(u)\), where
\(|r_t(u)|\leq C|u|^{1+\alpha}\).
Integrating the linear term gives \(I(t)a^2/2\), while
\(\left|\int_0^a r_t(u)\,du\right|
\leq C|a|^{2+\alpha}/(2+\alpha)\), uniformly in \(t\).
\end{proof}

\begin{lemma}\label{lem:kernel-orders}
Under the conditions of Proposition~\ref{prop:eigensystem},
for every fixed \(p>1/(2r)\),
\begin{equation*}
\sum_{\nu\geq1}(1+\lambda\gamma_\nu)^{-p}
\asymp s_\lambda^{-1},
\qquad \lambda\to0.
\end{equation*}
In particular, both \(d_\lambda\) and
\(\sum_{\nu\geq1}(1+\lambda\gamma_\nu)^{-2}\) have order
\(s_\lambda^{-1}\). For sufficiently small \(\lambda\),
\begin{equation}
\sup_{t\in[0,1]}K_\lambda(t,t)\leq Cs_\lambda^{-1},
\label{eq:kernel-diagonal-order}
\end{equation}
and every \(f\in\mathcal H_\lambda\) satisfies
\begin{equation}
\|f\|_\infty\leq Cs_\lambda^{-1/2}\|f\|_\lambda.
\label{eq:evaluation-inequality}
\end{equation}
\end{lemma}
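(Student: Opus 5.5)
The argument has two parts. First, the spectral sum is compared with an integral using the eigenvalue asymptotics of Proposition~\ref{prop:eigensystem}. Second, the kernel and sup-norm bounds are obtained from an embedding inequality, rather than from uniform bounds on the eigenfunctions.

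\emph{Spectral sum.} By Proposition~\ref{prop:eigensystem}, there are constants $0<c\le C$ and an index $\nu_0$ such that $c\nu^{2r}\le\gamma_\nu\le C\nu^{2r}$ for $\nu\ge\nu_0$, while $\gamma_1=\cdots=\gamma_r=0$. The finitely many terms with $\nu<\nu_0$ are each at most $1$, so they contribute $O(1)=o(s_\lambda^{-1})$. For the tail we have
\[
(1+C\lambda\nu^{2r})^{-p}\le(1+\lambda\gamma_\nu)^{-p}\le(1+c\lambda\nu^{2r})^{-p},
\]
so it suffices to show $\sum_\nu(1+a\lambda\nu^{2r})^{-p}\asymp\lambda^{-1/(2r)}$ for fixed $a>0$. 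The summand is decreasing in $\nu$, so the sum lies within $1$ of $\int_0^\infty(1+a\lambda x^{2r})^{-p}\,dx$. Substituting $x=(a\lambda)^{-1/(2r)}y$ turns this integral into
\[
(a\lambda)^{-1/(2r)}\int_0^\infty(1+y^{2r})^{-p}\,dy.
\]
The last integral is finite precisely because $2rp>1$, which is where the hypothesis $p>1/(2r)$ enters. For the lower bound, the terms with $\nu_0\le\nu\le\lambda^{-1/(2r)}$ are each at least $(1+C)^{-p}$, and there are of order $s_\lambda^{-1}$ of them. The cases $p=1$ and $p=2$ give the statements about $d_\lambda$ and $\sum_\nu(1+\lambda\gamma_\nu)^{-2}$.

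\emph{Kernel diagonal.} One could try to read \eqref{eq:kernel-diagonal-order} off \eqref{eq:kernel-spectral-expansion} using a uniform bound $\sup_\nu\|\phi_\nu\|_\infty<\infty$. That bound is true for regular Sturm--Liouville-type problems, but only through Birkhoff asymptotics of the eigenfunctions, which is delicate. I would avoid it and prove \eqref{eq:evaluation-inequality} directly, deducing the diagonal bound from it. Indeed, the reproducing property gives
\[
K_\lambda(t,t)=K_\lambda(t,\cdot)(t)\le\|K_\lambda(t,\cdot)\|_\infty\le Cs_\lambda^{-1/2}\|K_\lambda(t,\cdot)\|_\lambda=Cs_\lambda^{-1/2}K_\lambda(t,t)^{1/2},
\]
and rearranging yields $K_\lambda(t,t)\le C^2s_\lambda^{-1}$.

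\emph{Sup-norm bound.} For \eqref{eq:evaluation-inequality}, start from the scaled Gagliardo--Nirenberg (interpolation) inequality on $[0,1]$: for $f\in W^{r,2}[0,1]$ and $0<\epsilon\le1$,
\[
\|f\|_\infty\le C\bigl(\epsilon^{-1/2}\|f\|_{L^2}+\epsilon^{r-1/2}\|f^{(r)}\|_{L^2}\bigr).
\]
To prove it, cover $[0,1]$ by intervals of length $\epsilon$, apply the unit-interval Sobolev embedding $\|f\|_\infty\lesssim\|f\|_{L^2}+\|f^{(r)}\|_{L^2}$ on each, and rescale. Take $\epsilon=\lambda^{1/(2r)}=s_\lambda$, which is at most $1$ for small $\lambda$. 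Then
\[
\epsilon^{r-1/2}\|f^{(r)}\|_{L^2}=s_\lambda^{-1/2}\lambda^{1/2}J(f,f)^{1/2}.
\]
By the bounds on $w=I\pi$, $\|f\|_{L^2}\le c_w^{-1/2}V(f,f)^{1/2}$. Combining the two terms gives
\[
\|f\|_\infty\le Cs_\lambda^{-1/2}\{V(f,f)^{1/2}+\lambda^{1/2}J(f,f)^{1/2}\}\le C's_\lambda^{-1/2}\|f\|_\lambda .
\]

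The main obstacle is the rescaled interpolation inequality with constants uniform in $\epsilon$. The subtle step is the local estimate on each length-$\epsilon$ interval, which must hold without boundary conditions. I would handle it by mapping each interval affinely to $[0,1]$, applying the fixed Sobolev inequality there, and tracking the powers of $\epsilon$ that the change of variables produces in $\|f\|_{L^2}$ and $\|f^{(r)}\|_{L^2}$.
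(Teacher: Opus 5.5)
Your proposal is correct and follows essentially the paper's route: an integral comparison using $\gamma_\nu\asymp\nu^{2r}$ for the spectral sum, a Sobolev interpolation inequality together with $\|f\|_{L^2}\le C\|f\|_\lambda$ and $\|f^{(r)}\|_{L^2}\le s_\lambda^{-r}\|f\|_\lambda$ for the sup-norm bound, and the norm of the evaluation functional for the kernel diagonal. The only difference is cosmetic. You prove the additive rescaled form $\|f\|_\infty\le C(\epsilon^{-1/2}\|f\|_{L^2}+\epsilon^{r-1/2}\|f^{(r)}\|_{L^2})$ by covering and rescaling, whereas the paper cites the equivalent multiplicative Gagliardo--Nirenberg inequality from Adams and Fournier.
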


\begin{proof}
The eigenvalue order \(\gamma_\nu\asymp\nu^{2r}\) and integral
comparison give
\[
\sum_{\nu\geq1}(1+\lambda\gamma_\nu)^{-p}
\asymp
s_\lambda^{-1}\int_{s_\lambda}^{\infty}
(1+u^{2r})^{-p}\,du.
\]
The integral converges to a finite positive limit when \(2rp>1\).
Taking \(p=1\) and \(p=2\) gives the stated special cases.

The Sobolev interpolation inequality in \citet{adamsfournier2003}
gives
\[
\|f\|_\infty
\leq C\|f\|_{L^2}^{1-1/(2r)}
\|f^{(r)}\|_{L^2}^{1/(2r)}
+C\|f\|_{L^2}.
\]
The bounds
\(\|f\|_{L^2}\leq C\|f\|_\lambda\) and
\(\|f^{(r)}\|_{L^2}\leq s_\lambda^{-r}\|f\|_\lambda\)
then yield \eqref{eq:evaluation-inequality}.
Finally, the norm of the evaluation functional is
\[
K_\lambda(t,t)^{1/2}
=
\sup_{\|f\|_\lambda\leq1}|f(t)|.
\]
Applying \eqref{eq:evaluation-inequality} proves
\eqref{eq:kernel-diagonal-order}.
\end{proof}

\subsection{Covariance bounds and the null distribution}

We compare the score covariance with the operator induced by
\(V\). This gives the centering and variance orders and identifies
the part of the score statistic that determines its null limit.
Define \(A_\lambda:\mathcal H_\lambda\to\mathcal H_\lambda\) by
\begin{equation*}
\langle A_\lambda f,g\rangle_\lambda=V(f,g).
\end{equation*}
Its eigenvalues are
\(a_{\nu,\lambda}=(1+\lambda\gamma_\nu)^{-1}\).
Lemma~\ref{lem:kernel-orders} therefore gives
\(\operatorname{tr}(A_\lambda)\asymp s_\lambda^{-1}\) and
\(\operatorname{tr}(A_\lambda^2)\asymp s_\lambda^{-1}\).

\begin{proposition}\label{prop:covariance-comparison}
Suppose Assumptions~\ref{ass:loss-primitive} and
\ref{ass:sparse-cluster} hold. Let
\(m_H=\{n^{-1}\sum_{i=1}^n m_i^{-1}\}^{-1}\).
There are constants \(0<c<C<\infty\) such that
\begin{equation}
\frac{c}{m_H}A_\lambda
\preceq\Omega_{n,\lambda}
\preceq\frac{C}{m_H}A_\lambda.
\label{eq:operator-comparison}
\end{equation}
Consequently,
\begin{align}
\mu_{n,\lambda}
&\asymp m_H^{-1}s_\lambda^{-1},
\label{eq:mu-order}\\
\operatorname{tr}(\Omega_{n,\lambda}^2)
&\asymp m_H^{-2}s_\lambda^{-1},
\label{eq:variance-trace-order}\\
\|\Omega_{n,\lambda}\|_{\mathrm{op}}
&\leq Cm_H^{-1}.
\label{eq:omega-op-order}
\end{align}
In particular, as \(\lambda\to0\),
\begin{equation}
\frac{\|\Omega_{n,\lambda}\|_{\mathrm{op}}^2}
{\operatorname{tr}(\Omega_{n,\lambda}^2)}
=O(s_\lambda)\to0.
\label{eq:derived-spectral-dispersion}
\end{equation}
\end{proposition}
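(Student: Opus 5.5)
The plan is to prove the operator comparison \eqref{eq:operator-comparison} at the level of quadratic forms and then deduce the trace orders from the known spectrum of \(A_\lambda\). For \(v\in\mathcal H_\lambda\), the reproducing property gives \(\langle G_{i,\lambda},v\rangle_\lambda=m_i^{-1}\sum_j\psi(e_{ij})v(T_{ij})\). Writing \(\boldsymbol v_i=\{v(T_{i1}),\ldots,v(T_{im_i})\}'\), score centering (Assumption~\ref{ass:mean-consistency}, so the score vector has conditional mean zero given \(\boldsymbol T_i\)) and conditioning on the design yield
\[
\langle\Omega_{i,\lambda}v,v\rangle_\lambda
=\frac{1}{m_i^2}E\{\boldsymbol v_i'\Sigma_i(\boldsymbol T_i)\boldsymbol v_i\}.
\]
This reduces the problem to bounding \(\Sigma_i\) in terms of \(D_i\).

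The two-sided matrix bound comes from Assumption~\ref{ass:sparse-cluster}(iv) together with the remark following it. These give \(c_\Sigma D_i\preceq\Sigma_i\preceq m_{\max}D_i\) almost surely. Combining this with \(c_q\le q_\psi\le C_q\) yields
\[
c\,\textstyle\sum_j v(T_{ij})^2\le\boldsymbol v_i'\Sigma_i\boldsymbol v_i\le C\sum_j v(T_{ij})^2 .
\]
Since the design points are i.i.d. with density \(\pi\), we have \(E\sum_jv(T_{ij})^2=m_i\int v^2\pi\). By Assumption~\ref{ass:loss-primitive}(iii), \(\int v^2\pi\) is comparable to \(V(v,v)=\langle A_\lambda v,v\rangle_\lambda\). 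Therefore
\[
\frac{c}{m_i}A_\lambda\preceq\Omega_{i,\lambda}\preceq\frac{C}{m_i}A_\lambda .
\]
Averaging over \(i\) turns \(n^{-1}\sum_i m_i^{-1}\) into \(m_H^{-1}\), which gives \eqref{eq:operator-comparison}.

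For the consequences, \eqref{eq:mu-order} follows from monotonicity of the trace under \(\preceq\) and \(\operatorname{tr}(A_\lambda)=d_\lambda\asymp s_\lambda^{-1}\) (Lemma~\ref{lem:kernel-orders}). Trace-class membership is justified by \(\operatorname{tr}\Omega_{i,\lambda}\le E\|G_{i,\lambda}\|^2_\lambda\), which is finite by \eqref{eq:kernel-diagonal-order} and the moment bound. The operator norm bound \eqref{eq:omega-op-order} follows from \(\|A_\lambda\|_{\mathrm{op}}=\max_\nu(1+\lambda\gamma_\nu)^{-1}=1\).

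The step I expect to be the main obstacle is \eqref{eq:variance-trace-order}, because \(0\preceq B_1\preceq B_2\) does not in general imply \(B_1^2\preceq B_2^2\). I would avoid operator monotonicity and use \(\operatorname{tr}(B_1^2)\le\operatorname{tr}(B_2^2)\) for \(0\preceq B_1\preceq B_2\). This holds because
\[
\operatorname{tr}(B_1^2)=\operatorname{tr}(B_1^{1/2}B_1B_1^{1/2})\le\operatorname{tr}(B_1^{1/2}B_2B_1^{1/2})=\operatorname{tr}(B_2^{1/2}B_1B_2^{1/2})\le\operatorname{tr}(B_2^2).
\]
Applying this with \(B_1=(c/m_H)A_\lambda\), \(B_2=\Omega_{n,\lambda}\), and then with \(B_1=\Omega_{n,\lambda}\), \(B_2=(C/m_H)A_\lambda\), sandwiches \(\operatorname{tr}(\Omega_{n,\lambda}^2)\) between constant multiples of \(m_H^{-2}\operatorname{tr}(A_\lambda^2)\). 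Lemma~\ref{lem:kernel-orders} with \(p=2\) gives \(\operatorname{tr}(A_\lambda^2)\asymp s_\lambda^{-1}\). Finally, \eqref{eq:derived-spectral-dispersion} is the ratio of \eqref{eq:omega-op-order} squared to the lower bound in \eqref{eq:variance-trace-order}, which is \(O(m_H^{-2})/(c\,m_H^{-2}s_\lambda^{-1})=O(s_\lambda)\to0\).
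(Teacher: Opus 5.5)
Your proposal is correct and follows the paper's proof. Both arguments reduce $\langle\Omega_{i,\lambda}v,v\rangle_\lambda$ to $m_i^{-2}E\{\boldsymbol v_i'\Sigma_i(\boldsymbol T_i)\boldsymbol v_i\}$, sandwich $\Sigma_i$ between $c_\Sigma D_i$ and $m_{\max}D_i$, compare $\int v^2\pi$ with $V(v,v)$, and then average over subjects to obtain $m_H^{-1}$.

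The only difference is in the trace of the square. The paper uses the min--max principle to get eigenvalue-wise bounds $c\,m_H^{-1}a_{\nu,\lambda}\le\omega_{\nu,\lambda}\le C\,m_H^{-1}a_{\nu,\lambda}$ and sums these bounds and their squares. Your inequality $\operatorname{tr}(B_1^2)\le\operatorname{tr}(B_2^2)$ for $0\preceq B_1\preceq B_2$ reaches the same conclusion directly.

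One small point: $E\{\psi(e_{ij})\mid\boldsymbol T_i\}=0$ needs design independence in addition to score centering.
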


\begin{proof}
For \(f\in\mathcal H_\lambda\), write
\(\boldsymbol f_i=\{f(T_{i1}),\ldots,f(T_{im_i})\}^{\prime}\).
Design independence and score centering imply
\(E\{\psi(e_{ij})\mid\boldsymbol T_i\}=0\). Hence
\[
E\{\mathcal S_i(f)^2\mid\boldsymbol T_i\}
=
m_i^{-2}\boldsymbol f_i^{\prime}
\Sigma_i(\boldsymbol T_i)\boldsymbol f_i.
\]
The conditional covariance bounds and integration over the design
give
\begin{equation*}
\begin{aligned}
\frac{c_\Sigma}{m_i}\int_0^1q_\psi(t)f^2(t)\pi(t)\,dt
&\leq\langle\Omega_{i,\lambda}f,f\rangle_\lambda\\
&\leq
\frac{m_{\max}}{m_i}\int_0^1q_\psi(t)f^2(t)\pi(t)\,dt.
\end{aligned}
\end{equation*}
Since \(q_\psi\) and \(I\) are bounded above and away from zero,
the integral is comparable to \(V(f,f)\). Thus
\(cm_i^{-1}A_\lambda\preceq\Omega_{i,\lambda}
\preceq Cm_i^{-1}A_\lambda\). Averaging proves
\eqref{eq:operator-comparison}.

Let \(\omega_{\nu,\lambda}\) be the ordered eigenvalues of
\(\Omega_{n,\lambda}\). The min--max principle gives
\[
\frac{c}{m_H}a_{\nu,\lambda}
\leq\omega_{\nu,\lambda}
\leq\frac{C}{m_H}a_{\nu,\lambda}.
\]
Summing these bounds and their squares, then applying
Lemma~\ref{lem:kernel-orders}, proves
\eqref{eq:mu-order} and \eqref{eq:variance-trace-order}.
Since \(a_{\nu,\lambda}\leq1\), the operator norm bound follows.
Combining it with \eqref{eq:variance-trace-order} proves
\eqref{eq:derived-spectral-dispersion}.
\end{proof}

\begin{lemma}\label{lem:G-fourth}
Under Assumption~\ref{ass:loss-primitive}(ii)--(iii) and
Assumption~\ref{ass:sparse-cluster}(ii),
\begin{equation*}
\sup_{1\leq i\leq n}E\|G_{i,\lambda}\|_\lambda^4
\leq Cs_\lambda^{-2}.
\end{equation*}
\end{lemma}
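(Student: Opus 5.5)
The plan is to expand the squared norm through the reproducing property and then bound a fourfold sum of kernel-weighted score products. Since \(\langle K_\lambda(s,\cdot),K_\lambda(t,\cdot)\rangle_\lambda=K_\lambda(s,t)\), the definition \eqref{eq:cluster-score-element} gives
\[
\|G_{i,\lambda}\|_\lambda^2
=\frac{1}{m_i^2}\sum_{j,k=1}^{m_i}\psi(e_{ij})\psi(e_{ik})K_\lambda(T_{ij},T_{ik}).
\]
By Cauchy--Schwarz in \(\mathcal H_\lambda\), \(|K_\lambda(s,t)|\leq K_\lambda(s,s)^{1/2}K_\lambda(t,t)^{1/2}\). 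Combined with \eqref{eq:kernel-diagonal-order} this gives \(\sup_{s,t}|K_\lambda(s,t)|\leq Cs_\lambda^{-1}\) for small \(\lambda\). A cruder route reaches the same bound directly: the triangle inequality gives
\[
\|G_{i,\lambda}\|_\lambda\leq m_i^{-1}\sum_j|\psi(e_{ij})|\,K_\lambda(T_{ij},T_{ij})^{1/2}\leq Cs_\lambda^{-1/2}\,m_i^{-1}\sum_j|\psi(e_{ij})|.
\]
I would use this pathwise bound because it avoids handling the off-diagonal kernel terms.

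Raising the pathwise bound to the fourth power gives
\[
\|G_{i,\lambda}\|_\lambda^4\leq C^2s_\lambda^{-2}\Bigl(m_i^{-1}\sum_{j=1}^{m_i}|\psi(e_{ij})|\Bigr)^4\leq C^2s_\lambda^{-2}\,m_i^{-1}\sum_{j=1}^{m_i}\psi(e_{ij})^4,
\]
where the last step is Jensen's inequality for the convex map \(x\mapsto x^4\) applied to the average. This step needs no assumption on the cluster size. Taking expectations and conditioning on \(T_{ij}\), each term is bounded by \(\sup_t E\{\psi^4(e_{ij})\mid T_{ij}=t\}\). That supremum is finite by Assumption~\ref{ass:loss-primitive}(ii) with \(M=0\), after applying Lyapunov's inequality to pass from the \(4+\eta\) moment to the fourth moment. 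The resulting bound holds uniformly in \(i\), which proves the claim.

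The only delicate point is to confirm that the constant in \eqref{eq:kernel-diagonal-order} does not depend on \(i\) or \(n\). It depends only on \(w=I\pi\) through Proposition~\ref{prop:eigensystem} and Lemma~\ref{lem:kernel-orders}. The regularity of \(w\) needed there follows from Assumption~\ref{ass:loss-primitive}(iii) together with Assumption~\ref{ass:sparse-cluster}(ii), so the constant is uniform. Measurability of \(G_{i,\lambda}\) as an \(\mathcal H_\lambda\)-valued variable is also needed, but it is routine because the series \eqref{eq:kernel-spectral-expansion} converges. The bound holds for all sufficiently small \(\lambda\), which is the regime of interest.
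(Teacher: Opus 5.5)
Your proposal is correct and follows essentially the same route as the paper: the triangle inequality with $\|K_\lambda(t,\cdot)\|_\lambda^2=K_\lambda(t,t)\leq Cs_\lambda^{-1}$, Jensen's inequality on the within-subject average, and then the uniform conditional moment bound of Assumption~\ref{ass:loss-primitive}(ii). Your remarks on uniformity of the constant via $w=I\pi$ and on Lyapunov's inequality only make explicit details the paper leaves implicit.
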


\begin{proof}
The triangle inequality and
\(\|K_\lambda(t,\cdot)\|_\lambda^2=K_\lambda(t,t)\) give
\[
\|G_{i,\lambda}\|_\lambda
\leq Cs_\lambda^{-1/2}
\frac{1}{m_i}\sum_{j=1}^{m_i}|\psi(e_{ij})|.
\]
Jensen's inequality therefore yields
\[
E\|G_{i,\lambda}\|_\lambda^4
\leq Cs_\lambda^{-2}
\frac{1}{m_i}\sum_{j=1}^{m_i}E|\psi(e_{ij})|^4.
\]
The uniform moment bound in
Assumption~\ref{ass:loss-primitive}(ii) completes the proof.
\end{proof}

Expanding the squared norm in \eqref{eq:score-statistic} gives
\begin{equation}
S_{n,\lambda}-\mu_{n,\lambda}
=D_{n,\lambda}+U_{n,\lambda},
\label{eq:score-decomposition}
\end{equation}
where the centered diagonal term is
\[
D_{n,\lambda}
=
\frac{1}{n}\sum_{i=1}^n
\{\|G_{i,\lambda}\|_\lambda^2
-E\|G_{i,\lambda}\|_\lambda^2\}.
\]
The remaining term is the degenerate quadratic form
\[
U_{n,\lambda}
=
\frac{2}{n}\sum_{i<j}
\langle G_{i,\lambda},G_{j,\lambda}\rangle_\lambda.
\]
Independence and centering of the subject scores give its variance:
\begin{equation*}
\tau_{n,\lambda}^2
=
\frac{4}{n^2}\sum_{i<j}
\operatorname{tr}(\Omega_{i,\lambda}\Omega_{j,\lambda}).
\end{equation*}

\begin{lemma}\label{lem:variance-equivalence}
Under Assumptions~\ref{ass:loss-primitive} and
\ref{ass:sparse-cluster}, if \(s_\lambda\to0\) and
\(ns_\lambda\to\infty\), then
\begin{equation}
\tau_{n,\lambda}^2
\sim2\operatorname{tr}(\Omega_{n,\lambda}^2)
\asymp m_H^{-2}s_\lambda^{-1},
\label{eq:tau-equivalent}
\end{equation}
and
\begin{equation}
D_{n,\lambda}/\tau_{n,\lambda}\xrightarrow{p}0.
\label{eq:diagonal-negligible}
\end{equation}
\end{lemma}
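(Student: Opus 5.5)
The proof has two parts. The first compares \(\tau_{n,\lambda}^2\) with \(2\operatorname{tr}(\Omega_{n,\lambda}^2)\) exactly. The second bounds the variance of the diagonal term \(D_{n,\lambda}\) through Lemma~\ref{lem:G-fourth}.

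\textbf{Variance equivalence.} Since each \(\Omega_{i,\lambda}\) is self adjoint and nonnegative, we have the exact identity
\[
\frac{4}{n^2}\sum_{i<j}\operatorname{tr}(\Omega_{i,\lambda}\Omega_{j,\lambda})
=\frac{2}{n^2}\Bigl\{\operatorname{tr}\Bigl(\Bigl(\sum_i\Omega_{i,\lambda}\Bigr)^2\Bigr)-\sum_i\operatorname{tr}(\Omega_{i,\lambda}^2)\Bigr\}
=2\operatorname{tr}(\Omega_{n,\lambda}^2)-\frac{2}{n^2}\sum_{i=1}^n\operatorname{tr}(\Omega_{i,\lambda}^2).
\]
It therefore suffices to show that the correction term is small. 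The subject-level comparison \(\Omega_{i,\lambda}\preceq Cm_i^{-1}A_\lambda\) from the proof of Proposition~\ref{prop:covariance-comparison}, together with the min--max argument used there, gives \(\operatorname{tr}(\Omega_{i,\lambda}^2)\le Cm_i^{-2}\operatorname{tr}(A_\lambda^2)\le C m_i^{-2}s_\lambda^{-1}\). Hence
\[
\frac{2}{n^2}\sum_{i=1}^n\operatorname{tr}(\Omega_{i,\lambda}^2)\le \frac{C}{n}s_\lambda^{-1}.
\]
Bounded cluster sizes give \(m_i\ge1\) and \(m_H\le m_{\max}\), so this is at most \(Cm_{\max}^2 n^{-1}m_H^{-2}s_\lambda^{-1}\). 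Dividing by \(2\operatorname{tr}(\Omega_{n,\lambda}^2)\asymp m_H^{-2}s_\lambda^{-1}\), which is \eqref{eq:variance-trace-order}, leaves a ratio of order \(O(1/n)\to0\). This proves \eqref{eq:tau-equivalent}, including the stated order.

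\textbf{Diagonal negligibility.} The summands of \(D_{n,\lambda}\) are independent and centered, so
\[
E D_{n,\lambda}^2=\frac{1}{n^2}\sum_{i=1}^n\operatorname{Var}\|G_{i,\lambda}\|_\lambda^2\le \frac{1}{n^2}\sum_{i=1}^n E\|G_{i,\lambda}\|_\lambda^4\le \frac{C}{n}s_\lambda^{-2},
\]
where the last step is Lemma~\ref{lem:G-fourth}. Dividing by \(\tau_{n,\lambda}^2\asymp m_H^{-2}s_\lambda^{-1}\) and using \(m_H\le m_{\max}\) gives
\[
E(D_{n,\lambda}/\tau_{n,\lambda})^2\le \frac{C m_{\max}^2}{n s_\lambda}\to0,
\]
because \(ns_\lambda\to\infty\). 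Chebyshev's inequality then yields \eqref{eq:diagonal-negligible}.

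The main points requiring care are the lower bound on \(\operatorname{tr}(\Omega_{n,\lambda}^2)\), which uses the lower comparison from Assumption~\ref{ass:sparse-cluster}(iv), and keeping the \(m_H\) factors uniform in both steps. Both are routine here because cluster sizes are bounded, so \(m_H\asymp1\) and every \(m\)-dependent constant can be absorbed. Lemma~\ref{lem:G-fourth} does not track \(m_i\), but bounded \(m_i\) makes it sufficient for the diagonal step.
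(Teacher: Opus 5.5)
Your proposal is correct and follows the paper's proof essentially step for step. Like the paper, you use the exact trace identity linking \(\tau_{n,\lambda}^2\) and \(2\operatorname{tr}(\Omega_{n,\lambda}^2)\), bound its \(O(n^{-1}s_\lambda^{-1})\) correction through the individual comparison \(\Omega_{i,\lambda}\preceq Cm_i^{-1}A_\lambda\), and control \(D_{n,\lambda}\) by Lemma~\ref{lem:G-fourth} followed by Chebyshev's inequality.
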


\begin{proof}
The definition of \(\Omega_{n,\lambda}\) gives the exact identity
\[
2\operatorname{tr}(\Omega_{n,\lambda}^2)
=
\tau_{n,\lambda}^2
+\frac{2}{n^2}\sum_{i=1}^n
\operatorname{tr}(\Omega_{i,\lambda}^2).
\]
The individual covariance comparison implies
\(\operatorname{tr}(\Omega_{i,\lambda}^2)
\leq C\operatorname{tr}(A_\lambda^2)
\leq Cs_\lambda^{-1}\).
Thus the last term is \(O(n^{-1}s_\lambda^{-1})\), whereas
\(\operatorname{tr}(\Omega_{n,\lambda}^2)
\asymp s_\lambda^{-1}\), since \(m_H\) is bounded.
This proves \eqref{eq:tau-equivalent}.

Independence across subjects and Lemma~\ref{lem:G-fourth} give
\[
E(D_{n,\lambda}^2)
\leq\frac{1}{n^2}\sum_{i=1}^n
E\|G_{i,\lambda}\|_\lambda^4
\leq Cn^{-1}s_\lambda^{-2}.
\]
Dividing by \(\tau_{n,\lambda}^2\) gives
\(O\{(ns_\lambda)^{-1}\}=o(1)\).
Chebyshev's inequality proves \eqref{eq:diagonal-negligible}.
\end{proof}

\begin{theorem}\label{thm:score-clt}
Suppose Assumptions~\ref{ass:loss-primitive} and
\ref{ass:sparse-cluster} hold. If \(s_\lambda\to0\) and
\(ns_\lambda\to\infty\), then
\begin{equation*}
\frac{S_{n,\lambda}-\mu_{n,\lambda}}
{\{2\operatorname{tr}(\Omega_{n,\lambda}^2)\}^{1/2}}
\xrightarrow{d}N(0,1).
\end{equation*}
The same limit holds with denominator \(\tau_{n,\lambda}\).
\end{theorem}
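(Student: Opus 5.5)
We would start from the decomposition \eqref{eq:score-decomposition}, $S_{n,\lambda}-\mu_{n,\lambda}=D_{n,\lambda}+U_{n,\lambda}$. By Lemma~\ref{lem:variance-equivalence}, $D_{n,\lambda}/\tau_{n,\lambda}\to0$ in probability and $\tau_{n,\lambda}^2\sim 2\operatorname{tr}(\Omega_{n,\lambda}^2)$. By Slutsky's lemma it therefore suffices to prove $U_{n,\lambda}/\tau_{n,\lambda}\xrightarrow{d}N(0,1)$, and both denominators give the same limit. Since the $G_{i,\lambda}$ are independent and centered Hilbert-space elements, $U_{n,\lambda}$ is a degenerate $U$-statistic with possibly non-identical summands. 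We would apply a martingale central limit theorem of Hall type (or de Jong's theorem for degenerate forms). Write $U_{n,\lambda}=\sum_{k=2}^n Y_k$, where
\[
Y_k=\frac{2}{n}\Big\langle G_{k,\lambda},\sum_{i<k}G_{i,\lambda}\Big\rangle_\lambda .
\]
This is a martingale difference array with respect to $\mathcal F_k=\sigma(G_{1,\lambda},\ldots,G_{k,\lambda})$. We would verify two conditions: the Lyapunov condition $\sum_k E Y_k^4=o(\tau_{n,\lambda}^4)$, and convergence of the conditional variance, $\sum_k E(Y_k^2\mid\mathcal F_{k-1})/\tau_{n,\lambda}^2\xrightarrow{p}1$.

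For the conditional variance we have the identity
\[
E(Y_k^2\mid\mathcal F_{k-1})=\frac{4}{n^2}\langle\Omega_{k,\lambda}W_{k-1},W_{k-1}\rangle_\lambda,\qquad W_{k-1}=\sum_{i<k}G_{i,\lambda}.
\]
Its sum $V_n$ has expectation $\tau_{n,\lambda}^2$. The aim is $\operatorname{Var}(V_n)=o(\tau_{n,\lambda}^4)$. Expanding $\langle\Omega_kW,W\rangle$ into diagonal terms $\langle\Omega_kG_i,G_i\rangle$ and off-diagonal terms $\langle\Omega_kG_i,G_l\rangle$ with $i\neq l$ gives two contributions.

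The off-diagonal part is again a degenerate form. Its variance is bounded by
\[
\frac{C}{n^4}\sum_{i,l}\Big\|\sum_{k>i\vee l}\Omega_k\Big\|^2\text{-type traces}\;\lesssim\; \operatorname{tr}(\Omega_{n,\lambda}^4).
\]
By Proposition~\ref{prop:covariance-comparison} this is at most $\|\Omega_{n,\lambda}\|_{\mathrm{op}}^2\operatorname{tr}(\Omega_{n,\lambda}^2)$, which is $o(\tau^4)$ by the dispersion bound \eqref{eq:derived-spectral-dispersion}. This is exactly where $s_\lambda\to0$ enters.

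The diagonal part has variance at most
\[
\frac{C}{n^4}\sum_i E\langle \bar\Omega G_i,G_i\rangle^2\le \frac{C}{n^2}\|\Omega\|_{\mathrm{op}}^2\sup_i E\|G_{i,\lambda}\|_\lambda^4 .
\]
By Lemma~\ref{lem:G-fourth} this is $O(n^{-2}s_\lambda^{-2})$. Relative to $\tau^4\asymp s_\lambda^{-2}$ it is $O(n^{-2})$, so it is negligible. Here $m_H$ is bounded above and below because $1\le m_i\le m_{\max}$.

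For the Lyapunov condition we expand $E Y_k^4=16n^{-4}E\langle G_k,W_{k-1}\rangle^4$. Conditioning on $W_{k-1}$ gives
\[
E\langle G_k,w\rangle^4\le C\,\|\Omega_k\|_{\mathrm{op}}\|w\|^2\cdot(\cdots).
\]
Rather than rely on that bound, we would use the cruder moment expansion $E\langle G_k,W\rangle^4\lesssim \sum_{i}E\langle G_k,G_i\rangle^4+\sum_{i\ne l}E[\langle G_k,G_i\rangle^2\langle G_k,G_l\rangle^2]$. Each term is controlled as follows:
\begin{itemize}
\item The first sum is bounded using $E\langle G_k,G_i\rangle^4\le (E\|G\|^4)^2\lesssim s_\lambda^{-4}$, so $\sum_k\sum_i$ contributes $n^{-4}\cdot n^2 s_\lambda^{-4}=(ns_\lambda)^{-2}\,s_\lambda^{-2}$. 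This is $o(\tau^4)$ since $ns_\lambda\to\infty$.
\item The second sum equals $E[\langle\Omega_iG_k,G_k\rangle\langle\Omega_lG_k,G_k\rangle]\le \|\Omega\|_{\mathrm{op}}^2E\|G_k\|^4\lesssim s_\lambda^{-2}$. Summed over $k,i,l$ it gives $n^{-4}n^3s_\lambda^{-2}=o(s_\lambda^{-2})$.
\end{itemize}

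The main obstacle is the off-diagonal term of the conditional variance. That term must be reduced carefully to $\operatorname{tr}(\Omega^4)$ despite the non-identical $\Omega_{k,\lambda}$, using the operator comparison \eqref{eq:operator-comparison} to dominate each $\Omega_{k,\lambda}$ by $Cm_k^{-1}A_\lambda$. If the direct expansion becomes unwieldy, we would instead invoke de Jong's criterion: the maximal influence $\max_i\sum_j\operatorname{tr}(\Omega_i\Omega_j)/(n^2\tau^2)=O(n^{-1})$, the fourth-moment ratio $E U^4/\tau^4\to3$, and the eigenvalue ratio condition, which is again \eqref{eq:derived-spectral-dispersion}.
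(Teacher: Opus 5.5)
Your proposal is correct and is essentially the paper's argument. The paper also reduces to \(U_{n,\lambda}/\tau_{n,\lambda}\) via Lemma~\ref{lem:variance-equivalence}, and it then checks de Jong's conditions \(G_I\), \(G_{II}\) and \(G_{IV}\). These are exactly the quantities your martingale unpacking controls: the Lyapunov bounds, of relative orders \((ns_\lambda)^{-2}\) and \(n^{-1}\), and the off-diagonal conditional-variance term, which is the four-cycle trace sum of relative order \(s_\lambda\).

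One small slip: the diagonal part of \(V_n\) has variance \(O(n^{-1}s_\lambda^{-2})\), not \(O(n^{-2}s_\lambda^{-2})\). This is because \(\|\sum_{k>i}\Omega_{k,\lambda}\|_{\mathrm{op}}\leq Cn\) and there are \(n\) summands. The resulting relative order \(O(n^{-1})\) is still negligible, so the conclusion stands.
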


\begin{proof}
By Lemma~\ref{lem:variance-equivalence}, it suffices to establish
a normal limit for \(U_{n,\lambda}/\tau_{n,\lambda}\).
Set \(W_{ij}=\langle G_{i,\lambda},G_{j,\lambda}\rangle_\lambda\)
and \(Q_{n,\lambda}=\sum_{i<j}W_{ij}\).
The kernels are degenerate, and
\(\operatorname{Var}(Q_{n,\lambda})
=n^2\tau_{n,\lambda}^2/4\asymp n^2s_\lambda^{-1}\).
We verify the conditions of Proposition~3.2 in \citet{dejong1987} for the quadratic form considered here.

First, independence and Lemma~\ref{lem:G-fourth} imply
\(E(W_{ij}^4)\leq Cs_\lambda^{-4}\). Hence
\[
\frac{\sum_{i<j}E(W_{ij}^4)}
{\{\operatorname{Var}(Q_{n,\lambda})\}^2}
=
O\{(ns_\lambda)^{-2}\}\to0.
\]
For three distinct indices, conditioning on \(G_{i,\lambda}\)
gives
\[
E(W_{ij}^2W_{ik}^2)
=
E\!\left[
\langle\Omega_{j,\lambda}G_{i,\lambda},G_{i,\lambda}\rangle_\lambda
\langle\Omega_{k,\lambda}G_{i,\lambda},G_{i,\lambda}\rangle_\lambda
\right].
\]
The individual covariance bound gives
\(\sup_i\|\Omega_{i,\lambda}\|_{\mathrm{op}}\leq C\).
The last expectation is therefore \(O(s_\lambda^{-2})\).
Summing over the \(O(n^3)\) configurations and normalizing by
\(\{\operatorname{Var}(Q_{n,\lambda})\}^2\) gives \(O(n^{-1})\).

For four distinct indices, a representative cycle satisfies
\[
E(W_{ij}W_{ik}W_{\ell j}W_{\ell k})
=
\operatorname{tr}
(\Omega_{j,\lambda}\Omega_{\ell,\lambda}
\Omega_{k,\lambda}\Omega_{i,\lambda}).
\]
The trace ideal inequalities in \citet{Simon2005} give
\[
|\operatorname{tr}(ABCD)|
\leq
\|A\|_{\mathrm{op}}\|B\|_{\mathrm{HS}}
\|C\|_{\mathrm{op}}\|D\|_{\mathrm{HS}}.
\]
Since
\(\sup_i\operatorname{tr}(\Omega_{i,\lambda}^2)
\leq Cs_\lambda^{-1}\), each cycle is \(O(s_\lambda^{-1})\).
The normalized sum over \(O(n^4)\) cycles is therefore
\(O(s_\lambda)\to0\).

These bounds verify the \(G_I\), \(G_{II}\), and \(G_{IV}\) conditions of Proposition~3.2 in \citet{dejong1987}.
Thus \(Q_{n,\lambda}\), divided by its standard deviation,
converges to \(N(0,1)\). Since \(U_{n,\lambda}=2Q_{n,\lambda}/n\),
Lemma~\ref{lem:variance-equivalence} and Slutsky's theorem
complete the proof.
\end{proof}

\begin{remark}\label{rem:sparse-diagonal}
The negligible term \(D_{n,\lambda}\) concerns the quadratic
decomposition across subjects. It does not remove the diagonal
contribution within a subject's covariance operator.
Each \(\Omega_{i,\lambda}\) retains both \(j=k\) and \(j\neq k\)
terms. Under bounded cluster sizes, the former generally remain
part of the leading covariance.
\end{remark}

\begin{remark}\label{rem:loss-examples}
The loss conditions include familiar choices from robust spline
estimation; see \citet{kalogridis2022} and
\citet{kalogridisvanaelst2023}. For
\(\rho(u)=u^2/2\), the score is \(\psi(u)=u\),
\(I(t)=1\), the linearization is exact, and \(\beta=2\).
For the usual Huber score, \(I(t)\) is the conditional probability
of the quadratic region; suitable regularity at the clipping
points gives the local expansion, while Lipschitz continuity
gives \(\beta=2\). See \citet{huberronchetti2009}.
For absolute loss, \(\psi=\operatorname{sign}\), and a positive,
locally regular conditional density at zero gives
\(I(t)=2f_t(0)\) and \(\beta=1\); see \citet{koenker2005}.
These examples must also satisfy the stated centering,
smoothness, and moment conditions.
\end{remark}

\subsection{Estimator expansion and objective improvement}
\label{subsec:bahadur}

To compare the two statistics, we first localize the estimator
and then approximate the criterion uniformly near the null.
We begin with Lipschitz scores. Absolute loss requires a separate
argument, given in Subsection~\ref{subsec:absolute-loss-bahadur}.

\begin{assumption}\label{ass:lipschitz-score}
There is a constant \(L_\psi<\infty\) such that
\[
|\psi(u)-\psi(v)|\leq L_\psi|u-v|,
\qquad u,v\in\mathbb R.
\]
\end{assumption}

Squared, Huber, and pseudo-Huber losses satisfy this condition.
Write \(r_{n,\lambda}=(ns_\lambda)^{-1/2}\) for the stochastic
radius of the null score.

\begin{lemma}\label{lem:null-score-rate}
Under Assumptions~\ref{ass:loss-primitive} and
\ref{ass:sparse-cluster},
\begin{equation*}
E\|\overline G_{n,\lambda}\|_\lambda^2
=
\frac{\mu_{n,\lambda}}{n}
\asymp(nm_Hs_\lambda)^{-1}
\asymp r_{n,\lambda}^2.
\end{equation*}
Consequently,
\begin{equation}
\|\overline G_{n,\lambda}\|_\lambda=O_p(r_{n,\lambda}).
\label{eq:null-score-rate}
\end{equation}
\end{lemma}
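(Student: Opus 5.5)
We will prove the identity first and then read off the two orders. Since the subject scores \(G_{i,\lambda}\) are independent, mean zero elements of \(\mathcal H_\lambda\) with \(E\|G_{i,\lambda}\|_\lambda^2<\infty\), expand the squared norm:
\[
E\|\overline G_{n,\lambda}\|_\lambda^2
=\frac{1}{n^2}\sum_{i,k}E\langle G_{i,\lambda},G_{k,\lambda}\rangle_\lambda .
\]
The cross terms with \(i\neq k\) vanish. By independence, \(E\langle G_{i,\lambda},G_{k,\lambda}\rangle_\lambda=\langle EG_{i,\lambda},EG_{k,\lambda}\rangle_\lambda=0\). This step uses Bochner integrability, which follows from the finite second moment (Lemma~\ref{lem:G-fourth} even gives fourth moments of order \(s_\lambda^{-2}\)).

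The diagonal terms use the trace identity \(E\|G_{i,\lambda}\|_\lambda^2=\operatorname{tr}(\Omega_{i,\lambda})\) recorded in Section~\ref{sec:criterion}. Summing gives
\[
E\|\overline G_{n,\lambda}\|_\lambda^2=n^{-2}\sum_i\operatorname{tr}(\Omega_{i,\lambda})=n^{-1}\operatorname{tr}(\Omega_{n,\lambda})=\mu_{n,\lambda}/n .
\]

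For the first order relation, apply \eqref{eq:mu-order} of Proposition~\ref{prop:covariance-comparison}, namely \(\mu_{n,\lambda}\asymp m_H^{-1}s_\lambda^{-1}\). This gives \(\mu_{n,\lambda}/n\asymp(nm_Hs_\lambda)^{-1}\).

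For the second relation, Assumption~\ref{ass:sparse-cluster}(i) gives \(1\le m_i\le m_{\max}\). The harmonic mean therefore satisfies \(1\le m_H\le m_{\max}\), so \(m_H\asymp1\). Hence \((nm_Hs_\lambda)^{-1}\asymp(ns_\lambda)^{-1}=r_{n,\lambda}^2\).

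The stochastic bound \eqref{eq:null-score-rate} then follows from Markov's inequality applied to \(\|\overline G_{n,\lambda}\|_\lambda^2\). For any \(M>0\),
\[
P(\|\overline G_{n,\lambda}\|_\lambda> Mr_{n,\lambda})\le E\|\overline G_{n,\lambda}\|_\lambda^2/(M^2r_{n,\lambda}^2)\le C/M^2,
\]
uniformly in \(n\). This bound is made small by choosing \(M\) large.

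There is no real obstacle here. The only point needing care is justifying the interchange \(E\langle G_i,G_k\rangle_\lambda=\langle EG_i,EG_k\rangle_\lambda\), together with the trace identity, for Hilbert-space valued random elements. Both are standard once \(E\|G_{i,\lambda}\|_\lambda^2<\infty\). That finiteness follows from \(\|G_{i,\lambda}\|_\lambda\le Cs_\lambda^{-1/2}m_i^{-1}\sum_j|\psi(e_{ij})|\), which uses \eqref{eq:kernel-diagonal-order} and the moment bound in Assumption~\ref{ass:loss-primitive}(ii). Mean zero of each \(G_{i,\lambda}\) comes from conditional score centering (Assumption~\ref{ass:mean-consistency}) together with design independence.
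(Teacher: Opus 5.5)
Your proposal is correct and follows the paper's own proof. Independence and centering reduce the second moment to $n^{-2}\sum_i E\|G_{i,\lambda}\|_\lambda^2=\mu_{n,\lambda}/n$, Proposition~\ref{prop:covariance-comparison} together with the boundedness of $m_H$ gives the order, and Markov's inequality gives the $O_p$ bound. One small point: the centering you need is already contained in Assumption~\ref{ass:loss-primitive}(i) as $\Psi_t(0)=0$, so you do not have to invoke a separate assumption.
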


\begin{proof}
Independence and centering give
\(E\|\overline G_{n,\lambda}\|_\lambda^2
=n^{-2}\sum_iE\|G_{i,\lambda}\|_\lambda^2
=\mu_{n,\lambda}/n\).
Proposition~\ref{prop:covariance-comparison} and boundedness of
\(m_H\) give the order. Markov's inequality gives
\eqref{eq:null-score-rate}.
\end{proof}

For the population criterion, use the difference
\(\mathcal Q_\lambda(h)-\mathcal Q_\lambda(0)
:=E\{Q_{n,\lambda}(h)-Q_{n,\lambda}(0)\}\).
This definition requires integrability of the loss difference,
rather than of the loss itself.

\begin{lemma}\label{lem:population-localization}
Under Assumption~\ref{ass:loss-primitive} and the conditions of
Lemma~\ref{lem:kernel-orders}, if \(\|h\|_\infty\leq a_0\), then
\[
\mathcal Q_\lambda(h)-\mathcal Q_\lambda(0)
=
\frac12\|h\|_\lambda^2+R_\lambda(h),
\]
where
\(|R_\lambda(h)|
\leq Cs_\lambda^{-\alpha/2}\|h\|_\lambda^{2+\alpha}\).
If a deterministic radius \(a_{n,\lambda}>0\) satisfies
\begin{equation*}
s_\lambda^{-1/2}a_{n,\lambda}\to0,
\end{equation*}
then the remainder is \(o(\|h\|_\lambda^2)\), uniformly over
\(0<\|h\|_\lambda\leq a_{n,\lambda}\).
\end{lemma}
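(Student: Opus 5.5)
The plan is to compute the population criterion difference pointwise in the design variable and then integrate. Write
\[
\mathcal Q_\lambda(h)-\mathcal Q_\lambda(0)
=\frac1n\sum_{i=1}^n\frac1{m_i}\sum_{j=1}^{m_i}
E\bigl[\rho\{e_{ij}-h(T_{ij})\}-\rho(e_{ij})\bigr]
+\frac\lambda2 J(h,h),
\]
using that \(e^0_{ij}=e_{ij}\) under the null and that \(h\) is deterministic. We condition on \(T_{ij}=t\). Since \(\|h\|_\infty\le a_0\), Lemma~\ref{lem:population-loss} applies with \(a=h(t)\) and gives
\[
E[\rho\{e_{ij}-h(t)\}-\rho(e_{ij})\mid T_{ij}=t]
=\tfrac12 I(t)h(t)^2+r(t),
\qquad |r(t)|\le C|h(t)|^{2+\alpha},
\]
with \(C\) uniform in \(t\). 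This step also settles integrability of the loss difference, which is all the definition of \(\mathcal Q_\lambda\) requires.

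Next we integrate against the common design density \(\pi\), which is the same for every subject and index. This gives
\[
\tfrac12\int_0^1 I h^2\pi\,dt+\int_0^1 r\pi\,dt
=\tfrac12 V(h,h)+R_\lambda(h).
\]
The equal-weight average over \(i\) and \(j\) leaves this unchanged. Adding the penalty \(\tfrac{\lambda}{2}J(h,h)\) yields \(\tfrac12\|h\|_\lambda^2+R_\lambda(h)\).

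For the remainder we bound
\[
|R_\lambda(h)|\le C\int_0^1|h|^{2+\alpha}\pi
\le C\|h\|_\infty^{\alpha}\int_0^1 h^2\pi
\le C\|h\|_\infty^{\alpha}\|h\|_\lambda^2,
\]
where the last step uses the lower bound on \(I\) to compare \(\int h^2\pi\) with \(V(h,h)\le\|h\|_\lambda^2\). The evaluation inequality \eqref{eq:evaluation-inequality} of Lemma~\ref{lem:kernel-orders} gives \(\|h\|_\infty^\alpha\le Cs_\lambda^{-\alpha/2}\|h\|_\lambda^\alpha\). Combining the two bounds gives \(|R_\lambda(h)|\le Cs_\lambda^{-\alpha/2}\|h\|_\lambda^{2+\alpha}\).

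For the second claim, take \(0<\|h\|_\lambda\le a_{n,\lambda}\). Then
\[
|R_\lambda(h)|/\|h\|_\lambda^2\le C(s_\lambda^{-1/2}a_{n,\lambda})^\alpha\to0,
\]
uniformly over such \(h\). One point must be checked first: the hypothesis \(\|h\|_\infty\le a_0\) has to hold on this ball. It does for large \(n\), since \(\|h\|_\infty\le Cs_\lambda^{-1/2}a_{n,\lambda}\to0\).

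The main point of care is the passage from \(|h|^{2+\alpha}\) to the \(\lambda\)-norm, which requires the sup-norm evaluation bound rather than any \(L^2\) control. The remaining technical detail is justifying the conditional Fubini step inside Lemma~\ref{lem:population-loss} uniformly in \(t\), using the moment bound of Assumption~\ref{ass:loss-primitive}(ii) with \(M=a_0\).
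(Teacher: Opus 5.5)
Your proposal is correct and follows the paper's proof essentially step for step. You integrate Lemma~\ref{lem:population-loss} over the design, bound \(\int_0^1|h|^{2+\alpha}\pi\,dt\) by \(C\|h\|_\infty^\alpha V(h,h)\), apply the evaluation inequality \eqref{eq:evaluation-inequality}, and use \(s_\lambda^{-1/2}a_{n,\lambda}\to0\) both for the ratio bound and to ensure \(\|h\|_\infty\le a_0\) on the ball.
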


\begin{proof}
Integrating Lemma~\ref{lem:population-loss} over the design gives
\[
\mathcal Q_\lambda(h)-\mathcal Q_\lambda(0)
=
\frac12V(h,h)+\frac{\lambda}{2}J(h,h)
+O\!\left(\int_0^1|h(t)|^{2+\alpha}\pi(t)\,dt\right).
\]
The integral is bounded by
\(C\|h\|_\infty^\alpha V(h,h)\).
The evaluation inequality and
\(V(h,h)\leq\|h\|_\lambda^2\) give the stated remainder bound.
On the specified ball, its ratio to \(\|h\|_\lambda^2\) is at
most \(C(s_\lambda^{-1/2}a_{n,\lambda})^\alpha\to0\).
The same condition ensures \(\|h\|_\infty\leq a_0\) throughout
the ball for sufficiently large \(n\).
\end{proof}

\begin{lemma}\label{lem:lambda-ball-entropy}
Let
\(\mathcal H_\lambda(a)
=\{h\in\mathcal H_\lambda:\|h\|_\lambda\leq a\}\).
For \(r\geq2\) and sufficiently small \(\lambda\),
\begin{equation*}
\log N\{\varepsilon,\mathcal H_\lambda(a),\|\cdot\|_\infty\}
\leq
C\left(\frac{a s_\lambda^{-r}}{\varepsilon}\right)^{1/(r-1/2)},
\qquad
0<\varepsilon\leq Ca s_\lambda^{-1/2}.
\end{equation*}
Consequently,
\begin{equation}
\int_0^{Ca s_\lambda^{-1/2}}
\sqrt{\log N\{\varepsilon,\mathcal H_\lambda(a),
\|\cdot\|_\infty\}}\,d\varepsilon
\leq Ca s_\lambda^{-1}.
\label{eq:entropy-integral-lambda-ball}
\end{equation}
\end{lemma}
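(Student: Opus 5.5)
The plan is to reduce the $\lambda$-ball to a rescaled Sobolev ball and then invoke the classical sup-norm entropy bound for Sobolev balls. If $h\in\mathcal H_\lambda(a)$, then $V(h,h)\le a^2$ and $\lambda J(h,h)\le a^2$. Because $I\pi$ is bounded below, this gives
\[
\|h\|_{L^2}\le Ca
\qquad\text{and}\qquad
\|h^{(r)}\|_{L^2}\le a\lambda^{-1/2}=as_\lambda^{-r}.
\]
For small $\lambda$ we have $s_\lambda^{-r}\ge1$, so $\mathcal H_\lambda(a)$ lies inside the full Sobolev ball $\{h:\|h\|_{W^{r,2}}\le C a s_\lambda^{-r}\}$. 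By homogeneity, its covering numbers satisfy
\[
N\{\varepsilon,\mathcal H_\lambda(a),\|\cdot\|_\infty\}
\le N\{\varepsilon/(Cas_\lambda^{-r}),\mathcal B_1,\|\cdot\|_\infty\},
\]
where $\mathcal B_1$ is the unit ball of $W^{r,2}[0,1]$.

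The key input is the sup-norm entropy of $\mathcal B_1$. The standard Birman--Solomyak/Kolmogorov--Tikhomirov estimate gives $\log N(\eta,\mathcal B_1,\|\cdot\|_\infty)\le C\eta^{-1/r}$. That would produce the exponent $1/r$, which is stronger than the stated $1/(r-1/2)$. Since $1/r<1/(r-1/2)$, the stated bound follows once the ratio $as_\lambda^{-r}/\varepsilon$ is bounded below. The restriction $\varepsilon\le Cas_\lambda^{-1/2}$ ensures this, because it gives $as_\lambda^{-r}/\varepsilon\ge c\,s_\lambda^{-(r-1/2)}\ge c$.

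If one wants to avoid relying on the sharp sup-norm result, there is a cruder route that yields exactly the exponent $1/(r-1/2)$. First embed $W^{r,2}$ into the Hölder class $C^{r-1/2}$ (for $r\ge2$ this embedding is nontrivial). Then apply the Hölder-class entropy bound $\log N\le C\eta^{-1/(r-1/2)}$. This is presumably why the statement carries $r-1/2$ and requires $r\ge2$. I would present this route, checking carefully that the embedding constant is uniform and that scaling by $as_\lambda^{-r}$ is legitimate. Scaling is fine because covering numbers of $tB$ at radius $\varepsilon$ equal those of $B$ at radius $\varepsilon/t$.

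For the entropy integral \eqref{eq:entropy-integral-lambda-ball}, set $\kappa=1/(2r-1)<1$. The bound gives
\[
\int_0^{E}\sqrt{\log N}\,d\varepsilon\le C(as_\lambda^{-r})^{\kappa}\int_0^{E}\varepsilon^{-\kappa}\,d\varepsilon
=C(as_\lambda^{-r})^{\kappa}E^{1-\kappa},
\qquad E=Cas_\lambda^{-1/2}.
\]
The integral converges because $\kappa<1$, which uses $r\ge2$ (for $r=1$, $\kappa=1$ and the integral diverges). Collecting powers, the factor of $a$ is $a^{\kappa}a^{1-\kappa}=a$. The power of $s_\lambda$ has exponent
\[
-r\kappa-\tfrac12(1-\kappa)=-\frac{r+r-1}{2r-1}=-1.
\]
This gives $Cas_\lambda^{-1}$ as claimed.

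The main obstacle is the Sobolev/Hölder entropy input itself. It must be stated with explicit dependence on the ball radius, and it must apply to the full inhomogeneous norm rather than the seminorm alone. The polynomial null space of $J$ is controlled only through the $L^2$ bound, so I would handle it by splitting off the projection onto $\mathcal P_{r-1}$. That projection is a finite-dimensional piece bounded in sup norm by $Ca$, whose entropy is $O(\log(a/\varepsilon))$. This is dominated by the main term on the given range of $\varepsilon$.
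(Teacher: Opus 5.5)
Your proposal is correct, and the route you say you would present matches the paper's proof. You bound $\|h\|_{L^2}\le Ca$ and $\|h^{(r)}\|_{L^2}\le as_\lambda^{-r}$, place $\mathcal H_\lambda(a)$ inside a H\"older ball of smoothness $r-1/2$ and radius $Cas_\lambda^{-r}$, apply the H\"older-class entropy bound (Theorem~2.7.1 of van der Vaart and Wellner), and integrate with the same exponent bookkeeping, where $r\geq2$ ensures $1/(2r-1)<1$; the sharper $\eta^{-1/r}$ route is also valid, and your separate treatment of the polynomial null space is unnecessary because the two norm bounds already control the full Sobolev norm, as your first paragraph uses.
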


\begin{proof}
The norm bounds imply
\(\|h\|_{L^2}\leq Ca\) and
\(\|h^{(r)}\|_{L^2}\leq a s_\lambda^{-r}\).
Sobolev interpolation therefore gives
\(\|h\|_{W^{r,2}}\leq Ca s_\lambda^{-r}\).
Moreover,
\[
|h^{(r-1)}(t)-h^{(r-1)}(u)|
\leq\|h^{(r)}\|_{L^2}|t-u|^{1/2}.
\]
Thus the ball is contained in a Hölder ball of smoothness
\(r-1/2\) and radius \(Ca s_\lambda^{-r}\).
The entropy bound follows from Theorem~2.7.1 of
\citet{vanderVaartWellner1996}.

Put \(\kappa=(r-1/2)^{-1}<2\).
Integrating the square root of the entropy bound up to
\(Ca s_\lambda^{-1/2}\) gives an upper bound proportional to
\[
(a s_\lambda^{-r})^{\kappa/2}
(a s_\lambda^{-1/2})^{1-\kappa/2}
=
a s_\lambda^{-1}.
\]
This proves \eqref{eq:entropy-integral-lambda-ball}.
\end{proof}

Define the nonlinear loss remainder by
\(q(e,a)=\rho(e-a)-\rho(e)+\psi(e)a\).
Convexity gives \(q(e,a)\geq0\), and the Lipschitz score condition
gives \(q(e,a)=\int_0^a\{\psi(e)-\psi(e-u)\}\,du\).
It follows that
\begin{equation}
|q(e,a)|\leq \frac{L_\psi}{2}a^2.
\label{eq:q-envelope}
\end{equation}
For \(|a|,|b|\leq R\), the same identity yields
\begin{equation}
|q(e,a)-q(e,b)|\leq L_\psi R|a-b|.
\label{eq:q-local-lipschitz}
\end{equation}
Write
\(Z_i(h)=m_i^{-1}\sum_{j=1}^{m_i}q(e_{ij},h(T_{ij}))\)
and
\(\mathbb E_{n,\lambda}(h)
=n^{-1}\sum_i\{Z_i(h)-EZ_i(h)\}\).
The criterion then has the exact decomposition
\begin{equation}
\begin{aligned}
Q_{n,\lambda}(h)-Q_{n,\lambda}(0)
={}&-\langle\overline G_{n,\lambda},h\rangle_\lambda\\
&+\{\mathcal Q_\lambda(h)-\mathcal Q_\lambda(0)\}
+\mathbb E_{n,\lambda}(h).
\end{aligned}
\label{eq:criterion-exact-decomposition}
\end{equation}

\begin{lemma}\label{lem:uniform-nonlinear-remainder}
Suppose Assumptions~\ref{ass:loss-primitive},
\ref{ass:sparse-cluster}, and \ref{ass:lipschitz-score} hold,
and let \(r\geq2\). For \(a=a_{n,\lambda}>0\),
\begin{equation}
\sup_{h\in\mathcal H_\lambda(a)}
|\mathbb E_{n,\lambda}(h)|
=
O_p\left(\frac{a^2}{\sqrt n\,s_\lambda^{3/2}}\right).
\label{eq:uniform-nonlinear-rate}
\end{equation}
\end{lemma}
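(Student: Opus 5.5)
We would prove \eqref{eq:uniform-nonlinear-rate} by chaining for the empirical process \(h\mapsto\mathbb E_{n,\lambda}(h)\), indexed by the ball \(\mathcal H_\lambda(a)\). The metric is the sup norm, and the entropy bound of Lemma~\ref{lem:lambda-ball-entropy} supplies the complexity control. The summands \(Z_i(h)-EZ_i(h)\) are independent across subjects, and \(\mathbb E_{n,\lambda}(0)=0\). It therefore suffices to bound \(E\sup_{h\in\mathcal H_\lambda(a)}|\mathbb E_{n,\lambda}(h)|\) and then apply Markov's inequality.

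\textbf{Sub-Gaussian increments.} By \eqref{eq:evaluation-inequality}, every \(h\in\mathcal H_\lambda(a)\) satisfies \(\|h\|_\infty\leq R:=Cas_\lambda^{-1/2}\). Then \eqref{eq:q-local-lipschitz} gives, for \(h,h'\) in the ball,
\[
|Z_i(h)-Z_i(h')|\leq L_\psi R\,\|h-h'\|_\infty
\]
deterministically. This bound needs neither the moments of \(\psi\) nor the dependence within subjects, since \(Z_i\) is an average over the subject's own design points. Consequently the process \(\sqrt n\,\mathbb E_{n,\lambda}(h)\) has sub-Gaussian increments with respect to the metric \(d(h,h')=CL_\psi R\|h-h'\|_\infty\). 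This follows from Hoeffding's inequality applied to the independent, centered, bounded differences \(Z_i(h)-Z_i(h')-E\{\cdot\}\).

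\textbf{Chaining.} The sup-norm diameter of the ball is at most \(2R\). The standard chaining bound (Corollary~2.2.8 of \citet{vanderVaartWellner1996}) therefore gives
\[
E\sup_{h\in\mathcal H_\lambda(a)}\sqrt n\,|\mathbb E_{n,\lambda}(h)|
\leq CR\int_0^{2R}\sqrt{\log N\{\varepsilon,\mathcal H_\lambda(a),\|\cdot\|_\infty\}}\,d\varepsilon .
\]
We adjust constants so that the upper limit \(2R\) matches \(Cas_\lambda^{-1/2}\). The integral is then at most \(Cas_\lambda^{-1}\) by \eqref{eq:entropy-integral-lambda-ball}. Combining the two factors,
\[
E\sup_{h\in\mathcal H_\lambda(a)}|\mathbb E_{n,\lambda}(h)|
\leq Cn^{-1/2}\,a s_\lambda^{-1/2}\cdot a s_\lambda^{-1}
= C\,a^2 n^{-1/2}s_\lambda^{-3/2}.
\]
Markov's inequality then yields \eqref{eq:uniform-nonlinear-rate}.

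\textbf{Main obstacle.} The delicate point is the measurability of the supremum and the separability of the process. We handle it by noting that \(h\mapsto Z_i(h)\) is sup-norm continuous on the separable ball, so the supremum can be taken over a countable dense subset. A second point to check is that the entropy bound holds over the full range \(0<\varepsilon\leq 2R\). If the constant in Lemma~\ref{lem:lambda-ball-entropy} restricts the range, we set the entropy to zero beyond the covering radius, because one ball then suffices. The hypothesis \(r\geq2\) is used only through the integrability of the entropy integral, via \(\kappa<2\).
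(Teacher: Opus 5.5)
Your proof is correct and follows the paper's argument. It uses the same deterministic bound \(|Z_i(h)-Z_i(g)|\leq CR\|h-g\|_\infty\) with \(R=Cas_\lambda^{-1/2}\), the same chaining bound (Corollary~2.2.8 of \citet{vanderVaartWellner1996}) anchored at \(\mathbb E_{n,\lambda}(0)=0\), the entropy integral \eqref{eq:entropy-integral-lambda-ball}, and Markov's inequality.

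The only difference is how you obtain sub-Gaussian increments. You take them directly from Hoeffding's inequality, which is valid because the increment bound is deterministic. The paper instead symmetrizes first and chains the conditional Rademacher process in the empirical metric \(d_n\). That detour is unnecessary here. It is, however, what makes the absolute-loss analogue, Lemma~\ref{lem:absolute-uniform-remainder}, work: there the Lipschitz factor \(L_i\) is random and small only in expectation, so a deterministic Hoeffding envelope would lose the factor \(R^{1/2}\).
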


\begin{proof}
Set \(R=Ca s_\lambda^{-1/2}\), so
\(\|h\|_\infty\leq R\) on \(\mathcal H_\lambda(a)\).
Equations \eqref{eq:q-envelope} and
\eqref{eq:q-local-lipschitz} give
\(0\leq Z_i(h)\leq CR^2\) and
\(|Z_i(h)-Z_i(g)|\leq CR\|h-g\|_\infty\).

Let \(\epsilon_1,\ldots,\epsilon_n\) be independent Rademacher
variables, independent of the data. Symmetrization gives
\[
E\sup_h|\sqrt n\,\mathbb E_{n,\lambda}(h)|
\leq
2E\sup_h\left|
n^{-1/2}\sum_{i=1}^n\epsilon_iZ_i(h)
\right|.
\]
Here and below the suprema are over \(\mathcal H_\lambda(a)\).
Conditionally on the data, the Rademacher process has increment
metric
\(d_n(h,g)=\{n^{-1}\sum_i[Z_i(h)-Z_i(g)]^2\}^{1/2}\),
which is bounded by \(CR\|h-g\|_\infty\).
Since \(Z_i(0)=0\), the entropy maximal inequality gives
\[
E_\epsilon\sup_h\left|
n^{-1/2}\sum_i\epsilon_iZ_i(h)
\right|
\leq
CR\int_0^{Ca s_\lambda^{-1/2}}
\sqrt{\log N\{u,\mathcal H_\lambda(a),\|\cdot\|_\infty\}}\,du.
\]
See Lemma~2.3.1 and Corollary~2.2.8 in \citet{vanderVaartWellner1996}.
Lemma~\ref{lem:lambda-ball-entropy} bounds the right side by
\(Ca^2s_\lambda^{-3/2}\).
Taking expectations and applying Markov's inequality proves
\eqref{eq:uniform-nonlinear-rate}.
\end{proof}

\begin{theorem}\label{thm:estimator-localization}
Suppose Assumptions~\ref{ass:loss-primitive},
\ref{ass:sparse-cluster}, and \ref{ass:lipschitz-score} hold,
and let \(r\geq2\). If \(s_\lambda\to0\) and
\(ns_\lambda^3\to\infty\), then
\begin{equation*}
\|\widehat h_{n,\lambda}\|_\lambda=O_p(r_{n,\lambda}).
\end{equation*}
\end{theorem}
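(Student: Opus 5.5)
We prove the localization with a convexity argument on a shell. Since \(Q_{n,\lambda}\) is convex in \(h\), it suffices to show the following. For every \(\epsilon>0\) there are \(M\) and \(n_0\) such that, with probability at least \(1-\epsilon\),
\[
\inf_{\|h\|_\lambda=Mr_{n,\lambda}}\{Q_{n,\lambda}(h)-Q_{n,\lambda}(0)\}>0 .
\]
Indeed, if a minimizer \(\widehat h\) had \(\|\widehat h\|_\lambda>Mr_{n,\lambda}\), then the point \(t\widehat h\) on the sphere, with \(t\in(0,1)\), satisfies
\[
Q_{n,\lambda}(t\widehat h)\le (1-t)Q_{n,\lambda}(0)+tQ_{n,\lambda}(\widehat h)\le Q_{n,\lambda}(0),
\]
which is a contradiction. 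This also covers non-unique minimizers, consistent with the remark that uniqueness is unnecessary.

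On the sphere \(\|h\|_\lambda=a:=Mr_{n,\lambda}\), we use the exact decomposition \eqref{eq:criterion-exact-decomposition} and bound its three terms separately.

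\textbf{Linear term.} By Cauchy--Schwarz and Lemma~\ref{lem:null-score-rate},
\[
|\langle\overline G_{n,\lambda},h\rangle_\lambda|\le\|\overline G_{n,\lambda}\|_\lambda\, a=O_p(r_{n,\lambda})\,Mr_{n,\lambda}.
\]
So outside an event of probability \(\epsilon/2\), this term is at most \(C_\epsilon M r_{n,\lambda}^2\), where \(C_\epsilon\) does not depend on \(M\).

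\textbf{Population term.} We apply Lemma~\ref{lem:population-localization} with the deterministic radius \(a_{n,\lambda}=Mr_{n,\lambda}\). Its condition is
\[
s_\lambda^{-1/2}a_{n,\lambda}=M(ns_\lambda^2)^{-1/2}\to0,
\]
which follows from \(ns_\lambda^3\to\infty\), since \(ns_\lambda^2\ge ns_\lambda^3\). The lemma then gives
\[
\mathcal Q_\lambda(h)-\mathcal Q_\lambda(0)\ge \tfrac14 M^2r_{n,\lambda}^2
\]
for large \(n\), uniformly on the sphere.

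\textbf{Empirical remainder.} Lemma~\ref{lem:uniform-nonlinear-remainder} on \(\mathcal H_\lambda(a)\) gives
\[
\sup|\mathbb E_{n,\lambda}(h)|=O_p\{a^2(ns_\lambda^3)^{-1/2}\}.
\]
This is \(M^2r_{n,\lambda}^2\cdot o_p(1)\), again by \(ns_\lambda^3\to\infty\), and this step is where that rate condition is really used. Combining the three bounds, on an event of probability at least \(1-\epsilon\) for large \(n\),
\[
Q_{n,\lambda}(h)-Q_{n,\lambda}(0)\ge r_{n,\lambda}^2\{M^2/4-C_\epsilon M-M^2/8\}>0
\]
once \(M>8C_\epsilon\).

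The main obstacle is that the remainder bound of Lemma~\ref{lem:uniform-nonlinear-remainder} is stated as \(O_p\) for a fixed radius sequence, while \(M\) must be chosen after \(\epsilon\). We handle this by going back to its proof, which gives an expectation bound
\[
E\sup_{\mathcal H_\lambda(a)}|\mathbb E_{n,\lambda}|\le Ca^2n^{-1/2}s_\lambda^{-3/2}
\]
with a constant uniform in \(a\). Markov's inequality applied with \(a=Mr_{n,\lambda}\) then gives a probability bound of the form \(C(ns_\lambda^3)^{-1/2}/\delta\) for the event that the supremum exceeds \(\delta M^2r_{n,\lambda}^2\), uniformly in \(M\). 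A second, smaller point is that Lemma~\ref{lem:population-localization} needs \(\|h\|_\infty\le a_0\). This holds on the sphere because \(\|h\|_\infty\le Cs_\lambda^{-1/2}Mr_{n,\lambda}\to0\) by \eqref{eq:evaluation-inequality}.
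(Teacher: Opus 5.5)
Your proof is correct and follows the paper's own argument. Both use the convexity contradiction on the sphere $\|h\|_\lambda=Mr_{n,\lambda}$, the exact decomposition \eqref{eq:criterion-exact-decomposition}, and the three bounds from Lemmas~\ref{lem:null-score-rate}, \ref{lem:population-localization}, and \ref{lem:uniform-nonlinear-remainder}, with $M$ taken as a large multiple of the score-rate constant. The uniformity-in-$M$ issue you flag is not a real obstacle: once $\epsilon$, and hence $M$, is fixed, Lemma~\ref{lem:uniform-nonlinear-remainder} applies directly to the single sequence $a_{n,\lambda}=Mr_{n,\lambda}$, which is what the paper does, although your expectation-bound route is also valid.
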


\begin{proof}
Fix \(\varepsilon>0\). By Lemma~\ref{lem:null-score-rate},
choose \(K\) so that
\(\|\overline G_{n,\lambda}\|_\lambda\leq Kr_{n,\lambda}\)
with probability at least \(1-\varepsilon\) for large \(n\).
On this event, the linear term on the sphere
\(\|h\|_\lambda=Mr_{n,\lambda}\) is bounded in absolute value
by \(KMr_{n,\lambda}^2\).

Since \(s_\lambda^{-1/2}r_{n,\lambda}
=(ns_\lambda^2)^{-1/2}\to0\), the population expansion applies
uniformly on the ball. The empirical remainder is
\(O_p\{M^2r_{n,\lambda}^2(ns_\lambda^3)^{-1/2}\}
=o_p(M^2r_{n,\lambda}^2)\).
Thus \eqref{eq:criterion-exact-decomposition} gives, uniformly
on the sphere,
\[
Q_{n,\lambda}(h)-Q_{n,\lambda}(0)
\geq
\{\tfrac12+o_p(1)\}M^2r_{n,\lambda}^2
-KMr_{n,\lambda}^2.
\]
Choose \(M>4K\). The right side is positive with probability
at least \(1-\varepsilon-o(1)\).
If a minimizer lay outside the ball, convexity would give a point
on the segment from zero to that minimizer, lying on the sphere,
with criterion value at most \(Q_{n,\lambda}(0)\).
This contradicts the boundary inequality and proves the result
for every minimizer.
\end{proof}

The two remainder terms are summarized by
\(\delta_{n,\lambda}
=(ns_\lambda^2)^{-\alpha/2}+(ns_\lambda^3)^{-1/2}\).
The first comes from the population expansion and the second
from the centered nonlinear empirical term.

\begin{theorem}\label{thm:bahadur}
Under the assumptions of Theorem~\ref{thm:estimator-localization},
\begin{equation*}
\|\widehat h_{n,\lambda}-\overline G_{n,\lambda}\|_\lambda
=
O_p\{r_{n,\lambda}\delta_{n,\lambda}^{1/2}\}.
\end{equation*}
Consequently,
\(\widehat h_{n,\lambda}
=\overline G_{n,\lambda}+o_p(r_{n,\lambda})\)
in \(\mathcal H_\lambda\).
\end{theorem}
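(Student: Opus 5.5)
The plan is a standard convexity/quadratic-comparison argument applied to the exact decomposition \eqref{eq:criterion-exact-decomposition}. Write \(\widehat h=\widehat h_{n,\lambda}\) and \(\overline G=\overline G_{n,\lambda}\). On a ball \(\mathcal H_\lambda(a)\) with \(a=Mr_{n,\lambda}\), combine Lemma~\ref{lem:population-localization} and Lemma~\ref{lem:uniform-nonlinear-remainder} to get, uniformly in \(h\in\mathcal H_\lambda(a)\),
\[
Q_{n,\lambda}(h)-Q_{n,\lambda}(0)
=-\langle\overline G,h\rangle_\lambda+\tfrac12\|h\|_\lambda^2+\mathcal R_n(h)
=\tfrac12\|h-\overline G\|_\lambda^2-\tfrac12\|\overline G\|_\lambda^2+\mathcal R_n(h).
\]
The remainder satisfies
\(\sup_{h\in\mathcal H_\lambda(a)}|\mathcal R_n(h)|
\leq Cs_\lambda^{-\alpha/2}a^{2+\alpha}+O_p\{a^2(ns_\lambda^3)^{-1/2}\}\).
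With \(a=Mr_{n,\lambda}\), we have \(s_\lambda^{-\alpha/2}a^{\alpha}=M^\alpha(ns_\lambda^2)^{-\alpha/2}\). Hence \(\sup|\mathcal R_n|=O_p(M^{2+\alpha}r_{n,\lambda}^2\delta_{n,\lambda})\), up to \(M\)-dependent constants.

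Next, fix \(\varepsilon>0\). By Theorem~\ref{thm:estimator-localization} and Lemma~\ref{lem:null-score-rate}, choose \(M\) so that both \(\widehat h\) and \(\overline G\) lie in \(\mathcal H_\lambda(Mr_{n,\lambda})\) with probability at least \(1-\varepsilon\). On this event, evaluate the expansion at \(h=\widehat h\) and at \(h=\overline G\). Minimality of \(\widehat h\) gives \(Q_{n,\lambda}(\widehat h)\leq Q_{n,\lambda}(\overline G)\). Substituting the quadratic form, the \(\|\overline G\|_\lambda^2\) terms cancel, and we obtain
\[
\tfrac12\|\widehat h-\overline G\|_\lambda^2\leq \mathcal R_n(\overline G)-\mathcal R_n(\widehat h)\leq 2\sup_{\mathcal H_\lambda(Mr_{n,\lambda})}|\mathcal R_n|=O_p(r_{n,\lambda}^2\delta_{n,\lambda}).
\]
Taking square roots yields \(\|\widehat h-\overline G\|_\lambda=O_p(r_{n,\lambda}\delta_{n,\lambda}^{1/2})\). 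Because \(ns_\lambda^3\to\infty\) and \(s_\lambda\to0\), both \(ns_\lambda^2\) and \(ns_\lambda^3\) diverge, so \(\delta_{n,\lambda}\to0\). This gives the consequence \(\widehat h=\overline G+o_p(r_{n,\lambda})\). The argument avoids differentiating the criterion, so it does not matter whether the minimizer is unique or whether \(\psi\) has kinks. It applies to every minimizer, since only \(Q(\widehat h)\leq Q(\overline G)\) is used.

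The main obstacle is making the remainder bound uniform over a ball that contains both the random point \(\overline G\) and \(\widehat h\), with the correct \(M\) bookkeeping. Lemma~\ref{lem:population-localization} needs \(\|h\|_\infty\leq a_0\) on the ball. This holds because \(s_\lambda^{-1/2}Mr_{n,\lambda}=M(ns_\lambda^2)^{-1/2}\to0\). Lemma~\ref{lem:uniform-nonlinear-remainder} is stated for a deterministic radius, which is why I use the fixed radius \(Mr_{n,\lambda}\) and intersect with the high-probability event rather than plugging in random radii. One must also check that \(\overline G\in W^{r,2}[0,1]\), so that it is an admissible competitor in the minimization. It is a finite sum of kernel sections \(K_\lambda(T_{ij},\cdot)\in\mathcal H_\lambda\), so this holds.
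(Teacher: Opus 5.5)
Your proposal is correct and follows the paper's proof essentially step for step. It uses the same ingredients: a uniform quadratic expansion on a fixed ball of radius $Mr_{n,\lambda}$ built from Lemma~\ref{lem:population-localization} and Lemma~\ref{lem:uniform-nonlinear-remainder}, localization of both $\widehat h_{n,\lambda}$ and $\overline G_{n,\lambda}$ in that ball, and a comparison of criterion values after completing the square, which yields $\|\widehat h_{n,\lambda}-\overline G_{n,\lambda}\|_\lambda^2\leq 4\sup|\mathcal R_{n,\lambda}|$. Your extra checks (the sup-norm condition, the deterministic radius, and admissibility of $\overline G_{n,\lambda}$ as a competitor) fill in details the paper leaves implicit.
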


\begin{proof}
For each fixed \(C<\infty\), the population and empirical
remainder bounds give
\begin{equation}
Q_{n,\lambda}(h)-Q_{n,\lambda}(0)
=
\frac12\|h\|_\lambda^2
-\langle\overline G_{n,\lambda},h\rangle_\lambda
+\mathcal R_{n,\lambda}(h),
\label{eq:uniform-criterion-bahadur}
\end{equation}
uniformly over \(\|h\|_\lambda\leq Cr_{n,\lambda}\), with
\begin{equation}
\sup_{\|h\|_\lambda\leq Cr_{n,\lambda}}
|\mathcal R_{n,\lambda}(h)|
=
O_p\{r_{n,\lambda}^2\delta_{n,\lambda}\}.
\label{eq:uniform-criterion-remainder-rate}
\end{equation}
Localization and the null score rate place both
\(\widehat h_{n,\lambda}\) and \(\overline G_{n,\lambda}\)
in such a ball with arbitrarily high probability when \(C\)
is sufficiently large.

Complete the square in
\eqref{eq:uniform-criterion-bahadur} and compare the criterion
at these two points. The minimizing property gives
\[
\|\widehat h_{n,\lambda}-\overline G_{n,\lambda}\|_\lambda^2
\leq
4\sup_{\|h\|_\lambda\leq Cr_{n,\lambda}}
|\mathcal R_{n,\lambda}(h)|.
\]
Equation \eqref{eq:uniform-criterion-remainder-rate} proves the
rate. Finally, the bandwidth conditions imply
\(\delta_{n,\lambda}\to0\).
\end{proof}

Related functional Bahadur representations for smoothing spline
\(M\)-estimators were developed by \citet{shang2010} and
\citet{shangcheng2013}. For sparse functional data,
\citet{dawoodmostafaiy2025} derived a Bahadur representation
for a smoothing spline estimator of the mean function.
Here the leading term is an average of independent subject
scores, each retaining the dependence among observations
from the same trajectory.

\begin{theorem}\label{thm:statistic-equivalence}
Under the assumptions of Theorem~\ref{thm:bahadur}, suppose
\(n^\alpha s_\lambda^{2\alpha+1}\to\infty\) and
\(ns_\lambda^4\to\infty\). Then
\begin{equation}
T_{n,\lambda}-S_{n,\lambda}=o_p(s_\lambda^{-1/2}),
\label{eq:TS-equivalence}
\end{equation}
and
\begin{equation}
\frac{T_{n,\lambda}-\mu_{n,\lambda}}
{\{2\operatorname{tr}(\Omega_{n,\lambda}^2)\}^{1/2}}
\xrightarrow{d}N(0,1).
\label{eq:objective-improvement-clt}
\end{equation}
\end{theorem}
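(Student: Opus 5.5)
We use the uniform quadratic expansion \eqref{eq:uniform-criterion-bahadur} on a ball of radius \(Cr_{n,\lambda}\) and compare the criterion at \(\widehat h_{n,\lambda}\) and at \(0\). Since \(Q_{n,\lambda}(0)-Q_{n,\lambda}(0)=0\),
\[
T_{n,\lambda}
=-2n\{Q_{n,\lambda}(\widehat h_{n,\lambda})-Q_{n,\lambda}(0)\}
=n\bigl\{2\langle\overline G_{n,\lambda},\widehat h_{n,\lambda}\rangle_\lambda
-\|\widehat h_{n,\lambda}\|_\lambda^2\bigr\}
-2n\mathcal R_{n,\lambda}(\widehat h_{n,\lambda}).
\]
Completing the square gives
\[
n\bigl\{2\langle\overline G,\widehat h\rangle_\lambda-\|\widehat h\|_\lambda^2\bigr\}
=n\|\overline G_{n,\lambda}\|_\lambda^2-n\|\widehat h_{n,\lambda}-\overline G_{n,\lambda}\|_\lambda^2,
\]
and the first term is exactly \(S_{n,\lambda}\). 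Hence
\[
|T_{n,\lambda}-S_{n,\lambda}|
\le n\|\widehat h_{n,\lambda}-\overline G_{n,\lambda}\|_\lambda^2
+2n\sup_{\|h\|_\lambda\le Cr_{n,\lambda}}|\mathcal R_{n,\lambda}(h)|.
\]
By Theorem~\ref{thm:estimator-localization}, Lemma~\ref{lem:null-score-rate} and Theorem~\ref{thm:bahadur}, both terms are \(O_p(nr_{n,\lambda}^2\delta_{n,\lambda})=O_p(s_\lambda^{-1}\delta_{n,\lambda})\), because \(nr_{n,\lambda}^2=s_\lambda^{-1}\).

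It therefore suffices to show \(s_\lambda^{-1}\delta_{n,\lambda}=o(s_\lambda^{-1/2})\), that is, \(s_\lambda^{-1/2}\delta_{n,\lambda}\to0\). We check the two pieces of \(\delta_{n,\lambda}\) separately.

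\begin{itemize}
\item The population piece gives \(s_\lambda^{-1/2}(ns_\lambda^2)^{-\alpha/2}=(n^\alpha s_\lambda^{2\alpha+1})^{-1/2}\to0\), which is exactly the first hypothesis.
\item The empirical piece gives \(s_\lambda^{-1/2}(ns_\lambda^3)^{-1/2}=(ns_\lambda^4)^{-1/2}\to0\), which is the second hypothesis.
\end{itemize}

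This proves \eqref{eq:TS-equivalence}.

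The only care needed is that \eqref{eq:uniform-criterion-remainder-rate} is applied at the random point \(\widehat h_{n,\lambda}\). We handle this by intersecting with the event \(\|\widehat h_{n,\lambda}\|_\lambda\le Cr_{n,\lambda}\), whose probability can be made arbitrarily close to one by localization. The expansion itself requires \(s_\lambda^{-1/2}r_{n,\lambda}\to0\). This holds since \(ns_\lambda^4\to\infty\) implies \(ns_\lambda^3\to\infty\) and \(ns_\lambda^2\to\infty\), so the assumptions of Theorem~\ref{thm:bahadur} are in force.

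For \eqref{eq:objective-improvement-clt}, Lemma~\ref{lem:variance-equivalence} and Proposition~\ref{prop:covariance-comparison} give \(\{2\operatorname{tr}(\Omega_{n,\lambda}^2)\}^{1/2}\asymp m_H^{-1}s_\lambda^{-1/2}\asymp s_\lambda^{-1/2}\), since \(m_H\) is bounded. Dividing \eqref{eq:TS-equivalence} by this scale shows that \((T_{n,\lambda}-S_{n,\lambda})/\{2\operatorname{tr}(\Omega_{n,\lambda}^2)\}^{1/2}\) is \(o_p(1)\). Theorem~\ref{thm:score-clt} and Slutsky's theorem then transfer the normal limit from \(S_{n,\lambda}\) to \(T_{n,\lambda}\).

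The main obstacle is the remainder order in the first step. The squared Bahadur error \(n\|\widehat h-\overline G\|_\lambda^2\) is only \(O_p(s_\lambda^{-1}\delta_{n,\lambda})\), so it must be beaten by the factor \(s_\lambda^{1/2}\). This forces the strengthened bandwidth conditions; without them, the crude bound \(n\|\widehat h-\overline G\|_\lambda^2\le 4n\sup|\mathcal R_{n,\lambda}|\) from Theorem~\ref{thm:bahadur} would not suffice.
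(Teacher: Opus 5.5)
Your proposal is correct and matches the paper's proof essentially step for step. You evaluate the uniform expansion at \(\widehat h_{n,\lambda}\) and complete the square to obtain \(T_{n,\lambda}-S_{n,\lambda}=-n\|\widehat h_{n,\lambda}-\overline G_{n,\lambda}\|_\lambda^2-2n\mathcal R_{n,\lambda}(\widehat h_{n,\lambda})=O_p(s_\lambda^{-1}\delta_{n,\lambda})\), verify that the two bandwidth conditions give \(s_\lambda^{-1/2}\delta_{n,\lambda}\to0\), and conclude with Theorem~\ref{thm:score-clt} and Slutsky's theorem, exactly as the paper does.
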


\begin{proof}
Evaluate \eqref{eq:uniform-criterion-bahadur} at the estimator
and complete the square:
\[
T_{n,\lambda}-S_{n,\lambda}
=
-n\|\widehat h_{n,\lambda}-\overline G_{n,\lambda}\|_\lambda^2
-2n\mathcal R_{n,\lambda}(\widehat h_{n,\lambda}).
\]
Theorem~\ref{thm:bahadur} and
\eqref{eq:uniform-criterion-remainder-rate} imply
\begin{equation*}
|T_{n,\lambda}-S_{n,\lambda}|
=
O_p(s_\lambda^{-1}\delta_{n,\lambda}).
\end{equation*}
The additional conditions make both terms in
\(\delta_{n,\lambda}\) equal to \(o(s_\lambda^{1/2})\),
proving \eqref{eq:TS-equivalence}.
The reference standard deviation has order \(s_\lambda^{-1/2}\).
Theorem~\ref{thm:score-clt} and Slutsky's theorem therefore give
\eqref{eq:objective-improvement-clt}.
\end{proof}

\begin{remark}\label{rem:equivalence-rates}
The two additional conditions control different errors.
The condition involving \(\alpha\) makes the population remainder
negligible on the testing scale; \(ns_\lambda^4\to\infty\)
does the same for the empirical remainder.
When \(\alpha=1\), the latter is stronger.
For example, if \(s_\lambda\asymp n^{-1/(2r+1)}\), it holds
for every integer \(r\geq2\).
\end{remark}

\subsection{Absolute loss}
\label{subsec:absolute-loss-bahadur}

For \(\rho(u)=|u|\), the score
\(\psi(u)=\operatorname{sign}(u)\), with
\(\operatorname{sign}(0)=0\), is discontinuous.
We replace the Lipschitz score argument by the threshold crossing
identity of \citet{knight1998}; see also
\citet{koenker2005}.
Conditional symmetry and a finite conditional mean ensure that
the mean and median coincide, so the target remains the mean
function.

\begin{assumption}\label{ass:absolute-loss}
Conditional on \(T_{ij}=t\), the residual has a distribution
symmetric about zero, a finite first absolute moment, and a
density \(f_t\) on \([-a_0,a_0]\).
For some \(\alpha\in(0,1]\) and constants
\(0<c_f<C_f<\infty\) and \(L_f<\infty\),
\begin{equation*}
c_f\leq f_t(0)\leq C_f
\end{equation*}
uniformly in \(t\). The density also satisfies
\begin{equation*}
|f_t(u)-f_t(0)|\leq L_f|u|^\alpha,
\end{equation*}
uniformly in \(t\in[0,1]\) and \(|u|\leq a_0\).
The function \(t\mapsto f_t(0)\) belongs to \(C^{2r-1}[0,1]\),
and the conditional sign score covariance matrices satisfy
Assumption~\ref{ass:sparse-cluster}(iv).
\end{assumption}

Symmetry centers the sign score, while local boundedness of
the density gives
\[
E[
\{\operatorname{sign}(e_{ij}-a)-\operatorname{sign}(e_{ij})\}^2
\mid T_{ij}=t]
\leq C|a|.
\]
Thus the score increment condition holds with \(\beta=1\).
The score is bounded, so all required shifted moments exist.
The curvature is \(I(t)=2f_t(0)\), and density regularity gives
\[
E\{\operatorname{sign}(e_{ij}-a)\mid T_{ij}=t\}
=
-I(t)a+O(|a|^{1+\alpha}),
\]
uniformly in \(t\).

Define \(q_A(e,a)=|e-a|-|e|+a\operatorname{sign}(e)\).
The absolute loss identity, written to include the convention
\(\operatorname{sign}(0)=0\), is
\begin{equation}
q_A(e,a)
=
\int_0^a
\{\operatorname{sign}(e)-\operatorname{sign}(e-u)\}\,du.
\label{eq:knight-absolute}
\end{equation}

\begin{lemma}\label{lem:absolute-crossing}
Under Assumption~\ref{ass:absolute-loss}, for
\(0<R\leq a_0\) and \(|a|,|b|\leq R\),
\begin{equation*}
|q_A(e,a)|\leq2R\mathbf1\{|e|\leq R\},
\end{equation*}
and
\begin{equation*}
|q_A(e,a)-q_A(e,b)|
\leq2|a-b|\mathbf1\{|e|\leq R\}.
\end{equation*}
Uniformly in \(t\in[0,1]\), these bounds imply
\begin{equation*}
E\{q_A(e_{ij},a)^2\mid T_{ij}=t\}\leq CR^3,
\end{equation*}
and
\begin{equation*}
E[\{q_A(e_{ij},a)-q_A(e_{ij},b)\}^2\mid T_{ij}=t]
\leq CR|a-b|^2.
\end{equation*}
\end{lemma}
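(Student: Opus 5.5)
The plan is to work pointwise in \(e\) from the identity \eqref{eq:knight-absolute}, which reduces every claim to a statement about the bounded integrand \(\operatorname{sign}(e)-\operatorname{sign}(e-u)\), and then integrate against the local density bound of Assumption~\ref{ass:absolute-loss}.

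\textbf{Pointwise bounds.} For \(|u|\leq R\), the integrand has absolute value at most \(2\). If \(|e|>R\), then \(e\) and \(e-u\) lie on the same side of zero and are both nonzero, so the integrand vanishes. Hence
\[
|\operatorname{sign}(e)-\operatorname{sign}(e-u)|\leq2\mathbf1\{|e|\leq R\}
\]
for all \(|u|\leq R\). Integrating from \(0\) to \(a\) in \eqref{eq:knight-absolute} gives \(|q_A(e,a)|\leq2|a|\mathbf1\{|e|\leq R\}\leq2R\mathbf1\{|e|\leq R\}\). For the difference, \(q_A(e,a)-q_A(e,b)=\int_b^a\{\operatorname{sign}(e)-\operatorname{sign}(e-u)\}\,du\). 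The interval between \(b\) and \(a\) lies in \([-R,R]\), so the same integrand bound gives \(2|a-b|\mathbf1\{|e|\leq R\}\). The only care needed is the convention \(\operatorname{sign}(0)=0\) at \(e=0\) or \(e=u\). This affects a single point in \(u\) and does not change the integral, and at \(e=0\) the indicator equals one anyway.

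\textbf{Conditional moments.} Squaring the pointwise bounds gives \(q_A(e_{ij},a)^2\leq4R^2\mathbf1\{|e_{ij}|\leq R\}\) and \(\{q_A(e_{ij},a)-q_A(e_{ij},b)\}^2\leq4|a-b|^2\mathbf1\{|e_{ij}|\leq R\}\). It then remains to show
\[
P(|e_{ij}|\leq R\mid T_{ij}=t)\leq CR
\]
uniformly in \(t\). Since \(R\leq a_0\), the conditional law has density \(f_t\) on \([-R,R]\). The Hölder condition gives \(f_t(u)\leq C_f+L_fa_0^\alpha\) there, so the probability is at most \(2(C_f+L_fa_0^\alpha)R\). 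Combining the two steps yields the bounds \(CR^3\) and \(CR|a-b|^2\), with \(C\) depending only on \(C_f\), \(L_f\), \(a_0\), and \(\alpha\).

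\textbf{Main obstacle.} The argument is essentially routine. The step needing attention is confirming that the uniform density bound comes from Assumption~\ref{ass:absolute-loss} alone. Uniform boundedness of \(f_t\) on \([-a_0,a_0]\) is not stated directly, but it follows from \(f_t(0)\leq C_f\) together with the Hölder continuity at zero.
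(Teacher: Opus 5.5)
Your proposal is correct and follows the paper's proof essentially step for step. Both arguments use the pointwise bound $|\operatorname{sign}(e)-\operatorname{sign}(e-u)|\leq 2\mathbf 1\{|e|\leq R\}$ for $|u|\leq R$, integrate it over $[0,a]$ or between $b$ and $a$, and then use $P(|e_{ij}|\leq R\mid T_{ij}=t)\leq CR$; your derivation of the uniform bound $f_t\leq C_f+L_f a_0^\alpha$ on $[-a_0,a_0]$ from the H\"older condition just makes explicit what the paper calls ``the density assumptions.''
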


\begin{proof}
For \(|u|\leq R\), the integrand in
\eqref{eq:knight-absolute} vanishes when \(|e|>R\) and is
bounded in absolute value by two. Integrating over the interval
from zero to \(a\), or from \(b\) to \(a\), proves the pointwise
bounds. The density assumptions give
\(P(|e_{ij}|\leq R\mid T_{ij}=t)\leq CR\).
Squaring and taking conditional expectations proves the remaining
claims.
\end{proof}

Write
\(Z_i^A(h)=m_i^{-1}\sum_jq_A(e_{ij},h(T_{ij}))\)
and
\(\mathbb E_{n,\lambda}^A(h)
=n^{-1}\sum_i\{Z_i^A(h)-EZ_i^A(h)\}\).

\begin{lemma}\label{lem:absolute-uniform-remainder}
Suppose Assumptions~\ref{ass:sparse-cluster} and
\ref{ass:absolute-loss} hold, and let \(r\geq2\).
If \(a=a_{n,\lambda}>0\) satisfies
\(s_\lambda^{-1/2}a\leq c_0\) for sufficiently small \(c_0>0\),
then
\begin{equation}
\sup_{h\in\mathcal H_\lambda(a)}
|\mathbb E_{n,\lambda}^A(h)|
=
O_p\left(\frac{a^{3/2}}{\sqrt n\,s_\lambda^{5/4}}\right).
\label{eq:absolute-uniform-rate}
\end{equation}
\end{lemma}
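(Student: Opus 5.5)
The argument follows the proof of Lemma~\ref{lem:uniform-nonlinear-remainder}, with one change. The uniform sup-norm Lipschitz envelope is replaced by the random $L^2$ bounds of Lemma~\ref{lem:absolute-crossing}, which carry the threshold-crossing indicator $\mathbf 1\{|e|\le R\}$.

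\textbf{Step 1 (localization and moment bounds).} Set $R=Cas_\lambda^{-1/2}$. By \eqref{eq:evaluation-inequality}, $\|h\|_\infty\le R$ on $\mathcal H_\lambda(a)$, and taking $c_0$ small guarantees $R\le a_0$, so Lemma~\ref{lem:absolute-crossing} applies. For each subject, Jensen's inequality over $j$ and that lemma give
\[
E\{Z_i^A(h)^2\}\le CR^3,
\qquad
E\{Z_i^A(h)-Z_i^A(g)\}^2\le CR\|h-g\|_\infty^2 .
\]
There is also the pointwise envelope $|Z_i^A(h)-Z_i^A(g)|\le 2\|h-g\|_\infty N_i$, where $N_i=m_i^{-1}\sum_j\mathbf 1\{|e_{ij}|\le R\}$ satisfies $EN_i\le CR$.

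\textbf{Step 2 (symmetrization and chaining).} Symmetrize exactly as before. Conditionally on the data, the Rademacher process has increment metric
\[
d_n(h,g)\le 2\|h-g\|_\infty\Bigl(n^{-1}\sum_i N_i^2\Bigr)^{1/2}\le 2\|h-g\|_\infty\bar N_n^{1/2},
\qquad \bar N_n=n^{-1}\sum_i N_i,
\]
using $N_i\le1$. The $\|\cdot\|_\infty$-diameter of the class is at most $2R$. The entropy maximal inequality together with \eqref{eq:entropy-integral-lambda-ball} then gives
\[
E_\epsilon\sup_h\Bigl|n^{-1/2}\sum_i\epsilon_iZ_i^A(h)\Bigr|\le C\bar N_n^{1/2}\,as_\lambda^{-1}.
\]
Taking expectations and applying Jensen's inequality, $E\bar N_n^{1/2}\le (CR)^{1/2}=C a^{1/2}s_\lambda^{-1/4}$. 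The result is
\[
E\sup_h|\sqrt n\,\mathbb E^A_{n,\lambda}(h)|\le C a^{3/2}s_\lambda^{-5/4},
\]
and Markov's inequality yields \eqref{eq:absolute-uniform-rate}. Since $Z_i^A(0)=0$, no separate term at the origin is needed.

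\textbf{Main obstacle.} The delicate point is that the crossing factor must enter only through the random metric and must not be lost to a crude uniform bound. If one bounded $N_i\le1$, the rate would revert to $as_\lambda^{-1}/\sqrt n$ without the gain $R^{1/2}$.

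It also needs checking that the $L^2$-type metric does not force a smaller covering radius than the sup-norm entropy allows. This works because $d_n$ is dominated by $\|\cdot\|_\infty$ up to the factor $\bar N_n^{1/2}$, which is pulled outside the entropy integral before expectations are taken.

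If the chaining bound turned out to require a sub-Gaussian rather than empirical metric, I would instead use Bernstein-type bracketing (van der Vaart--Wellner, Lemma~3.4.2). That approach uses the $L^2(P)$ bound $CR^{1/2}\|h-g\|_\infty$ and the envelope $2R$, and the dominant term again gives $a^{3/2}s_\lambda^{-5/4}$ once $ns_\lambda$ is large.
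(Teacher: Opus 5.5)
Your proposal is correct and matches the paper's proof essentially step for step. It uses the same radius $R=Cas_\lambda^{-1/2}$ and the same crossing envelope (the paper's $L_i$ is your $N_i$), which enters only as a random Lipschitz factor in the conditional Rademacher metric, followed by the same entropy integral from Lemma~\ref{lem:lambda-ball-entropy} and Jensen's inequality to extract $R^{1/2}$ before Markov; the second-moment bounds in your Step~1 and the bracketing fallback are not needed.
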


\begin{proof}
Set \(R=Ca s_\lambda^{-1/2}\), so
\(\|h\|_\infty\leq R\leq a_0\) on the ball.
For each subject, define
\(L_i=m_i^{-1}\sum_j\mathbf1\{|e_{ij}|\leq R\}\).
Lemma~\ref{lem:absolute-crossing} gives
\[
|Z_i^A(h)-Z_i^A(g)|
\leq2L_i\|h-g\|_\infty.
\]
Also, \(0\leq L_i\leq1\) and \(EL_i^2\leq EL_i\leq CR\).

Symmetrize over the independent subject blocks, as in the proof
of Lemma~\ref{lem:uniform-nonlinear-remainder}.
Conditionally on the data, the Rademacher increment metric is
bounded by \(2L_n\|h-g\|_\infty\), where
\(L_n=\{n^{-1}\sum_iL_i^2\}^{1/2}\).
Since \(Z_i^A(0)=0\), the entropy maximal inequality and
Lemma~\ref{lem:lambda-ball-entropy} give
\[
E_\epsilon\sup_{h\in\mathcal H_\lambda(a)}
\left|n^{-1/2}\sum_i\epsilon_iZ_i^A(h)\right|
\leq C L_n a s_\lambda^{-1}.
\]
Jensen's inequality yields \(EL_n\leq CR^{1/2}\).
Taking expectations and applying symmetrization therefore gives
\[
E\sup_{h\in\mathcal H_\lambda(a)}
|\sqrt n\,\mathbb E_{n,\lambda}^A(h)|
\leq
CR^{1/2}a s_\lambda^{-1}
=
Ca^{3/2}s_\lambda^{-5/4}.
\]
Markov's inequality proves \eqref{eq:absolute-uniform-rate}.
\end{proof}

Let \(F_t\) denote the conditional distribution function.
Taking expectations in \eqref{eq:knight-absolute} gives
\[
E\{q_A(e,a)\mid T=t\}
=
2\int_0^a\{F_t(u)-F_t(0)\}\,du.
\]
The density condition makes this
\(f_t(0)a^2+O(|a|^{2+\alpha})\), uniformly in \(t\).
Thus Lemma~\ref{lem:population-localization} applies with
\(I(t)=2f_t(0)\).
Define
\(\delta_{n,\lambda}^A
=(ns_\lambda^2)^{-\alpha/2}+(ns_\lambda^4)^{-1/4}\).
The superscript \(A\) below denotes quantities formed using
absolute loss.

\begin{theorem}\label{thm:absolute-bahadur}
Suppose Assumptions~\ref{ass:sparse-cluster} and
\ref{ass:absolute-loss} hold, and let \(r\geq2\).
If \(s_\lambda\to0\) and \(ns_\lambda^4\to\infty\), then
\begin{equation}
\|\widehat h_{n,\lambda}^A\|_\lambda=O_p(r_{n,\lambda}).
\label{eq:absolute-localization}
\end{equation}
Moreover,
\begin{equation}
\|\widehat h_{n,\lambda}^A-\overline G_{n,\lambda}^A\|_\lambda
=
O_p\{r_{n,\lambda}(\delta_{n,\lambda}^A)^{1/2}\},
\label{eq:absolute-bahadur}
\end{equation}
and hence
\begin{equation}
\widehat h_{n,\lambda}^A
=
\overline G_{n,\lambda}^A+o_p(r_{n,\lambda})
\label{eq:absolute-bahadur-op}
\end{equation}
in \(\mathcal H_\lambda\).
\end{theorem}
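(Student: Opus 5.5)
The plan is to repeat the argument for Lipschitz scores (Theorems~\ref{thm:estimator-localization} and~\ref{thm:bahadur}), using Lemma~\ref{lem:absolute-uniform-remainder} in place of Lemma~\ref{lem:uniform-nonlinear-remainder}. Write \(q_A\) for the nonlinear remainder. The criterion then has the exact decomposition \eqref{eq:criterion-exact-decomposition}, with \(\overline G_{n,\lambda}^A\) and \(\mathbb E_{n,\lambda}^A\) in the roles of \(\overline G_{n,\lambda}\) and \(\mathbb E_{n,\lambda}\). The linear term comes from \(a\operatorname{sign}(e)\) and the reproducing property. Because \(\mathbb E_{n,\lambda}^A\) centers the \(q_A\) terms, the remaining mean is the population difference.

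The population part uses the calculation preceding the theorem. It gives \(E\{q_A(e,a)\mid T=t\}=\tfrac12 I(t)a^2+O(|a|^{2+\alpha})\) with \(I=2f_t(0)\in C^{2r-1}\), bounded above and away from zero. Lemma~\ref{lem:population-localization} therefore applies and yields \(\tfrac12\|h\|_\lambda^2+O(s_\lambda^{-\alpha/2}\|h\|_\lambda^{2+\alpha})\).

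The sign score is bounded and, by Assumption~\ref{ass:absolute-loss}, satisfies the centering, increment and covariance conditions. Hence Proposition~\ref{prop:covariance-comparison} and Lemma~\ref{lem:null-score-rate} hold for \(\overline G_{n,\lambda}^A\), and \(\|\overline G_{n,\lambda}^A\|_\lambda=O_p(r_{n,\lambda})\).

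For localization, take the sphere \(\|h\|_\lambda=Mr_{n,\lambda}\), so that \(a=Mr_{n,\lambda}\). The condition \(s_\lambda^{-1/2}a=M(ns_\lambda^2)^{-1/2}\to0\) holds because \(ns_\lambda^4\to\infty\). It makes both the requirement \(s_\lambda^{-1/2}a\le c_0\) in Lemma~\ref{lem:absolute-uniform-remainder} and the population expansion valid. The empirical remainder is then
\[
O_p\{a^{3/2}n^{-1/2}s_\lambda^{-5/4}\}
= O_p\{M^{3/2}r_{n,\lambda}^2\,(ns_\lambda^4)^{-1/4}\}
= o_p(M^2 r_{n,\lambda}^2).
\]
This follows from \(r_{n,\lambda}^{3/2}n^{-1/2}s_\lambda^{-5/4}=r_{n,\lambda}^2\cdot r_{n,\lambda}^{-1/2}n^{-1/2}s_\lambda^{-5/4}\) together with \(r_{n,\lambda}^{-1/2}=(ns_\lambda)^{1/4}\), which gives the factor \(n^{-1/4}s_\lambda^{-1}=(ns_\lambda^4)^{-1/4}\). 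The boundary inequality \(Q_{n,\lambda}(h)-Q_{n,\lambda}(0)\ge(\tfrac12+o_p(1))M^2r_{n,\lambda}^2-KMr_{n,\lambda}^2\) follows. Convexity of the criterion then gives \eqref{eq:absolute-localization} for every minimizer, exactly as in Theorem~\ref{thm:estimator-localization}; existence of a minimizer is guaranteed by coercivity.

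For \eqref{eq:absolute-bahadur}, fix \(C\) and work uniformly on \(\|h\|_\lambda\le Cr_{n,\lambda}\). There the criterion difference is \(\tfrac12\|h\|_\lambda^2-\langle\overline G^A_{n,\lambda},h\rangle_\lambda+\mathcal R^A(h)\). The remainder satisfies \(\sup|\mathcal R^A|=O_p\{r_{n,\lambda}^2\delta^A_{n,\lambda}\}\). In this bound the population piece contributes \(r_{n,\lambda}^2(ns_\lambda^2)^{-\alpha/2}\), since \(s_\lambda^{-\alpha/2}r_{n,\lambda}^{\alpha}=(ns_\lambda^2)^{-\alpha/2}\), and the empirical piece contributes \(r_{n,\lambda}^2(ns_\lambda^4)^{-1/4}\) by the computation above.

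Localization and the score rate place both \(\widehat h^A_{n,\lambda}\) and \(\overline G^A_{n,\lambda}\) in this ball with high probability. Completing the square and using the minimizing property gives \(\|\widehat h^A-\overline G^A\|_\lambda^2\le4\sup|\mathcal R^A|\), which is \eqref{eq:absolute-bahadur}. Since \(\delta^A_{n,\lambda}\to0\) under the stated conditions, \eqref{eq:absolute-bahadur-op} follows.

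The main obstacle is the empirical-remainder step. It needs the nonsmooth modulus of Lemma~\ref{lem:absolute-uniform-remainder}, valid only for \(a\) with \(s_\lambda^{-1/2}a\) small, and that is why the stronger rate \(ns_\lambda^4\to\infty\) replaces \(ns_\lambda^3\to\infty\). One also has to check that Lemma~\ref{lem:population-localization} goes through with the density-based expansion in place of Assumption~\ref{ass:loss-primitive}(i). It does, because the lemma only uses the uniform quadratic expansion of the conditional expected loss difference.
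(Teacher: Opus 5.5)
Your proposal is correct and follows essentially the same route as the paper. You localize via the null score rate, the population expansion of Lemma~\ref{lem:population-localization} with \(I(t)=2f_t(0)\), the empirical bound of Lemma~\ref{lem:absolute-uniform-remainder} giving relative remainder \((ns_\lambda^4)^{-1/4}\), and the convexity argument of Theorem~\ref{thm:estimator-localization}, then complete the square on the ball of radius \(Cr_{n,\lambda}\) to get \(\|\widehat h_{n,\lambda}^A-\overline G_{n,\lambda}^A\|_\lambda^2\leq4\sup|\mathcal R_{n,\lambda}^A|\), and your explicit rate computations (e.g.\ \(r_{n,\lambda}^{3/2}n^{-1/2}s_\lambda^{-5/4}=r_{n,\lambda}^2(ns_\lambda^4)^{-1/4}\)) just spell out steps the paper states without detail.
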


\begin{proof}
On a ball of radius \(Cr_{n,\lambda}\), the population remainder,
relative to \(r_{n,\lambda}^2\), is
\(O\{(ns_\lambda^2)^{-\alpha/2}\}\).
By Lemma~\ref{lem:absolute-uniform-remainder}, the corresponding
empirical remainder is
\[
O_p\left(
\frac{r_{n,\lambda}^{3/2}}
{\sqrt n\,s_\lambda^{5/4}}
\right)
=
O_p\{r_{n,\lambda}^2(ns_\lambda^4)^{-1/4}\}.
\]
Both relative remainders tend to zero.
The null score rate and the convexity argument from
Theorem~\ref{thm:estimator-localization} therefore give
\eqref{eq:absolute-localization}.

More precisely, the criterion satisfies
\[
Q_{n,\lambda}^A(h)-Q_{n,\lambda}^A(0)
=
\frac12\|h\|_\lambda^2
-\langle\overline G_{n,\lambda}^A,h\rangle_\lambda
+\mathcal R_{n,\lambda}^A(h),
\]
where, for every fixed \(C\),
\[
\sup_{\|h\|_\lambda\leq Cr_{n,\lambda}}
|\mathcal R_{n,\lambda}^A(h)|
=
O_p\{r_{n,\lambda}^2\delta_{n,\lambda}^A\}.
\]
Localization puts the estimator and leading score in this ball
with arbitrarily high probability for large \(C\).
Completing the square and comparing their criterion values gives
\[
\|\widehat h_{n,\lambda}^A-\overline G_{n,\lambda}^A\|_\lambda^2
\leq
4\sup_{\|h\|_\lambda\leq Cr_{n,\lambda}}
|\mathcal R_{n,\lambda}^A(h)|.
\]
This proves \eqref{eq:absolute-bahadur}.
The bandwidth conditions imply
\(\delta_{n,\lambda}^A\to0\), yielding
\eqref{eq:absolute-bahadur-op}.
\end{proof}

\begin{corollary}\label{cor:absolute-equivalence}
Under the assumptions of Theorem~\ref{thm:absolute-bahadur},
suppose additionally that
\(n^\alpha s_\lambda^{2\alpha+1}\to\infty\) and
\(ns_\lambda^6\to\infty\). Then
\[
T_{n,\lambda}^A-S_{n,\lambda}^A=o_p(s_\lambda^{-1/2}),
\]
and
\begin{equation}
\frac{T_{n,\lambda}^A-\mu_{n,\lambda}^A}
{\{2\operatorname{tr}[(\Omega_{n,\lambda}^A)^2]\}^{1/2}}
\xrightarrow{d}N(0,1).
\label{eq:absolute-objective-clt}
\end{equation}
\end{corollary}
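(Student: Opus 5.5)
The argument follows the proof of Theorem~\ref{thm:statistic-equivalence}, with the absolute-loss Bahadur remainder $\delta_{n,\lambda}^A$ of Theorem~\ref{thm:absolute-bahadur} in place of $\delta_{n,\lambda}$. First, I would evaluate the uniform quadratic expansion
\[
Q_{n,\lambda}^A(h)-Q_{n,\lambda}^A(0)=\tfrac12\|h\|_\lambda^2-\langle\overline G_{n,\lambda}^A,h\rangle_\lambda+\mathcal R_{n,\lambda}^A(h),
\]
established in the proof of Theorem~\ref{thm:absolute-bahadur}, at $h=\widehat h_{n,\lambda}^A$. Multiplying by $-2n$ and completing the square gives the exact identity
\[
T_{n,\lambda}^A-S_{n,\lambda}^A=-n\|\widehat h_{n,\lambda}^A-\overline G_{n,\lambda}^A\|_\lambda^2-2n\mathcal R_{n,\lambda}^A(\widehat h_{n,\lambda}^A).
\]
This uses $S_{n,\lambda}^A=n\|\overline G_{n,\lambda}^A\|_\lambda^2$. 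By \eqref{eq:absolute-localization}, the estimator lies in a ball of radius $Cr_{n,\lambda}$ with probability arbitrarily close to one. The remainder bound on that ball then controls the second term. The first term is controlled by \eqref{eq:absolute-bahadur}. Since $nr_{n,\lambda}^2=s_\lambda^{-1}$, both terms are $O_p(s_\lambda^{-1}\delta_{n,\lambda}^A)$.

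Second, I would check that $\delta_{n,\lambda}^A=o(s_\lambda^{1/2})$ under the extra rate conditions, treating the two pieces separately.

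The population piece is $(ns_\lambda^2)^{-\alpha/2}$. Requiring it to be $o(s_\lambda^{1/2})$ is equivalent to $n^{\alpha}s_\lambda^{2\alpha}\cdot s_\lambda\to\infty$, which is exactly the condition $n^\alpha s_\lambda^{2\alpha+1}\to\infty$.

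The empirical piece is $(ns_\lambda^4)^{-1/4}$. Requiring it to be $o(s_\lambda^{1/2})$ is equivalent to $ns_\lambda^4\cdot s_\lambda^2\to\infty$, which is the condition $ns_\lambda^6\to\infty$. This explains why the Lipschitz requirement $ns_\lambda^4$ is strengthened to $ns_\lambda^6$. The slower crossing rate of Lemma~\ref{lem:absolute-uniform-remainder}, of order $a^{3/2}$ rather than $a^2$, gives only a fourth-root gain.

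Together these give $T_{n,\lambda}^A-S_{n,\lambda}^A=o_p(s_\lambda^{-1/2})$. Note that $ns_\lambda^6\to\infty$ also implies $ns_\lambda^4\to\infty$, so the hypotheses of Theorem~\ref{thm:absolute-bahadur} remain in force.

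Third, for the limit \eqref{eq:absolute-objective-clt}, I would apply Theorem~\ref{thm:score-clt} to the sign score. This requires verifying Assumptions~\ref{ass:loss-primitive} and~\ref{ass:sparse-cluster} for $\psi=\operatorname{sign}$, and the verification is essentially recorded after Assumption~\ref{ass:absolute-loss}:
\begin{itemize}
\item centering holds by symmetry;
\item the linearization holds with $I=2f_t(0)\in C^{2r-1}$, which is bounded away from zero;
\item the increment bound holds with $\beta=1$;
\item all moments exist because the score is bounded;
\item the covariance lower bound is assumed directly;
\item $q_\psi\equiv P(e\neq 0\mid T=t)=1$ because a density exists near zero.
\end{itemize}
One point needs care. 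Assumption~\ref{ass:loss-primitive}(iv) asks that the residual law put no mass at the discontinuity $0$ of the sign score. This follows from the density on $[-a_0,a_0]$.

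With the hypotheses verified, Theorem~\ref{thm:score-clt} gives the normal limit for $S_{n,\lambda}^A$. By Proposition~\ref{prop:covariance-comparison}, the reference standard deviation has exact order $s_\lambda^{-1/2}$, so the difference $T^A-S^A$ is negligible after standardization. Slutsky's theorem then completes the proof.

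The step I expect to be the only real obstacle is confirming that the remainder bound $O_p(r_{n,\lambda}^2\delta^A_{n,\lambda})$ holds uniformly on the localized ball, so that it may be evaluated at the random point $\widehat h^A_{n,\lambda}$. This requires that Lemma~\ref{lem:absolute-uniform-remainder} apply with $a=Cr_{n,\lambda}$, that is, $s_\lambda^{-1/2}r_{n,\lambda}=(ns_\lambda^2)^{-1/2}\le c_0$, which holds for large $n$. Everything else is arithmetic on the rates.
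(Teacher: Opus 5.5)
Your proposal is correct and follows the paper's proof. You use the same completed-square identity to get $|T^A_{n,\lambda}-S^A_{n,\lambda}|=O_p(s_\lambda^{-1}\delta^A_{n,\lambda})$, check that each piece of $\delta^A_{n,\lambda}$ is $o(s_\lambda^{1/2})$ under the two extra rate conditions, and then apply Theorem~\ref{thm:score-clt} to the sign score followed by Slutsky's theorem. Your explicit checks of the assumptions for $\psi=\operatorname{sign}$ and of the localization radius fill in details that the paper leaves implicit.
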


\begin{proof}
The completed square argument used for
Theorem~\ref{thm:statistic-equivalence} gives
\[
|T_{n,\lambda}^A-S_{n,\lambda}^A|
=
O_p(s_\lambda^{-1}\delta_{n,\lambda}^A).
\]
The additional conditions make each term of
\(\delta_{n,\lambda}^A\) equal to \(o(s_\lambda^{1/2})\).
The first assertion follows. Applying
Theorem~\ref{thm:score-clt} to the sign score and then
Slutsky's theorem proves \eqref{eq:absolute-objective-clt}.
\end{proof}

\begin{remark}\label{rem:jump-cost}
Both loss classes have the leading radius \(r_{n,\lambda}\),
but their empirical remainder bounds differ.
For Lipschitz scores, the relative bound is
\((ns_\lambda^3)^{-1/2}\); for the sign score, it is
\((ns_\lambda^4)^{-1/4}\).
Accordingly, the sufficient stochastic rate conditions for
statistic equivalence are \(ns_\lambda^4\to\infty\) and
\(ns_\lambda^6\to\infty\), respectively.
The stronger condition obtained for absolute loss reflects the
threshold crossing term in this argument; it is not asserted
to be necessary.
\end{remark}

\section{Local power and minimax separation in the sparse regime}
\label{sec:power}

We study power through the mean shift of the score evaluated at
the null. We first control this shift and the accompanying change
in covariance, then derive the local power function and the
Sobolev separation rate. Throughout, the alternatives change the
mean from \(g_0\) to \(g_0+h\), while leaving the joint law of the
centered process, measurement errors, and design unchanged.
As in Section~\ref{sec:asymptotic-theory}, calculations are
conditional on the cluster sizes when these are random.

\subsection{Score mean and covariance under alternatives}

Under \(g=g_0+h\), write \(e_{ij}=U_{ij}-g(T_{ij})\).
The null residual is then \(e_{ij}^0=e_{ij}+h(T_{ij})\), so the
subject score becomes
\[
G_{i,\lambda}^{(h)}
=
\frac{1}{m_i}\sum_{j=1}^{m_i}
\psi\{e_{ij}+h(T_{ij})\}K_\lambda(T_{ij},\cdot).
\]
Under the common marginal model, its expectation is the same
for every subject. Denote it by
\(b_\lambda(h)=E\{G_{i,\lambda}^{(h)}\}\).

\begin{lemma}\label{lem:alternative-score-mean}
Suppose Assumptions~\ref{ass:loss-primitive} and
\ref{ass:sparse-cluster} hold, and let
\(\|h\|_\infty\leq a_0\). Then
\(b_\lambda(h)=A_\lambda h+r_\lambda(h)\), where
\begin{equation*}
\|r_\lambda(h)\|_\lambda
\leq C\|h\|_\infty^\alpha\|h\|_\lambda.
\end{equation*}
\end{lemma}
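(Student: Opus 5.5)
We compute $b_\lambda(h)$ through its action on test functions and then identify it by duality. For $v\in\mathcal H_\lambda$, the reproducing property gives
\[
\langle b_\lambda(h),v\rangle_\lambda
=E\Bigl[\frac{1}{m_i}\sum_{j=1}^{m_i}\psi\{e_{ij}+h(T_{ij})\}v(T_{ij})\Bigr].
\]
The interchange of expectation and inner product is justified by the moment bound in Assumption~\ref{ass:loss-primitive}(ii) with $M=a_0$; this bound makes $G_{i,\lambda}^{(h)}$ Bochner integrable, since $\|K_\lambda(t,\cdot)\|_\lambda\le Cs_\lambda^{-1/2}$ by Lemma~\ref{lem:kernel-orders}.

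Conditioning on $T_{ij}=t$ and using design independence, each summand has conditional mean $\Psi_t(-h(t))v(t)$, where $\Psi_t(a)=E\{\psi(e_{ij}-a)\mid T_{ij}=t\}$. Since the residual law is common, the alternative residuals $e_{ij}$ have the same conditional law as under the null. Integrating against $\pi$ and averaging over $j$ gives
\[
\langle b_\lambda(h),v\rangle_\lambda=\int_0^1\Psi_t\{-h(t)\}v(t)\pi(t)\,dt .
\]
Because $\|h\|_\infty\le a_0$, the linearization \eqref{eq:score-linearization} applies with $a=-h(t)$ and yields
\[
\Psi_t\{-h(t)\}=I(t)h(t)+\varrho(t),\qquad |\varrho(t)|\le C|h(t)|^{1+\alpha},
\]
uniformly in $t$. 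The linear part contributes $\int I h v\pi=V(h,v)=\langle A_\lambda h,v\rangle_\lambda$ by definition of $A_\lambda$. We therefore define $r_\lambda(h)$ as the Riesz representer in $\mathcal H_\lambda$ of the bounded functional $v\mapsto\int_0^1\varrho(t)v(t)\pi(t)\,dt$. Boundedness follows from the bound obtained next.

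It remains to bound $\|r_\lambda(h)\|_\lambda=\sup_{\|v\|_\lambda\le1}\bigl|\int\varrho v\pi\bigr|$. The key point is to use Cauchy--Schwarz in the weighted $L^2$ norm rather than the sup-norm evaluation inequality, which would cost a factor $s_\lambda^{-1/2}$:
\[
\Bigl|\int\varrho v\pi\Bigr|
\le C\|h\|_\infty^{\alpha}\int|h||v|\pi
\le C\|h\|_\infty^{\alpha}\Bigl(\int h^2\pi\Bigr)^{1/2}\Bigl(\int v^2\pi\Bigr)^{1/2}.
\]
Since $I$ is bounded below by $c_I$ in Assumption~\ref{ass:loss-primitive}(iii), we have $\int f^2\pi\le c_I^{-1}V(f,f)\le c_I^{-1}\|f\|_\lambda^2$. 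Applying this to $f=h$ and to $f=v$, and taking the supremum over $\|v\|_\lambda\le1$, gives $\|r_\lambda(h)\|_\lambda\le C\|h\|_\infty^\alpha\|h\|_\lambda$.

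The main point of care is the Fubini/conditioning step. It requires the shifted score to be integrable uniformly over $|a|\le a_0$, which Assumption~\ref{ass:loss-primitive}(ii) provides. It also requires that $b_\lambda(h)$ does not depend on $m_i$ or on $i$, which holds because the design points are i.i.d. with density $\pi$ and the residual law is common. With these in place, the remaining steps are direct.
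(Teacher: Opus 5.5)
Your proposal is correct and follows essentially the same route as the paper. Both arguments write $\langle b_\lambda(h),v\rangle_\lambda=\int_0^1\Psi_t\{-h(t)\}v(t)\pi(t)\,dt$, split off the linear part $V(h,v)=\langle A_\lambda h,v\rangle_\lambda$, and bound the remainder by weighted $L^2$ Cauchy--Schwarz together with the lower bound on $I$, with your Bochner-integrability and Riesz-representer remarks simply making explicit what the paper leaves implicit.
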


\begin{proof}
For \(f\in\mathcal H_\lambda\), the reproducing property gives
\[
\langle b_\lambda(h),f\rangle_\lambda
=
\int_0^1\Psi_t\{-h(t)\}f(t)\pi(t)\,dt.
\]
By \eqref{eq:score-linearization},
\(\Psi_t\{-h(t)\}=I(t)h(t)+\widetilde r_t\{h(t)\}\),
with
\(|\widetilde r_t\{h(t)\}|\leq C|h(t)|^{1+\alpha}\).
The linear term integrates to
\(V(h,f)=\langle A_\lambda h,f\rangle_\lambda\).
The remainder satisfies
\[
|\langle r_\lambda(h),f\rangle_\lambda|
\leq
C\|h\|_\infty^\alpha
\|h\|_{L^2(\pi)}\|f\|_{L^2(\pi)}
\leq
C\|h\|_\infty^\alpha\|h\|_\lambda\|f\|_\lambda,
\]
where the last inequality uses the positive lower bound on \(I\).
Taking the supremum over \(\|f\|_\lambda\leq1\) proves the claim.
\end{proof}

The operator \(A_\lambda\) attenuates the higher frequency
components of the signal. For
\(h=\sum_{\nu\geq1}h_\nu\phi_\nu\), its retained energy is
\[
\|A_\lambda h\|_\lambda^2
=
\sum_{\nu\geq1}\frac{h_\nu^2}{1+\lambda\gamma_\nu}.
\]
Using \((1+x)^{-1}\geq1-x\) gives
\begin{equation}
\|A_\lambda h\|_\lambda^2
\geq V(h,h)-\lambda J(h,h).
\label{eq:smoothed-signal-lower}
\end{equation}
Thus \(\lambda J(h,h)\) bounds the signal energy lost through
smoothing.

To describe the fluctuations around the mean shift, define
\(\widetilde G_{i,\lambda}^{(h)}
=G_{i,\lambda}^{(h)}-b_\lambda(h)\).
Its covariance operator is
\[
\Omega_{i,\lambda}^{(h)}
=
E\{\widetilde G_{i,\lambda}^{(h)}
\otimes\widetilde G_{i,\lambda}^{(h)}\}.
\]
Write
\(\Omega_{n,\lambda}^{(h)}
=n^{-1}\sum_i\Omega_{i,\lambda}^{(h)}\), so that
\(\Omega_{n,\lambda}^{(0)}=\Omega_{n,\lambda}\).

\begin{lemma}\label{lem:alternative-covariance}
Suppose Assumptions~\ref{ass:loss-primitive} and
\ref{ass:sparse-cluster} hold. For sufficiently small
\(\|h\|_\infty\),
\begin{equation}
\Omega_{n,\lambda}^{(h)}\preceq CA_\lambda.
\label{eq:alternative-cov-upper}
\end{equation}
Moreover,
\begin{equation}
-C\|h\|_\infty^{\beta/2}A_\lambda
\preceq
\Omega_{n,\lambda}^{(h)}-\Omega_{n,\lambda}
\preceq
C\|h\|_\infty^{\beta/2}A_\lambda.
\label{eq:alternative-cov-perturbation}
\end{equation}
Consequently,
\begin{equation}
|\operatorname{tr}(\Omega_{n,\lambda}^{(h)})
-\operatorname{tr}(\Omega_{n,\lambda})|
\leq Cs_\lambda^{-1}\|h\|_\infty^{\beta/2},
\label{eq:alternative-trace-shift}
\end{equation}
and
\begin{equation}
|\operatorname{tr}\{(\Omega_{n,\lambda}^{(h)})^2\}
-\operatorname{tr}(\Omega_{n,\lambda}^2)|
\leq Cs_\lambda^{-1}\|h\|_\infty^{\beta/2}.
\label{eq:alternative-trace2-shift}
\end{equation}
\end{lemma}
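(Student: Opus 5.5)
We work with quadratic forms: for \(f\in\mathcal H_\lambda\) we compute \(\langle\Omega_{i,\lambda}^{(h)}f,f\rangle_\lambda=\operatorname{Var}\{\mathcal S_i^{(h)}(f)\}\), where \(\mathcal S_i^{(h)}(f)=\langle G_{i,\lambda}^{(h)},f\rangle_\lambda=m_i^{-1}\sum_j\psi\{e_{ij}+h(T_{ij})\}f(T_{ij})\). Each claimed operator inequality then reduces to a scalar inequality with right side a multiple of \(V(f,f)=\langle A_\lambda f,f\rangle_\lambda\). Write \(\psi_{ij}=\psi(e_{ij})\) and \(\Delta_{ij}=\psi\{e_{ij}+h(T_{ij})\}-\psi(e_{ij})\).

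\emph{Upper bound \eqref{eq:alternative-cov-upper}.} The variance is at most the second moment. Apply Cauchy--Schwarz over \(j\) to get
\[
\mathcal S_i^{(h)}(f)^2\leq m_i^{-1}\sum_j\psi^2\{e_{ij}+h(T_{ij})\}f^2(T_{ij}).
\]
Conditioning on \(T_{ij}=t\), Assumption~\ref{ass:loss-primitive}(ii) with \(M=a_0\) bounds the conditional second moment by a constant. Integrating against \(\pi\) gives at most \(C\int f^2\pi\leq C V(f,f)\), using \(I\geq c_I\). Averaging over \(i\) yields the claim. This step uses only boundedness of \(m_i\) through the average.

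\emph{Perturbation \eqref{eq:alternative-cov-perturbation}.} Write \(\mathcal S_i^{(h)}(f)=\mathcal S_i(f)+\mathcal D_i(f)\) with \(\mathcal D_i(f)=m_i^{-1}\sum_j\Delta_{ij}f(T_{ij})\). Then
\[
\operatorname{Var}\mathcal S_i^{(h)}(f)-\operatorname{Var}\mathcal S_i(f)=2\operatorname{Cov}\{\mathcal S_i(f),\mathcal D_i(f)\}+\operatorname{Var}\mathcal D_i(f).
\]
Cauchy--Schwarz over \(j\) and the increment condition \eqref{eq:score-increment} (with \(a=-h(t)\)) give
\[
E\mathcal D_i(f)^2\leq C\|h\|_\infty^\beta V(f,f),
\]
and Proposition~\ref{prop:covariance-comparison} gives \(\operatorname{Var}\mathcal S_i(f)\leq CV(f,f)\). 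Hence the cross term is bounded by \(C\|h\|_\infty^{\beta/2}V(f,f)\) and the last term by \(C\|h\|_\infty^{\beta}V(f,f)\leq C\|h\|_\infty^{\beta/2}V(f,f)\) for \(\|h\|_\infty\leq1\). This two-sided scalar bound is exactly the operator sandwich. The main subtlety is that the increment condition is stated conditionally on a single \(T_{ij}\). This is fine, because after Cauchy--Schwarz only marginal (single-index) conditional moments appear, and design independence lets us integrate each term against \(\pi\).

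\emph{Trace consequences.} Let \(B=\Omega_{n,\lambda}^{(h)}-\Omega_{n,\lambda}\), so \(-\epsilon A_\lambda\preceq B\preceq\epsilon A_\lambda\) with \(\epsilon=C\|h\|_\infty^{\beta/2}\). Write \(B=B_+-B_-\), where \(B_\pm\succeq0\) are the positive and negative parts. For \(B_+\), on the range of its spectral projection \(P_+\) we have \(B_+=P_+BP_+\preceq\epsilon P_+A_\lambda P_+\), so \(\operatorname{tr}B_+\leq\epsilon\operatorname{tr}A_\lambda\), and similarly for \(B_-\). Thus \(\|B\|_{\mathrm{tr}}\leq2\epsilon\operatorname{tr}A_\lambda\leq C\epsilon s_\lambda^{-1}\) by Lemma~\ref{lem:kernel-orders}, which gives \eqref{eq:alternative-trace-shift}. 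For \eqref{eq:alternative-trace2-shift}, write
\[
\operatorname{tr}\{(\Omega+B)^2\}-\operatorname{tr}\Omega^2=\operatorname{tr}\{B(2\Omega+B)\}.
\]
Then \(|\operatorname{tr}\{B(2\Omega+B)\}|\leq\|B\|_{\mathrm{tr}}\,\|2\Omega+B\|_{\mathrm{op}}\). The operator norm is bounded because \(\Omega_{n,\lambda}\) and \(\Omega_{n,\lambda}^{(h)}\) are both \(\preceq CA_\lambda\) with \(\|A_\lambda\|_{\mathrm{op}}\leq1\), which gives the result.
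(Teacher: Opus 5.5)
Your proof is correct and follows essentially the paper's route. You reduce to scalar quadratic forms in $f$, bound the score increment via Jensen/Cauchy--Schwarz and \eqref{eq:score-increment}, control the variance difference through the null covariance bound, average over subjects, and then take traces. The only differences are minor: you prove \eqref{eq:alternative-cov-upper} directly from the shifted moment bound rather than from the perturbation plus Proposition~\ref{prop:covariance-comparison}, and for \eqref{eq:alternative-trace2-shift} you use the trace-norm bound $\|\Omega_{n,\lambda}^{(h)}-\Omega_{n,\lambda}\|_{\mathrm{tr}}\le C\|h\|_\infty^{\beta/2}\operatorname{tr}(A_\lambda)$ together with H\"older against $\|\Omega_{n,\lambda}+\Omega_{n,\lambda}^{(h)}\|_{\mathrm{op}}\le C$, whereas the paper uses positivity to get $|\operatorname{tr}(\Delta B)|\le C\|h\|_\infty^{\beta/2}\operatorname{tr}(A_\lambda B)\le C\|h\|_\infty^{\beta/2}\operatorname{tr}(A_\lambda^2)$; both give the order $s_\lambda^{-1}$.
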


\begin{proof}
For \(f\in\mathcal H_\lambda\), let
\[
\mathcal S_i^{(h)}(f)
=
\frac{1}{m_i}\sum_{j=1}^{m_i}
\psi\{e_{ij}+h(T_{ij})\}f(T_{ij}).
\]
Jensen's inequality and \eqref{eq:score-increment} give
\[
E[\{\mathcal S_i^{(h)}(f)-\mathcal S_i^{(0)}(f)\}^2]
\leq C\|h\|_\infty^\beta V(f,f).
\]
Center these scores by writing
\(X_i^{(h)}(f)=\mathcal S_i^{(h)}(f)
-E\{\mathcal S_i^{(h)}(f)\}\).
Centering cannot increase the \(L^2\) norm, so
\[
\|X_i^{(h)}(f)-X_i^{(0)}(f)\|_{L^2}
\leq C\|h\|_\infty^{\beta/2}V(f,f)^{1/2}.
\]
The null covariance bound and the triangle inequality also give
\(\|X_i^{(h)}(f)\|_{L^2}
+\|X_i^{(0)}(f)\|_{L^2}\leq CV(f,f)^{1/2}\)
for small \(\|h\|_\infty\). Factoring the difference of their
squared norms yields
\[
|\langle
(\Omega_{i,\lambda}^{(h)}-\Omega_{i,\lambda})f,f
\rangle_\lambda|
\leq C\|h\|_\infty^{\beta/2}V(f,f).
\]
Averaging proves \eqref{eq:alternative-cov-perturbation}.
The null comparison with \(A_\lambda\) then gives
\eqref{eq:alternative-cov-upper}.

Put
\(\Delta=\Omega_{n,\lambda}^{(h)}-\Omega_{n,\lambda}\).
Taking traces in the operator bounds gives
\(|\operatorname{tr}(\Delta)|
\leq C\|h\|_\infty^{\beta/2}\operatorname{tr}(A_\lambda)\),
which proves \eqref{eq:alternative-trace-shift}.

For either \(B=\Omega_{n,\lambda}^{(h)}\) or
\(B=\Omega_{n,\lambda}\), positivity and \(B\preceq CA_\lambda\)
give
\[
|\operatorname{tr}(\Delta B)|
\leq
C\|h\|_\infty^{\beta/2}\operatorname{tr}(A_\lambda B)
\leq
C\|h\|_\infty^{\beta/2}\operatorname{tr}(A_\lambda^2).
\]
Apply this bound to the identity
\((\Omega_{n,\lambda}^{(h)})^2-\Omega_{n,\lambda}^2
=\Delta\Omega_{n,\lambda}^{(h)}+\Omega_{n,\lambda}\Delta\).
Since \(\operatorname{tr}(A_\lambda^2)\asymp s_\lambda^{-1}\),
\eqref{eq:alternative-trace2-shift} follows.
\end{proof}

\subsection{Local power and consistency}

Let
\(\sigma_{0,n,\lambda}
=\{2\operatorname{tr}(\Omega_{n,\lambda}^2)\}^{1/2}\).
Under bounded cluster sizes,
\(\sigma_{0,n,\lambda}\asymp s_\lambda^{-1/2}\).
For \(\gamma\in(0,1)\), the theoretical score test rejects when
\[
\frac{S_{n,\lambda}-\mu_{n,\lambda}}
{\sigma_{0,n,\lambda}}>z_{1-\gamma},
\]
where \(z_{1-\gamma}\) is the standard normal quantile.
The following result concerns this test with population null
centering and scaling.

\begin{theorem}\label{thm:local-power}
Suppose Assumptions~\ref{ass:loss-primitive} and
\ref{ass:sparse-cluster} hold, \(s_\lambda\to0\), and
\(ns_\lambda\to\infty\). Let \(h=h_n\) satisfy
\(\|h\|_\infty^{\beta/2}=o(s_\lambda^{1/2})\) and
\(n\|h\|_\infty^\alpha\|h\|_\lambda^2/
\sigma_{0,n,\lambda}\to0\).
If
\begin{equation}
\Lambda_n(h)
=
\frac{n\|A_\lambda h\|_\lambda^2}{\sigma_{0,n,\lambda}}
\to\kappa\in[0,\infty),
\label{eq:noncentrality}
\end{equation}
then
\begin{equation*}
\frac{S_{n,\lambda}-\mu_{n,\lambda}}
{\sigma_{0,n,\lambda}}
\xrightarrow{d}N(\kappa,1).
\end{equation*}
The limiting rejection probability is
\(1-\Phi(z_{1-\gamma}-\kappa)\).
\end{theorem}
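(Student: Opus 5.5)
Under the alternative I will write the subject score as \(G_{i,\lambda}^{(h)}=\widetilde G_{i,\lambda}^{(h)}+b_\lambda(h)\), where the \(\widetilde G_{i,\lambda}^{(h)}\) are independent and centred. Setting \(\overline{\widetilde G}=n^{-1}\sum_i\widetilde G_{i,\lambda}^{(h)}\) and \(b=b_\lambda(h)\), this gives the exact expansion
\[
S_{n,\lambda}
=
n\|\overline{\widetilde G}\|_\lambda^2
+2n\langle\overline{\widetilde G},b\rangle_\lambda
+n\|b\|_\lambda^2 .
\]
The proof then has three parts: the centred quadratic term gives the Gaussian fluctuation, the last term gives the noncentrality \(\kappa\), and the cross term must be negligible on the scale \(\sigma_{0,n,\lambda}\).

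\emph{Quadratic term.} I will repeat the argument of Theorem~\ref{thm:score-clt}, now with the centred alternative scores \(\widetilde G_{i,\lambda}^{(h)}\). Three inputs are needed.
\begin{itemize}
\item \emph{Fourth moments.} The bound \(E\|\widetilde G_{i,\lambda}^{(h)}\|_\lambda^4\le Cs_\lambda^{-2}\) follows as in Lemma~\ref{lem:G-fourth}. The shifted moment bound of Assumption~\ref{ass:loss-primitive}(ii) applies with \(M=\sup_n\|h_n\|_\infty\), and \(\|b\|_\lambda\le Cs_\lambda^{-1/2}\) because \(b\) is the mean of \(G_{i,\lambda}^{(h)}\).
\item \emph{Operator bound.} Lemma~\ref{lem:alternative-covariance} gives \(\Omega_{i,\lambda}^{(h)}\preceq CA_\lambda\). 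I need this per subject; the averaged statement is proved subject by subject in the paper.
\item \emph{Lower order.} To keep \(\operatorname{tr}\{(\Omega^{(h)}_{n,\lambda})^2\}\asymp s_\lambda^{-1}\), I use \eqref{eq:alternative-trace2-shift} together with \(\|h\|_\infty^{\beta/2}=o(s_\lambda^{1/2})\). This actually gives
\[
\operatorname{tr}\{(\Omega^{(h)}_{n,\lambda})^2\}-\operatorname{tr}(\Omega_{n,\lambda}^2)=o(1)=o(s_\lambda^{-1}).
\]
\end{itemize}
With these, the de Jong conditions and Lemma~\ref{lem:variance-equivalence} carry over. Hence \(\{n\|\overline{\widetilde G}\|_\lambda^2-\operatorname{tr}(\Omega^{(h)}_{n,\lambda})\}/\sigma_{0,n,\lambda}\to N(0,1)\).

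\emph{Recentring.} I then replace \(\operatorname{tr}(\Omega^{(h)}_{n,\lambda})\) by \(\mu_{n,\lambda}\). By \eqref{eq:alternative-trace-shift} the error is \(Cs_\lambda^{-1}\|h\|_\infty^{\beta/2}=o(s_\lambda^{-1/2})=o(\sigma_{0,n,\lambda})\), which is exactly why the first rate condition is imposed.

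\emph{Signal term.} By Lemma~\ref{lem:alternative-score-mean}, \(b=A_\lambda h+r_\lambda(h)\), so
\[
n\|b\|_\lambda^2-n\|A_\lambda h\|_\lambda^2
= n\bigl(2\langle A_\lambda h,r_\lambda\rangle_\lambda+\|r_\lambda\|_\lambda^2\bigr).
\]
This is bounded by \(Cn\|h\|_\infty^\alpha\|h\|_\lambda\|A_\lambda h\|_\lambda+Cn\|h\|_\infty^{2\alpha}\|h\|_\lambda^2\). Using \(\|A_\lambda h\|_\lambda\le\|h\|_\lambda\), both pieces are at most \(Cn\|h\|_\infty^\alpha\|h\|_\lambda^2\), which is \(o(\sigma_{0,n,\lambda})\) by the second rate condition. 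Therefore \(n\|b\|_\lambda^2/\sigma_{0,n,\lambda}\to\kappa\). This also gives \(n\|b\|_\lambda^2=O(\sigma_{0,n,\lambda})\), which the next step uses.

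\emph{Cross term (main obstacle).} The cross term has mean zero and variance
\[
\operatorname{Var}\bigl(2n\langle\overline{\widetilde G},b\rangle_\lambda\bigr)
=4n\langle\Omega^{(h)}_{n,\lambda}b,b\rangle_\lambda .
\]
Since \(\Omega^{(h)}_{n,\lambda}\preceq CA_\lambda\) and \(\|A_\lambda\|_{\mathrm{op}}\le1\), this is at most \(Cn\|b\|_\lambda^2=O(\sigma_{0,n,\lambda})\). Dividing by \(\sigma_{0,n,\lambda}^2\) gives \(O(\sigma_{0,n,\lambda}^{-1})=O(s_\lambda^{1/2})\to0\), so Chebyshev makes the cross term \(o_p(\sigma_{0,n,\lambda})\).

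The point to check carefully is that this uses only the \(O(\sigma_{0,n,\lambda})\) bound on the signal, not \(\kappa<\infty\) beyond boundedness. It also relies on the covariance bound holding under the alternative, which is why the smallness of \(\|h\|_\infty\) is needed.

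\emph{Conclusion.} Combining the three terms by Slutsky gives the limit \(N(\kappa,1)\). The rejection probability then converges to \(P(N(\kappa,1)>z_{1-\gamma})=1-\Phi(z_{1-\gamma}-\kappa)\).
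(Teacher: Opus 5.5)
Your proposal is correct and follows the paper's proof essentially step for step. You use the same three-term expansion of \(S_{n,\lambda}\) under the alternative, Lemma~\ref{lem:alternative-score-mean} for the signal, Lemma~\ref{lem:alternative-covariance} both for recentring and for carrying the de Jong argument of Theorem~\ref{thm:score-clt} over to the centred alternative scores, and the variance bound \(Cn\|b_\lambda(h)\|_\lambda^2=O(\sigma_{0,n,\lambda})\) for the cross term. One small slip: combining \eqref{eq:alternative-trace2-shift} with \(\|h\|_\infty^{\beta/2}=o(s_\lambda^{1/2})\) gives a shift of order \(o(s_\lambda^{-1/2})\), not \(o(1)\); this is still \(o(s_\lambda^{-1})\), which is all you need.
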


\begin{proof}
Define
\(Z_{n,\lambda}^{(h)}
=n^{-1/2}\sum_i\widetilde G_{i,\lambda}^{(h)}\).
Expanding the statistic under the alternative gives
\begin{equation}
\begin{aligned}
S_{n,\lambda}
={}&n\|b_\lambda(h)\|_\lambda^2
+2\sqrt n\langle b_\lambda(h),Z_{n,\lambda}^{(h)}
\rangle_\lambda\\
&+\|Z_{n,\lambda}^{(h)}\|_\lambda^2.
\end{aligned}
\label{eq:alternative-statistic-decomposition}
\end{equation}
Lemma~\ref{lem:alternative-score-mean} and
\(\|A_\lambda h\|_\lambda\leq\|h\|_\lambda\) imply
\[
|\|b_\lambda(h)\|_\lambda^2-\|A_\lambda h\|_\lambda^2|
\leq C\|h\|_\infty^\alpha\|h\|_\lambda^2.
\]
The assumed rate therefore gives
\begin{equation}
\frac{n\{\|b_\lambda(h)\|_\lambda^2
-\|A_\lambda h\|_\lambda^2\}}
{\sigma_{0,n,\lambda}}\to0.
\label{eq:mean-signal-equivalence}
\end{equation}

Lemma~\ref{lem:alternative-covariance} shows that
\(\operatorname{tr}(\Omega_{n,\lambda}^{(h)})
-\mu_{n,\lambda}=o(\sigma_{0,n,\lambda})\) and that
\(\operatorname{tr}\{(\Omega_{n,\lambda}^{(h)})^2\}/
\operatorname{tr}(\Omega_{n,\lambda}^2)\to1\).
The shifted moment condition supplies the same fourth moment
bounds as under the null. The individual covariance bounds
in the preceding proof also hold uniformly in \(i\).
Thus the quadratic form argument in
Theorem~\ref{thm:score-clt} applies to the centered alternative
scores, giving
\begin{equation*}
\frac{\|Z_{n,\lambda}^{(h)}\|_\lambda^2
-\operatorname{tr}(\Omega_{n,\lambda}^{(h)})}
{\sigma_{0,n,\lambda}}
\xrightarrow{d}N(0,1).
\end{equation*}

For the linear term,
\[
\operatorname{Var}\{
2\sqrt n\langle b_\lambda(h),Z_{n,\lambda}^{(h)}
\rangle_\lambda\}
\leq Cn\|b_\lambda(h)\|_\lambda^2.
\]
Equations \eqref{eq:noncentrality} and
\eqref{eq:mean-signal-equivalence} make this
\(O(\sigma_{0,n,\lambda})\).
After division by \(\sigma_{0,n,\lambda}^2\), it tends to zero.
Combining the three terms in
\eqref{eq:alternative-statistic-decomposition} proves the limit
and the stated power function.
\end{proof}

\begin{corollary}\label{cor:score-consistency}
Under the assumptions and rate conditions of
Theorem~\ref{thm:local-power}, replace the finite limit condition
\eqref{eq:noncentrality} by \(\Lambda_n(h)\to\infty\).
Then the rejection probability tends to one.
\end{corollary}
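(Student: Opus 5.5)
We reuse the decomposition \eqref{eq:alternative-statistic-decomposition} from the proof of Theorem~\ref{thm:local-power}. The rate conditions on \(h\) there do not involve the finite limit of \(\Lambda_n(h)\), so they remain available. Write
\[
\frac{S_{n,\lambda}-\mu_{n,\lambda}}{\sigma_{0,n,\lambda}}
=\frac{n\|b_\lambda(h)\|_\lambda^2}{\sigma_{0,n,\lambda}}
+\frac{2\sqrt n\langle b_\lambda(h),Z_{n,\lambda}^{(h)}\rangle_\lambda}{\sigma_{0,n,\lambda}}
+\frac{\|Z_{n,\lambda}^{(h)}\|_\lambda^2-\mu_{n,\lambda}}{\sigma_{0,n,\lambda}} .
\]
The third term is \(O_p(1)\). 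The quadratic form argument of the previous proof gives asymptotic normality after centering by \(\operatorname{tr}(\Omega_{n,\lambda}^{(h)})\). Moreover, Lemma~\ref{lem:alternative-covariance} together with \(\|h\|_\infty^{\beta/2}=o(s_\lambda^{1/2})\) shows that replacing this centering by \(\mu_{n,\lambda}\) costs \(o(\sigma_{0,n,\lambda})\). Neither step used \(\kappa<\infty\).

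For the first term, the bound \(|\|b_\lambda(h)\|_\lambda^2-\|A_\lambda h\|_\lambda^2|\leq C\|h\|_\infty^\alpha\|h\|_\lambda^2\) and the assumed rate give
\[
\frac{n\|b_\lambda(h)\|_\lambda^2}{\sigma_{0,n,\lambda}}=\Lambda_n(h)+o(1)\to\infty .
\]
Set \(B_n=n\|b_\lambda(h)\|_\lambda^2/\sigma_{0,n,\lambda}\).

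The linear term is the one place where the earlier proof used finiteness of \(\kappa\), so it needs a relative bound instead. Its variance, divided by \(\sigma_{0,n,\lambda}^2\), is at most
\[
Cn\|b_\lambda(h)\|_\lambda^2/\sigma_{0,n,\lambda}^2=CB_n/\sigma_{0,n,\lambda}.
\]
Chebyshev's inequality then makes the linear term \(O_p\{(B_n/\sigma_{0,n,\lambda})^{1/2}\}\). This is \(o_p(B_n)\), because
\[
(B_n/\sigma_{0,n,\lambda})^{1/2}/B_n=(B_n\sigma_{0,n,\lambda})^{-1/2}\to0,
\]
since \(B_n\to\infty\) and \(\sigma_{0,n,\lambda}\asymp s_\lambda^{-1/2}\to\infty\).

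Combining the three terms, the standardized statistic equals \(B_n\{1+o_p(1)\}+O_p(1)\), which diverges in probability. Hence
\[
P\!\left(\frac{S_{n,\lambda}-\mu_{n,\lambda}}{\sigma_{0,n,\lambda}}>z_{1-\gamma}\right)\to1 .
\]
The main point needing care is this handling of the linear term when the signal diverges. The Chebyshev bound relative to \(B_n\) resolves it without further assumptions.
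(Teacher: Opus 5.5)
Your proposal is correct and follows essentially the same route as the paper. Both use the decomposition \eqref{eq:alternative-statistic-decomposition}, divergence of the mean signal via \eqref{eq:mean-signal-equivalence}, an \(o(\sigma_{0,n,\lambda})\) centering error, an \(O_p(\sigma_{0,n,\lambda})\) centered quadratic term, and the variance bound \(Cn\|b_\lambda(h)\|_\lambda^2\) to make the linear term negligible relative to \(n\|b_\lambda(h)\|_\lambda^2\). You spell out the relative Chebyshev step that the paper states in one line, and your appeal to \(\sigma_{0,n,\lambda}\to\infty\) there is not needed, since \(B_n\to\infty\) and \(\sigma_{0,n,\lambda}\) bounded below already suffice.
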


\begin{proof}
Equation \eqref{eq:mean-signal-equivalence} gives
\(n\|b_\lambda(h)\|_\lambda^2/
\sigma_{0,n,\lambda}\to\infty\).
The covariance centering error is
\(o(\sigma_{0,n,\lambda})\), and the centered quadratic term is
\(O_p(\sigma_{0,n,\lambda})\).
Both are negligible relative to the mean signal.
The linear term has variance at most
\(Cn\|b_\lambda(h)\|_\lambda^2\), so it is also negligible
relative to \(n\|b_\lambda(h)\|_\lambda^2\).
The decomposition \eqref{eq:alternative-statistic-decomposition}
therefore shows that the standardized statistic diverges to
\(+\infty\) in probability.
\end{proof}

\subsection{Sobolev separation and the minimax rate}

We now measure the smallest detectable signal over the ellipsoid
\[
\mathcal B_R
=
\{h\in W^{r,2}[0,1]:V(h,h)+J(h,h)\leq R^2\},
\qquad R>0.
\]
The bounds on \(I\) and \(\pi\) make \(V(h,h)^{1/2}\)
equivalent to the \(L^2\) norm.
For a given smoothing parameter, define
\begin{equation}
\eta_{n,\lambda}^2
=
s_\lambda^{2r}+\frac{1}{n s_\lambda^{1/2}}.
\label{eq:generic-separation-rate}
\end{equation}
The first term bounds the smoothing loss over a fixed ellipsoid;
the second is the signal energy needed to match the null
standard deviation after multiplication by \(n\).

\begin{lemma}\label{lem:global-identification}
Under Assumption~\ref{ass:loss-primitive}, for every fixed
\(R<\infty\), there is \(c_R>0\) such that
\begin{equation}
a\Psi_t(-a)\geq c_Ra^2
\label{eq:uniform-score-identification}
\end{equation}
uniformly in \(t\in[0,1]\) and \(a=h(t)\), \(h\in\mathcal B_R\).
Consequently,
\begin{equation}
\langle b_\lambda(h),h\rangle_\lambda\geq c_RV(h,h),
\label{eq:global-score-direction}
\end{equation}
and, for \(h\neq0\),
\begin{equation}
\|b_\lambda(h)\|_\lambda^2
\geq
c_R\frac{V(h,h)^2}{V(h,h)+\lambda J(h,h)}.
\label{eq:global-score-norm-lower}
\end{equation}
\end{lemma}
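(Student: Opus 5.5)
The proof reduces the global claim \eqref{eq:uniform-score-identification} to two regimes of $a$: the local linearization \eqref{eq:score-linearization} near zero, and monotonicity of $\psi$ away from zero. The two function-level bounds then follow by integrating against the design density and applying Cauchy--Schwarz in $\mathcal H_\lambda$.

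\textbf{Step 1: uniform sup-norm bound on the ellipsoid.} Since $V(h,h)+J(h,h)$ is the squared norm $\|h\|_{\lambda}^2$ with $\lambda=1$, it is equivalent to the Sobolev norm. Continuity of point evaluation in $W^{r,2}[0,1]$ then gives $\|h\|_\infty\leq M_R:=C_0R$ for all $h\in\mathcal B_R$, where $C_0$ depends on neither $h$ nor $\lambda$. It therefore suffices to prove \eqref{eq:uniform-score-identification} uniformly over $t\in[0,1]$ and $|a|\leq M_R$. I regard this as the only delicate point: the constant must come from the fixed ellipsoid geometry, not from Lemma~\ref{lem:kernel-orders}, whose bound degenerates as $\lambda\to0$.

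\textbf{Step 2: pointwise identification.} Note that $\Psi_t(-a)=E\{\psi(e_{ij}+a)\mid T_{ij}=t\}$ is finite for $|a|\leq M_R$ by Assumption~\ref{ass:loss-primitive}(ii). It is nondecreasing in $a$ because $\psi$ is nondecreasing, and $\Psi_t(0)=0$.
\begin{itemize}
\item[(a)] By \eqref{eq:score-linearization}, $\Psi_t(-a)=I(t)a+O(|a|^{1+\alpha})$ uniformly in $t$. Choose $a_1\in(0,a_0]$ with $Ca_1^\alpha\leq c_I/2$. For $|a|\leq a_1$ this gives $a\Psi_t(-a)\geq (c_I/2)a^2$.
\item[(b)] For $a_1<a\leq M_R$, monotonicity gives $\Psi_t(-a)\geq\Psi_t(-a_1)\geq c_Ia_1/2$. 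Hence
\[
a\Psi_t(-a)\geq \frac{c_Ia_1}{2}\,a\geq \frac{c_Ia_1}{2M_R}\,a^2 .
\]
\item[(c)] The case $-M_R\leq a<-a_1$ is symmetric, since then $\Psi_t(-a)\leq\Psi_t(a_1)\leq -c_Ia_1/2$.
\end{itemize}
Taking $c_R=c_Ia_1/(2\max\{M_R,a_1\})$ covers all three cases.

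\textbf{Step 3: function-level bounds.} By the reproducing-property computation in the proof of Lemma~\ref{lem:alternative-score-mean},
\[
\langle b_\lambda(h),h\rangle_\lambda=\int_0^1\Psi_t\{-h(t)\}\,h(t)\,\pi(t)\,dt .
\]
Applying Step 2 pointwise with $a=h(t)$ bounds this below by $c_R\int_0^1 h^2\pi\,dt$. Since $I\leq C_I$, this is at least $(c_R/C_I)V(h,h)$. Renaming the constant gives \eqref{eq:global-score-direction}.

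Finally, Cauchy--Schwarz gives
\[
\langle b_\lambda(h),h\rangle_\lambda\leq\|b_\lambda(h)\|_\lambda\,\|h\|_\lambda,
\qquad \|h\|_\lambda^2=V(h,h)+\lambda J(h,h).
\]
Squaring \eqref{eq:global-score-direction} and dividing by $\|h\|_\lambda^2>0$ (valid for $h\neq0$) yields \eqref{eq:global-score-norm-lower}, with $c_R^2$ renamed $c_R$.
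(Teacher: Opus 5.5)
Your proposal is correct and follows the paper's proof essentially step for step. The steps match: a uniform bound $\sup_{h\in\mathcal B_R}\|h\|_\infty\leq M_R$ from Sobolev embedding, the local linearization for $|a|\leq a_1$ combined with monotonicity of $\Psi_t$ for $a_1\leq|a|\leq M_R$, and then the integral representation of $\langle b_\lambda(h),h\rangle_\lambda$ followed by Cauchy--Schwarz with $\|h\|_\lambda^2=V(h,h)+\lambda J(h,h)$. You also make explicit some details the paper leaves implicit: the $\lambda$-free origin of $M_R$, finiteness of $\Psi_t(-a)$ on $|a|\leq M_R$, and the choice $c_R=c_Ia_1/(2\max\{M_R,a_1\})$.
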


\begin{proof}
Sobolev embedding gives
\(\sup_{h\in\mathcal B_R}\|h\|_\infty\leq M_R<\infty\).
The local expansion and \(I(t)\geq c_I\) yield
\(a\Psi_t(-a)\geq c_Ia^2/2\) for \(|a|\leq a_1\),
with \(a_1>0\) independent of \(t\).
Since \(\psi\) is nondecreasing, \(\Psi_t\) is nonincreasing.
For \(a_1\leq a\leq M_R\),
\(\Psi_t(-a)\geq c_Ia_1/2\), and the corresponding negative
bound holds for \(-M_R\leq a\leq-a_1\).
Together with the local bound, this proves
\eqref{eq:uniform-score-identification}.

The integral representation of the score mean gives
\[
\langle b_\lambda(h),h\rangle_\lambda
=
\int_0^1h(t)\Psi_t\{-h(t)\}\pi(t)\,dt.
\]
Apply \eqref{eq:uniform-score-identification} and the upper
bound on \(I\) to obtain \eqref{eq:global-score-direction}.
Cauchy--Schwarz then gives
\(\|b_\lambda(h)\|_\lambda\|h\|_\lambda\geq c_RV(h,h)\).
Squaring and using
\(\|h\|_\lambda^2=V(h,h)+\lambda J(h,h)\)
proves \eqref{eq:global-score-norm-lower}.
\end{proof}

\begin{theorem}\label{thm:sparse-detection-upper}
Suppose Assumptions~\ref{ass:loss-primitive} and
\ref{ass:sparse-cluster} hold, let \(r\geq2\), and assume
\begin{equation}
\beta(2r-1)>2.
\label{eq:beta-power-condition}
\end{equation}
Choose
\begin{equation}
s_\lambda\asymp n^{-2/(4r+1)}.
\label{eq:optimal-s}
\end{equation}
Then
\begin{equation*}
\eta_{n,\lambda}\asymp n^{-2r/(4r+1)}.
\end{equation*}
For every \(\gamma\in(0,1)\) and
\(\delta\in(0,1-\gamma)\), there is
\(C=C(R,\gamma,\delta)>0\), with dependence on the fixed
model bounds suppressed, such that
\begin{equation}
\liminf_{n\to\infty}
\inf_{\substack{h\in\mathcal B_R:\\
V(h,h)^{1/2}\geq C\eta_{n,\lambda}}}
P_{g_0+h}\left\{
\frac{S_{n,\lambda}-\mu_{n,\lambda}}
{\sigma_{0,n,\lambda}}>z_{1-\gamma}
\right\}
\geq1-\delta.
\label{eq:uniform-power-separation}
\end{equation}
For any sequence \(h_n\in\mathcal B_R\) with
\(V(h_n,h_n)^{1/2}/\eta_{n,\lambda}\to\infty\),
the rejection probability tends to one.
\end{theorem}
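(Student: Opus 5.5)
The plan is to verify the rate identity directly, then to prove \eqref{eq:uniform-power-separation} from the alternative decomposition \eqref{eq:alternative-statistic-decomposition} using Chebyshev bounds that are uniform over $h\in\mathcal B_R$. Theorem~\ref{thm:local-power} is not invoked, because its smallness conditions on $\|h\|_\infty$ fail for signals of fixed size in $\mathcal B_R$.

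\emph{Step 1: the rate.} With $s_\lambda\asymp n^{-2/(4r+1)}$ we have $s_\lambda^{2r}\asymp n^{-4r/(4r+1)}$ and $(ns_\lambda^{1/2})^{-1}\asymp n^{-1+1/(4r+1)}=n^{-4r/(4r+1)}$. Hence $\eta_{n,\lambda}^2\asymp n^{-4r/(4r+1)}$. Two consequences will be used below: $\lambda=s_\lambda^{2r}\le\eta_{n,\lambda}^2$, and $n\eta_{n,\lambda}^2s_\lambda^{1/2}\ge1$.

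\emph{Step 2: the signal.} Fix $h\in\mathcal B_R$ with $V(h,h)\ge C^2\eta_{n,\lambda}^2$. Then $\lambda J(h,h)\le R^2\eta_{n,\lambda}^2\le (R^2/C^2)V(h,h)$. Inserting this into \eqref{eq:global-score-norm-lower} gives $\|b_\lambda(h)\|_\lambda^2\ge c_R'V(h,h)$ for all $C\ge1$, with $c_R'$ not depending on $C$. Since $\sigma_{0,n,\lambda}\asymp s_\lambda^{-1/2}$, the standardized mean signal satisfies
\[
n\|b_\lambda(h)\|_\lambda^2/\sigma_{0,n,\lambda}\ge c\,nV(h,h)s_\lambda^{1/2}\ge cC^2 .
\]

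\emph{Step 3: the noise, uniformly over $\mathcal B_R$.} Sobolev embedding gives $\|h\|_\infty\le M_R$ on $\mathcal B_R$. Assumption~\ref{ass:loss-primitive}(ii) with $M=M_R$ and the argument of Proposition~\ref{prop:covariance-comparison} (whose upper bound uses only $m_{\max}$ and second moments) give $\Omega_{i,\lambda}^{(h)}\preceq C_RA_\lambda$ uniformly in $i$ and $h$. The proof of Lemma~\ref{lem:G-fourth} gives $E\|\widetilde G_{i,\lambda}^{(h)}\|_\lambda^4\le Cs_\lambda^{-2}$. From these:
\begin{itemize}[label={}]
\item the linear term has variance at most $Cn\|b_\lambda(h)\|_\lambda^2$, so it is $O_p\{(n\|b_\lambda(h)\|_\lambda^2)^{1/2}\}$, negligible next to the signal;
\item the centered quadratic term $\|Z_{n,\lambda}^{(h)}\|_\lambda^2-\operatorname{tr}(\Omega_{n,\lambda}^{(h)})$ has variance $O(\operatorname{tr}A_\lambda^2+n^{-1}s_\lambda^{-2})=O(s_\lambda^{-1})=O(\sigma_{0,n,\lambda}^2)$, by the variance computation in Lemma~\ref{lem:variance-equivalence}.
\end{itemize}
Both bounds hold uniformly over $\mathcal B_R$.

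\emph{Step 4: the centering shift, the main obstacle.} The rejection rule centers at $\mu_{n,\lambda}$, not at $\operatorname{tr}(\Omega_{n,\lambda}^{(h)})$. By \eqref{eq:alternative-trace-shift}, the discrepancy is at most $Cs_\lambda^{-1}\|h\|_\infty^{\beta/2}$ when $\|h\|_\infty$ is small. When $\|h\|_\infty$ is not small, the same $L^2$-increment argument gives $Cs_\lambda^{-1}$, while $V(h,h)$ is then bounded below and the signal $nV(h,h)$ dominates, since $ns_\lambda\to\infty$. The discrepancy must be $o\{nV(h,h)\}$ uniformly; it need not be $o(\sigma_{0,n,\lambda})$. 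To obtain this, use the Gagliardo--Nirenberg interpolation from Lemma~\ref{lem:kernel-orders}:
\[
\|h\|_\infty\le C_R V(h,h)^{(2r-1)/(4r)} .
\]
The requirement then reduces to
\[
s_\lambda^{-1}V(h,h)^{\beta(2r-1)/(8r)}=o\{nV(h,h)\}\quad\text{uniformly for }V(h,h)\ge C^2\eta_{n,\lambda}^2 .
\]
Because $1-\beta(2r-1)/(8r)>0$, the worst case is the boundary $V(h,h)\asymp\eta_{n,\lambda}^2$. There the requirement reads
\[
n^{-(4r/(4r+1))\{1-\beta(2r-1)/(8r)\}}\gg n^{-(4r-1)/(4r+1)},
\]
which is exactly $\beta(2r-1)>2$, that is, \eqref{eq:beta-power-condition}. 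I expect this exponent bookkeeping to be the delicate step.

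\emph{Step 5: conclusion.} Combining Steps 2--4, the standardized statistic exceeds $cC^2-O_p(1)$ uniformly over the alternative set. Choosing $C$ large makes the rejection probability at least $1-\delta$, which proves \eqref{eq:uniform-power-separation}. If $V(h_n,h_n)^{1/2}/\eta_{n,\lambda}\to\infty$, the signal bound of Step 2 diverges while all noise bounds remain fixed, so the rejection probability tends to one.
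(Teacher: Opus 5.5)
Your argument is correct, but it is organized differently from the paper's proof.

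\textbf{The paper's route.} The paper splits the alternatives by subsequences.
\begin{itemize}
\item For sequences with $V(h_n,h_n)^{1/2}=O(\eta_{n,\lambda})$, interpolation and \eqref{eq:beta-power-condition} give $\|h_n\|_\infty^{\beta/2}=o(s_\lambda^{1/2})$ and $n\|h_n\|_\infty^\alpha\|h_n\|_\lambda^2/\sigma_{0,n,\lambda}\to0$. The CLT-based Theorem~\ref{thm:local-power} therefore applies.
\item In that bounded-ratio case the signal is measured through $\|A_\lambda h\|_\lambda^2\geq V(h,h)-\lambda J(h,h)$ from \eqref{eq:smoothed-signal-lower}. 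This gives $\Lambda_n(h)\geq c_1C^2-c_2R^2$, and the power bound $1-\Phi\{z_{1-\gamma}-\kappa\}$ holds along convergent subsequences.
\item Only for diverging separation ratios does the paper use Lemma~\ref{lem:global-identification}, the interpolated trace-shift bound with exponent condition $2ra>1/2$, and Chebyshev control of the linear and centered quadratic terms. This is essentially your Steps 2--4.
\end{itemize}

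\textbf{Your route.} You run that second argument over the whole alternative set at once.
\begin{itemize}
\item Using $\lambda J(h,h)\leq (R^2/C^2)V(h,h)$ inside Lemma~\ref{lem:global-identification} gives a standardized signal of at least $cC^2$, with no subtracted smoothing bias.
\item You control the centering shift relative to $nV(h,h)$ rather than $\sigma_{0,n,\lambda}$. You correctly identify the boundary $V(h,h)\asymp\eta_{n,\lambda}^2$ as the worst case and recover exactly $\beta(2r-1)>2$.
\item Uniform Chebyshev bounds then finish both the $C\eta_{n,\lambda}$ claim and the divergent-ratio claim.
\end{itemize}

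\textbf{Trade-offs.} Your approach gives a short, explicitly uniform proof. It needs no CLT under alternatives, no subsequence extraction, and no separate treatment of shrinking versus fixed-size signals. The paper's approach gives the sharp asymptotic power function near the boundary. That yields a less conservative constant $C=C(R,\gamma,\delta)$, since yours comes from Chebyshev, and it ties the separation result directly to the local power theory.
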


\begin{proof}
Balancing the terms in \eqref{eq:generic-separation-rate}
gives \(s_\lambda^{2r}\asymp
(ns_\lambda^{1/2})^{-1}\).
Hence \eqref{eq:optimal-s} holds,
\(\eta_{n,\lambda}\asymp s_\lambda^r\), and
\(ns_\lambda^{2r+1/2}\asymp1\).

First consider sequences with
\(V(h_n,h_n)^{1/2}=O(\eta_{n,\lambda})\).
Sobolev interpolation and \(J(h_n,h_n)\leq R^2\) give
\[
\|h_n\|_\infty
\leq C_R V(h_n,h_n)^{(2r-1)/(4r)}
=
O(s_\lambda^{r-1/2}).
\]
Condition \eqref{eq:beta-power-condition} therefore implies
\(\|h_n\|_\infty^{\beta/2}=o(s_\lambda^{1/2})\).
Also,
\(\|h_n\|_\lambda^2=O(s_\lambda^{2r})\), so
\[
\frac{n\|h_n\|_\infty^\alpha\|h_n\|_\lambda^2}
{\sigma_{0,n,\lambda}}
=
O(\|h_n\|_\infty^\alpha)\to0.
\]
Thus the local power conditions hold for every sequence in
a bounded multiple of the separation scale.

If \(V(h,h)^{1/2}\geq C\eta_{n,\lambda}\), then
\eqref{eq:smoothed-signal-lower} yields
\(\|A_\lambda h\|_\lambda^2
\geq C^2\eta_{n,\lambda}^2-R^2s_\lambda^{2r}\).
Consequently, for fixed constants \(c_1,c_2>0\),
\begin{equation*}
\Lambda_n(h)\geq c_1C^2-c_2R^2.
\end{equation*}
Choose \(C\) so that
\(1-\Phi\{z_{1-\gamma}-(c_1C^2-c_2R^2)\}\geq1-\delta\).
For sequences whose separation ratio is bounded,
\(\Lambda_n(h_n)\) is also bounded. Every subsequence therefore
has a further subsequence along which it converges.
Theorem~\ref{thm:local-power} gives the required power bound
along each such subsequence.

Now suppose
\(V(h_n,h_n)^{1/2}/\eta_{n,\lambda}\to\infty\).
Put \(x_n=V(h_n,h_n)\).
Since \(\eta_{n,\lambda}^2\asymp\lambda\), global identification
gives
\[
\|b_\lambda(h_n)\|_\lambda^2
\geq c_R\frac{x_n^2}{x_n+\lambda R^2}
\geq c_Rx_n
\]
for large \(n\), after adjusting the constant.
It follows that
\(n\|b_\lambda(h_n)\|_\lambda^2/
\sigma_{0,n,\lambda}\to\infty\).

We next compare the covariance shift with this mean signal.
If \(x_n\to0\), interpolation and
Lemma~\ref{lem:alternative-covariance} give
\[
|\operatorname{tr}(\Omega_{n,\lambda}^{(h_n)})
-\mu_{n,\lambda}|
\leq Cs_\lambda^{-1}x_n^a,
\qquad
a=\frac{\beta(2r-1)}{8r}.
\]
Here \(0<a<1\). Writing \(x_n=s_\lambda^{2r}L_n\),
where \(L_n\to\infty\), the ratio of this bound to \(nx_n\)
is at most a constant times
\[
s_\lambda^{\,2ra-1/2}L_n^{a-1}\to0,
\]
because \eqref{eq:beta-power-condition} gives \(2ra>1/2\).
If \(x_n\) is bounded away from zero, the shifted moment bound
gives a covariance shift of order \(s_\lambda^{-1}\),
which is negligible relative to \(nx_n\) since
\(ns_\lambda\to\infty\).
These two cases cover arbitrary sequences by subsequence
selection.

The shifted moment bounds and Sobolev embedding also give,
uniformly over \(\mathcal B_R\),
\(\Omega_{i,\lambda}^{(h)}\preceq C_RA_\lambda\) and
\(E\|\widetilde G_{i,\lambda}^{(h)}\|_\lambda^4
\leq C_Rs_\lambda^{-2}\).
The quadratic decomposition used in Section~\ref{sec:asymptotic-theory}
therefore yields
\[
\operatorname{Var}\{\|Z_{n,\lambda}^{(h)}\|_\lambda^2\}
\leq C_R(s_\lambda^{-1}+n^{-1}s_\lambda^{-2})
=O(s_\lambda^{-1}).
\]
Thus its centered version is
\(O_p(s_\lambda^{-1/2})\), uniformly over the ellipsoid.
The linear term in
\eqref{eq:alternative-statistic-decomposition} has variance
at most \(C_Rn\|b_\lambda(h)\|_\lambda^2\).
Both stochastic terms are negligible relative to the mean
signal along the present sequence. Hence the standardized
statistic diverges to \(+\infty\).

Finally, any sequence of alternatives in the stated separation
set has a subsequence with bounded separation ratios or a
further subsequence whose ratios diverge.
The two arguments above establish
\eqref{eq:uniform-power-separation}.
\end{proof}

The matching lower bound follows from Gaussian regression with
one observation per subject. Since this experiment is contained
in the sparse model, a lower bound for it also limits what can
be achieved over the larger class.

\begin{theorem}\label{thm:minimax-lower}
Consider the submodel with \(m_i=1\), \(Z_i=0\), independent
standard normal measurement errors, and independent uniform
design points on \([0,1]\).
Let \(\rho_n=o\{n^{-2r/(4r+1)}\}\).
No sequence of asymptotic level-\(\gamma\) tests is uniformly
consistent over
\[
\{h\in\mathcal B_R:V(h,h)^{1/2}\geq\rho_n\}.
\]
Thus the minimax separation rate over any sparse model class
containing this submodel is bounded below by
\begin{equation*}
n^{-2r/(4r+1)}.
\end{equation*}
\end{theorem}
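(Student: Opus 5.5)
The plan is the standard Ingster-type lower bound: a Bayesian prior supported on the separated alternatives, with the chi-square divergence between mixture and null controlled. In the submodel, $U_i=g_0(T_i)+h(T_i)+\varepsilon_i$ with $T_i\sim\mathrm{Unif}[0,1]$ and $\varepsilon_i\sim N(0,1)$. Because $g_0$ is known, we may take $g_0=0$ without loss of generality. It then suffices to show one thing. For $\rho_n=o\{n^{-2r/(4r+1)}\}$ there is a prior $\Pi_n$ on $\{h\in\mathcal B_R:V(h,h)^{1/2}\ge\rho_n\}$ whose mixture $P_{\Pi_n}=\int P_h^{\otimes n}\,d\Pi_n(h)$ satisfies $\chi^2(P_{\Pi_n},P_0^{\otimes n})\to0$. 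Then $\|P_{\Pi_n}-P_0^{\otimes n}\|_{TV}\to0$. Hence for any test $\phi_n$ with asymptotic level $\gamma$,
\[
\inf_h E_h(1-\phi_n)\le E_{\Pi_n}(1-\phi_n)\le 1-E_0\phi_n+o(1),
\]
and the right side is at least $1-\gamma+o(1)>0$. So no such test is uniformly consistent. Since $V$ here equals $L^2$ (with $I=1$, $\pi=1$), norms are plain.

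\textbf{Construction.} Let $k_n\to\infty$ be chosen below. Partition $[0,1]$ into $k_n$ intervals of length $1/k_n$. Fix a $C^\infty$ bump $\varphi$ supported in $(0,1)$ with $\|\varphi\|_{L^2}=1$, and put $\varphi_\ell(t)=\varphi(k_nt-\ell+1)$. The prior is
\[
h_\theta=c\,k_n^{-r}\sum_{\ell=1}^{k_n}\theta_\ell\varphi_\ell ,
\]
with $\theta_\ell$ i.i.d.\ Rademacher. The bumps have disjoint supports, so the basic norms are explicit:
\[
\|h_\theta\|_{L^2}^2=c^2k_n^{-2r}, \qquad J(h_\theta,h_\theta)=c^2\|\varphi^{(r)}\|^2_{L^2},
\]
and $\|h_\theta\|_\infty\le Ck_n^{-r}$. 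Choosing $c$ small puts every $h_\theta$ in $\mathcal B_R$. Choose $k_n$ with $k_n^{-r}\asymp \rho_n$ up to the constant. Then $V(h_\theta,h_\theta)^{1/2}\ge\rho_n$, and $\rho_n=o(n^{-2r/(4r+1)})$ forces $k_n\gg n^{2/(4r+1)}$.

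\textbf{Chi-square computation.} Since the observations are independent, the second-moment identity gives
\[
1+\chi^2=E_{\theta,\theta'}\big[\{1+\chi(\theta,\theta')\}^n\big],
\qquad
1+\chi(\theta,\theta')=E_0\Big[\tfrac{dP_{h_\theta}}{dP_0}\tfrac{dP_{h_{\theta'}}}{dP_0}\Big].
\]
For the Gaussian shift with a uniform design,
\[
1+\chi(\theta,\theta')=\int_0^1\exp\{h_\theta(t)h_{\theta'}(t)\}\,dt .
\]
The sup norm is $O(k_n^{-r})\to0$, so this equals
\[
1+\int h_\theta h_{\theta'}+O(k_n^{-4r})
=1+c^2k_n^{-2r-1}\sum_\ell\theta_\ell\theta'_\ell+O(k_n^{-4r}),
\]
using $\int\varphi_\ell^2=k_n^{-1}$. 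The error term is at most $Ck_n^{-4r}$ uniformly in $(\theta,\theta')$, because the supports are disjoint. Write $\xi_\ell=\theta_\ell\theta'_\ell$; these are i.i.d.\ Rademacher. Then
\[
1+\chi^2\le e^{Cnk_n^{-4r}}\;E\exp\Big(nc^2k_n^{-2r-1}\sum_\ell\xi_\ell\Big)
=e^{Cnk_n^{-4r}}\cosh\big(nc^2k_n^{-2r-1}\big)^{k_n}.
\]
By $\cosh x\le e^{x^2/2}$ this is at most
\[
\exp\big\{Cnk_n^{-4r}+c^4n^2k_n^{-4r-1}/2\big\}.
\]
Now $k_n\gg n^{2/(4r+1)}$ gives $n^2k_n^{-4r-1}\to0$, and also $nk_n^{-4r}\le n^2k_n^{-4r-1}\to0$ once $k_n\ge n$ fails and $k_n\le n$. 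If instead $k_n>n$, the first exponent is trivially small. Hence $\chi^2\to0$.

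\textbf{Main obstacle.} The delicate step is the uniform bound on the second-order remainder of $\int e^{h_\theta h_{\theta'}}$, so that the factorization over bumps survives. I would handle it by writing the integral exactly as
\[
\prod\text{-free sum}\quad 1+\sum_\ell\int_{I_\ell}\big(e^{h_\theta h_{\theta'}}-1\big),
\]
and expanding $e^x-1=x+O(x^2)$ on each interval separately. This keeps the $\theta$-dependence linear up to a deterministic $O(k_n^{-4r})$.

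The concluding reduction is then direct. The submodel satisfies Assumptions~\ref{ass:loss-primitive}–\ref{ass:sparse-cluster} for the relevant losses, or at least lies in the stated model class. A lower bound for the submodel therefore transfers to any class containing it, which yields the rate $n^{-2r/(4r+1)}$.
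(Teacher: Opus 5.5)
Your argument is correct in substance, but it takes a different route from the paper's proof. The paper builds no prior. After subtracting $g_0$, it notes that the bounds on $I$ make the ellipsoid and separation norms equivalent to the usual Sobolev and $L^2$ norms. It then restricts to a small periodic Sobolev ball inside $\mathcal B_R$ and invokes the random uniform design lower bound of \citet{ingstersapatinas2009}.

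You prove that lower bound directly, using a Rademacher bump prior and the second-moment identity. Your computations check out:
\begin{itemize}
\item $\|h_\theta\|_{L^2}^2=c^2k_n^{-2r}$ and $J(h_\theta,h_\theta)=c^2\|\varphi^{(r)}\|_{L^2}^2$;
\item $1+\chi(\theta,\theta')=\int_0^1e^{h_\theta h_{\theta'}}$, with a remainder uniformly $O(k_n^{-4r})$;
\item the $\cosh$ bound gives $\exp\{O(n^2k_n^{-4r-1}+nk_n^{-4r})\}$.
\end{itemize}
The citation buys brevity. Your version is self-contained and has explicit constants. Because compactly supported bumps lie directly in $W^{r,2}[0,1]$, it also avoids embedding a periodic class into $\mathcal B_R$.

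Two small points on the construction. In the submodel, $I$ is a loss-dependent constant $I_0>0$ (for absolute loss, $I_0=2f(0)$ with $f$ the standard normal density). So $V(h,h)=I_0\|h\|_{L^2}^2$, not $\|h\|_{L^2}^2$; this only rescales $c$. Your case split for $nk_n^{-4r}$ can also be replaced by one line: $k_n\gg n^{2/(4r+1)}$ gives $k_n^{4r}\gg n^{8r/(4r+1)}\ge n$.

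The one step that fails as written is the final reduction. You bound $\inf_hE_h(1-\phi_n)$ from above, which says nothing about uniform consistency. Uniform consistency means $\sup_hE_h(1-\phi_n)\to0$ over the separated set $\{h\in\mathcal B_R:V(h,h)^{1/2}\ge\rho_n\}$. What you need, with $\sup_h$ taken over that set, is
\[
\sup_{h}E_h(1-\phi_n)\ge E_{\Pi_n}(1-\phi_n)=1-E_{P_{\Pi_n}}\phi_n\ge1-E_0\phi_n-\|P_{\Pi_n}-P_0^{\otimes n}\|_{\mathrm{TV}}\ge1-\gamma-o(1).
\]
This uses exactly the ingredients you already have, since the total variation bound is two-sided, so the fix is a single line.
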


\begin{proof}
Subtracting the known null function reduces the observations to
\[
Y_i=h(T_i)+\varepsilon_i,
\qquad i=1,\ldots,n,
\]
with independent uniform design and standard normal errors.
The bounds on \(I\) make the ellipsoid constraint and separation
norm equivalent, up to fixed constants, to their usual Sobolev
and \(L^2\) counterparts.

Restrict to a fixed-radius periodic Sobolev ball contained in $\mathcal B_R$. The random uniform design regression lower bound of \citet{ingstersapatinas2009}, applied with dimension one and smoothness $r$, gives the separation rate $n^{-2r/(4r+1)}$ on this smaller class; see also \citet{ingstersuslina2003}. Norm equivalence changes only fixed constants.
Therefore no asymptotic level-\(\gamma\) test can be uniformly
consistent at a smaller order in this submodel.
A uniformly consistent test over the full sparse model would
also be uniformly consistent on the submodel, giving the
claimed lower bound.
\end{proof}

Together, Theorems~\ref{thm:sparse-detection-upper} and
\ref{thm:minimax-lower} identify the score test separation rate
as \(n^{-2r/(4r+1)}\), under the stated conditions.
The upper bound concerns the population calibrated test;
transferring it to estimated or bootstrap critical values also
requires control of those critical values under alternatives.

For the objective improvement statistic, Section~\ref{sec:asymptotic-theory} establishes
equivalence under the null. At the testing bandwidth, its rate
conditions simplify as follows.

\begin{corollary}\label{cor:objective-minimax}
Suppose the assumptions of Theorem~\ref{thm:statistic-equivalence}
hold in the Lipschitz score case, or those of
Corollary~\ref{cor:absolute-equivalence} hold in the absolute
loss case, apart from their bandwidth conditions.
Assume \(\alpha(4r-3)>2\) and choose \(s_\lambda\) as in
\eqref{eq:optimal-s}.
Then, under \(H_0\),
\[
T_{n,\lambda}-S_{n,\lambda}=o_p(s_\lambda^{-1/2})
\]
for Lipschitz scores when \(r\geq2\), and for absolute loss
when \(r\geq3\).
In these cases the objective statistic has the same standardized
null limit as the score statistic.
\end{corollary}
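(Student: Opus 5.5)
The proof is a direct verification: substitute the testing bandwidth \eqref{eq:optimal-s} into the bandwidth conditions of Theorem~\ref{thm:statistic-equivalence} (Lipschitz scores) and Corollary~\ref{cor:absolute-equivalence} (absolute loss). Then check that each condition reduces either to the stated restriction on \(r\) or to \(\alpha(4r-3)>2\). All other hypotheses are assumed outright, so only the rate conditions need attention.

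With \(s_\lambda\asymp n^{-2/(4r+1)}\), every condition in those results has one of two forms. The first form is \(ns_\lambda^k\to\infty\) for some \(k\). The second is \(n^\alpha s_\lambda^{2\alpha+1}\to\infty\). I would record two exponent computations.

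\textbf{Power conditions.} We have
\[
ns_\lambda^k\asymp n^{1-2k/(4r+1)}.
\]
This diverges if and only if \(4r+1>2k\), that is, \(r>(2k-1)/4\).

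\textbf{Mixed condition.} We have
\[
n^\alpha s_\lambda^{2\alpha+1}\asymp n^{\alpha-2(2\alpha+1)/(4r+1)}.
\]
The exponent is positive if and only if \(\alpha(4r+1)>4\alpha+2\), that is, \(\alpha(4r-3)>2\). This is exactly the assumed inequality.

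Also \(s_\lambda\to0\) trivially.

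\textbf{Lipschitz case.} The conditions inherited through Theorems~\ref{thm:estimator-localization}, \ref{thm:bahadur} and~\ref{thm:score-clt} are \(ns_\lambda\to\infty\), \(ns_\lambda^3\to\infty\) and \(ns_\lambda^4\to\infty\). Since \(s_\lambda\to0\), the strongest is \(ns_\lambda^4\to\infty\), and the other two follow from it. By the first computation with \(k=4\), it holds when \(r>7/4\), i.e.\ for every integer \(r\geq2\), matching Remark~\ref{rem:equivalence-rates}.

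\textbf{Absolute loss case.} Theorem~\ref{thm:absolute-bahadur} needs \(ns_\lambda^4\to\infty\), and Corollary~\ref{cor:absolute-equivalence} adds \(ns_\lambda^6\to\infty\). The latter dominates. With \(k=6\) it requires \(r>11/4\), i.e.\ \(r\geq3\).

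In both cases the mixed condition holds by the second computation. Invoking \eqref{eq:TS-equivalence}, respectively Corollary~\ref{cor:absolute-equivalence}, then gives
\[
T_{n,\lambda}-S_{n,\lambda}=o_p(s_\lambda^{-1/2}).
\]
The common standardized null limit follows from Theorem~\ref{thm:score-clt} together with Slutsky's theorem, as in \eqref{eq:objective-improvement-clt} and \eqref{eq:absolute-objective-clt}.

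There is no real obstacle. The only point needing care is to confirm that the strongest power condition in each case is the binding one, since the weaker conditions follow from it because \(s_\lambda\to0\). One should also check that the localization conditions of Theorem~\ref{thm:estimator-localization} are not overlooked, since they are part of the hypotheses of Theorem~\ref{thm:bahadur}.
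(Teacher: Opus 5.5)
Your proposal is correct and matches the paper's proof: substitute $s_\lambda\asymp n^{-2/(4r+1)}$, reduce $n^\alpha s_\lambda^{2\alpha+1}\to\infty$ to $\alpha(4r-3)>2$, check $ns_\lambda^4\asymp n^{(4r-7)/(4r+1)}$ (Lipschitz, $r\geq2$) and $ns_\lambda^6\asymp n^{(4r-11)/(4r+1)}$ (absolute loss, $r\geq3$), then invoke the respective null equivalence results. Your appeal to Remark~\ref{rem:equivalence-rates} is only loosely apt, since that remark concerns the different bandwidth $s_\lambda\asymp n^{-1/(2r+1)}$, but this does not affect the argument.
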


\begin{proof}
At \eqref{eq:optimal-s},
\[
n^\alpha s_\lambda^{2\alpha+1}
\asymp n^{\{\alpha(4r-3)-2\}/(4r+1)}\to\infty.
\]
For Lipschitz scores, the other required condition is
\(ns_\lambda^4\asymp n^{(4r-7)/(4r+1)}\to\infty\),
which holds for integer \(r\geq2\).
For absolute loss, it is
\(ns_\lambda^6\asymp n^{(4r-11)/(4r+1)}\to\infty\),
which holds for integer \(r\geq3\).
The respective null equivalence results complete the proof.
\end{proof}

A matching power conclusion for \(T_{n,\lambda}\) requires
an expansion valid under alternatives. In particular, transferring
the separation rate requires uniform control near the detection
boundary and a consistency argument for larger signals.
Null equivalence alone does not supply these results.

\section{Sparse to dense transition}
\label{sec:growing-m}

We now allow the number of measurements per subject to increase.
The score covariance then has two components: one from individual
measurements and another from dependence within a trajectory.
Their relative sizes determine the variance transition and the
corresponding detection rate. We retain the loss, design density,
and Sobolev conditions used earlier, but replace the bounded
cluster size assumption by the conditions below. All covariance
calculations are under the null unless an alternative is specified.

\subsection{Covariance decomposition and variance transition}

Recall that \(m_H=\{n^{-1}\sum_i m_i^{-1}\}^{-1}\).
Throughout this section, \(m_H\to\infty\), and individual cluster
sizes remain comparable to \(m_H\). For random cluster sizes,
the analysis is conditional on their observed values.

\begin{assumption}\label{ass:growing-cluster}
There are constants \(0<c_m<C_m<\infty\) such that
\(c_m m_H\leq m_i\leq C_m m_H\) for every subject.
Conditional on \(m_i\), the observation times are independent
and identically distributed with density \(\pi\), independently
of the latent trajectory and measurement errors.

The conditional second moment
\(q_\psi(t)=E\{\psi^2(e_{ij})\mid T_{ij}=t\}\)
is common across subjects and indices and is bounded above
and away from zero. For \(j\neq k\), the conditional pair moment
\[
c_\psi(s,t)
=
E\{\psi(e_{ij})\psi(e_{ik})
\mid T_{ij}=s,T_{ik}=t\}
\]
is common across subjects, distinct index pairs, and cluster
sizes, and satisfies \(c_\psi(s,t)=c_\psi(t,s)\).
The shifted moment condition in
Assumption~\ref{ass:loss-primitive} holds uniformly over subjects
and measurement indices.

For the analysis under alternatives, the conditional marginal
and pair distributions are also common across subjects and
indices. Alternatives change the mean function while leaving
the joint law of the centered residuals and design unchanged.
\end{assumption}

Write \(L^2(V)=L^2([0,1],I(t)\pi(t)\,dt)\).
Define the pair score operator \(\mathcal C_\psi\) on this space by
\[
V(\mathcal C_\psi f,g)
=
\int_0^1\int_0^1
c_\psi(s,t)f(s)g(t)\pi(s)\pi(t)\,ds\,dt.
\]
Throughout this section, assume that \(\mathcal C_\psi\) is
nonnegative, self adjoint, and Hilbert--Schmidt, with
\begin{equation}
0<\operatorname{tr}(\mathcal C_\psi^2)<\infty.
\label{eq:Cpsi-nondegenerate}
\end{equation}
The positive lower bound excludes the case without pair score
dependence. If \(c_\psi\equiv0\), the trajectory component below
vanishes.

On \(\mathcal H_\lambda\), define the diagonal operator by
\[
\langle D_\lambda f,g\rangle_\lambda
=
\int_0^1q_\psi(t)f(t)g(t)\pi(t)\,dt.
\]
The pair operator is defined by
\[
\langle B_\lambda f,g\rangle_\lambda
=
\int_0^1\int_0^1
c_\psi(s,t)f(s)g(t)\pi(s)\pi(t)\,ds\,dt.
\]
These definitions distinguish the operators on
\(\mathcal H_\lambda\) from \(\mathcal C_\psi\), which acts on
\(L^2(V)\). Their relationship is made explicit in the proof
below.

\begin{proposition}\label{prop:exact-dense-decomposition}
Under Assumption~\ref{ass:loss-primitive},
Assumption~\ref{ass:growing-cluster}, and the stated conditions
on \(\mathcal C_\psi\),
\begin{equation}
\Omega_{n,\lambda}
=
m_H^{-1}D_\lambda+(1-m_H^{-1})B_\lambda.
\label{eq:exact-dense-omega}
\end{equation}
Moreover,
\(\operatorname{tr}(D_\lambda)\asymp s_\lambda^{-1}\) and
\(\operatorname{tr}(D_\lambda^2)\asymp s_\lambda^{-1}\).
For every \(\lambda>0\), \(B_\lambda\) is trace class, and
\begin{equation}
\operatorname{tr}(B_\lambda^2)
\to\operatorname{tr}(\mathcal C_\psi^2),
\qquad \lambda\to0.
\label{eq:B-trace2}
\end{equation}
\end{proposition}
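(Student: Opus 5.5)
The decomposition \eqref{eq:exact-dense-omega} should follow directly from the covariance expansion of \(\Gamma_{i,\lambda}\) in Section~\ref{sec:criterion}. Under Assumption~\ref{ass:growing-cluster}, \(q_\psi\) and \(c_\psi\) are common across subjects, indices and cluster sizes. Separating the diagonal and off-diagonal terms therefore gives, for every \(f,g\in\mathcal H_\lambda\),
\[
\langle\Omega_{i,\lambda}f,g\rangle_\lambda
=m_i^{-1}\langle D_\lambda f,g\rangle_\lambda
+(1-m_i^{-1})\langle B_\lambda f,g\rangle_\lambda .
\]
Averaging over \(i\) and using \(n^{-1}\sum_i m_i^{-1}=m_H^{-1}\) yields \eqref{eq:exact-dense-omega} as an identity of bounded forms, and hence of operators. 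The conditional Fubini steps are justified by the shifted moment bound in Assumption~\ref{ass:loss-primitive}(ii).

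For the traces of \(D_\lambda\), note that \(q_\psi\) and \(I\) are bounded above and away from zero. Hence \(c A_\lambda\preceq D_\lambda\preceq C A_\lambda\). The min--max argument from Proposition~\ref{prop:covariance-comparison} then compares the eigenvalues of \(D_\lambda\) with \(a_{\nu,\lambda}=(1+\lambda\gamma_\nu)^{-1}\). Lemma~\ref{lem:kernel-orders} with \(p=1,2\) gives \(\operatorname{tr}(D_\lambda)\asymp s_\lambda^{-1}\) and \(\operatorname{tr}(D_\lambda^2)\asymp s_\lambda^{-1}\).

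For \(B_\lambda\), I would work in the bases of Proposition~\ref{prop:eigensystem}. The \(\phi_\nu\) form a complete orthonormal system in \(L^2(V)\), and the \(e_{\nu,\lambda}\) form an orthonormal basis of \(\mathcal H_\lambda\). Write \(d_\nu=(1+\lambda\gamma_\nu)^{-1}\) and \(c_{\mu\nu}=V(\mathcal C_\psi\phi_\mu,\phi_\nu)\). The definitions give the matrix representation
\[
\langle B_\lambda e_{\mu,\lambda},e_{\nu,\lambda}\rangle_\lambda
=d_\mu^{1/2}d_\nu^{1/2}c_{\mu\nu}.
\]
Let \(U:L^2(V)\to\mathcal H_\lambda\) be the unitary map with \(U\phi_\nu=e_{\nu,\lambda}\), and let \(\Delta\) be the diagonal operator on \(L^2(V)\) with entries \(d_\nu\). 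Then \(B_\lambda=U\Delta^{1/2}\mathcal C_\psi\Delta^{1/2}U^{*}\). Because \(\sum_\nu d_\nu=d_\lambda<\infty\), the operator \(\Delta^{1/2}\) is Hilbert--Schmidt for each fixed \(\lambda>0\). Since \(\mathcal C_\psi\) is bounded, \(B_\lambda\) is a product of two Hilbert--Schmidt operators and a bounded one, and is therefore trace class. Its nonnegativity is inherited from that of \(\mathcal C_\psi\).

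Finally,
\[
\operatorname{tr}(B_\lambda^2)=\sum_{\mu,\nu}d_\mu d_\nu c_{\mu\nu}^2 .
\]
As \(\lambda\downarrow0\), each \(d_\nu\) increases to \(1\). Monotone convergence then gives convergence to \(\sum_{\mu,\nu}c_{\mu\nu}^2=\operatorname{tr}(\mathcal C_\psi^2)\), which is finite by \eqref{eq:Cpsi-nondegenerate}; this is \eqref{eq:B-trace2}.

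The main point needing care is the identification of the form defining \(B_\lambda\) with the \(\mathcal C_\psi\) matrix in the \(\phi_\nu\) basis. This requires the pair form to be well defined and continuous on \(L^2(V)\), which holds because \(\mathcal C_\psi\) is Hilbert--Schmidt and \(V\) is equivalent to the \(L^2(\pi)\) norm. It also requires the completeness statement of Proposition~\ref{prop:eigensystem}, so that the double series exhausts \(\operatorname{tr}(\mathcal C_\psi^2)\).
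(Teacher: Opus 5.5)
Your proof is correct and follows the paper's route: the same diagonal/off-diagonal split averaged via $m_H^{-1}$, the same comparison $cA_\lambda\preceq D_\lambda\preceq CA_\lambda$ (the paper phrases it as $cR_\lambda^2\preceq R_\lambda M_qR_\lambda\preceq CR_\lambda^2$ after the unitary identification), and the same representation of $B_\lambda$ as $R_\lambda\mathcal C_\psi R_\lambda$, with your $\Delta^{1/2}$ playing the role of the Hilbert--Schmidt operator $R_\lambda$. The only difference is the final limit: you apply monotone convergence to $\sum_{\mu,\nu}d_\mu d_\nu c_{\mu\nu}^2$, whereas the paper uses $R_\lambda\to\mathrm{Id}$ strongly together with finite-rank approximation to obtain the stronger convergence $\|R_\lambda\mathcal C_\psi R_\lambda-\mathcal C_\psi\|_{\mathrm{HS}}\to0$, which is invoked later in Remark~\ref{rem:dense-null-law} and which your entrywise expansion would also give via dominated convergence of $\sum_{\mu,\nu}(1-d_\mu^{1/2}d_\nu^{1/2})^2c_{\mu\nu}^2$.
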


\begin{proof}
Separating the diagonal and distinct index pairs in the subject
score covariance gives
\[
\Omega_{i,\lambda}
=
m_i^{-1}D_\lambda+(1-m_i^{-1})B_\lambda.
\]
Averaging over subjects proves \eqref{eq:exact-dense-omega}.

On \(L^2(V)\), let \(M_qf=(q_\psi/I)f\), and define
\(R_\lambda\phi_\nu
=(1+\lambda\gamma_\nu)^{-1/2}\phi_\nu\).
The bounds on \(q_\psi/I\) imply
\(c\,\mathrm{Id}\preceq M_q\preceq C\,\mathrm{Id}\).

The map sending the orthonormal basis
\(e_{\nu,\lambda}\) of \(\mathcal H_\lambda\) to \(\phi_\nu\)
is unitary. Under this map, \(D_\lambda\) corresponds to
\(R_\lambda M_qR_\lambda\), and \(A_\lambda\) corresponds to
\(R_\lambda^2\). Hence
\[
cR_\lambda^2
\preceq R_\lambda M_qR_\lambda
\preceq CR_\lambda^2.
\]
The min--max principle and Lemma~\ref{lem:kernel-orders}
give the two trace orders for \(D_\lambda\).

Under the same identification, \(B_\lambda\) corresponds to
\(R_\lambda\mathcal C_\psi R_\lambda\).
For fixed \(\lambda>0\), \(R_\lambda\) is Hilbert--Schmidt,
so this product is trace class. Also,
\(R_\lambda\to\mathrm{Id}\) strongly and
\(\|R_\lambda\|_{\mathrm{op}}\leq1\).
Finite rank approximation of \(\mathcal C_\psi\) therefore gives
\[
\|R_\lambda\mathcal C_\psi R_\lambda-\mathcal C_\psi
\|_{\mathrm{HS}}\to0.
\]
Taking squared Hilbert--Schmidt norms proves \eqref{eq:B-trace2}.
\end{proof}

\begin{theorem}\label{thm:dense-variance-order}
Under the conditions of
Proposition~\ref{prop:exact-dense-decomposition}, as
\(\lambda\to0\) and \(m_H\to\infty\),
\begin{equation*}
\operatorname{tr}(\Omega_{n,\lambda}^2)
\asymp m_H^{-2}s_\lambda^{-1}+1.
\end{equation*}
The exact decomposition is
\begin{equation}
\begin{aligned}
\operatorname{tr}(\Omega_{n,\lambda}^2)
={}&m_H^{-2}\operatorname{tr}(D_\lambda^2)\\
&+2m_H^{-1}(1-m_H^{-1})
\operatorname{tr}(D_\lambda B_\lambda)\\
&+(1-m_H^{-1})^2\operatorname{tr}(B_\lambda^2).
\end{aligned}
\label{eq:dense-variance-exact}
\end{equation}
The measurement component dominates when
\(m_H^2s_\lambda\to0\), and the trajectory component dominates
when \(m_H^2s_\lambda\to\infty\).
The two have comparable orders when \(m_H^2s_\lambda\asymp1\).
\end{theorem}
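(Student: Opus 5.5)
The plan is to start from the exact identity \eqref{eq:exact-dense-omega} of Proposition~\ref{prop:exact-dense-decomposition} and square it. Since \(D_\lambda\) and \(B_\lambda\) are self adjoint and trace class for fixed \(\lambda>0\), cyclicity gives \(\operatorname{tr}(D_\lambda B_\lambda)=\operatorname{tr}(B_\lambda D_\lambda)\). Expanding \(\{m_H^{-1}D_\lambda+(1-m_H^{-1})B_\lambda\}^2\) and taking traces then yields \eqref{eq:dense-variance-exact} directly. No asymptotics are needed for this step.

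For the order statement, I will bound the three terms separately. The first is \(m_H^{-2}\operatorname{tr}(D_\lambda^2)\asymp m_H^{-2}s_\lambda^{-1}\) by Proposition~\ref{prop:exact-dense-decomposition}. The third is \((1-m_H^{-1})^2\operatorname{tr}(B_\lambda^2)\to\operatorname{tr}(\mathcal C_\psi^2)\in(0,\infty)\) by \eqref{eq:B-trace2} and \eqref{eq:Cpsi-nondegenerate}, since \(m_H\to\infty\). Hence it is \(\asymp1\) for small \(\lambda\) and large \(m_H\).

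The cross term is the only delicate piece. Both \(D_\lambda\) and \(B_\lambda\) are nonnegative, so \(\operatorname{tr}(D_\lambda B_\lambda)=\operatorname{tr}(D_\lambda^{1/2}B_\lambda D_\lambda^{1/2})\geq0\). Consequently the sum is at least the sum of the first and third terms, which gives the lower bound \(\gtrsim m_H^{-2}s_\lambda^{-1}+1\). For the upper bound, Cauchy--Schwarz in Hilbert--Schmidt norm gives
\[
\operatorname{tr}(D_\lambda B_\lambda)\leq\|D_\lambda\|_{\mathrm{HS}}\|B_\lambda\|_{\mathrm{HS}}\leq C s_\lambda^{-1/2}.
\]
Hence
\[
2m_H^{-1}\operatorname{tr}(D_\lambda B_\lambda)\leq 2\{m_H^{-2}\operatorname{tr}(D_\lambda^2)\}^{1/2}\{\operatorname{tr}(B_\lambda^2)\}^{1/2}\leq m_H^{-2}\operatorname{tr}(D_\lambda^2)+\operatorname{tr}(B_\lambda^2)
\]
by AM--GM, so the cross term is bounded by a constant multiple of the other two. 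This establishes \(\operatorname{tr}(\Omega_{n,\lambda}^2)\asymp m_H^{-2}s_\lambda^{-1}+1\).

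The regime statements follow by comparing \(m_H^{-2}s_\lambda^{-1}\) with \(1\), that is, by looking at \(m_H^2s_\lambda\). If \(m_H^2s_\lambda\to0\), the measurement term diverges relative to the trajectory term, and the cross term, being at most a multiple of the geometric mean \((m_H^2s_\lambda)^{-1/2}\), is also of smaller order. If \(m_H^2s_\lambda\to\infty\), the same geometric mean bound shows the trajectory term \(\operatorname{tr}(\mathcal C_\psi^2)\) dominates. If \(m_H^2s_\lambda\asymp1\), all three terms are \(O(1)\) and the first and third are bounded below.

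I expect the main technical point to be making sure \(\operatorname{tr}(B_\lambda^2)\) stays bounded below uniformly as \(\lambda\to0\). This is supplied by the convergence \eqref{eq:B-trace2} together with the positivity in \eqref{eq:Cpsi-nondegenerate}, and it is the only place where nondegeneracy of the pair score dependence is used.
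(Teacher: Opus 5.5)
Your proposal is correct and follows the paper's proof step for step. You square the exact decomposition and use positivity to make the cross term nonnegative. You then absorb that term into the two squared terms via Hilbert--Schmidt Cauchy--Schwarz and AM--GM, with the lower bound on \(\operatorname{tr}(B_\lambda^2)\) coming from \eqref{eq:B-trace2} and \eqref{eq:Cpsi-nondegenerate}, and read off the regimes from \(m_H^2s_\lambda\). Your explicit comparison of the cross term, of order \((m_H^2s_\lambda)^{-1/2}\), with the dominant term in each regime only spells out what the paper leaves implicit.
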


\begin{proof}
Squaring \eqref{eq:exact-dense-omega} and using cyclicity of
the trace gives \eqref{eq:dense-variance-exact}.
Positivity makes the cross term nonnegative, while
Hilbert--Schmidt Cauchy--Schwarz gives
\[
0\leq\operatorname{tr}(D_\lambda B_\lambda)
\leq
\{\operatorname{tr}(D_\lambda^2)
\operatorname{tr}(B_\lambda^2)\}^{1/2}
=O(s_\lambda^{-1/2}).
\]
Thus the trace is bounded above by
\(C\{m_H^{-2}s_\lambda^{-1}
+m_H^{-1}s_\lambda^{-1/2}+1\}\).
The middle term is at most
\(\{m_H^{-2}s_\lambda^{-1}+1\}/2\).

For the lower bound, retain the two nonnegative squared terms
in \eqref{eq:dense-variance-exact}.
Proposition~\ref{prop:exact-dense-decomposition} and
\eqref{eq:Cpsi-nondegenerate} give
\(c\{m_H^{-2}s_\lambda^{-1}+1\}\) for large \(n\).
Their relative order is \((m_H^2s_\lambda)^{-1}\), which
identifies the three regimes.
\end{proof}

\begin{remark}\label{rem:dense-null-law}
The trace calculation explains why the sparse normal limit
does not automatically extend to dense sampling.
In the measurement dominated regime, the covariance operator
norm remains bounded while
\(\operatorname{tr}(\Omega_{n,\lambda}^2)\to\infty\).
Thus
\[
\frac{\|\Omega_{n,\lambda}\|_{\mathrm{op}}^2}
{\operatorname{tr}(\Omega_{n,\lambda}^2)}\to0.
\]
This preserves the spectral dispersion condition, although a
central limit theorem still requires the relevant moment bounds.

In the trajectory dominated regime, the transformed covariance
converges in Hilbert--Schmidt norm to the nonzero compact
operator \(\mathcal C_\psi\). Its leading eigenvalues need not
be negligible. A centered weighted chi square limit may then
replace the normal limit. Related changes in asymptotic behavior
appear in \citet{wangetal2018} and \citet{caihu2025}.
Calibration must therefore accommodate the sampling regime.
\end{remark}

\subsection{Covariance stability and separation rates}

To study power, we control the covariance change caused by a
mean departure \(h\). We then use the exact null quantile to
derive detection rates without imposing a normal limit in
the dense regime.

For \(h\in\mathcal B_R\), define
\[
Y_{ij}^{(h)}
=
\psi\{e_{ij}+h(T_{ij})\}K_\lambda(T_{ij},\cdot).
\]
Its mean is \(b_\lambda(h)\).
Let \(D_{\lambda,h}\) be the covariance operator of
\(Y_{ij}^{(h)}\), and, for \(j\neq k\), let
\[
B_{\lambda,h}
=
E[(Y_{ij}^{(h)}-b_\lambda(h))
\otimes(Y_{ik}^{(h)}-b_\lambda(h))].
\]
The common marginal and pair distributions give
\[
\Omega_{n,\lambda}^{(h)}
=
m_H^{-1}D_{\lambda,h}
+(1-m_H^{-1})B_{\lambda,h}.
\]
At \(h=0\), these operators reduce to their null counterparts.

\begin{lemma}\label{lem:dense-covariance-perturbation}
Under the conditions of
Proposition~\ref{prop:exact-dense-decomposition}, for
\(\|h\|_\infty\leq a_0\),
\begin{equation}
\|D_{\lambda,h}-D_\lambda\|_{\mathrm{HS}}
\leq C\|h\|_\infty^{\beta/2}s_\lambda^{-1/2},
\label{eq:D-perturb-HS}
\end{equation}
and
\begin{equation}
\|B_{\lambda,h}-B_\lambda\|_{\mathrm{HS}}
\leq C\|h\|_\infty^{\beta/2}.
\label{eq:B-perturb-HS}
\end{equation}
Consequently,
\begin{equation}
\|\Omega_{n,\lambda}^{(h)}-\Omega_{n,\lambda}\|_{\mathrm{HS}}
\leq C\|h\|_\infty^{\beta/2}
\{m_H^{-1}s_\lambda^{-1/2}+1\}.
\label{eq:Omega-perturb-HS}
\end{equation}
The corresponding trace bounds are
\begin{equation}
|\operatorname{tr}(\Omega_{n,\lambda}^{(h)})
-\operatorname{tr}(\Omega_{n,\lambda})|
\leq C\|h\|_\infty^{\beta/2}
\{m_H^{-1}s_\lambda^{-1}+s_\lambda^{-1/2}\},
\label{eq:dense-trace-perturb}
\end{equation}
and
\begin{equation}
\begin{aligned}
|\operatorname{tr}\{(\Omega_{n,\lambda}^{(h)})^2\}
-\operatorname{tr}(\Omega_{n,\lambda}^2)|
\leq{}&
C\|h\|_\infty^{\beta/2}\\
&\times
\{m_H^{-2}s_\lambda^{-1}
+m_H^{-1}s_\lambda^{-1/2}+1\}.
\end{aligned}
\label{eq:dense-trace2-perturb}
\end{equation}
\end{lemma}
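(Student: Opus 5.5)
The plan is to write every covariance difference as a sum of \emph{weighted kernel operators}, and bound those uniformly in the weight. Put \(\delta_{ij}=\psi\{e_{ij}+h(T_{ij})\}-\psi(e_{ij})\) and \(\Delta_{ij}=\delta_{ij}K_\lambda(T_{ij},\cdot)\). Write \(Y_{ij}^{(0)}=\psi(e_{ij})K_\lambda(T_{ij},\cdot)\), which has mean zero by score centering. Then \(Y_{ij}^{(h)}=Y_{ij}^{(0)}+\Delta_{ij}\) and \(E\Delta_{ij}=b_\lambda(h)\). Expanding the centered tensor products gives
\[
D_{\lambda,h}-D_\lambda
=E(\Delta_{ij}\otimes Y_{ij}^{(0)})+E(Y_{ij}^{(0)}\otimes\Delta_{ij})
+E(\Delta_{ij}\otimes\Delta_{ij})-b_\lambda(h)\otimes b_\lambda(h).
\]
The same expansion holds for \(B_{\lambda,h}-B_\lambda\), with the second factor taken at index \(k\neq j\). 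The rank-one term is handled by \eqref{eq:score-linearization}. It gives \(|\Psi_t(-a)|\leq C|a|\) for \(|a|\leq a_0\), so \(|\langle b_\lambda(h),f\rangle_\lambda|\leq C\|h\|_\infty V(f,f)^{1/2}\). Hence \(\|b_\lambda(h)\|_\lambda^2\leq C\|h\|_\infty^2\leq C\|h\|_\infty^{\beta/2}\), using \(\beta\leq2\) and \(\|h\|_\infty\leq a_0\).

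\textbf{Diagonal terms.} By the reproducing property, \(\langle E(\Delta_{ij}\otimes Y_{ij}^{(0)})f,g\rangle_\lambda=\int_0^1 w(t)f(t)g(t)\pi(t)\,dt\) with \(w(t)=E\{\delta_{ij}\psi(e_{ij})\mid T_{ij}=t\}\). Cauchy--Schwarz, \eqref{eq:score-increment} and the bound on \(q_\psi\) give \(\|w\|_\infty\leq C\|h\|_\infty^{\beta/2}\). Use the unitary identification from the proof of Proposition~\ref{prop:exact-dense-decomposition}. Under it this operator becomes \(R_\lambda M_wR_\lambda\), with \(M_w\) multiplication by \(w/I\). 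This weight is not positive, so the min--max comparison cannot be used. Instead I will use
\[
\|R_\lambda M_wR_\lambda\|_{\mathrm{HS}}\leq\|M_w\|_{\mathrm{op}}\|R_\lambda^2\|_{\mathrm{HS}}
\leq C\|h\|_\infty^{\beta/2}s_\lambda^{-1/2},
\]
where Lemma~\ref{lem:kernel-orders} gives \(\operatorname{tr}(A_\lambda^2)\asymp s_\lambda^{-1}\). The trace is \(\int w\,K_\lambda(t,t)\pi\,dt\leq C\|h\|_\infty^{\beta/2}s_\lambda^{-1}\). The term \(E(\Delta\otimes\Delta)\) is treated the same way, with weight bounded by \(C\|h\|_\infty^\beta\). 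This proves \eqref{eq:D-perturb-HS} and gives a trace bound of order \(s_\lambda^{-1}\) for the diagonal part.

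\textbf{Pair terms.} Here the operator is a genuine integral operator with kernel \(k(s,t)=E\{\delta_{ij}\psi(e_{ik})\mid T_{ij}=s,T_{ik}=t\}\). Because the design is independent of the residuals, conditioning on both times does not change the marginal law of \(e_{ij}\) given \(T_{ij}=s\). So Cauchy--Schwarz gives \(|k(s,t)|\leq C\|h\|_\infty^{\beta/2}\), uniformly in \(s,t\). Viewed on \(L^2(V)\), the operator therefore has Hilbert--Schmidt norm at most \(C\|h\|_\infty^{\beta/2}\). Conjugating by \(R_\lambda\), with \(\|R_\lambda\|_{\mathrm{op}}\leq1\), does not increase this norm, which yields \eqref{eq:B-perturb-HS}. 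Its trace equals \(E\{\delta_{ij}\psi(e_{ik})K_\lambda(T_{ij},T_{ik})\}\). Bounding the conditional factor as before, this is at most
\[
C\|h\|_\infty^{\beta/2}\Big\{\int\!\!\int K_\lambda^2(s,t)\pi(s)\pi(t)\,ds\,dt\Big\}^{1/2}
\leq C\|h\|_\infty^{\beta/2}\|A_\lambda\|_{\mathrm{HS}}
\leq C\|h\|_\infty^{\beta/2}s_\lambda^{-1/2}.
\]
The triangle inequality applied to \(\Omega_{n,\lambda}^{(h)}=m_H^{-1}D_{\lambda,h}+(1-m_H^{-1})B_{\lambda,h}\) then gives \eqref{eq:Omega-perturb-HS} and \eqref{eq:dense-trace-perturb}.

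\textbf{Traces of squares.} For \eqref{eq:dense-trace2-perturb}, put \(\Delta=\Omega_{n,\lambda}^{(h)}-\Omega_{n,\lambda}\) and use \((\Omega_{n,\lambda}^{(h)})^2-\Omega_{n,\lambda}^2=\Delta\Omega_{n,\lambda}^{(h)}+\Omega_{n,\lambda}\Delta\). Then
\[
|\operatorname{tr}(\Delta X)|\leq\|\Delta\|_{\mathrm{HS}}\|X\|_{\mathrm{HS}},
\qquad X\in\{\Omega_{n,\lambda},\Omega_{n,\lambda}^{(h)}\}.
\]
Theorem~\ref{thm:dense-variance-order} gives \(\|\Omega_{n,\lambda}\|_{\mathrm{HS}}\leq C(m_H^{-1}s_\lambda^{-1/2}+1)\). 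Combined with \eqref{eq:Omega-perturb-HS} and \(\|h\|_\infty\leq a_0\), the same bound holds for \(\Omega_{n,\lambda}^{(h)}\). Multiplying the two factors gives the order \(m_H^{-2}s_\lambda^{-1}+m_H^{-1}s_\lambda^{-1/2}+1\), as claimed.

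The main obstacle is the diagonal Hilbert--Schmidt bound. The difference is not sign-definite, so the operator comparisons used in Lemma~\ref{lem:alternative-covariance} are unavailable. Bounding via \(E\|\Delta\|_\lambda\|Y\|_\lambda\) would cost a full factor \(s_\lambda^{-1}\) rather than \(s_\lambda^{-1/2}\). The multiplication-operator factorization \(R_\lambda M_wR_\lambda\) is what recovers the sharp order. A secondary point to check is that conditioning on both design points leaves each marginal residual law unchanged, which the design independence assumptions should supply.
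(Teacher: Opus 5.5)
Your proposal is correct and follows essentially the same route as the paper. The diagonal difference is realized in the \(L^2(V)\) picture as \(R_\lambda M R_\lambda\), with a multiplication operator of norm \(O(\|h\|_\infty^{\beta/2})\), and bounded by the Schatten--H\"older inequality \(\|R_\lambda M R_\lambda\|_{\mathrm{HS}}\leq\|M\|_{\mathrm{op}}\|R_\lambda^2\|_{\mathrm{HS}}\); the pair difference goes through a pointwise kernel bound (using design independence) and conjugation by the contraction \(R_\lambda\); the rank-one centering is handled separately; and the squared traces follow from \(|\operatorname{tr}(A^2)-\operatorname{tr}(B^2)|\leq(\|A\|_{\mathrm{HS}}+\|B\|_{\mathrm{HS}})\|A-B\|_{\mathrm{HS}}\) with Theorem~\ref{thm:dense-variance-order}. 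The only differences are cosmetic: you split \(q_{\psi,h}-q_\psi\) into the cross term \(2E\{\delta\psi(e)\mid T=t\}\) plus the squared-increment term, and you bound \(b_\lambda(h)\) through the linearization rather than the score increment condition.
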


\begin{proof}
Write
\(\Delta_h(t)=\psi\{e+h(t)\}-\psi(e)\).
The score increment condition gives
\(E\{\Delta_h(t)^2\mid T=t\}\leq C|h(t)|^\beta\).
Conditional Cauchy--Schwarz and the shifted moment bound imply
\[
|q_{\psi,h}(t)-q_\psi(t)|
\leq C\|h\|_\infty^{\beta/2},
\]
where \(q_{\psi,h}(t)
=E[\psi^2\{e+h(t)\}\mid T=t]\).

The uncentered diagonal operators correspond on \(L^2(V)\) to
\(R_\lambda M_{q,h}R_\lambda\) and
\(R_\lambda M_qR_\lambda\), where
\(M_{q,h}f=(q_{\psi,h}/I)f\).
The multiplication operator difference has norm
\(O(\|h\|_\infty^{\beta/2})\).
The Schatten norm product inequality therefore gives
\[
\|R_\lambda(M_{q,h}-M_q)R_\lambda\|_{\mathrm{HS}}
\leq
C\|h\|_\infty^{\beta/2}
\{\operatorname{tr}(R_\lambda^4)\}^{1/2}
\leq C\|h\|_\infty^{\beta/2}s_\lambda^{-1/2}.
\]
The score increment bound also implies
\(\|b_\lambda(h)\|_\lambda
\leq C\|h\|_\infty^{\beta/2}\), by testing its mean against
unit vectors in \(\mathcal H_\lambda\).
Subtracting the rank one operator
\(b_\lambda(h)\otimes b_\lambda(h)\) therefore preserves
the stated order and proves \eqref{eq:D-perturb-HS}.

For distinct measurements, expand the difference between the
shifted and null score products. Design independence allows
the conditional marginal bounds to be used after conditioning
on both times. Cauchy--Schwarz gives
\[
|c_{\psi,h}(s,t)-c_\psi(s,t)|
\leq C\{|h(s)|^{\beta/2}+|h(t)|^{\beta/2}\}.
\]
The corresponding operators on \(L^2(V)\) consequently differ
in Hilbert--Schmidt norm by
\(O(\|h\|_\infty^{\beta/2})\).
Conjugating by the contraction \(R_\lambda\) and subtracting
the same rank one centering term proves
\eqref{eq:B-perturb-HS}.
The covariance decomposition yields
\eqref{eq:Omega-perturb-HS}.

For the diagonal trace, the multiplication operator bound gives
\(C\|h\|_\infty^{\beta/2}\operatorname{tr}(R_\lambda^2)\).
For the pair trace, cyclicity and Cauchy--Schwarz give
\[
|\operatorname{tr}\{
(\mathcal C_{\psi,h}-\mathcal C_\psi)R_\lambda^2\}|
\leq
\|\mathcal C_{\psi,h}-\mathcal C_\psi\|_{\mathrm{HS}}
\|R_\lambda^2\|_{\mathrm{HS}}
\leq C\|h\|_\infty^{\beta/2}s_\lambda^{-1/2}.
\]
Including the centering corrections and the cluster weights
proves \eqref{eq:dense-trace-perturb}.

Finally, for self adjoint Hilbert--Schmidt operators,
\[
|\operatorname{tr}(A^2)-\operatorname{tr}(B^2)|
\leq(\|A\|_{\mathrm{HS}}+\|B\|_{\mathrm{HS}})
\|A-B\|_{\mathrm{HS}}.
\]
Apply this inequality with
\eqref{eq:Omega-perturb-HS} and
Theorem~\ref{thm:dense-variance-order} to obtain
\eqref{eq:dense-trace2-perturb}.
\end{proof}

Let \(c_{n,\gamma}(\lambda)\) be the \((1-\gamma)\) null
quantile of \(S_{n,\lambda}\).
The population calibrated test rejects when
\(S_{n,\lambda}>c_{n,\gamma}(\lambda)\).
This quantile is used to isolate the detection rate from the
separate problem of estimating critical values.

A compact expression covering all sampling regimes is
\begin{equation}
\rho_n^*
=
\max\{(nm_H)^{-2r/(4r+1)},\,n^{-1/2}\}.
\label{eq:unified-target-rate}
\end{equation}
Choose
\begin{equation}
s_n^*\asymp
\max\{(nm_H)^{-2/(4r+1)},\,n^{-1/(2r)}\},
\label{eq:unified-optimal-s}
\end{equation}
and set \(\lambda=(s_n^*)^{2r}\).
The two terms coincide in order at
\(m_H\asymp n^{1/(4r)}\).

\begin{theorem}\label{thm:unified-upper-rate}
Let \(r\geq2\) and \(\beta\geq1\), and suppose the conditions
of Proposition~\ref{prop:exact-dense-decomposition} hold.
Let \(M_n\to\infty\) satisfy \(M_n\rho_n^*\to0\), and choose
\(\lambda=(s_n^*)^{2r}\). Then
\begin{equation}
\inf_{\substack{h\in\mathcal B_R:\\
V(h,h)^{1/2}\geq M_n\rho_n^*}}
P_{g_0+h}\{S_{n,\lambda}>c_{n,\gamma}(\lambda)\}
\to1.
\label{eq:unified-upper-power}
\end{equation}
Thus the consistency separation rate is
\((nm_H)^{-2r/(4r+1)}\) when
\(m_H\ll n^{1/(4r)}\), and \(n^{-1/2}\) when
\(m_H\gg n^{1/(4r)}\).
At the transition, both expressions have order \(n^{-1/2}\).
\end{theorem}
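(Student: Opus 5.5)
Since no normal limit is available in the dense regime, I would work with Chebyshev-type bounds for both the null quantile and the alternative fluctuations. Under \(H_0\), \(E S_{n,\lambda}=\mu_{n,\lambda}\) and \(\operatorname{Var}(S_{n,\lambda})\le C\{\operatorname{tr}(\Omega_{n,\lambda}^2)+n^{-1}E\|G_{i,\lambda}\|_\lambda^4\}\), using the decomposition \eqref{eq:score-decomposition}. Hence
\[
c_{n,\gamma}(\lambda)\le\mu_{n,\lambda}+C_\gamma\sigma_{0,n,\lambda}\{1+o(1)\},
\]
provided the diagonal fourth-moment term is negligible. The bound of Lemma~\ref{lem:G-fourth} gives \(n^{-1}s_\lambda^{-2}\). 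This must be compared with \(\sigma_{0,n,\lambda}^2\asymp m_H^{-2}s_\lambda^{-1}+1\) from Theorem~\ref{thm:dense-variance-order}. At \(s_n^*\) the comparison may fail in the sparse branch, since \(m_H^2\) appears. I would therefore sharpen Lemma~\ref{lem:G-fourth} to \(E\|G_{i,\lambda}\|_\lambda^4\le C(m_H^{-1}s_\lambda^{-1}+1)^2\), by expanding the fourth moment over index quadruples and using \(K_\lambda(t,t)\le Cs_\lambda^{-1}\) only on coinciding indices. The diagonal term is then \(O\{n^{-1}(m_H^{-2}s_\lambda^{-2}+1)\}\), which is \(o(\sigma_{0,n,\lambda}^2)\) whenever \(ns_\lambda\to\infty\), as holds at \(s_n^*\).

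Under \(g_0+h\), I would use \eqref{eq:alternative-statistic-decomposition} and show three things. First, the signal \(n\|b_\lambda(h)\|_\lambda^2\) dominates \(\sigma_{0,n,\lambda}\). Second, the covariance shift \(\operatorname{tr}(\Omega^{(h)}_{n,\lambda})-\mu_{n,\lambda}\) is small relative to the signal. Third, the centered quadratic and linear terms are \(O_p(\sigma_{0,n,\lambda}+\sqrt{n}\|b_\lambda(h)\|_\lambda)\), uniformly over \(\mathcal B_R\).

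For the signal, Lemma~\ref{lem:global-identification} gives \(\|b_\lambda(h)\|_\lambda^2\ge c_Rx^2/(x+\lambda R^2)\) with \(x=V(h,h)\ge M_n^2\rho_n^{*2}\). The choice \eqref{eq:unified-optimal-s} makes \(\lambda=(s_n^*)^{2r}\lesssim\rho_n^{*2}\), so \(\|b_\lambda(h)\|_\lambda^2\ge cx\). It then remains to check \(n\rho_n^{*2}\gtrsim\sigma_{0,n,\lambda}\asymp m_H^{-1}s_\lambda^{-1/2}+1\) in each branch. In the sparse branch, \(n(nm_H)^{-4r/(4r+1)}\) versus \(m_H^{-1}(nm_H)^{1/(4r+1)}\) gives equality of orders after multiplying through. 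In the dense branch, \(n\cdot n^{-1}=1\) matches the trajectory term, and \(m_H^{-1}n^{1/(4r)}\lesssim1\) holds precisely when \(m_H\gtrsim n^{1/(4r)}\). The factor \(M_n^2\to\infty\) then gives divergence of the signal-to-noise ratio.

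For the covariance shift, I would invoke \eqref{eq:dense-trace-perturb} and bound \(\|h\|_\infty\) by \(C_R\) on \(\mathcal B_R\), with the interpolation \(\|h\|_\infty\le C_Rx^{(2r-1)/(4r)}\) when \(x\to0\). The shift is then at most \(C(m_H^{-1}s_\lambda^{-1}+s_\lambda^{-1/2})x^{a}\) with \(a=\beta(2r-1)/(8r)\), and it has to be \(o(nx)\). Since \(\beta\ge1\) and \(r\ge2\) give \(a\ge3/16\), this reduces to \(n\,x^{1-a}\gg m_H^{-1}s_\lambda^{-1}+s_\lambda^{-1/2}\) for \(x\ge M_n^2\rho_n^{*2}\), checked separately in the two branches.

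I expect this covariance-shift comparison to be the main obstacle. The ratio is a power of \(n\) that must be checked to be nonpositive on the sparse side, and the \(s_\lambda^{-1/2}\) cross term must be controlled on the dense side. When \(x\) is bounded away from zero, the crude bound \(Cs_\lambda^{-1}\ll n\) suffices. For the stochastic terms, I would apply the variance bound from the proof of Theorem~\ref{thm:sparse-detection-upper} with the shifted covariance bound \eqref{eq:Omega-perturb-HS}. This gives \(\operatorname{Var}\|Z^{(h)}_{n,\lambda}\|_\lambda^2=O(\sigma_{0,n,\lambda}^2)\) uniformly, and the linear term has variance \(\le Cn\|b_\lambda(h)\|_\lambda^2\|\Omega^{(h)}_{n,\lambda}\|_{\mathrm{op}}\), which is \(o\) of the squared signal. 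Chebyshev's inequality, applied uniformly over the separation set, then yields \eqref{eq:unified-upper-power}.
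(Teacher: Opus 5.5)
Your proposal follows the paper's proof essentially step for step. Both arguments use Chebyshev control of the null quantile through \eqref{eq:score-decomposition}, and the signal bound from Lemma~\ref{lem:global-identification} with $\lambda\asymp(\rho_n^*)^2$ and $n(\rho_n^*)^2\asymp\sigma_{0,n,\lambda}$. Both then bound the covariance shift through \eqref{eq:dense-trace-perturb} with interpolation exponent $a=\beta(2r-1)/(8r)$, using a crude bound when $V(h,h)$ stays away from zero, and finish with variance bounds for the linear and centered quadratic terms.

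Your one deviation, the sharpened fourth-moment bound, is not needed. Lemma~\ref{lem:G-fourth} does not fail in the sparse branch: at $s_n^*$ the ratio $n^{-1}s_\lambda^{-2}/(m_H^{-2}s_\lambda^{-1})=m_H^2/(ns_\lambda)$ increases with $m_H$ and reaches only order $n^{1/r-1}$ at the transition. In the dense branch $n^{-1}s_\lambda^{-2}\asymp n^{1/r-1}$ directly. This is exactly how the paper obtains \eqref{eq:dense-diagonal-negligible-optimal}.

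The bound $E\|G_{i,\lambda}\|_\lambda^4\leq C(m_H^{-1}s_\lambda^{-1}+1)^2$ can still be proved under the present design independence. It would free the diagonal step from the particular bandwidth, leaving only $ns_\lambda\to\infty$. However, it needs off-diagonal control of $K_\lambda$, not just $K_\lambda(t,t)\leq Cs_\lambda^{-1}$ on coinciding indices. With that diagonal bound alone, the terms with four distinct indices, weighted by $m_i^{-4}$, still contribute $O(s_\lambda^{-2})$. To do better, integrate out the design given $(Z_i,\varepsilon_{i1},\ldots,\varepsilon_{im_i})$ and use $\|\int_0^1f(s)K_\lambda(s,\cdot)\pi(s)\,ds\|_\lambda\leq C\|f\|_{L^2}$.

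The step you flag as the main obstacle closes without splitting into branches. First, $m_H^{-1}s_\lambda^{-1}+s_\lambda^{-1/2}=s_\lambda^{-1/2}(m_H^{-1}s_\lambda^{-1/2}+1)\asymp s_\lambda^{-1/2}\sigma_{0,n,\lambda}$. Second, the relation $\sigma_{0,n,\lambda}\asymp ns_\lambda^{2r}$, which you already established in the signal step, turns this into $ns_\lambda^{2r-1/2}$. Writing $x=s_\lambda^{2r}L$ with $L\geq cM_n^2$, the ratio of the shift bound to $nx$ is at most $Cs_\lambda^{2ra-1/2}L^{a-1}$. This tends to zero because $2ra=\beta(2r-1)/4\geq3/4>1/2$, which is the paper's argument.
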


\begin{proof}
We first bound the null critical value.
Use the centered decomposition
\(S_{n,\lambda}-\mu_{n,\lambda}
=D_{n,\lambda}+U_{n,\lambda}\)
from \eqref{eq:score-decomposition}.
The two terms are uncorrelated. The exact covariance
decomposition and comparability of cluster sizes give
\[
\operatorname{Var}(U_{n,\lambda})
=
2\operatorname{tr}(\Omega_{n,\lambda}^2)
+O\{n^{-1}(m_H^{-2}s_\lambda^{-1}+1)\}.
\]
The kernel bound, Jensen's inequality, and the fourth score
moment bound also give
\(\operatorname{Var}(D_{n,\lambda})
\leq Cn^{-1}s_\lambda^{-2}\).
At \eqref{eq:unified-optimal-s},
\begin{equation}
n^{-1}s_\lambda^{-2}
=o\{m_H^{-2}s_\lambda^{-1}+1\}.
\label{eq:dense-diagonal-negligible-optimal}
\end{equation}
Indeed, below the transition the relevant ratio is
\(m_H^2/(ns_\lambda)\), whose largest order at the transition
is \(n^{1/r-1}\). Above the transition,
\(n^{-1}s_\lambda^{-2}\asymp n^{1/r-1}\).
Both tend to zero for \(r\geq2\).

Consequently, with
\(v_{n,\lambda}^2=\operatorname{Var}_0(S_{n,\lambda})\),
we have
\(v_{n,\lambda}^2\asymp m_H^{-2}s_\lambda^{-1}+1\).
Chebyshev's inequality yields
\begin{equation}
c_{n,\gamma}(\lambda)
\leq\mu_{n,\lambda}+\gamma^{-1/2}v_{n,\lambda}.
\label{eq:null-quantile-chebyshev}
\end{equation}
At the chosen bandwidth,
\(\lambda\asymp(\rho_n^*)^2\) and
\(v_{n,\lambda}/n\asymp(\rho_n^*)^2\).

Write \(x=V(h,h)\). The global identification argument in
Lemma~\ref{lem:global-identification} uses only the marginal
score conditions, so it remains valid here:
\[
\|b_\lambda(h)\|_\lambda^2
\geq c_R\frac{x^2}{x+\lambda R^2}.
\]
Uniformly over the separation set,
\(x/\lambda\to\infty\). Hence
\(\|b_\lambda(h)\|_\lambda^2\geq c_Rx\) for large \(n\), and
\(n\|b_\lambda(h)\|_\lambda^2/v_{n,\lambda}\to\infty\).

For shrinking alternatives, Sobolev interpolation gives
\begin{equation*}
\|h\|_\infty
\leq C_R x^{(2r-1)/(4r)}.
\end{equation*}
Put \(a=\beta(2r-1)/(8r)\), so \(0<a<1\).
By \eqref{eq:dense-trace-perturb},
the covariance centering error is at most
\[
Cx^a\{m_H^{-1}s_\lambda^{-1}+s_\lambda^{-1/2}\}
\leq Cx^a s_\lambda^{-1/2}v_{n,\lambda}.
\]
Write \(x=s_\lambda^{2r}L\).
On the separation set, \(L\to\infty\) uniformly.
Since \(v_{n,\lambda}\asymp ns_\lambda^{2r}\), the ratio
of the preceding bound to \(nx\) is at most
\[
C s_\lambda^{2ra-1/2}L^{a-1}\to0.
\]
Here \(2ra>1/2\) follows from
\(\beta(2r-1)>2\), which holds when \(r\geq2\) and
\(\beta\geq1\).

The perturbation bounds also give
\begin{equation*}
\operatorname{tr}\{(\Omega_{n,\lambda}^{(h)})^2\}
=
O(m_H^{-2}s_\lambda^{-1}+1)
\end{equation*}
for shrinking alternatives.
Together with the shifted fourth moment bound and
\eqref{eq:dense-diagonal-negligible-optimal}, this shows that
the centered quadratic term in
\eqref{eq:alternative-statistic-decomposition} has standard
deviation \(O(v_{n,\lambda})\).
The linear term has variance at most
\(Cn\|b_\lambda(h)\|_\lambda^2\).
Both are negligible relative to
\(n\|b_\lambda(h)\|_\lambda^2\).

If \(x\) is bounded away from zero, Sobolev embedding bounds
\(\|h\|_\infty\) uniformly over \(\mathcal B_R\).
The shifted moment condition then gives a covariance trace
bound \(O(s_\lambda^{-1})=o(n)\), an operator norm bound
\(O(1)\), and a centered quadratic variance bound
\(O(s_\lambda^{-1}+n^{-1}s_\lambda^{-2})=o(n^2)\).
Global identification supplies a mean signal bounded below
by a positive multiple of \(n\).
Consistency follows in this case as well.

These bounds, the alternative decomposition, and
\eqref{eq:null-quantile-chebyshev} prove
\eqref{eq:unified-upper-power}.
Equivalently, the sufficient signal condition is
\begin{equation*}
V(h,h)\gg
s_\lambda^{2r}
+\frac1n\{m_H^{-2}s_\lambda^{-1}+1\}^{1/2}.
\end{equation*}
Balancing the two terms below the transition gives
\(s_\lambda\asymp(nm_H)^{-2/(4r+1)}\).
Above it, \(s_\lambda\asymp n^{-1/(2r)}\) makes both terms
of order \(n^{-1}\). This gives the stated rates.
\end{proof}

The lower bound has two sources. Even if the subject effects
were observed, estimation of a nonconstant signal would retain
the usual regression testing difficulty. Constant alternatives,
in contrast, cannot be distinguished faster than the rate
allowed by the number of independent subjects.

\begin{theorem}\label{thm:unified-minimax-lower}
Suppose the model class contains the Gaussian submodel
\[
U_{ij}=g_0(T_{ij})+h(T_{ij})+Z_i+\varepsilon_{ij},
\]
where \(Z_i\) are independent \(N(0,\tau^2)\) variables for a
fixed \(\tau^2>0\), the errors are independent \(N(0,1)\),
and the design points have density \(\pi\).
All components are mutually independent apart from the common
subject effect \(Z_i\).

If \(\rho_n=o(\rho_n^*)\), no sequence of asymptotic
level-\(\gamma\) tests is uniformly consistent over
\[
\{h\in\mathcal B_R:V(h,h)^{1/2}\geq\rho_n\}.
\]
Together with Theorem~\ref{thm:unified-upper-rate}, this gives
the minimax consistency separation rate
\eqref{eq:unified-target-rate}.
\end{theorem}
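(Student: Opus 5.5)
The lower bound splits into two parts, since \(\rho_n^*\) is the maximum of two terms. It suffices to show separately that no test is uniformly consistent when \(\rho_n=o\{(nm_H)^{-2r/(4r+1)}\}\), and when \(\rho_n=o(n^{-1/2})\). In each case I will exhibit, inside the Gaussian submodel, a prior or a single alternative in \(\{h\in\mathcal B_R:V(h,h)^{1/2}\geq\rho_n\}\) whose data law stays contiguous to, or within bounded total variation of, the null law. The standard argument then applies: for any asymptotic level-\(\gamma\) test, \(\liminf\) of the worst-case type II error is at least \(1-\gamma-\limsup\|P_0-P_\pi\|_{\mathrm{TV}}>0\). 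The norm equivalences from the bounds on \(I\) and \(\pi\) reduce the ellipsoid and separation norm to the usual Sobolev and \(L^2\) ones up to constants, as in Theorem~\ref{thm:minimax-lower}.

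\emph{Root \(n\) barrier.} Take the constant alternative \(h\equiv\rho_n\), or \(h\equiv c\,n^{-1/2}\) scaled as needed. It lies in \(\mathcal B_R\), since \(J(h,h)=0\) and \(V(h,h)\) is small. Because \(Z_i\sim N(0,\tau^2)\) is shared within a subject, the subject vector \(U_i-g_0(\boldsymbol T_i)\) is \(N(h\mathbf 1,\tau^2\mathbf 1\mathbf 1'+I_{m_i})\) given the design. Its Kullback--Leibler divergence from the null is \(\tfrac12 h^2\mathbf 1'(\tau^2\mathbf 1\mathbf 1'+I)^{-1}\mathbf 1=\tfrac12 h^2 m_i/(1+\tau^2m_i)\leq h^2/(2\tau^2)\). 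Summing over independent subjects gives a total divergence of at most \(nh^2/(2\tau^2)\). This tends to zero when \(h=o(n^{-1/2})\), whatever the cluster sizes. By Pinsker's inequality the test cannot separate the two laws, which gives the \(n^{-1/2}\) bound.

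\emph{Nonparametric term.} Here I would make the random effect helpful by conditioning on it, that is, by giving the tester the \(Z_i\). Any test based on the data alone is also a test in the experiment in which \(Z_i\) is observed. That experiment, after subtracting \(g_0+Z_i\), is the regression \(Y_{ij}=h(T_{ij})+\varepsilon_{ij}\) with \(N:=\sum_i m_i\asymp nm_H\) independent observations, using Assumption~\ref{ass:growing-cluster}, or just the definition when sizes are equal. Lower bounds for the larger experiment imply lower bounds for the smaller one. Alternatively, I would argue directly via the data-processing inequality for the \(\chi^2\) divergence, since conditioning on \(Z_i\) can only increase it. The independent-design regression lower bound of \citet{ingstersapatinas2009}, already used in Theorem~\ref{thm:minimax-lower}, then gives the rate \(N^{-2r/(4r+1)}\), up to the harmonic versus arithmetic mean comparability. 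One subtlety is that this bound holds with design density \(\pi\) rather than uniform. I would handle it with the standard Ingster prior, a sum of \(k\asymp N^{2/(4r+1)}\) bumps with Rademacher signs and amplitude \(\asymp N^{-2r/(4r+1)}\), supported on disjoint intervals. For this prior the \(\chi^2\) computation only uses \(\int\phi_\ell^2\pi\asymp\int\phi_\ell^2\) and disjoint supports, so bounded \(\pi\) affects only constants.

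The main obstacle is the second step. The nested-experiment reduction must be stated carefully: the test may randomize, and observing \(Z_i\) enlarges the sigma-field. One must also check that the prior's realizations satisfy \(V(h,h)\geq\rho_n^2\) and \(J(h,h)\leq R^2\) deterministically, not just on average. If the reduction is shaky, I would instead compute the \(\chi^2\) divergence between the Gaussian mixture and the null directly, using the block covariance \(\tau^2\mathbf 1\mathbf 1'+I\). Its inverse \(I-\tfrac{\tau^2}{1+\tau^2m_i}\mathbf 1\mathbf 1'\) is dominated by \(I\), so the cross terms are bounded by the i.i.d.\ regression ones. Combining the two parts with \(\rho_n=o(\rho_n^*)\), which forces \(\rho_n\) below each term, and with Theorem~\ref{thm:unified-upper-rate} identifies \eqref{eq:unified-target-rate} as the minimax rate.
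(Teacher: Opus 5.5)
Your proposal is correct and follows the paper's proof essentially step for step. The $(nm_H)^{-2r/(4r+1)}$ term comes from the more informative experiment in which the $Z_i$ are observed, which reduces to random-design regression with $N\asymp nm_H$ observations and the Ingster--Sapatinas bound, and the $n^{-1/2}$ term comes from constant alternatives via the Sherman--Morrison Kullback--Leibler bound and Pinsker's inequality. The one deviation is that the paper handles the non-uniform density by the change of variables $h=f\{F(t)\}$ with $F(t)=\int_0^t\pi(u)\,du$, whereas you redo the bump-prior $\chi^2$ computation under $\pi$. Both work.

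One small correction: $\rho_n=o(\rho_n^*)$ does not force $\rho_n$ below each of the two terms, only below the larger one. The combination should therefore be done along subsequences, according to which term attains the maximum, and your per-$n$ total variation bounds allow exactly this.
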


\begin{proof}
First consider the more informative experiment in which all
\(Z_i\) are observed. Subtracting them and the known null
function gives
\[
U_{ij}-g_0(T_{ij})-Z_i
=
h(T_{ij})+\varepsilon_{ij}.
\]
There are $N_n=\sum_i m_i\asymp nm_H$ independent random design regression observations. To reduce the design to the uniform case, put $F(t)=\int_0^t\pi(u)\,du$ and write $h(t)=f\{F(t)\}$. The assumed smoothness and positive bounds on $\pi$ make this composition a bounded isomorphism on $W^{r,2}[0,1]$. A sufficiently small fixed-radius periodic Sobolev ball for $f$ therefore maps into $\mathcal B_R$, with equivalent separation norm. The additional observed subject effects are independent of $h$ and provide no information in this regression experiment.
The Sobolev testing lower bound for this Gaussian experiment
has order \(N_n^{-2r/(4r+1)}\); see
\citet{ingstersapatinas2009} and
\citet{ingstersuslina2003}.
Impossibility in the more informative experiment implies
impossibility for the original data, giving the lower bound
\((nm_H)^{-2r/(4r+1)}\).

For the second bound, restrict to constant alternatives
\(h_\theta(t)=\theta\).
The subject covariance is
\(\Sigma_i=\mathrm{Id}_{m_i}
+\tau^2\mathbf1_{m_i}\mathbf1_{m_i}^{\prime}\).
The joint Kullback--Leibler divergence is
\[
\mathrm{KL}(P_\theta,P_0)
=
\frac{\theta^2}{2}\sum_{i=1}^n
\mathbf1_{m_i}^{\prime}\Sigma_i^{-1}\mathbf1_{m_i}.
\]
The Sherman--Morrison formula gives
\(\mathbf1_{m_i}^{\prime}\Sigma_i^{-1}\mathbf1_{m_i}
=m_i/(1+\tau^2m_i)\leq\tau^{-2}\).
Thus
\[
\mathrm{KL}(P_\theta,P_0)
\leq\frac{n\theta^2}{2\tau^2}.
\]
For \(|\theta|=o(n^{-1/2})\), Pinsker's inequality makes the
total variation distance tend to zero.
Any asymptotic level-\(\gamma\) test then has power at most
\(\gamma+o(1)\) along such alternatives.
Since \(V(h_\theta,h_\theta)\asymp\theta^2\), this gives the
\(n^{-1/2}\) lower bound.

Combining the two bounds gives their maximum,
\(\rho_n^*\). The upper result establishes consistency above
diverging multiples of this rate.
\end{proof}

\begin{remark}\label{rem:transition-interpretation}
The boundary \(m_H\asymp n^{1/(4r)}\) is a testing transition.
Below it, measurement variability determines the rate, which
matches regression testing with effective sample size \(nm_H\).
Above it, denser measurements improve information about each
trajectory, but the number of independent trajectories remains
\(n\). The random subject effect in the lower bound produces
the resulting \(n^{-1/2}\) barrier.

Related transitions were studied by \citet{wangetal2018}
for functional testing and by \citet{caihu2025} for simultaneous
mean inference. The boundary here depends on the quadratic
statistic and Sobolev order, so it need not coincide with the
boundaries for other procedures.

\end{remark}

\section{Uniform validity of the subject level multiplier bootstrap}
\label{sec:bootstrap-validity}

When the trajectory covariance dominates, the centered score statistic need not have an asymptotically Gaussian null distribution. The subject level multiplier bootstrap is therefore designed to approximate the full quadratic null law rather than only its first two moments.

We first work with the population Hilbert space scores. In the growing cluster case, we retain all conditions of Proposition~\ref{prop:exact-dense-decomposition}, including positivity and nondegeneracy of the pair score operator. This convention also applies in Section~\ref{sec:finite-basis-theory}. Uniformity of the distributional approximation refers to its argument $x$; uniformity over data generating laws would require additional bounds.

Center the scores as $\widetilde G_{i,\lambda}=G_{i,\lambda}-\overline G_{n,\lambda}$ and define
\[
\widehat\Omega_{n,\lambda}
=\frac1n\sum_{i=1}^n\widetilde G_{i,\lambda}\otimes\widetilde G_{i,\lambda}.
\]
Its trace is $\widehat\mu_{n,\lambda}=n^{-1}\sum_i\|\widetilde G_{i,\lambda}\|_\lambda^2$.
For independent $N(0,1)$ multipliers $\xi_i$, independent of the data, set
\[
\mathbb Z_{n,\lambda}^*=\frac1{\sqrt n}\sum_{i=1}^n\xi_i\widetilde G_{i,\lambda}.
\]
The bootstrap statistic is $S_{n,\lambda}^*=\|\mathbb Z_{n,\lambda}^*\|_\lambda^2$. Conditional on the data, the score sum is Gaussian with covariance $\widehat\Omega_{n,\lambda}$. Hence
\[
\begin{aligned}
E^*(S_{n,\lambda}^*)&=\widehat\mu_{n,\lambda},\\
\operatorname{Var}^*(S_{n,\lambda}^*)&=2\operatorname{tr}(\widehat\Omega_{n,\lambda}^2).
\end{aligned}
\]
Here $P^*$ and $E^*$ condition on the observed data. Write $a_{n,\lambda}^2=2\operatorname{tr}(\Omega_{n,\lambda}^2)$ for the population quadratic fluctuation scale.
The bootstrap approximation will be established under
\begin{equation}
\chi_{n,\lambda}
=
\frac{n^{-1}s_\lambda^{-2}}
{m_H^{-2}s_\lambda^{-1}+1}
\to0.
\label{eq:bootstrap-rate-condition}
\end{equation}
The numerator is the order of the subject diagonal fourth moment contribution, whereas the denominator is the order of the leading quadratic variance from Theorem~\ref{thm:dense-variance-order}. Thus \eqref{eq:bootstrap-rate-condition} requires the former to be negligible relative to the latter.

When the cluster sizes remain bounded, $m_H\asymp1$, so
\eqref{eq:bootstrap-rate-condition} reduces to $ns_\lambda\to\infty$.
For growing trajectories, the same ratio appears in
\eqref{eq:dense-diagonal-negligible-optimal}. Hence the bootstrap rate condition is the natural extension of the variance negligibility condition across the sparse to dense regimes.

\begin{lemma}
\label{lem:bootstrap-covariance-consistency}
Suppose Assumption~\ref{ass:loss-primitive} holds and either
Assumption~\ref{ass:sparse-cluster} or Assumption~\ref{ass:growing-cluster}
is satisfied. If $s_\lambda\to0$ and \eqref{eq:bootstrap-rate-condition} holds, then under
$H_0$,
\begin{equation}
\frac{|\widehat\mu_{n,\lambda}-\mu_{n,\lambda}|}
{a_{n,\lambda}}
\xrightarrow{p}0,
\label{eq:bootstrap-mean-consistency}
\end{equation}
\begin{equation}
\frac{\|\widehat\Omega_{n,\lambda}-\Omega_{n,\lambda}\|_{\mathrm{HS}}}
{a_{n,\lambda}}
\xrightarrow{p}0,
\label{eq:bootstrap-HS-consistency}
\end{equation}
and
\begin{equation}
\frac{\operatorname{tr}(\widehat\Omega_{n,\lambda}^2)}
{\operatorname{tr}(\Omega_{n,\lambda}^2)}
\xrightarrow{p}1.
\label{eq:bootstrap-trace2-consistency}
\end{equation}
\end{lemma}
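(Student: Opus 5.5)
The plan is to decompose the centered empirical covariance as
\[
\widehat\Omega_{n,\lambda}-\Omega_{n,\lambda}
=\Bigl(\frac1n\sum_{i=1}^n\{G_{i,\lambda}\otimes G_{i,\lambda}-\Omega_{i,\lambda}\}\Bigr)
-\overline G_{n,\lambda}\otimes\overline G_{n,\lambda},
\]
which is exact because the subject scores are centered and $n^{-1}\sum_i\widetilde G_{i,\lambda}\otimes\widetilde G_{i,\lambda}=n^{-1}\sum_iG_{i,\lambda}\otimes G_{i,\lambda}-\overline G_{n,\lambda}\otimes\overline G_{n,\lambda}$. Each term is then controlled in Hilbert--Schmidt norm and in trace, relative to $a_{n,\lambda}^2\asymp m_H^{-2}s_\lambda^{-1}+1$. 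This order holds by Proposition~\ref{prop:covariance-comparison} under bounded cluster sizes and by Theorem~\ref{thm:dense-variance-order} in the growing case.

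For the first term, independence across subjects gives the second-moment bound
\[
E\Bigl\|\frac1n\sum_i\{G_{i,\lambda}\otimes G_{i,\lambda}-\Omega_{i,\lambda}\}\Bigr\|_{\mathrm{HS}}^2
\le\frac1{n^2}\sum_iE\|G_{i,\lambda}\|_\lambda^4
\le Cn^{-1}s_\lambda^{-2}.
\]
Here $\|G\otimes G\|_{\mathrm{HS}}=\|G\|_\lambda^2$, and the fourth moment is handled by Lemma~\ref{lem:G-fourth}. In the growing case the same proof applies: the kernel diagonal bound and Jensen's inequality give $\|G_{i,\lambda}\|_\lambda\le Cs_\lambda^{-1/2}m_i^{-1}\sum_j|\psi(e_{ij})|$, and Assumption~\ref{ass:growing-cluster} supplies the uniform shifted moments. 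Dividing by $a_{n,\lambda}^2$ shows that the squared relative error is $O(\chi_{n,\lambda})\to0$, so Markov's inequality gives this term's contribution to \eqref{eq:bootstrap-HS-consistency}.

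The rank-one term satisfies $\|\overline G_{n,\lambda}\otimes\overline G_{n,\lambda}\|_{\mathrm{HS}}=\|\overline G_{n,\lambda}\|_\lambda^2$. Its expectation is $\mu_{n,\lambda}/n\asymp n^{-1}(m_H^{-1}s_\lambda^{-1}+s_\lambda^{-1/2})$ in the dense case, using the trace of \eqref{eq:exact-dense-omega} and $\operatorname{tr}(B_\lambda)\le\operatorname{tr}(\mathcal C_\psi^{1/2}\ldots)$. To avoid needing trace-class control of $\mathcal C_\psi$, I would instead bound $\operatorname{tr}(B_\lambda)$ crudely by $Cs_\lambda^{-1}$ via $B_\lambda\preceq CA_\lambda$, which follows from Cauchy--Schwarz on the pair moments. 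This gives $\mu_{n,\lambda}\le Cs_\lambda^{-1}$ in all regimes. The ratio $\mu_{n,\lambda}/(na_{n,\lambda})$ is then at most $Cn^{-1}s_\lambda^{-1}/a_{n,\lambda}\le C\chi_{n,\lambda}^{1/2}\to0$, since $n^{-1}s_\lambda^{-1}\le(n^{-1}s_\lambda^{-2})^{1/2}$ once $ns_\lambda\ge1$.

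For the mean, \eqref{eq:bootstrap-mean-consistency} is the trace of the same decomposition. The first part is $D_{n,\lambda}$-like, namely $n^{-1}\sum_i(\|G_{i,\lambda}\|_\lambda^2-E\|G_{i,\lambda}\|_\lambda^2)$, and it has variance at most $Cn^{-1}s_\lambda^{-2}$, exactly as in Lemma~\ref{lem:variance-equivalence}. The second part is $\|\overline G_{n,\lambda}\|_\lambda^2$, which is already bounded above. Finally, \eqref{eq:bootstrap-trace2-consistency} follows from the inequality $|\operatorname{tr}(A^2)-\operatorname{tr}(B^2)|\le(\|A\|_{\mathrm{HS}}+\|B\|_{\mathrm{HS}})\|A-B\|_{\mathrm{HS}}$ used in Lemma~\ref{lem:dense-covariance-perturbation}. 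With $B=\Omega_{n,\lambda}$ we have $\|B\|_{\mathrm{HS}}\asymp a_{n,\lambda}$, and \eqref{eq:bootstrap-HS-consistency} then gives a relative error of $o_p(1)$.

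The main obstacle is the bookkeeping that verifies $\mu_{n,\lambda}/n=o(a_{n,\lambda})$ uniformly across the sparse, transition, and dense regimes, especially the crude bound on $\operatorname{tr}(B_\lambda)$. The rest is routine second-moment control driven by Lemma~\ref{lem:G-fourth} and the condition $\chi_{n,\lambda}\to0$.
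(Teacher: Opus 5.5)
Your proposal is correct and follows the paper's proof essentially step for step: you write $\widehat\Omega_{n,\lambda}$ as the uncentered empirical covariance minus $\overline G_{n,\lambda}\otimes\overline G_{n,\lambda}$, bound the first part in Hilbert--Schmidt norm by $n^{-1}s_\lambda^{-2}$ via fourth moments, bound the trace through its variance, and conclude with $|\operatorname{tr}(A^2)-\operatorname{tr}(B^2)|\le(\|A\|_{\mathrm{HS}}+\|B\|_{\mathrm{HS}})\|A-B\|_{\mathrm{HS}}$. The only deviation is your crude bound $\mu_{n,\lambda}\le Cs_\lambda^{-1}$, which yields $\mu_{n,\lambda}/(na_{n,\lambda})\le C\chi_{n,\lambda}^{1/2}$; it also follows directly from $E\|G_{i,\lambda}\|_\lambda^2\le Cs_\lambda^{-1}$, and it suffices in place of the paper's sharper $\operatorname{tr}(B_\lambda)=O(s_\lambda^{-1/2})$ combined with $ns_\lambda\to\infty$ (the two-sided order you first wrote for $\mu_{n,\lambda}/n$ is not needed, and only its upper bound holds in general).
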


\begin{proof}
The argument used in Lemma~\ref{lem:G-fourth} relies on Jensen's
inequality for the within subject average and does not require bounded
cluster size. Hence, uniformly in $i$,
\begin{equation}
E\|G_{i,\lambda}\|_\lambda^4
\leq Cs_\lambda^{-2}.
\label{eq:g-fourth-bootstrap}
\end{equation}

Consider first the uncentered empirical covariance
\[
\widetilde\Omega_{n,\lambda}^{\,0}
=
\frac1n\sum_{i=1}^n
G_{i,\lambda}\otimes G_{i,\lambda}.
\]
Since $\|u\otimes u\|_{\mathrm{HS}}=\|u\|_\lambda^2$ and the subjects
are independent,
\[
E\big\|
\widetilde\Omega_{n,\lambda}^{\,0}
-\Omega_{n,\lambda}
\big\|_{\mathrm{HS}}^2
\leq
\frac1{n^2}\sum_{i=1}^n
E\|G_{i,\lambda}\|_\lambda^4
\leq Cn^{-1}s_\lambda^{-2}.
\]
In both the sparse and growing cluster regimes,
\[
a_{n,\lambda}^2
\asymp
m_H^{-2}s_\lambda^{-1}+1.
\]
Therefore \eqref{eq:bootstrap-rate-condition} implies
\begin{equation}
\big\|
\widetilde\Omega_{n,\lambda}^{\,0}
-\Omega_{n,\lambda}
\big\|_{\mathrm{HS}}
=o_p(a_{n,\lambda}).
\label{eq:uncentered-covariance-HS}
\end{equation}

Empirical centering gives
\[
\widehat\Omega_{n,\lambda}
=
\widetilde\Omega_{n,\lambda}^{\,0}
-\overline G_{n,\lambda}\otimes\overline G_{n,\lambda}.
\]
Under $H_0$,
\[
E\|\overline G_{n,\lambda}\|_\lambda^2
=
\frac{\mu_{n,\lambda}}{n}.
\]
In the sparse regime,
$\mu_{n,\lambda}=O(s_\lambda^{-1})$. In the growing cluster regime,
Proposition~\ref{prop:exact-dense-decomposition} gives
\[
\mu_{n,\lambda}
=
m_H^{-1}\operatorname{tr}(D_\lambda)
+(1-m_H^{-1})\operatorname{tr}(B_\lambda).
\]
Under the unitary representation of $B_\lambda$,
\[
\operatorname{tr}(B_\lambda)
=
\operatorname{tr}(\mathcal C_\psi R_\lambda^2),
\]
and the Hilbert--Schmidt Cauchy--Schwarz inequality yields
\[
|\operatorname{tr}(B_\lambda)|
\leq
\|\mathcal C_\psi\|_{\mathrm{HS}}
\|R_\lambda^2\|_{\mathrm{HS}}
=O(s_\lambda^{-1/2}).
\]
Thus, in both regimes,
\[
\mu_{n,\lambda}
\leq
C\{m_H^{-1}s_\lambda^{-1}+s_\lambda^{-1/2}\}.
\]
Since
\[
\frac{m_H^{-1}s_\lambda^{-1}+s_\lambda^{-1/2}}
{\{m_H^{-2}s_\lambda^{-1}+1\}^{1/2}}
\leq Cs_\lambda^{-1/2},
\]
we obtain
\[
\frac{\mu_{n,\lambda}/n}{a_{n,\lambda}}
\leq
\frac{C}{ns_\lambda^{1/2}}
\to0.
\]
Here the final convergence follows from
\eqref{eq:bootstrap-rate-condition}, which in particular implies
$ns_\lambda\to\infty$. Consequently,
\[
\|\overline G_{n,\lambda}\|_\lambda^2
=o_p(a_{n,\lambda}),
\]
and \eqref{eq:bootstrap-HS-consistency} follows from
\eqref{eq:uncentered-covariance-HS}.

For the trace,
\[
\widehat\mu_{n,\lambda}
=
\frac1n\sum_{i=1}^n
\|G_{i,\lambda}\|_\lambda^2
-\|\overline G_{n,\lambda}\|_\lambda^2.
\]
By \eqref{eq:g-fourth-bootstrap},
\[
\operatorname{Var}\left(
\frac1n\sum_{i=1}^n
\|G_{i,\lambda}\|_\lambda^2
\right)
\leq Cn^{-1}s_\lambda^{-2}.
\]
The square root of the right hand side divided by $a_{n,\lambda}$ is
$O(\chi_{n,\lambda}^{1/2})=o(1)$. Together with the preceding bound for
$\|\overline G_{n,\lambda}\|_\lambda^2$, this proves
\eqref{eq:bootstrap-mean-consistency}.

Finally,
\[
\left|
\|\widehat\Omega_{n,\lambda}\|_{\mathrm{HS}}^2
-\|\Omega_{n,\lambda}\|_{\mathrm{HS}}^2
\right|
\leq
\big(
\|\widehat\Omega_{n,\lambda}\|_{\mathrm{HS}}
+\|\Omega_{n,\lambda}\|_{\mathrm{HS}}
\big)
\|\widehat\Omega_{n,\lambda}-\Omega_{n,\lambda}\|_{\mathrm{HS}}.
\]
By \eqref{eq:bootstrap-HS-consistency},
$\|\widehat\Omega_{n,\lambda}\|_{\mathrm{HS}}=O_p(a_{n,\lambda})$.
Since
\[
\|\Omega_{n,\lambda}\|_{\mathrm{HS}}^2
=
\operatorname{tr}(\Omega_{n,\lambda}^2)
=
a_{n,\lambda}^2/2,
\]
division by $\operatorname{tr}(\Omega_{n,\lambda}^2)$ proves
\eqref{eq:bootstrap-trace2-consistency}.
\end{proof}

The next comparison treats the entire quadratic statistic at once. It uses independent Gaussian replacements with the same subject covariances, so it does not require a separate joint limit for leading eigendirections and a spectral tail.

\begin{lemma}\label{lem:quadratic-gaussian-replacement}
For each $n$, let $X_{n1},\ldots,X_{nn}$ be independent centered random elements of a separable real Hilbert space, with finite fourth moments. Write $V_{ni}=E(X_{ni}\otimes X_{ni})$ and $\Lambda_n=\sum_iV_{ni}$. Suppose $\|\Lambda_n\|_{\mathrm{HS}}=1$, and put $\kappa_n=\sum_iE\|X_{ni}\|^4$. Let $Y_{ni}$ be independent centered Gaussian elements with covariance $V_{ni}$. Define
\[
Q_n^X=2^{-1/2}\left\{\left\|\sum_iX_{ni}\right\|^2-\operatorname{tr}(\Lambda_n)\right\},
\]
and define $Q_n^Y$ by replacing $X_{ni}$ with $Y_{ni}$. For every bounded three times continuously differentiable $f$ with bounded derivatives,
\begin{equation}
|Ef(Q_n^X)-Ef(Q_n^Y)|
\leq C\|f'\|_\infty\kappa_n^{1/2}
+C\|f'''\|_\infty\{\kappa_n(1+\kappa_n)\}^{1/2}.
\label{eq:quadratic-replacement-bound}
\end{equation}
The constant is independent of $n$ and the Hilbert space.
\end{lemma}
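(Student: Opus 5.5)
The plan is to separate the within-subject diagonal terms, which produce the \(\|f'\|_\infty\) term, and then apply a Lindeberg replacement to the remaining degenerate quadratic form, which produces the \(\|f'''\|_\infty\) term.

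\textbf{Step 1: removing the diagonal terms.} Expanding the square gives
\[
Q_n^X=2^{-1/2}D_n^X+U_n^X,\qquad
D_n^X=\sum_i\{\|X_{ni}\|^2-\operatorname{tr}(V_{ni})\},\qquad
U_n^X=2^{1/2}\sum_{i<j}\langle X_{ni},X_{nj}\rangle ,
\]
and the same decomposition holds for \(Y\). Independence gives \(E(D_n^X)^2\leq\kappa_n\), so the mean value theorem yields
\[
|Ef(Q_n^X)-Ef(U_n^X)|\leq\|f'\|_\infty\kappa_n^{1/2}.
\]
For the Gaussian elements, \(E\|Y_{ni}\|^4\leq3\{\operatorname{tr}(V_{ni})\}^2\leq3E\|X_{ni}\|^4\). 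Hence the Gaussian analogue of \(\kappa_n\) is at most \(3\kappa_n\), and the same bound holds for \(Q_n^Y\). It then remains to compare \(Ef(U_n^X)\) with \(Ef(U_n^Y)\).

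\textbf{Step 2: Lindeberg swap.} For \(k=1,\ldots,n\), form the hybrid sums \(W_k=\sum_{i<k}Y_{ni}+\sum_{i>k}X_{ni}\). Let \(U^{(k)}\) be the off-diagonal form built from these remaining elements. Replacing \(X_{nk}\) by \(Y_{nk}\) changes the form only through the increment \(\Delta_k(Z)=2^{1/2}\langle W_k,Z\rangle\) with \(Z\in\{X_{nk},Y_{nk}\}\). Here \(Z\) is independent of \((W_k,U^{(k)})\).

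Expand \(f(U^{(k)}+\Delta_k)\) to third order around \(U^{(k)}\) and condition on \((W_k,U^{(k)})\):
\begin{itemize}
\item The first-order terms vanish for both choices of \(Z\), since \(Z\) is centred.
\item The second-order terms coincide, since \(E[\langle W_k,Z\rangle^2\mid W_k]=\langle V_{nk}W_k,W_k\rangle\) depends only on the covariance of \(Z\).
\end{itemize}
Telescoping over \(k\) therefore leaves
\[
|Ef(U_n^X)-Ef(U_n^Y)|\leq\frac{\|f'''\|_\infty}{6}\sum_k\{E|\Delta_k(X_{nk})|^3+E|\Delta_k(Y_{nk})|^3\}.
\]

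\textbf{Step 3: the cubic moment sum, the main step.} I expect this to be the main obstacle. The difficulty is that \(\operatorname{tr}(\Lambda_n)\) may diverge, so crude bounds using \(E\|W_k\|^4\) fail. Only \(\|\Lambda_n\|_{\mathrm{HS}}=1\), and hence \(\|\Lambda_n\|_{\mathrm{op}}\leq1\), may be used.

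By Hölder and then Cauchy--Schwarz over \(k\),
\[
\sum_kE|\Delta_k|^3\leq\Big(\sum_kE\Delta_k^2\Big)^{1/2}\Big(\sum_kE\Delta_k^4\Big)^{1/2}.
\]
For the second moments, \(E\Delta_k^2=2\operatorname{tr}(V_{nk}\Lambda_{-k})\leq2\operatorname{tr}(V_{nk}\Lambda_n)\) by positivity. Summing over \(k\) gives at most \(2\|\Lambda_n\|_{\mathrm{HS}}^2=2\).

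For the fourth moments, condition on \(Z\). Then \(\langle W_k,Z\rangle\) is a sum of independent centred scalars. Rosenthal's inequality gives a fourth moment of at most
\[
C\Big\{\langle\Lambda_{-k}Z,Z\rangle^2+\sum_{i\neq k}E\bigl(\langle X_{ni}\text{ or }Y_{ni},Z\rangle^4\mid Z\bigr)\Big\}
\leq C\{1+3\kappa_n\}\|Z\|^4,
\]
using \(\|\Lambda_{-k}\|_{\mathrm{op}}\leq1\), Cauchy--Schwarz, and the Gaussian fourth moment comparison from Step 1. Taking expectations and summing over \(k\) gives \(\sum_kE\Delta_k^4\leq C(1+\kappa_n)\kappa_n\).

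Combining the two sums yields the cubic term \(C\|f'''\|_\infty\{\kappa_n(1+\kappa_n)\}^{1/2}\). Together with Step 1 this proves \eqref{eq:quadratic-replacement-bound}. All constants come from Rosenthal's inequality and Gaussian moment identities, so they are free of \(n\) and of the Hilbert space.
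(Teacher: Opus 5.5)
Your proposal is correct and follows essentially the same route as the paper's proof. You split off the diagonal term and bound it by $\kappa_n^{1/2}$, using $E\|Y_{ni}\|^4\leq3E\|X_{ni}\|^4$, and handle the degenerate part by a Lindeberg replacement whose cubic remainders are controlled via Cauchy--Schwarz through $\sum_kE\Delta_k^2\leq2\operatorname{tr}(\Lambda_n^2)=2$ and $\sum_kE\Delta_k^4\leq C\kappa_n(1+\kappa_n)$. The only cosmetic difference is that you invoke Rosenthal's inequality where the paper uses the explicit scalar fourth-moment expansion with constant $3$.
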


\begin{proof}
Suppress $n$ and set $a_i=E\|X_i\|^4$. Gaussian fourth moments and Jensen's inequality give
\[
E\|Y_i\|^4=(\operatorname{tr}V_i)^2+2\operatorname{tr}(V_i^2)
\leq3(\operatorname{tr}V_i)^2\leq3a_i.
\]
Write $Q^X=U^X+D^X$, where $U^X=\sqrt2\sum_{i<j}\langle X_i,X_j\rangle$ and $D^X=2^{-1/2}\sum_i(\|X_i\|^2-E\|X_i\|^2)$. Independence gives $E|D^X|\leq(\kappa/2)^{1/2}$. The corresponding Gaussian bound is $E|D^Y|\leq(3\kappa/2)^{1/2}$. These terms contribute the first term in \eqref{eq:quadratic-replacement-bound}.

Replace the arguments of $U^X$ successively. At step $i$, let $Z_j$ denote the other arguments, each an independent $X_j$ or $Y_j$, and put $W_i=\sum_{j\ne i}Z_j$. The two quadratic forms being compared are $U_{-i}+\Delta_i^X$ and $U_{-i}+\Delta_i^Y$, where $U_{-i}$ depends only on the other arguments and
\[
\Delta_i^X=\sqrt2\langle X_i,W_i\rangle.
\]
Define $\Delta_i^Y$ in the same way. Conditional on the other arguments, these increments have the same mean zero and the same second moment. Taylor expansion through order two therefore leaves an error at most
\[
\frac{\|f'''\|_\infty}{6}\{E|\Delta_i^X|^3+E|\Delta_i^Y|^3\}.
\]
It remains to sum these third moments. Every hybrid sum has covariance $\Lambda-V_i\preceq\Lambda$, so
\[
E(\Delta_i^X)^2=E(\Delta_i^Y)^2
=2\operatorname{tr}\{V_i(\Lambda-V_i)\}.
\]
Summing gives a bound of $2\operatorname{tr}(\Lambda^2)=2$ for either family of increments.

For any fixed vector $x$, the scalar fourth-moment expansion for independent centered summands gives
\[
E\langle x,W_i\rangle^4
\leq3\langle\Lambda x,x\rangle^2
+\sum_{j\ne i}E\langle x,Z_j\rangle^4.
\]
Since $\|\Lambda\|_{\mathrm{op}}\leq1$ and $E\|Z_j\|^4\leq3a_j$, the right side is bounded by $3(1+\kappa)\|x\|^4$. Independence of $W_i$ and the replaced argument now implies
\[
\sum_i\{E(\Delta_i^X)^4+E(\Delta_i^Y)^4\}
\leq C\kappa(1+\kappa).
\]
Apply Cauchy--Schwarz first to each third moment and then to the sum over $i$. Together with the second-moment bound, this gives
\[
\sum_i\{E|\Delta_i^X|^3+E|\Delta_i^Y|^3\}
\leq C\{\kappa(1+\kappa)\}^{1/2}.
\]
Telescoping the replacements and adding the diagonal bounds proves the result.
\end{proof}

\begin{lemma}\label{lem:spectral-subsequence}
Let $\Lambda_n$ be positive trace class operators with
$0<\|\Lambda_n\|_{\mathrm{HS}}<\infty$, and let
$\ell_{n1}\geq\ell_{n2}\geq\cdots\geq0$ denote their eigenvalues. Define
\[
b_{nk}
=\frac{\ell_{nk}}{\|\Lambda_n\|_{\mathrm{HS}}}.
\]
Every subsequence has a further subsequence along which
$b_{nk}\to b_k$ for every fixed $k$, where
$\sum_{k\geq1}b_k^2\leq1$. If $Z_n\sim N(0,\Lambda_n)$, then along this
further subsequence,
\[
\frac{\|Z_n\|^2-\operatorname{tr}(\Lambda_n)}
{\{2\operatorname{tr}(\Lambda_n^2)\}^{1/2}}
\xrightarrow{d}\mathcal L_{\boldsymbol b},
\]
where
\begin{equation}
\mathcal L_{\boldsymbol b}
=
2^{-1/2}\sum_{k\geq1}b_k(Z_k^2-1)
+
\left(1-\sum_{k\geq1}b_k^2\right)^{1/2}Z_0,
\label{eq:mixed-chaos-limit}
\end{equation}
and $Z_0,Z_1,\ldots$ are independent $N(0,1)$ variables. The random
variable $\mathcal L_{\boldsymbol b}$ has a continuous distribution and
variance one.
\end{lemma}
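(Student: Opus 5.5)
The plan is to reduce everything to the spectral representation of a Gaussian quadratic form and then argue through characteristic functions. Start with the subsequence extraction. Since $\sum_k b_{nk}^2=\operatorname{tr}(\Lambda_n^2)/\|\Lambda_n\|_{\mathrm{HS}}^2=1$, each $b_{nk}\in[0,1]$. A diagonal argument over $k$ gives, from any subsequence, a further subsequence with $b_{nk}\to b_k$ for every $k$. Fatou's lemma (or truncation at finite $K$) then yields $\sum_k b_k^2\le1$, and the $b_k$ inherit the nonincreasing order.

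Next, write $Z_n=\sum_k \ell_{nk}^{1/2}\zeta_k e_{nk}$ with $\zeta_k$ i.i.d.\ $N(0,1)$ (Karhunen--Lo\`eve). Then the standardized statistic equals $T_n=2^{-1/2}\sum_k b_{nk}(\zeta_k^2-1)$, a series that converges in $L^2$. The most direct route is through the explicit characteristic function,
\[
\log E e^{iuT_n}=\sum_k\Bigl\{-\tfrac12\log(1-i\sqrt2\,u b_{nk})-\tfrac{i u b_{nk}}{\sqrt2}\Bigr\}.
\]
Fix $K$ and split this sum into $k\le K$ and $k>K$. The finitely many leading terms converge to their limits with $b_k$ in place of $b_{nk}$. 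For the tail, the expansion $-\tfrac12\log(1-z)-z/2=z^2/4+O(|z|^3)$ applies once $|z|$ is small, and $b_{nk}\le b_{nK}$ for $k>K$. The tail therefore equals $-\tfrac{u^2}{2}\sum_{k>K}b_{nk}^2+O\bigl(|u|^3 b_{nK}\sum_{k>K}b_{nk}^2\bigr)$.

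The main obstacle is the double limit in this tail. I would use the identity $\sum_{k>K}b_{nk}^2=1-\sum_{k\le K}b_{nk}^2\to1-\sum_{k\le K}b_k^2$. Letting $n\to\infty$ and then $K\to\infty$, this tends to $1-\sum_k b_k^2$. The cubic error is controlled by $\limsup_n b_{nK}\le b_K\to0$; here $b_{nK}\le K^{-1/2}$ uniformly, since the $b_{nk}$ are ordered with squares summing to one, so the expansion is valid for large $K$ uniformly in $n$. This produces the mass that escapes to infinitely many small eigenvalues, which becomes the $(1-\sum b_k^2)^{1/2}Z_0$ component. Putting the pieces together, the limiting characteristic function is $\prod_k(1-i\sqrt2ub_k)^{-1/2}e^{-iub_k/\sqrt2}\cdot e^{-u^2(1-\sum b_k^2)/2}$, which is exactly the characteristic function of $\mathcal L_{\boldsymbol b}$. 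Lévy's continuity theorem then gives convergence in distribution.

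Finally, the variance of $\mathcal L_{\boldsymbol b}$ is $\tfrac12\sum_k 2b_k^2+1-\sum_k b_k^2=1$. For continuity of the law there are two cases. If $\sum b_k^2<1$, the independent $Z_0$ component has a density, so $\mathcal L_{\boldsymbol b}$ has a density. Otherwise some $b_1>0$, and $2^{-1/2}b_1(Z_1^2-1)$ is absolutely continuous and independent of the remaining terms. In either case the convolution of an absolutely continuous law with any independent law is continuous.
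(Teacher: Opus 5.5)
Your proof is correct, but it reaches the limit by a different route than the paper.

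The paper's route:
\begin{itemize}
\item It chooses a slowly growing cutoff $K_n\to\infty$ so that $b_{n,K_n+1}\to0$, $\sum_{k\le K_n}b_{nk}^2\to\sum_k b_k^2$ and $\sum_{k\le K_n}(b_{nk}-b_k)^2\to0$.
\item It shows that the head converges in $L^2$ under a coupling with one Gaussian sequence.
\item It applies the Lindeberg CLT to the tail $R_n$ when the limiting tail variance is positive, and uses $R_n\to0$ in $L^2$ when it is zero.
\item It then uses independence of head and tail for each $n$ to factor the joint characteristic function and obtain independent limits.
\end{itemize}

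Your route keeps $K$ fixed throughout. You use the closed-form $\chi^2_1$ characteristic function and expand the tail log-characteristic function to second order. The cubic error is controlled by $b_{nK}\le K^{-1/2}$ uniformly in $n$. You then take the double limit, $n\to\infty$ first and $K\to\infty$ second, and finish with L\'evy's continuity theorem.

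Your version is shorter and more self-contained. It needs no diagonal construction of $K_n$ satisfying several conditions at once. It needs no case split on whether the tail variance vanishes. It needs no separate argument for asymptotic independence, because the limit appears directly as the product of the characteristic functions of the independent components of $\mathcal L_{\boldsymbol b}$.

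The paper's version uses only generic tools, namely Lindeberg and $L^2$ approximation, and does not rely on the Gaussian closed form. It would therefore carry over to independent, identically distributed non-Gaussian coordinates. It also exhibits the leading weighted chi-square part as an $L^2$ limit under a coupling, rather than only identifying its law.

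Your variance computation and your continuity argument agree with the paper's: when $\sum_k b_k^2=1$ you use $b_1>0$ and an absolutely continuous independent summand, and otherwise the independent Gaussian component.
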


\begin{proof}
Since
\(\sum_{k\geq1}b_{nk}^2=1\)
for every $n$, each $b_{nk}$ belongs to $[0,1]$. A diagonal
subsequence argument therefore gives a further subsequence on which
$b_{nk}\to b_k$ for every fixed $k$. Fatou's lemma yields
\(\sum_{k\geq1}b_k^2\leq1.\)

Using an eigenbasis of $\Lambda_n$, we may write
\[
\frac{\|Z_n\|^2-\operatorname{tr}(\Lambda_n)}
{\{2\operatorname{tr}(\Lambda_n^2)\}^{1/2}}
=
\frac1{\sqrt2}\sum_{k\geq1}b_{nk}(Z_k^2-1).
\]
For fixed $K$, the first $K$ summands converge in distribution to
\[
\frac1{\sqrt2}\sum_{k=1}^K b_k(Z_k^2-1).
\]

It remains to identify the contribution of the spectral tail. Since
$\sum_k b_k^2<\infty$, we have $b_k\to0$. Using the coordinatewise
convergence and the monotonicity
$b_{n1}\geq b_{n2}\geq\cdots$, choose integers $K_n\to\infty$ sufficiently
slowly that
\[
\max_{k>K_n}b_{nk}=b_{n,K_n+1}\to0
\]
and
\[
\sum_{k=1}^{K_n}b_{nk}^2
\to
\sum_{k\geq1}b_k^2.
\]
Consider
\[
R_n
=
\frac1{\sqrt2}\sum_{k>K_n}b_{nk}(Z_k^2-1).
\]
Its variance satisfies
\[
\operatorname{Var}(R_n)
=
\sum_{k>K_n}b_{nk}^2
\to
1-\sum_{k\geq1}b_k^2.
\]
If the limiting tail variance is positive, then, since $E(Z_k^2-1)^4<\infty$ and
$\max_{k>K_n}b_{nk}\to0$, the Lindeberg condition holds for the
triangular array defining $R_n$. Hence
\[
R_n
\xrightarrow{d}
\left(1-\sum_{k\geq1}b_k^2\right)^{1/2}Z_0.
\]
The cutoffs may also be chosen so slowly that $\sum_{k\leq K_n}(b_{nk}-b_k)^2\to0$. Coupling the leading terms with one Gaussian sequence then gives their $L^2$ convergence to $2^{-1/2}\sum_{k\geq1}b_k(Z_k^2-1)$. For each $n$, this leading sum and $R_n$ are independent, so their joint characteristic function factors. Their limits are therefore independent, proving \eqref{eq:mixed-chaos-limit}. If the limiting tail variance is zero, $R_n\to0$ in $L^2$ and no central limit theorem is needed.

Finally,
\[
\operatorname{Var}\left(
2^{-1/2}\sum_{k\geq1}b_k(Z_k^2-1)
\right)
=
\sum_{k\geq1}b_k^2,
\]
so independence gives $\operatorname{Var}(\mathcal L_{\boldsymbol b})=1$.
If $\sum_kb_k^2<1$, the limit contains a nondegenerate Gaussian
component and therefore has a continuous distribution. If
$\sum_kb_k^2=1$, at least one $b_k$ is positive, and the corresponding
weighted centered $\chi_1^2$ variable has a continuous distribution;
adding an independent random variable preserves continuity. This
completes the proof.
\end{proof}

The representation \eqref{eq:mixed-chaos-limit} gives the probabilistic form of the phase transition described in Remark~\ref{rem:dense-null-law}. If every normalized eigenvalue vanishes, then $\sum_{k\geq1}b_k^2=0$, the quadratic-chaos component disappears, and the limit is standard normal. If a nonvanishing compact spectral component remains, the corresponding weighted centered chi square terms persist. Along transition sequences, both the Gaussian and quadratic-chaos components may contribute to the limit.

We now compare the score statistic with its multiplier analogue. Related results for degenerate $U$ and $V$ statistics are given by \citet{arconesgine1992} and \citet{leuchtneumann2009,leuchtneumann2013}. Here the kernel changes with $\lambda$, so the argument must also control the subject diagonal term and the changing covariance spectrum.

\begin{theorem}\label{thm:uniform-bootstrap-validity}
Suppose Assumption~\ref{ass:loss-primitive} holds and either
Assumption~\ref{ass:sparse-cluster} or Assumption~\ref{ass:growing-cluster}
is satisfied. If $s_\lambda\to0$ and
\eqref{eq:bootstrap-rate-condition} holds, then under $H_0$,
\begin{equation}
\sup_{x\in\mathbb R}
\left|
P_0\left\{
\frac{S_{n,\lambda}-\mu_{n,\lambda}}
{a_{n,\lambda}}
\leq x
\right\}
-
P^*\left\{
\frac{S_{n,\lambda}^*-\widehat\mu_{n,\lambda}}
{\{2\operatorname{tr}(\widehat\Omega_{n,\lambda}^2)\}^{1/2}}
\leq x
\right\}
\right|
\xrightarrow{p}0.
\label{eq:uniform-bootstrap-kolmogorov}
\end{equation}
The conclusion holds throughout the sparse, transition, and dense regimes.
\end{theorem}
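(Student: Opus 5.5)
We argue by subsequences. It suffices to show that every subsequence has a further subsequence along which both the standardized score statistic and its bootstrap analogue converge to the same continuous limit $\mathcal L_{\boldsymbol b}$. For the bootstrap analogue the convergence is in conditional distribution, in probability. Pólya's theorem then upgrades pointwise convergence of distribution functions to the uniform statement \eqref{eq:uniform-bootstrap-kolmogorov}, because Lemma~\ref{lem:spectral-subsequence} guarantees that $\mathcal L_{\boldsymbol b}$ has a continuous distribution. Throughout we normalize the Hilbert space elements by $X_{ni}=G_{i,\lambda}/(\sqrt n\,\|\Omega_{n,\lambda}\|_{\mathrm{HS}})$. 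This gives $\Lambda_n=\Omega_{n,\lambda}/\|\Omega_{n,\lambda}\|_{\mathrm{HS}}$, so $\|\Lambda_n\|_{\mathrm{HS}}=1$ and $a_{n,\lambda}=\sqrt2\,\|\Omega_{n,\lambda}\|_{\mathrm{HS}}$.

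\emph{Step 1 (Gaussian replacement for the data statistic).} With this normalization, $(S_{n,\lambda}-\mu_{n,\lambda})/a_{n,\lambda}=Q_n^X$. By \eqref{eq:g-fourth-bootstrap},
\[
\kappa_n=\sum_i E\|X_{ni}\|^4\leq Cn^{-1}s_\lambda^{-2}/\operatorname{tr}(\Omega_{n,\lambda}^2).
\]
Theorem~\ref{thm:dense-variance-order} in the growing case, and Proposition~\ref{prop:covariance-comparison} in the sparse case, give $\operatorname{tr}(\Omega_{n,\lambda}^2)\asymp m_H^{-2}s_\lambda^{-1}+1$. Hence $\kappa_n\leq C\chi_{n,\lambda}\to0$. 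Lemma~\ref{lem:quadratic-gaussian-replacement} now shows that $Ef(Q_n^X)-Ef(Q_n^Y)\to0$ for every smooth bounded $f$ with bounded derivatives. Since $\sum_iY_{ni}\sim N(0,\Lambda_n)$, Lemma~\ref{lem:spectral-subsequence} gives $Q_n^Y\to\mathcal L_{\boldsymbol b}$ along a further subsequence, where $\boldsymbol b$ collects the limits of the normalized eigenvalues of $\Omega_{n,\lambda}$. Because smooth test functions determine weak convergence, $Q_n^X\to\mathcal L_{\boldsymbol b}$ in distribution as well.

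\emph{Step 2 (bootstrap limit).} Conditionally on the data, $\mathbb Z^*_{n,\lambda}\sim N(0,\widehat\Omega_{n,\lambda})$ exactly, so no replacement step is needed. The bootstrap standardized statistic is therefore $2^{-1/2}\sum_k\hat b_{nk}(Z_k^2-1)$, where $\hat b_{nk}=\hat\ell_{nk}/\|\widehat\Omega_{n,\lambda}\|_{\mathrm{HS}}$ and $\hat\ell_{nk}$ are the eigenvalues of $\widehat\Omega_{n,\lambda}$. Lemma~\ref{lem:bootstrap-covariance-consistency} gives
\[
\|\widehat\Omega_{n,\lambda}/\|\widehat\Omega_{n,\lambda}\|_{\mathrm{HS}}-\Omega_{n,\lambda}/\|\Omega_{n,\lambda}\|_{\mathrm{HS}}\|_{\mathrm{HS}}\xrightarrow{p}0,
\]
using \eqref{eq:bootstrap-HS-consistency} together with \eqref{eq:bootstrap-trace2-consistency}. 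The Hoffman--Wielandt (Lidskii) inequality for self adjoint Hilbert--Schmidt operators gives $\sum_k(\hat b_{nk}-b_{nk})^2\to0$ in probability. Passing to a further subsequence, this holds almost surely.

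\emph{Step 3 (main obstacle).} We must show that the law of $2^{-1/2}\sum_k c_k(Z_k^2-1)$ depends continuously on the weight sequence in $\ell^2$. Coupling both sums with the same Gaussian sequence, the $L^2$ distance between them is exactly $\{\sum_k(\hat b_{nk}-b_{nk})^2\}^{1/2}$. Consequently the bootstrap statistic and the Gaussian-surrogate statistic of Step 1 are close in conditional distribution, almost surely along the subsequence, and so the bootstrap statistic also converges conditionally to $\mathcal L_{\boldsymbol b}$. Note that the random bootstrap centering $\widehat\mu_{n,\lambda}=\operatorname{tr}(\widehat\Omega_{n,\lambda})$ matches the conditional mean exactly, so no centering error enters here. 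The delicate point is that the limit can have a mixed chaos and Gaussian form, with no common limit across the full sequence. This is why everything is done along subsequences, with Pólya applied within each subsequence. Finally, the "in probability" conclusion follows because every subsequence of the random Kolmogorov distance has a further subsequence converging to zero almost surely.
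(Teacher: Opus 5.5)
Your proof is correct and follows the paper's argument: Gaussian replacement via Lemma~\ref{lem:quadratic-gaussian-replacement} with $\kappa_n\leq C\chi_{n,\lambda}$, subsequential spectral limits from Lemma~\ref{lem:spectral-subsequence}, Hoffman--Wielandt plus a common-Gaussian coupling for the bootstrap, and P\'olya's theorem. Your explicit almost-sure sub-subsequence step makes the in-probability conclusion a little more transparent than in the paper.

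One slip: the normalization should be $X_{ni}=G_{i,\lambda}/\{n\|\Omega_{n,\lambda}\|_{\mathrm{HS}}\}^{1/2}$, as in the paper. Your written $G_{i,\lambda}/(\sqrt n\,\|\Omega_{n,\lambda}\|_{\mathrm{HS}})$ would give $\Lambda_n=\Omega_{n,\lambda}/\|\Omega_{n,\lambda}\|_{\mathrm{HS}}^2$, although your formulas for $\Lambda_n$, $\kappa_n$ and $Q_n^X$ already use the correct normalization.
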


\begin{proof}
We first compare the original quadratic statistic with a Gaussian quadratic
form having the same covariance operator.

Let $Z_{n,\lambda}^{G}$ be a centered Gaussian element in
$\mathcal H_\lambda$ with covariance $\Omega_{n,\lambda}$, and define
\[
Q_{n,\lambda}^{G}
=
\frac{\|Z_{n,\lambda}^{G}\|_\lambda^2-\mu_{n,\lambda}}
{a_{n,\lambda}}.
\]
We show that
\begin{equation}
\sup_x
\left|
P_0\left\{
\frac{S_{n,\lambda}-\mu_{n,\lambda}}
{a_{n,\lambda}}
\leq x
\right\}
-
P(Q_{n,\lambda}^{G}\leq x)
\right|
\to0.
\label{eq:gaussianization-quadratic}
\end{equation}

Set $c_{n,\lambda}=\|\Omega_{n,\lambda}\|_{\mathrm{HS}}$ and apply Lemma~\ref{lem:quadratic-gaussian-replacement} to
\[
X_{ni}=\frac{G_{i,\lambda}}{\sqrt{n c_{n,\lambda}}}.
\]
Their aggregate covariance is $\Lambda_n=\Omega_{n,\lambda}/c_{n,\lambda}$, whose Hilbert--Schmidt norm is one. The fourth-moment bound yields
\[
\kappa_n=\frac{\sum_iE\|G_{i,\lambda}\|_\lambda^4}{n^2c_{n,\lambda}^2}
\leq C\chi_{n,\lambda}\to0.
\]
The quadratic forms $Q_n^X$ and $Q_n^Y$ in that lemma are exactly $(S_{n,\lambda}-\mu_{n,\lambda})/a_{n,\lambda}$ and $Q_{n,\lambda}^G$, respectively. Their expectations against every bounded smooth test function therefore differ by $o(1)$.

To pass to distribution functions, take any subsequence. Lemma~\ref{lem:spectral-subsequence} gives a further subsequence along which $Q_{n,\lambda}^G$ converges to a continuous law $\mathcal L_{\boldsymbol b}$. Smooth approximations to indicators, together with \eqref{eq:quadratic-replacement-bound}, give the same weak limit for the original statistic. Convergence of distribution functions to a continuous limit is uniform on $\mathbb R$. The subsequence principle therefore proves \eqref{eq:gaussianization-quadratic}. This argument covers the Gaussian, compact quadratic, and mixed limits without a separate asymptotic independence assertion.

We next compare the Gaussian quadratic form with the conditional bootstrap law. Write $\widehat\ell_{nk}$ for the ordered eigenvalues of $\widehat\Omega_{n,\lambda}$ and set $\widehat b_{nk}=\widehat\ell_{nk}/\|\widehat\Omega_{n,\lambda}\|_{\mathrm{HS}}$. Lemma~\ref{lem:bootstrap-covariance-consistency} and the reverse triangle inequality give
\[
\frac{\|\widehat\Omega_{n,\lambda}-\Omega_{n,\lambda}\|_{\mathrm{HS}}}{\|\Omega_{n,\lambda}\|_{\mathrm{HS}}}\to_p0,
\qquad
\frac{\|\widehat\Omega_{n,\lambda}\|_{\mathrm{HS}}}{\|\Omega_{n,\lambda}\|_{\mathrm{HS}}}\to_p1.
\]
Apply the Hoffman--Wielandt inequality to the normalized positive operators; see \citet{hoffmanwielandt1953,cochranhinds1972}. It yields
\[
\begin{aligned}
\left\{\sum_{k\geq1}(\widehat b_{nk}-b_{nk})^2\right\}^{1/2}
&\leq\left\|\frac{\widehat\Omega_{n,\lambda}}{\|\widehat\Omega_{n,\lambda}\|_{\mathrm{HS}}}
-\frac{\Omega_{n,\lambda}}{\|\Omega_{n,\lambda}\|_{\mathrm{HS}}}\right\|_{\mathrm{HS}}\\
&\leq\frac{\|\widehat\Omega_{n,\lambda}-\Omega_{n,\lambda}\|_{\mathrm{HS}}}{\|\widehat\Omega_{n,\lambda}\|_{\mathrm{HS}}}
+\left|\frac{\|\Omega_{n,\lambda}\|_{\mathrm{HS}}}{\|\widehat\Omega_{n,\lambda}\|_{\mathrm{HS}}}-1\right|\to_p0.
\end{aligned}
\]
Conditionally on the data, the standardized bootstrap statistic has the
same distribution as
\[
\widehat Q_{n,\lambda}^{G}
=
2^{-1/2}
\sum_{k\geq1}
\widehat b_{nk}(Z_k^2-1),
\]
where $Z_1,Z_2,\ldots$ are independent standard normal variables.
Similarly,
\[
Q_{n,\lambda}^{G}
\stackrel{d}{=}
2^{-1/2}
\sum_{k\geq1}
b_{nk}(Z_k^2-1).
\]
Coupling the two quadratic forms with the same Gaussian sequence gives
\[
E_Z
\left|
\widehat Q_{n,\lambda}^{G}
-
Q_{n,\lambda}^{G}
\right|^2
=
\sum_{k\geq1}
(\widehat b_{nk}-b_{nk})^2
\xrightarrow{p}0.
\]
Thus their distributions become asymptotically identical along every
subsequence on which the normalized spectrum converges. Since every
such subsequential limit is one of the continuous laws
$\mathcal L_{\boldsymbol b}$ from
Lemma~\ref{lem:spectral-subsequence}, the corresponding conditional and
unconditional distribution functions converge uniformly.

Combining this result with
\eqref{eq:gaussianization-quadratic} proves
\eqref{eq:uniform-bootstrap-kolmogorov}.
\end{proof}

\begin{remark}
\label{rem:bootstrap-three-regimes}
The spectral description makes the role of the bootstrap transparent. If
$m_H^2s_\lambda\to0$, the normalized covariance spectrum becomes diffuse
and $b_k=0$ for every fixed $k$, so \eqref{eq:mixed-chaos-limit} reduces
to $N(0,1)$. If $m_H^2s_\lambda\to\infty$, the compact trajectory
spectrum may persist and the limit can contain the weighted centered
chi square component generated by $\mathcal C_\psi$. At
$m_H^2s_\lambda\asymp1$, part of the spectral mass may remain compact
while the remainder escapes to high frequencies, producing the mixed
limit in \eqref{eq:mixed-chaos-limit}. The multiplier procedure estimates
the covariance spectrum rather than selecting one of these limiting laws
in advance.
\end{remark}

\begin{corollary}
\label{cor:bootstrap-optimal-scale}
Let $r\geq2$ and choose $s_\lambda\asymp s_n^*$ according to
\eqref{eq:unified-optimal-s}. Under the assumptions of
Theorem~\ref{thm:uniform-bootstrap-validity},
\eqref{eq:bootstrap-rate-condition} holds throughout the sparse,
transition, and dense regimes. Let $c_{n,\gamma}^*$ denote the
conditional $(1-\gamma)$ quantile of $S_{n,\lambda}^*$. Then
\begin{equation}
P_0\{S_{n,\lambda}>c_{n,\gamma}^*\}\to\gamma.
\label{eq:bootstrap-asymptotic-size}
\end{equation}
This result concerns null calibration. Power with an estimated critical value requires bounds under alternatives; Section~\ref{subsec:feasible-power} supplies a separate argument for the feasible spline procedure under its additional assumptions.
\end{corollary}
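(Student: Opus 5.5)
The proof has two parts. First we verify the rate condition \eqref{eq:bootstrap-rate-condition} at the bandwidth \eqref{eq:unified-optimal-s}. Then we convert the Kolmogorov-distance approximation of Theorem~\ref{thm:uniform-bootstrap-validity} into a statement about quantiles.

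\emph{Rate condition.} The computation mirrors \eqref{eq:dense-diagonal-negligible-optimal}. Below the transition, $m_H\lesssim n^{1/(4r)}$ and $s_\lambda\asymp(nm_H)^{-2/(4r+1)}$, so the denominator is dominated by $m_H^{-2}s_\lambda^{-1}$. The ratio is then $\chi_{n,\lambda}\asymp m_H^2/(ns_\lambda)$. This quantity increases in $m_H$, and at $m_H\asymp n^{1/(4r)}$ it is of order $n^{1/r-1}\to0$ for $r\geq2$. In the sparse case $m_H\asymp1$ it reduces to $(ns_\lambda)^{-1}\asymp n^{-(4r-1)/(4r+1)}\to0$. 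Above the transition, $s_\lambda\asymp n^{-1/(2r)}$ and the denominator is at least $1$, so $\chi_{n,\lambda}\leq n^{-1}s_\lambda^{-2}\asymp n^{1/r-1}\to0$. In every regime $s_\lambda\to0$, so Theorem~\ref{thm:uniform-bootstrap-validity} applies.

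\emph{Unnormalized statistics.} The statistic $S_{n,\lambda}$ and the bootstrap quantile $c^*_{n,\gamma}$ are not standardized. We therefore write
\[
P_0\{S_{n,\lambda}>c^*_{n,\gamma}\}=P_0\{T_n>\widehat x_n\},
\qquad
T_n=\frac{S_{n,\lambda}-\mu_{n,\lambda}}{a_{n,\lambda}},
\qquad
\widehat x_n=\frac{c^*_{n,\gamma}-\mu_{n,\lambda}}{a_{n,\lambda}}.
\]
Let $\widehat a_{n,\lambda}=\{2\operatorname{tr}(\widehat\Omega_{n,\lambda}^2)\}^{1/2}$. The bootstrap standardized quantile is $\widehat q_n=(c^*_{n,\gamma}-\widehat\mu_{n,\lambda})/\widehat a_{n,\lambda}$. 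Lemma~\ref{lem:bootstrap-covariance-consistency} gives $\widehat a_{n,\lambda}/a_{n,\lambda}\to_p1$ and $(\widehat\mu_{n,\lambda}-\mu_{n,\lambda})/a_{n,\lambda}\to_p0$. Hence $\widehat x_n=\widehat q_n(1+o_p(1))+o_p(1)$.

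\emph{Subsequence argument.} Pass to an arbitrary subsequence, and then, by Lemma~\ref{lem:spectral-subsequence}, to a further subsequence along which the normalized spectrum converges. Along it, both $T_n$ and the standardized bootstrap statistic converge to the same continuous law $\mathcal L_{\boldsymbol b}$: the former in distribution, the latter conditionally in probability, by \eqref{eq:gaussianization-quadratic} and the proof of Theorem~\ref{thm:uniform-bootstrap-validity}. We need the $(1-\gamma)$ quantile $q_{\boldsymbol b}$ of $\mathcal L_{\boldsymbol b}$ to be a point where the distribution function is strictly increasing. This holds because $\mathcal L_{\boldsymbol b}$ is a nondegenerate Gaussian or weighted-$\chi^2$ mixture whose support is an interval containing its interior quantiles. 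Uniform convergence of the bootstrap distribution function to a continuous, strictly increasing limit then gives $\widehat q_n\to_p q_{\boldsymbol b}$, and hence $\widehat x_n\to_p q_{\boldsymbol b}$. Finally, $T_n-\widehat x_n\to_d\mathcal L_{\boldsymbol b}-q_{\boldsymbol b}$ by Slutsky's theorem, and continuity of the limit at $q_{\boldsymbol b}$ gives
\[
P_0\{T_n>\widehat x_n\}\to1-F_{\boldsymbol b}(q_{\boldsymbol b})=\gamma.
\]
Since every subsequence has a further subsequence along which the limit is $\gamma$, the full sequence converges to $\gamma$.

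\emph{Main obstacle.} The delicate step is the strict monotonicity at $q_{\boldsymbol b}$, which is needed for consistency of the quantile. If only finitely many $b_k$ are positive and $\sum_kb_k^2=1$, the support of $\mathcal L_{\boldsymbol b}$ is a half-line $[-2^{-1/2}\sum_kb_k,\infty)$. The density is positive on its interior, so every quantile with $\gamma\in(0,1)$ lies in the interior. In every other case the support is all of $\mathbb R$. Checking this case split completes the argument.
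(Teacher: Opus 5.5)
Your proposal is correct and follows the paper's proof. You verify \eqref{eq:bootstrap-rate-condition} in the same two regimes, with the bound $n^{1/r-1}$ at the transition, and then combine Lemma~\ref{lem:bootstrap-covariance-consistency}, the Kolmogorov approximation \eqref{eq:uniform-bootstrap-kolmogorov}, and continuity of the subsequential limits $\mathcal L_{\boldsymbol b}$, spelling out the quantile step in more detail than the paper does. The strict monotonicity you single out as the main obstacle is not needed, since uniform convergence to a continuous $F_{\boldsymbol b}$ already gives $F_{\boldsymbol b}(\widehat q_n)\to_p 1-\gamma$, which suffices; also, your case split is slightly off, because with infinitely many positive $b_k$, $\sum_k b_k<\infty$ and $\sum_k b_k^2=1$ (the typical dense limit for a trace class $\mathcal C_\psi$) the support is again the half-line $[-2^{-1/2}\sum_k b_k,\infty)$, not $\mathbb R$, though the distribution function is still strictly increasing on its interior, so your conclusion survives.
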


\begin{proof}
Consider first the measurement dominated regime
$m_H\ll n^{1/(4r)}$. Then
\[
s_\lambda\asymp(nm_H)^{-2/(4r+1)}
\]
and $m_H^2s_\lambda\to0$. Hence
\[
\chi_{n,\lambda}
\leq
C\frac{m_H^2}{ns_\lambda}.
\]
The right hand side increases with $m_H$, and at the transition scale
$m_H\asymp n^{1/(4r)}$ it has order
$n^{1/r-1}$, which tends to zero for $r\geq2$.

In the trajectory dominated regime
$m_H\gg n^{1/(4r)}$,
\[
s_\lambda\asymp n^{-1/(2r)}.
\]
Since the denominator in \eqref{eq:bootstrap-rate-condition} is bounded
away from zero,
\[
\chi_{n,\lambda}
\leq
Cn^{-1}s_\lambda^{-2}
\asymp
n^{-1+1/r}
\to0.
\]
At the transition
$m_H\asymp n^{1/(4r)}$, the two smoothing rates coincide and the same
bound $n^{-1+1/r}\to0$ applies. Thus
Theorem~\ref{thm:uniform-bootstrap-validity} is valid in all three
regimes.

Let
\[
\widehat a_{n,\lambda}^2
=
2\operatorname{tr}(\widehat\Omega_{n,\lambda}^2).
\]
Lemma~\ref{lem:bootstrap-covariance-consistency} gives
\[
\frac{\widehat\mu_{n,\lambda}-\mu_{n,\lambda}}
{a_{n,\lambda}}
\to_p0
\]
and
\[
\frac{\widehat a_{n,\lambda}}
{a_{n,\lambda}}
\to_p1.
\]
Together with \eqref{eq:uniform-bootstrap-kolmogorov} and continuity of
every possible subsequential limit in
Lemma~\ref{lem:spectral-subsequence}, these relations imply
\eqref{eq:bootstrap-asymptotic-size}.

\end{proof}

These size statements use the exact conditional bootstrap quantile. Monte Carlo implementation has the same limit when the number of independent multiplier replicates tends to infinity; a fixed number retains Monte Carlo discretization error.

\section{Finite basis approximation and feasible B-spline implementation}
\label{sec:finite-basis-theory}

The finite spline statistic involves errors from truncating the function space and estimating the curvature matrix. We first bound these errors under the null, then study power using the curvature metric induced by shifted residuals.

\subsection{Galerkin representation and spline projection error}

Let $\mathcal S_q$ be a B-spline space of fixed polynomial degree $p\geq2r-1$ defined on a quasi-uniform partition of $[0,1]$ with mesh width $h_q\asymp q^{-1}$, and suppose that $\mathcal P_{r-1}\subset\mathcal S_q$. Let $\Pi_{q,\lambda}$ denote the orthogonal projection from $\mathcal H_\lambda$ onto $\mathcal S_q$ with respect to $\langle\cdot,\cdot\rangle_\lambda$. The Ritz--Galerkin approximation bounds used below follow from standard spline approximation results; see \citet{schumaker2007} and \citet{sandeetal2020,sandeetal2022}.

Define
\begin{equation*}
\vartheta_{q,\lambda}=q^{1-2r}s_\lambda^{-2r}.
\end{equation*}

\begin{lemma}\label{lem:spline-ritz-tail}
Suppose the eigensystem conditions of Proposition~\ref{prop:eigensystem} hold and $qs_\lambda\geq1$. Then
\begin{equation}
\operatorname{tr}\{(I-\Pi_{q,\lambda})A_\lambda\}\leq C\vartheta_{q,\lambda}.
\label{eq:A-projection-trace}
\end{equation}
Moreover, for every $f\in\mathcal H_\lambda$,
\begin{equation}
\|(I-\Pi_{q,\lambda})A_\lambda f\|_\lambda\leq C(qs_\lambda)^{-r}\|f\|_\lambda.
\label{eq:A-projection-operator}
\end{equation}
\end{lemma}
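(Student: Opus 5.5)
The plan is to work in the eigenbasis of Proposition~\ref{prop:eigensystem} and reduce both bounds to a single spline approximation estimate in the \(\lambda\)-norm. Since \(\Pi_{q,\lambda}\) is the \(\langle\cdot,\cdot\rangle_\lambda\)-orthogonal projection, for every \(u\in\mathcal H_\lambda\) and every quasi-interpolant \(\mathcal Q_qu\in\mathcal S_q\) we have
\[
\|(I-\Pi_{q,\lambda})u\|_\lambda^2\leq\|u-\mathcal Q_qu\|_{V}^2+\lambda\|(u-\mathcal Q_qu)^{(r)}\|_{L^2}^2 .
\]
The standard B-spline estimates of \citet{schumaker2007} then apply. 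Because the degree satisfies \(p\geq2r-1\), the order \(k=2r\) is admissible, and
\[
\|u-\mathcal Q_qu\|_{L^2}\leq Cq^{-2r}\|u^{(2r)}\|_{L^2},
\qquad
\|(u-\mathcal Q_qu)^{(r)}\|_{L^2}\leq Cq^{-r}\|u^{(2r)}\|_{L^2}.
\]
Hence \(\|(I-\Pi_{q,\lambda})u\|_\lambda^2\leq C(q^{-4r}+\lambda q^{-2r})\|u^{(2r)}\|_{L^2}^2\). Using \(qs_\lambda\geq1\), the second term dominates, so the bound is \(C\lambda q^{-2r}\|u^{(2r)}\|_{L^2}^2\). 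The trivial bound \(\|(I-\Pi_{q,\lambda})u\|_\lambda\leq\|u\|_\lambda\) is also available.

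For \eqref{eq:A-projection-operator}, set \(u=A_\lambda f\). The defining identity \(V(u,g)+\lambda J(u,g)=V(f,g)\) for all \(g\) is the weak form of \(\lambda(-1)^ru^{(2r)}=w(f-u)\). Since \(\|u\|_V\leq\|f\|_\lambda\), this gives \(\|u^{(2r)}\|_{L^2}\leq C\lambda^{-1}\|f\|_\lambda\). Substituting into the approximation bound yields \(\|(I-\Pi_{q,\lambda})A_\lambda f\|_\lambda^2\leq C\lambda^{-1}q^{-2r}\|f\|_\lambda^2=C(qs_\lambda)^{-2r}\|f\|_\lambda^2\). Only Sobolev approximation in the norm is used, so the natural boundary conditions play no role.

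For \eqref{eq:A-projection-trace}, write
\[
\operatorname{tr}\{(I-\Pi_{q,\lambda})A_\lambda\}=\sum_\nu a_{\nu,\lambda}\|(I-\Pi_{q,\lambda})e_{\nu,\lambda}\|_\lambda^2 .
\]
Positivity of the summands follows from idempotence and self-adjointness of \(I-\Pi_{q,\lambda}\). The eigen-ODE from the proof of Proposition~\ref{prop:eigensystem} gives \(\|\phi_\nu^{(2r)}\|_{L^2}\leq C\gamma_\nu\), so
\[
\|(I-\Pi_{q,\lambda})e_{\nu,\lambda}\|_\lambda^2\leq\min\Bigl\{1,\;C\frac{\lambda q^{-2r}\gamma_\nu^2}{1+\lambda\gamma_\nu}\Bigr\}.
\]
I will split the sum over \(\nu\) into three ranges, using \(\gamma_\nu\asymp\nu^{2r}\).
\begin{enumerate}
\item For \(\nu\leq s_\lambda^{-1}\), the bound gives \(\lambda q^{-2r}\nu^{4r}\). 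Summing yields \(q^{-2r}s_\lambda^{-2r-1}\), which is at most \(q^{1-2r}s_\lambda^{-2r}\) because \((qs_\lambda)^{-1}\leq1\).
\item For \(s_\lambda^{-1}<\nu\leq q\), we have \(a_{\nu,\lambda}\asymp(\nu s_\lambda)^{-2r}\) and the per-term bound is \(\asymp(\nu/q)^{2r}\). The product is \(s_\lambda^{-2r}q^{-2r}\), and summing over at most \(q\) indices gives \(q^{1-2r}s_\lambda^{-2r}\).
\item For \(\nu>q\), use the trivial bound 1. Then \(\sum_{\nu>q}(\nu s_\lambda)^{-2r}\asymp s_\lambda^{-2r}q^{1-2r}\).
\end{enumerate}
The three ranges together give \(C\vartheta_{q,\lambda}\).

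The main obstacle is justifying that the order-\(2r\) spline estimate holds uniformly. In particular, the constants must not depend on \(\nu\) or \(\lambda\), and the bound \(\|\phi_\nu^{(2r)}\|\lesssim\gamma_\nu\) requires \(\phi_\nu\in W^{2r,2}\). This follows from the ODE with \(w\in C^{2r-1}\) together with the \(V\)-normalization and the bound on \(w\). I would first verify this with a quasi-interpolant that reproduces \(\mathcal P_{p}\) on a quasi-uniform mesh. The intermediate derivative bounds follow from local polynomial reproduction and Bramble--Hilbert on each cell.
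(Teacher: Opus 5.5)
Your proposal is correct and follows essentially the same route as the paper: you write the trace as $\sum_\nu a_{\nu,\lambda}\|(I-\Pi_{q,\lambda})e_{\nu,\lambda}\|_\lambda^2$, control each mode by best approximation with an order-$2r$ simultaneous spline approximant using $\|\phi_\nu^{(2r)}\|_{L^2}\lesssim\gamma_\nu$ from the eigen-ODE, split the sum at $\nu\asymp s_\lambda^{-1}$ and $\nu\asymp q$, and obtain the operator bound from $\|(A_\lambda f)^{(2r)}\|_{L^2}\lesssim\lambda^{-1}\|f\|_\lambda$. The only differences are cosmetic: you use a Schumaker quasi-interpolant with seminorm bounds where the paper uses Corollary~2 of \citet{sandeetal2020} with the full $H^{2r}$ norm, and you keep the per-mode bound $\lambda q^{-2r}\gamma_\nu^2/(1+\lambda\gamma_\nu)$ where the paper simplifies it to $C\min\{1,(\nu/q)^{2r}\}$.
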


\begin{proof}
Let $e_\nu=(1+\lambda\gamma_\nu)^{-1/2}\phi_\nu$. Then $\{e_\nu\}$ is orthonormal in $\mathcal H_\lambda$ and $A_\lambda e_\nu=a_{\nu,\lambda}e_\nu$, where $a_{\nu,\lambda}=(1+\lambda\gamma_\nu)^{-1}$. The null space eigenfunctions belong to $\mathcal P_{r-1}\subset\mathcal S_q$ and are therefore reproduced exactly.

For $\gamma_\nu>0$, the differential equation in Proposition~\ref{prop:eigensystem}, together with the regularity and boundedness of $I\pi$, gives $\|\phi_\nu\|_{H^{2r}}\leq C(1+\gamma_\nu)$. Corollary~2 of \citet{sandeetal2020}, with derivative order $r$ and smoothness $2r$, yields $v_{\nu,q}\in\mathcal S_q$ satisfying
\[
\|\phi_\nu-v_{\nu,q}\|_{L^2}\leq Cq^{-2r}(1+\gamma_\nu),\qquad \|\phi_\nu^{(r)}-v_{\nu,q}^{(r)}\|_{L^2}\leq Cq^{-r}(1+\gamma_\nu).
\]
Since $\Pi_{q,\lambda}$ is the best approximation in the $\lambda$ norm, $\lambda=s_\lambda^{2r}$, $\gamma_\nu\asymp\nu^{2r}$, and $qs_\lambda\geq1$,
\begin{equation}
\|(I-\Pi_{q,\lambda})e_\nu\|_\lambda^2\leq C\min\{1,(\nu/q)^{2r}\}.
\label{eq:eigen-spline-projection}
\end{equation}
For $\nu\leq q$, this follows by bounding the $L^2$ and penalty contributions separately; for $\nu>q$, it follows from the contraction property of the orthogonal projection.

Using orthogonality of $\Pi_{q,\lambda}$,
\[
\operatorname{tr}\{(I-\Pi_{q,\lambda})A_\lambda\}=\sum_{\nu\geq1}a_{\nu,\lambda}\|(I-\Pi_{q,\lambda})e_\nu\|_\lambda^2.
\]
Hence, by \eqref{eq:eigen-spline-projection},
\[
\operatorname{tr}\{(I-\Pi_{q,\lambda})A_\lambda\}\leq C\sum_{\nu\geq1}\frac{\min\{1,(\nu/q)^{2r}\}}{1+(s_\lambda\nu)^{2r}}.
\]
Since $s_\lambda^{-1}\leq q$, splitting the sum over $\nu\leq s_\lambda^{-1}$, $s_\lambda^{-1}<\nu\leq q$, and $\nu>q$ gives
\[
\sum_{\nu\geq1}\frac{\min\{1,(\nu/q)^{2r}\}}{1+(s_\lambda\nu)^{2r}}\leq Cq^{1-2r}s_\lambda^{-2r}=C\vartheta_{q,\lambda},
\]
which proves \eqref{eq:A-projection-trace}.

For \eqref{eq:A-projection-operator}, put $g=A_\lambda f$. Its spectral representation gives $\|g\|_{H^{2r}}\leq C\lambda^{-1}\|f\|_\lambda$. The spline approximation bound and the best approximation property yield
\[
\|(I-\Pi_{q,\lambda})g\|_\lambda^2\leq C\{q^{-4r}+\lambda q^{-2r}\}\|g\|_{H^{2r}}^2.
\]
Therefore
\[
\|(I-\Pi_{q,\lambda})g\|_\lambda^2\leq C\{(qs_\lambda)^{-4r}+(qs_\lambda)^{-2r}\}\|f\|_\lambda^2\leq C(qs_\lambda)^{-2r}\|f\|_\lambda^2,
\]
and taking square roots proves \eqref{eq:A-projection-operator}.
\end{proof}

\begin{proposition}\label{prop:finite-basis-projection}
Under $H_0$ and the conditions of Lemma~\ref{lem:spline-ritz-tail}, let $H_{q,\lambda}$ and $s_i$ be defined by \eqref{eq:population-basis-hessian} and \eqref{eq:subject-score-vector}. The restricted Riesz representative of $G_{i,\lambda}$ on $\mathcal S_q$ is $\Pi_{q,\lambda}G_{i,\lambda}$, and
\begin{equation}
S_{n,\lambda}^{(q)}=\left\|\Pi_{q,\lambda}n^{-1/2}\sum_{i=1}^nG_{i,\lambda}\right\|_\lambda^2=n\bar s^{\prime}H_{q,\lambda}^{-1}\bar s.
\label{eq:finite-projection-identity}
\end{equation}
Furthermore,
\begin{equation}
0\leq S_{n,\lambda}-S_{n,\lambda}^{(q)}=\left\|(I-\Pi_{q,\lambda})n^{-1/2}\sum_{i=1}^nG_{i,\lambda}\right\|_\lambda^2.
\label{eq:finite-statistic-exact-error}
\end{equation}
Under Assumption~\ref{ass:sparse-cluster},
\begin{equation}
E\{S_{n,\lambda}-S_{n,\lambda}^{(q)}\}\leq Cm_H^{-1}\vartheta_{q,\lambda},
\label{eq:sparse-finite-statistic-error}
\end{equation}
whereas under Assumption~\ref{ass:growing-cluster},
\begin{equation}
E\{S_{n,\lambda}-S_{n,\lambda}^{(q)}\}\leq C\vartheta_{q,\lambda}.
\label{eq:dense-finite-statistic-error}
\end{equation}
\end{proposition}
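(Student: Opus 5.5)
The proof has three parts: an algebraic identification of the finite statistic, a Pythagoras step, and a trace bound.

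\textbf{Identification.} For $v=B'c\in\mathcal S_q$, the reproducing property gives
\[
\langle G_{i,\lambda},v\rangle_\lambda=m_i^{-1}\sum_j\psi(e_{ij})v(T_{ij})=s_i'c .
\]
Under $H_0$ we have $e_{ij}^0=e_{ij}$. The Gram matrix of the basis $\{B_k\}$ in $\langle\cdot,\cdot\rangle_\lambda$ is $H_{q,\lambda}=A_q+\lambda P_q$, by the definitions of $V$ and $J$. Hence the Riesz representative of the restricted functional $v\mapsto\langle G_{i,\lambda},v\rangle_\lambda$ on $\mathcal S_q$ is $B'H_{q,\lambda}^{-1}s_i$. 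This representative is characterized by agreement of inner products with every element of $\mathcal S_q$, so it coincides with $\Pi_{q,\lambda}G_{i,\lambda}$. By linearity,
\[
\Pi_{q,\lambda}n^{-1/2}\sum_iG_{i,\lambda}=\sqrt n\,B'H_{q,\lambda}^{-1}\bar s .
\]
Its squared $\lambda$ norm is $n\bar s'H_{q,\lambda}^{-1}H_{q,\lambda}H_{q,\lambda}^{-1}\bar s=n\bar s'H_{q,\lambda}^{-1}\bar s$, which proves \eqref{eq:finite-projection-identity}. Invertibility of $H_{q,\lambda}$ holds because the basis is linearly independent and $\|\cdot\|_\lambda$ is a norm.

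\textbf{Pythagoras.} Since $\Pi_{q,\lambda}$ is an orthogonal projection in $\mathcal H_\lambda$, we have $\|x\|_\lambda^2=\|\Pi_{q,\lambda}x\|_\lambda^2+\|(I-\Pi_{q,\lambda})x\|_\lambda^2$. Taking $x=n^{-1/2}\sum_iG_{i,\lambda}$ gives \eqref{eq:finite-statistic-exact-error} together with its nonnegativity.

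\textbf{Trace bound.} Write $P^\perp=I-\Pi_{q,\lambda}$. The subject scores are independent and centered, so
\[
E\|P^\perp x\|_\lambda^2=\operatorname{tr}(P^\perp\Omega_{n,\lambda}P^\perp)=\operatorname{tr}(P^\perp\Omega_{n,\lambda}).
\]
\emph{Sparse case.} Proposition~\ref{prop:covariance-comparison} gives $\Omega_{n,\lambda}\preceq Cm_H^{-1}A_\lambda$. Conjugating by $P^\perp$ preserves this order, so
\[
\operatorname{tr}(P^\perp\Omega_{n,\lambda}P^\perp)\le Cm_H^{-1}\operatorname{tr}(P^\perp A_\lambda P^\perp)=Cm_H^{-1}\operatorname{tr}(P^\perp A_\lambda).
\]
Lemma~\ref{lem:spline-ritz-tail} then yields \eqref{eq:sparse-finite-statistic-error}.

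\emph{Growing cluster case.} I use \eqref{eq:exact-dense-omega}: $\Omega_{n,\lambda}=m_H^{-1}D_\lambda+(1-m_H^{-1})B_\lambda$.
\begin{itemize}[label={}]
\item The diagonal part satisfies $D_\lambda\preceq CA_\lambda$, from the bounds on $q_\psi/I$ in the proof of Proposition~\ref{prop:exact-dense-decomposition}. It contributes at most $Cm_H^{-1}\vartheta_{q,\lambda}\le C\vartheta_{q,\lambda}$.
\item The pair operator $B_\lambda$ is the main obstacle. It need not be dominated by $A_\lambda$ with a constant uniform in a useful sense, because it corresponds to $R_\lambda\mathcal C_\psi R_\lambda$. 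I would bound $\langle B_\lambda f,f\rangle_\lambda=\langle\mathcal C_\psi R_\lambda\tilde f,R_\lambda\tilde f\rangle_{L^2(V)}$ by $\|\mathcal C_\psi\|_{\mathrm{op}}\|R_\lambda\tilde f\|^2$. Here $\tilde f$ is the unitary image of $f$, and $\|R_\lambda\tilde f\|^2=\langle A_\lambda f,f\rangle_\lambda$ under the unitary identification used in that proof. This gives $B_\lambda\preceq\|\mathcal C_\psi\|_{\mathrm{op}}A_\lambda\le\|\mathcal C_\psi\|_{\mathrm{HS}}A_\lambda$, where the Hilbert--Schmidt norm is finite by \eqref{eq:Cpsi-nondegenerate}.
\end{itemize}
With this order bound in hand, the same conjugation and Lemma~\ref{lem:spline-ritz-tail} give $\operatorname{tr}(P^\perp B_\lambda)\le C\vartheta_{q,\lambda}$. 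Summing the two contributions proves \eqref{eq:dense-finite-statistic-error}.

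The only delicate point is checking that this unitary identification intertwines $A_\lambda$ with $R_\lambda^2$ for arbitrary $f\in\mathcal H_\lambda$. That holds because both operators are diagonal in the same basis $e_{\nu,\lambda}$.
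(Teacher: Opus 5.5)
Your proposal is correct and follows the paper's proof essentially step for step: the Riesz-representative identification $\Pi_{q,\lambda}G_{i,\lambda}=B(\cdot)^{\prime}H_{q,\lambda}^{-1}s_i$, Pythagoras for the exact error, and the trace bound $\operatorname{tr}\{(I-\Pi_{q,\lambda})\Omega_{n,\lambda}\}$ via $\Omega_{n,\lambda}\preceq CA_\lambda$ and Lemma~\ref{lem:spline-ritz-tail}. In the growing cluster case you use the same domination $R_\lambda\mathcal C_\psi R_\lambda\preceq\|\mathcal C_\psi\|_{\mathrm{op}}R_\lambda^2$; your hedge that $B_\lambda$ ``need not be dominated'' by $A_\lambda$ is unnecessary, since the bound you then derive is exactly that domination.
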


\begin{proof}
For $v(t)=B(t)^{\prime}\theta\in\mathcal S_q$,
\[
\langle G_{i,\lambda},v\rangle_\lambda=s_i^{\prime}\theta.
\]
The element $B(\cdot)^{\prime}H_{q,\lambda}^{-1}s_i$ has the same inner product with every $v\in\mathcal S_q$, so uniqueness of the Riesz representative gives
\[
\Pi_{q,\lambda}G_{i,\lambda}=B(\cdot)^{\prime}H_{q,\lambda}^{-1}s_i.
\]
This proves \eqref{eq:finite-projection-identity}, and orthogonality of $\Pi_{q,\lambda}$ gives \eqref{eq:finite-statistic-exact-error}.

Under Assumption~\ref{ass:sparse-cluster}, Proposition~\ref{prop:covariance-comparison} gives $\Omega_{n,\lambda}\preceq Cm_H^{-1}A_\lambda$. Hence
\[
E\{S_{n,\lambda}-S_{n,\lambda}^{(q)}\}=\operatorname{tr}\{(I-\Pi_{q,\lambda})\Omega_{n,\lambda}\}\leq Cm_H^{-1}\operatorname{tr}\{(I-\Pi_{q,\lambda})A_\lambda\},
\]
and \eqref{eq:sparse-finite-statistic-error} follows from Lemma~\ref{lem:spline-ritz-tail}.

Under Assumption~\ref{ass:growing-cluster}, $D_\lambda\preceq CA_\lambda$. In the $L^2(V)$ representation of Proposition~\ref{prop:exact-dense-decomposition}, $B_\lambda$ is unitarily equivalent to $R_\lambda\mathcal C_\psi R_\lambda$, whereas $A_\lambda$ is equivalent to $R_\lambda^2$. Since $\mathcal C_\psi$ is positive and bounded,
\[
R_\lambda\mathcal C_\psi R_\lambda\preceq\|\mathcal C_\psi\|_{\mathrm{op}}R_\lambda^2,
\]
and therefore $B_\lambda\preceq CA_\lambda$. Equation~\eqref{eq:exact-dense-omega} then gives $\Omega_{n,\lambda}\preceq CA_\lambda$. Applying Lemma~\ref{lem:spline-ritz-tail} proves \eqref{eq:dense-finite-statistic-error}.
\end{proof}

Recall that
\(a_{n,\lambda}=\{2\operatorname{tr}(\Omega_{n,\lambda}^2)\}^{1/2}.\)
The preceding proposition yields the dimension condition required for the finite basis statistic to be asymptotically equivalent to its Hilbert space counterpart.

\begin{corollary}\label{cor:spline-truncation-scale}
Suppose the conditions of Proposition~\ref{prop:finite-basis-projection} hold. Under Assumption~\ref{ass:sparse-cluster}, suppose
\begin{equation}
\frac{m_H^{-1}\vartheta_{q,\lambda}}{a_{n,\lambda}}\to0.
\label{eq:sparse-q-condition}
\end{equation}
Under Assumption~\ref{ass:growing-cluster}, suppose instead
\begin{equation}
\frac{\vartheta_{q,\lambda}}{a_{n,\lambda}}\to0.
\label{eq:growing-q-condition}
\end{equation}
Then, under $H_0$ and the corresponding regime,
\begin{equation}
\frac{S_{n,\lambda}-S_{n,\lambda}^{(q)}}{a_{n,\lambda}}\xrightarrow{p}0.
\label{eq:finite-statistic-equivalence}
\end{equation}
In the bounded sparse regime, $a_{n,\lambda}\asymp m_H^{-1}s_\lambda^{-1/2}$, so \eqref{eq:sparse-q-condition} is equivalent to
\[
q^{2r-1}s_\lambda^{2r-1/2}\to\infty.
\]
In the transition and dense regimes, $a_{n,\lambda}$ is bounded away from zero. Hence the stronger explicit condition
\[
q^{2r-1}s_\lambda^{2r}\to\infty
\]
is sufficient for \eqref{eq:growing-q-condition}.
\end{corollary}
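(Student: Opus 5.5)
The difference $S_{n,\lambda}-S_{n,\lambda}^{(q)}$ is nonnegative by \eqref{eq:finite-statistic-exact-error}, so convergence in probability will follow from convergence in $L^1$ through Markov's inequality. For any $\epsilon>0$,
\[
P\left\{\frac{S_{n,\lambda}-S_{n,\lambda}^{(q)}}{a_{n,\lambda}}>\epsilon\right\}
\leq
\frac{E\{S_{n,\lambda}-S_{n,\lambda}^{(q)}\}}{\epsilon\,a_{n,\lambda}}.
\]
Proposition~\ref{prop:finite-basis-projection} bounds the numerator by $Cm_H^{-1}\vartheta_{q,\lambda}$ under Assumption~\ref{ass:sparse-cluster}, and by $C\vartheta_{q,\lambda}$ under Assumption~\ref{ass:growing-cluster}. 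Conditions \eqref{eq:sparse-q-condition} and \eqref{eq:growing-q-condition} then make the right side tend to zero, which proves \eqref{eq:finite-statistic-equivalence}. No variance computation is needed, because the error is a nonnegative squared norm with an explicit mean.

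For the explicit forms, I would take the order of $a_{n,\lambda}$ from earlier results.

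In the bounded sparse regime, \eqref{eq:variance-trace-order} gives $a_{n,\lambda}^2=2\operatorname{tr}(\Omega_{n,\lambda}^2)\asymp m_H^{-2}s_\lambda^{-1}$. Hence
\[
\frac{m_H^{-1}\vartheta_{q,\lambda}}{a_{n,\lambda}}\asymp q^{1-2r}s_\lambda^{-2r}s_\lambda^{1/2}=\{q^{2r-1}s_\lambda^{2r-1/2}\}^{-1}.
\]
This shows that \eqref{eq:sparse-q-condition} is equivalent to $q^{2r-1}s_\lambda^{2r-1/2}\to\infty$. Since $m_H$ is bounded here, the factor $m_H$ cancels up to constants.

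In the growing cluster case, Theorem~\ref{thm:dense-variance-order} gives $a_{n,\lambda}^2\asymp m_H^{-2}s_\lambda^{-1}+1\geq c>0$, so $a_{n,\lambda}$ is bounded away from zero. Then $\vartheta_{q,\lambda}\to0$, which is exactly $q^{2r-1}s_\lambda^{2r}\to\infty$, is sufficient for \eqref{eq:growing-q-condition}.

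One bookkeeping point needs checking: the hypothesis $qs_\lambda\geq1$ of Lemma~\ref{lem:spline-ritz-tail} must hold, and it does. Each explicit condition forces $qs_\lambda\to\infty$, since $q^{2r-1}s_\lambda^{2r}\leq(qs_\lambda)^{2r-1}$ and $q^{2r-1}s_\lambda^{2r-1/2}\leq(qs_\lambda)^{2r-1}$ for $s_\lambda\leq1$. The only other step to verify is that the bound \eqref{eq:variance-trace-order} applies in the bounded sparse regime. Proposition~\ref{prop:covariance-comparison} supplies it under Assumption~\ref{ass:sparse-cluster}, so I expect no real obstacle.
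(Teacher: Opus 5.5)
Your proposal is correct and follows the paper's own proof: Markov's inequality applied to the nonnegative truncation error, using the expectation bounds of Proposition~\ref{prop:finite-basis-projection}, and then substituting the orders of $a_{n,\lambda}$ from Proposition~\ref{prop:covariance-comparison} and Theorem~\ref{thm:dense-variance-order} to get the explicit dimension conditions. The paper's proof goes further and shows that the projected centering $\mu_{n,\lambda}^{(q)}$ and scale $a_{n,\lambda}^{(q)}$ agree with their full RKHS counterparts (needed later in Theorem~\ref{thm:feasible-finite-basis}, not for this statement), while your check that $qs_\lambda\geq1$ is harmless but redundant, since that condition is already among the assumed hypotheses of the proposition.
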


\begin{proof}
Equations~\eqref{eq:sparse-finite-statistic-error} and \eqref{eq:dense-finite-statistic-error}, together with Markov's inequality, give \eqref{eq:finite-statistic-equivalence}. Put
\[
t_{q,\lambda}=\operatorname{tr}\{(I-\Pi_{q,\lambda})\Omega_{n,\lambda}\}.
\]
Since $\Omega_{n,\lambda}$ is positive,
\[
\|\Omega_{n,\lambda}-\Pi_{q,\lambda}\Omega_{n,\lambda}\Pi_{q,\lambda}\|_{\mathrm{HS}}\leq2\{\|\Omega_{n,\lambda}\|_{\mathrm{op}}t_{q,\lambda}\}^{1/2}\leq C\{a_{n,\lambda}t_{q,\lambda}\}^{1/2},
\]
where $\|\Omega_{n,\lambda}\|_{\mathrm{op}}\leq\|\Omega_{n,\lambda}\|_{\mathrm{HS}}=a_{n,\lambda}/\sqrt2$.

Define $\mu_{n,\lambda}^{(q)}=\operatorname{tr}(\Pi_{q,\lambda}\Omega_{n,\lambda}\Pi_{q,\lambda})$ and
\[
\{a_{n,\lambda}^{(q)}\}^2=2\operatorname{tr}\{(\Pi_{q,\lambda}\Omega_{n,\lambda}\Pi_{q,\lambda})^2\}.
\]
Then $\mu_{n,\lambda}-\mu_{n,\lambda}^{(q)}=t_{q,\lambda}$. Hence $t_{q,\lambda}/a_{n,\lambda}\to0$ implies
\[
\frac{\mu_{n,\lambda}-\mu_{n,\lambda}^{(q)}}{a_{n,\lambda}}\to0
\]
and
\[
\frac{a_{n,\lambda}^{(q)}}{a_{n,\lambda}}\to1.
\]
Thus the finite basis centering and quadratic scale agree with their full RKHS counterparts on the testing scale. The explicit dimension conditions follow by substituting $\vartheta_{q,\lambda}=q^{1-2r}s_\lambda^{-2r}$ and the variance orders from Proposition~\ref{prop:covariance-comparison} and Theorem~\ref{thm:dense-variance-order}.
\end{proof}

\subsection{Curvature estimation and feasible calibration}
For $\delta>0$, let $w_\delta$ be defined by \eqref{eq:score-increment-curvature-weight}. In addition to Assumption~\ref{ass:loss-primitive}, suppose that for all sufficiently small $\delta$ there exists a deterministic $L_\delta$ such that
\begin{equation}
0\leq w_\delta(u)\leq L_\delta,\qquad u\in\mathbb R.
\label{eq:curvature-envelope}
\end{equation}
This is a primitive envelope condition on the score increment. For a globally Lipschitz score, one may take $L_\delta=L_\psi$, whereas for a bounded monotone score one may take $L_\delta=C/\delta$. Thus the condition covers both Huber and absolute loss.

Define
\[
\epsilon_{n,q,\lambda,\delta}=\delta^\alpha+\left\{\frac{L_\delta\log(2q)}{ns_\lambda}\right\}^{1/2}+\frac{L_\delta\log(2q)}{ns_\lambda}.
\]
\begin{theorem}\label{thm:curvature-matrix-rate}
Suppose Assumption~\ref{ass:loss-primitive} and \eqref{eq:curvature-envelope} hold under either cluster design. Then, under $H_0$,
\begin{equation}
\left\|H_{q,\lambda}^{-1/2}(\widehat H_{q,\lambda}-H_{q,\lambda})H_{q,\lambda}^{-1/2}\right\|_{\mathrm{op}}=O_p(\epsilon_{n,q,\lambda,\delta}).
\label{eq:relative-curvature-rate}
\end{equation}
\end{theorem}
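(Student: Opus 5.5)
Since $\widehat H_{q,\lambda}-H_{q,\lambda}=\widehat A_{q,\delta}-A_q$, the penalty cancels, and the whole argument is about the sandwiched curvature error. I would split it into a deterministic bias and a centered random part,
\[
\widehat A_{q,\delta}-A_q=\{E\widehat A_{q,\delta}-A_q\}+\{\widehat A_{q,\delta}-E\widehat A_{q,\delta}\},
\]
and bound each term after conjugation by $H_{q,\lambda}^{-1/2}$. Throughout, vectors $\theta\in\mathbb R^q$ are identified with splines $v=B^{\prime}\theta\in\mathcal S_q$. Then $\theta^{\prime}H_{q,\lambda}\theta=\|v\|_\lambda^2$ and $\theta^{\prime}A_q\theta=V(v,v)$.

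\textbf{Bias.} Under $H_0$ the design is independent within subjects, so
\[
\theta^{\prime}E\widehat A_{q,\delta}\theta=\int_0^1E\{w_\delta(e)\mid T=t\}v(t)^2\pi(t)\,dt.
\]
Assumption~\ref{ass:loss-primitive}(i) gives $E\{w_\delta(e)\mid T=t\}=I(t)+O(\delta^\alpha)$ uniformly in $t$: write $w_\delta$ as $\{\Psi_t(-\delta)-\Psi_t(\delta)\}/(2\delta)$ and apply \eqref{eq:score-linearization}. Hence
\[
|\theta^{\prime}(E\widehat A_{q,\delta}-A_q)\theta|\leq C\delta^\alpha\int v^2\pi\leq C\delta^\alpha\|v\|_\lambda^2,
\]
using $I\geq c_I$. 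After conjugation, this term contributes $O(\delta^\alpha)$ in operator norm.

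\textbf{Random part.} I would write
\[
\widehat A_{q,\delta}-E\widehat A_{q,\delta}=n^{-1}\sum_i(X_i-EX_i),\qquad X_i=m_i^{-1}\sum_jw_\delta(e_{ij})B(T_{ij})B(T_{ij})^{\prime}.
\]
The $X_i$ are independent positive semidefinite matrices across subjects. Put $Y_i=H_{q,\lambda}^{-1/2}X_iH_{q,\lambda}^{-1/2}$.
\begin{itemize}
\item[(a)] \emph{Uniform bound.} The envelope \eqref{eq:curvature-envelope} gives
\[
\|Y_i\|_{\mathrm{op}}\leq L_\delta\max_j\sup_{\|v\|_\lambda\leq1}v(T_{ij})^2\leq L_\delta\sup_tK_\lambda^{(q)}(t,t)\leq CL_\delta s_\lambda^{-1}.
\]
The restricted kernel is dominated by $K_\lambda$, so the evaluation inequality \eqref{eq:evaluation-inequality} applies.
\item[(b)] \emph{Variance proxy.} Since $Y_i\succeq0$, we have $E Y_i^2\preceq\|Y_i\|_{\infty}EY_i$. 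Also $EY_i=H^{-1/2}E\widehat A H^{-1/2}\preceq C\,\mathrm{Id}$, because $E\{w_\delta\mid T\}\leq C$ by the bias step and $\int v^2\pi\leq C\|v\|_\lambda^2$. Thus $\|\sum_iE Y_i^2\|_{\mathrm{op}}\leq CnL_\delta s_\lambda^{-1}$.
\end{itemize}
Within-subject dependence is harmless here because whole subjects are the independent summands and (a) is a sure bound.

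The matrix Bernstein inequality (Tropp) in dimension $q$ then gives, with probability at least $1-\epsilon$,
\[
\Bigl\|n^{-1}\sum_i(Y_i-EY_i)\Bigr\|_{\mathrm{op}}\leq C_\epsilon\left[\left\{\frac{L_\delta\log(2q)}{ns_\lambda}\right\}^{1/2}+\frac{L_\delta\log(2q)}{ns_\lambda}\right].
\]
The choice $t\asymp$ the deviation level absorbs the $2q$ dimensional factor into $\log(2q)$. Adding the bias bound gives \eqref{eq:relative-curvature-rate}.

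\textbf{Main obstacle.} The main obstacle I expect is step (a) for discontinuous scores, where $L_\delta\asymp\delta^{-1}$ and no pointwise bound better than the envelope is available. Bernstein only needs the envelope there, so the proof should go through unchanged. One must still check that the bias expansion via \eqref{eq:score-linearization} holds uniformly for $|\delta|\leq a_0$ at jump points. Assumption~\ref{ass:loss-primitive}(iv) lets the subgradient selection be ignored.
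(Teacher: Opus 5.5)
Your proposal is correct and follows essentially the same route as the paper. You cancel the penalty, bound the conjugated bias by $C\delta^\alpha$ via the secant identity $E\{w_\delta(e)\mid T=t\}=\{\Psi_t(-\delta)-\Psi_t(\delta)\}/(2\delta)$, and control the centered part with the self-adjoint matrix Bernstein inequality, using the envelope bound $\|Y_i\|_{\mathrm{op}}\leq CL_\delta s_\lambda^{-1}$ and the variance proxy $E Y_i^2\preceq R\,E Y_i$.
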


\begin{proof}
Set $x_q(t)=H_{q,\lambda}^{-1/2}B(t)$. Since $H_{q,\lambda}$ represents the $\lambda$ inner product on $\mathcal S_q$,
\[
\|x_q(t)\|^2=B(t)^{\prime}H_{q,\lambda}^{-1}B(t)=\sup_{\substack{f\in\mathcal S_q\\ \|f\|_\lambda\leq1}}|f(t)|^2\leq K_\lambda(t,t)\leq Cs_\lambda^{-1}.
\]
Define
\[
Y_i=\frac1{m_i}\sum_{j=1}^{m_i}w_\delta(e_{ij}^0)x_q(T_{ij})x_q(T_{ij})^{\prime}.
\]
By \eqref{eq:curvature-envelope}, $Y_i$ is positive semidefinite and $\|Y_i\|_{\mathrm{op}}\leq CL_\delta s_\lambda^{-1}$.

Let
\[
I_\delta(t)=E\{w_\delta(e_{ij})\mid T_{ij}=t\}.
\]
By the definition of $w_\delta$ and \eqref{eq:score-linearization},
\[
I_\delta(t)=\frac{\Psi_t(-\delta)-\Psi_t(\delta)}{2\delta}=I(t)+O(\delta^\alpha)
\]
uniformly in $t$. Since the penalty part is identical in $\widehat H_{q,\lambda}$ and $H_{q,\lambda}$, it cancels from their difference. Hence
\begin{equation}
\left\|E(Y_i)-H_{q,\lambda}^{-1/2}A_qH_{q,\lambda}^{-1/2}\right\|_{\mathrm{op}}\leq C\delta^\alpha.
\label{eq:curvature-bias-bound}
\end{equation}
Moreover, $H_{q,\lambda}\succeq A_q$, so $\|H_{q,\lambda}^{-1/2}A_qH_{q,\lambda}^{-1/2}\|_{\mathrm{op}}\leq1$. Thus $\|E(Y_i)\|_{\mathrm{op}}\leq C$ for sufficiently small $\delta$.

Put $R=CL_\delta s_\lambda^{-1}$. Since $0\preceq Y_i\preceq RI_q$,
\[
E(Y_i^2)\preceq R E(Y_i)\preceq CRI_q.
\]
Therefore the matrix variance proxy for the independent centered matrices $Y_i-E(Y_i)$ is bounded by $CnL_\delta s_\lambda^{-1}$. The self adjoint matrix Bernstein inequality of \citet{tropp2012} yields
\[
\left\|\frac1n\sum_{i=1}^n\{Y_i-E(Y_i)\}\right\|_{\mathrm{op}}=O_p\left[\left\{\frac{L_\delta\log(2q)}{ns_\lambda}\right\}^{1/2}+\frac{L_\delta\log(2q)}{ns_\lambda}\right].
\]
Combining this bound with \eqref{eq:curvature-bias-bound} and the definition of $\epsilon_{n,q,\lambda,\delta}$ proves \eqref{eq:relative-curvature-rate}.
\end{proof}

We first introduce the projected covariance operator and its associated centering and quadratic scale. Let
\[
\Omega_{n,\lambda}^{(q)}
=\Pi_{q,\lambda}\Omega_{n,\lambda}\Pi_{q,\lambda},
\]
\[
\mu_{n,\lambda}^{(q)}
=\operatorname{tr}(\Omega_{n,\lambda}^{(q)}),
\]
and
\[
a_{n,\lambda}^{(q)}
=\{2\operatorname{tr}[(\Omega_{n,\lambda}^{(q)})^2]\}^{1/2}.
\]
The next result shows that estimating the curvature matrix does not affect the statistic or its bootstrap calibration on the testing scale.

\begin{theorem}\label{thm:feasible-finite-basis}
Suppose the assumptions of Theorems~\ref{thm:uniform-bootstrap-validity} and~\ref{thm:curvature-matrix-rate} hold and the appropriate truncation condition in Corollary~\ref{cor:spline-truncation-scale} is satisfied. If
\begin{equation}
\epsilon_{n,q,\lambda,\delta}\frac{\mu_{n,\lambda}^{(q)}+a_{n,\lambda}^{(q)}}{a_{n,\lambda}^{(q)}}\to0,
\label{eq:metric-estimation-testing-condition}
\end{equation}
then under $H_0$,
\begin{equation}
\frac{\widehat S_{n,\lambda}^{(q)}-S_{n,\lambda}^{(q)}}{a_{n,\lambda}^{(q)}}\xrightarrow{p}0,
\label{eq:feasible-statistic-metric-equivalence}
\end{equation}
\begin{equation}
\frac{\widehat\mu_{n,\lambda}^{(q)}-\mu_{n,\lambda}^{(q)}}{a_{n,\lambda}^{(q)}}\xrightarrow{p}0,
\label{eq:feasible-mu-consistency}
\end{equation}
and
\begin{equation}
\frac{\widehat\sigma_{n,\lambda}^{(q)}}{a_{n,\lambda}^{(q)}}\xrightarrow{p}1.
\label{eq:feasible-sigma-consistency}
\end{equation}
Moreover, the feasible finite basis multiplier bootstrap is conditionally consistent for the full RKHS score statistic.
\end{theorem}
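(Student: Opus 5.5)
The whole argument runs through one relative perturbation matrix. Define
\[
E_n=H_{q,\lambda}^{-1/2}(\widehat H_{q,\lambda}-H_{q,\lambda})H_{q,\lambda}^{-1/2}.
\]
By Theorem~\ref{thm:curvature-matrix-rate}, $\|E_n\|_{\mathrm{op}}=O_p(\epsilon_{n,q,\lambda,\delta})$. Condition \eqref{eq:metric-estimation-testing-condition} forces $\epsilon_{n,q,\lambda,\delta}\to0$, so $\widehat H_{q,\lambda}$ is invertible with probability tending to one. On that event
\[
\widehat H_{q,\lambda}^{-1}=H_{q,\lambda}^{-1/2}(I+E_n)^{-1}H_{q,\lambda}^{-1/2},
\qquad
\|(I+E_n)^{-1}-I\|_{\mathrm{op}}\leq 2\|E_n\|_{\mathrm{op}}.
\]
Put $u=\sqrt n\,H_{q,\lambda}^{-1/2}\bar s$, so that $\|u\|^2=S_{n,\lambda}^{(q)}$. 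Then
\[
|\widehat S_{n,\lambda}^{(q)}-S_{n,\lambda}^{(q)}|\leq 2\|E_n\|_{\mathrm{op}}\,S_{n,\lambda}^{(q)}.
\]
By Proposition~\ref{prop:finite-basis-projection}, $S_{n,\lambda}^{(q)}$ is the squared norm of $\Pi_{q,\lambda}n^{-1/2}\sum_iG_{i,\lambda}$. Its mean is therefore $\mu_{n,\lambda}^{(q)}$, and its variance is of order $(a_{n,\lambda}^{(q)})^2$ plus the diagonal fourth-moment term. That term is negligible by \eqref{eq:bootstrap-rate-condition}, since projection only shrinks $\|G_{i,\lambda}\|_\lambda$. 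Hence $S_{n,\lambda}^{(q)}=O_p(\mu_{n,\lambda}^{(q)}+a_{n,\lambda}^{(q)})$. Dividing by $a_{n,\lambda}^{(q)}$ and applying \eqref{eq:metric-estimation-testing-condition} gives \eqref{eq:feasible-statistic-metric-equivalence}.

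For the centering and scale, I compare with the infeasible quantities $M_{q,\lambda}=H_{q,\lambda}^{-1}\widehat C_q$. In whitened coordinates, $H_{q,\lambda}^{-1/2}\widehat C_qH_{q,\lambda}^{-1/2}$ is exactly the matrix of $\Pi_{q,\lambda}\widehat\Omega_{n,\lambda}\Pi_{q,\lambda}$ in an orthonormal basis of $\mathcal S_q$, with $\widehat\Omega_{n,\lambda}$ the empirical score covariance of Section~\ref{sec:bootstrap-validity}. Two consequences follow from Lemma~\ref{lem:bootstrap-covariance-consistency} together with the contraction property of $\Pi_{q,\lambda}$ in trace and Hilbert--Schmidt norms. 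First, $\operatorname{tr}(M_{q,\lambda})-\mu_{n,\lambda}^{(q)}=o_p(a_{n,\lambda})$. Second, $\|\Pi\widehat\Omega\Pi-\Omega^{(q)}_{n,\lambda}\|_{\mathrm{HS}}=o_p(a_{n,\lambda})$. The truncation condition gives $a_{n,\lambda}^{(q)}/a_{n,\lambda}\to1$ (proof of Corollary~\ref{cor:spline-truncation-scale}), so these errors are also $o_p(a_{n,\lambda}^{(q)})$.

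Next I replace $H_{q,\lambda}$ by $\widehat H_{q,\lambda}$. Write $P=H_{q,\lambda}^{-1/2}\widehat C_qH_{q,\lambda}^{-1/2}\succeq0$. Then $\widehat M_{q,\lambda}$ is similar to $(I+E_n)^{-1}P$, and
\[
|\operatorname{tr}\{((I+E_n)^{-1}-I)P\}|\leq 2\|E_n\|_{\mathrm{op}}\operatorname{tr}(P)=O_p(\epsilon)\{\mu_{n,\lambda}^{(q)}+o_p(a_{n,\lambda}^{(q)})\},
\]
which is $o_p(a_{n,\lambda}^{(q)})$ by \eqref{eq:metric-estimation-testing-condition}. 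This proves \eqref{eq:feasible-mu-consistency}. For the scale, $\operatorname{tr}(\widehat M_{q,\lambda}^2)=\operatorname{tr}\{((I+E_n)^{-1}P)^2\}$. Expanding and using $|\operatorname{tr}(XPYP)|\leq\|X\|_{\mathrm{op}}\|Y\|_{\mathrm{op}}\|P\|_{\mathrm{HS}}^2$ gives a relative error $O_p(\epsilon)$ against $\|P\|_{\mathrm{HS}}^2$. We have $\|P\|_{\mathrm{HS}}^2=(a_{n,\lambda}^{(q)})^2\{1+o_p(1)\}/2$, and this yields \eqref{eq:feasible-sigma-consistency}.

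For the bootstrap, conditionally on the data $H_{q,\lambda}^{-1/2}S_n^*\sim N(0,P)$. The feasible bootstrap statistic is therefore $Z^{*\prime}(I+E_n)^{-1}Z^*$ with $Z^*\sim N(0,P)$. The standardized feasible bootstrap law and the standardized bootstrap law with the true $H_{q,\lambda}$ then differ by $O_p(\epsilon)(\operatorname{tr}P+\|P\|_{\mathrm{HS}})/a^{(q)}_{n,\lambda}\to0$ in conditional $L^1$. The latter law is the Gaussian quadratic form with covariance $\Pi\widehat\Omega\Pi$. Hoffman--Wielandt and the coupling step from the proof of Theorem~\ref{thm:uniform-bootstrap-validity} show that this matches the law of the form with covariance $\Omega_{n,\lambda}$. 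That law approximates the law of $(S_{n,\lambda}-\mu_{n,\lambda})/a_{n,\lambda}$ uniformly, by \eqref{eq:gaussianization-quadratic}. The full statistic is linked to $\widehat S^{(q)}_{n,\lambda}$ through Corollary~\ref{cor:spline-truncation-scale} and \eqref{eq:feasible-statistic-metric-equivalence}. Every subsequential limit $\mathcal L_{\boldsymbol b}$ from Lemma~\ref{lem:spectral-subsequence} is continuous, so all these $o_p(1)$ perturbations transfer to uniform Kolmogorov closeness.

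The main obstacle is the second step. It requires a trace-class, not merely Hilbert--Schmidt, control of $\Pi\widehat\Omega\Pi-\Omega^{(q)}$ at scale $a_{n,\lambda}$, together with care that the curvature error multiplies $\mu^{(q)}$, which in the sparse regime exceeds $a^{(q)}$ by a factor $s_\lambda^{-1/2}$. This factor is exactly why \eqref{eq:metric-estimation-testing-condition} carries $\mu^{(q)}/a^{(q)}$, and I expect the argument to go through only by tracking it explicitly.
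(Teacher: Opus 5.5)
Your proposal follows the paper's proof essentially step for step. It uses the same relative perturbation $E_{q,\lambda}$ and the same pathwise bound $|\widehat S^{(q)}_{n,\lambda}-S^{(q)}_{n,\lambda}|\leq C\|E_{q,\lambda}\|_{\mathrm{op}}S^{(q)}_{n,\lambda}$. The trace and squared Hilbert--Schmidt perturbations for $\widehat\mu^{(q)}_{n,\lambda}$ and $\widehat\sigma^{(q)}_{n,\lambda}$ are the same, as is the same-multiplier coupling for the bootstrap. The final transfer through Corollary~\ref{cor:spline-truncation-scale} and Theorem~\ref{thm:uniform-bootstrap-validity} also matches; your bootstrap chain is simply more explicit.

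The one step whose justification fails as written is the trace consistency $\operatorname{tr}(H_{q,\lambda}^{-1}\widehat C_q)-\mu^{(q)}_{n,\lambda}=o_p(a_{n,\lambda})$. Lemma~\ref{lem:bootstrap-covariance-consistency} controls only the scalar $\operatorname{tr}(\widehat\Omega_{n,\lambda}-\Omega_{n,\lambda})$ and the Hilbert--Schmidt norm of the difference, not its trace norm. So contraction of $\Pi_{q,\lambda}$ gives no bound on $\operatorname{tr}\{\Pi_{q,\lambda}(\widehat\Omega_{n,\lambda}-\Omega_{n,\lambda})\}$.

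The obstacle you anticipate disappears if you rerun the lemma's variance argument on the projected scores, as the paper does:
\[
\operatorname{tr}(\Pi_{q,\lambda}\widehat\Omega_{n,\lambda}\Pi_{q,\lambda})-\mu^{(q)}_{n,\lambda}
=\frac1n\sum_{i=1}^n\bigl(\|\Pi_{q,\lambda}G_{i,\lambda}\|_\lambda^2-E\|\Pi_{q,\lambda}G_{i,\lambda}\|_\lambda^2\bigr)-\|\Pi_{q,\lambda}\overline G_{n,\lambda}\|_\lambda^2 .
\]
The first term has variance at most $n^{-2}\sum_iE\|G_{i,\lambda}\|_\lambda^4\leq Cn^{-1}s_\lambda^{-2}$. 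This is $o(a_{n,\lambda}^2)$ by \eqref{eq:bootstrap-rate-condition}. The second term has mean $\mu^{(q)}_{n,\lambda}/n\leq\mu_{n,\lambda}/n$, which is $o(a_{n,\lambda})$ exactly as in that lemma's proof.

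No trace-norm estimate is needed. With $\operatorname{tr}(P)=\mu^{(q)}_{n,\lambda}+o_p(a^{(q)}_{n,\lambda})$ in hand, your centering, scale and bootstrap steps go through unchanged.
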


\begin{proof}
Put
\[
E_{q,\lambda}=H_{q,\lambda}^{-1/2}(\widehat H_{q,\lambda}-H_{q,\lambda})H_{q,\lambda}^{-1/2}.
\]
By Theorem~\ref{thm:curvature-matrix-rate}, $\|E_{q,\lambda}\|_{\mathrm{op}}=O_p(\epsilon_{n,q,\lambda,\delta})=o_p(1)$, where the final relation follows from \eqref{eq:metric-estimation-testing-condition}. Hence, with probability tending to one,
\[
\widehat H_{q,\lambda}^{-1}=H_{q,\lambda}^{-1/2}(I_q+E_{q,\lambda})^{-1}H_{q,\lambda}^{-1/2}
\]
and
\[
\|(I_q+E_{q,\lambda})^{-1}-I_q\|_{\mathrm{op}}\leq C\|E_{q,\lambda}\|_{\mathrm{op}}.
\]
If $Y_n=\sqrt n\,H_{q,\lambda}^{-1/2}\bar s$, then
\[
S_{n,\lambda}^{(q)}=\|Y_n\|^2,\qquad \widehat S_{n,\lambda}^{(q)}=Y_n^{\prime}(I_q+E_{q,\lambda})^{-1}Y_n.
\]
Therefore
\begin{equation}
|\widehat S_{n,\lambda}^{(q)}-S_{n,\lambda}^{(q)}|\leq C\|E_{q,\lambda}\|_{\mathrm{op}}S_{n,\lambda}^{(q)}.
\label{eq:metric-statistic-pathwise}
\end{equation}
For the projected scores, $E(S_{n,\lambda}^{(q)})=\mu_{n,\lambda}^{(q)}$ and $S_{n,\lambda}^{(q)}-\mu_{n,\lambda}^{(q)}=O_p(a_{n,\lambda}^{(q)})$. Thus \eqref{eq:metric-estimation-testing-condition} and \eqref{eq:metric-statistic-pathwise} give \eqref{eq:feasible-statistic-metric-equivalence}.

Let
\[
\widehat\Omega_{n,\lambda}^{(q),0}=H_{q,\lambda}^{-1/2}\widehat C_qH_{q,\lambda}^{-1/2}.
\]
Since $\|\Pi_{q,\lambda}G_{i,\lambda}\|_\lambda\leq\|G_{i,\lambda}\|_\lambda$, the fourth moment bound and the argument of Lemma~\ref{lem:bootstrap-covariance-consistency} apply to the projected scores. The truncation condition gives $a_{n,\lambda}^{(q)}/a_{n,\lambda}\to1$, so
\[
\|\widehat\Omega_{n,\lambda}^{(q),0}-\Omega_{n,\lambda}^{(q)}\|_{\mathrm{HS}}=o_p(a_{n,\lambda}^{(q)})
\]
and the corresponding trace error is $o_p(a_{n,\lambda}^{(q)})$.

Replacing the identity metric by $(I_q+E_{q,\lambda})^{-1}$ changes the trace by at most
\[
C\|E_{q,\lambda}\|_{\mathrm{op}}\operatorname{tr}(\widehat\Omega_{n,\lambda}^{(q),0})=O_p\{\epsilon_{n,q,\lambda,\delta}(\mu_{n,\lambda}^{(q)}+a_{n,\lambda}^{(q)})\},
\]
which is $o_p(a_{n,\lambda}^{(q)})$ by \eqref{eq:metric-estimation-testing-condition}. The same perturbation argument for the squared Hilbert--Schmidt norm yields \eqref{eq:feasible-mu-consistency} and \eqref{eq:feasible-sigma-consistency}.

Conditionally on the data, the population-metric and feasible finite basis bootstrap statistics are Gaussian quadratic forms generated by the same multiplier vector, with weighting matrices differing through $(I_q+E_{q,\lambda})^{-1}$. The conditional analogue of \eqref{eq:metric-statistic-pathwise}, together with \eqref{eq:metric-estimation-testing-condition}, shows that their difference is $o_p(a_{n,\lambda}^{(q)})$. The continuous subsequential limits in Lemma~\ref{lem:spectral-subsequence} therefore imply convergence of their conditional distribution functions in Kolmogorov distance. Finally, Corollary~\ref{cor:spline-truncation-scale} and Theorem~\ref{thm:uniform-bootstrap-validity} transfer this conclusion from the finite basis statistic to the full RKHS score statistic.
\end{proof}

\begin{corollary}\label{cor:fully-feasible-basis}
Under the loss and sampling assumptions of Theorem~\ref{thm:feasible-finite-basis}, let $r\geq2$, choose $s_\lambda\asymp s_n^*$ as in \eqref{eq:unified-optimal-s}, and let $L_n\to\infty$ satisfy $\log L_n=O(\log n)$. Choose a quasi-uniform B-spline space of fixed degree $p\geq2r-1$ with
\begin{equation}
q_n\asymp L_n s_\lambda^{-2r/(2r-1)}.
\label{eq:explicit-optimal-q}
\end{equation}
For a Lipschitz score, choose $\delta$ such that
\[
\delta^\alpha\leq\{\log(2q_n)/(ns_\lambda)\}^{1/2}.
\]
For a bounded jump score, choose
\begin{equation}
\delta\asymp\{\log(2q_n)/(ns_\lambda)\}^{1/(2\alpha+1)}
\label{eq:explicit-jump-delta}
\end{equation}
and suppose
\begin{equation}
\alpha>\frac{1}{4(r-1)}.
\label{eq:jump-holder-threshold}
\end{equation}
Then the spline truncation and metric estimation conditions hold throughout the sparse, transition, and dense regimes. Consequently, the test based on $\{s_i,\widehat H_{q_n,\lambda},\widehat C_{q_n}\}$ and Gaussian subject multipliers has asymptotic size $\gamma$ in all three regimes.
\end{corollary}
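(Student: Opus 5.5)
The plan is to check the hypotheses of Theorem~\ref{thm:feasible-finite-basis} at the stated tuning choices, and then to convert its conclusion into a size statement as in the proof of Corollary~\ref{cor:bootstrap-optimal-scale}. Three families of conditions need checking: the bootstrap rate condition \eqref{eq:bootstrap-rate-condition}, the spline truncation condition of Corollary~\ref{cor:spline-truncation-scale}, and the metric estimation condition \eqref{eq:metric-estimation-testing-condition}. Corollary~\ref{cor:bootstrap-optimal-scale} already establishes \eqref{eq:bootstrap-rate-condition} at $s_\lambda\asymp s_n^*$ in all three regimes, so only the last two remain. Throughout I use two facts about $s_n^*$. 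First, $s_\lambda\gtrsim n^{-1/(2r)}$, since $s_n^*$ is a maximum. Second, $\log(2q_n)=O(\log n)$, which follows from $\log L_n=O(\log n)$ and $s_\lambda^{-1}\le n$.

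\emph{Truncation.} With $q_n\asymp L_ns_\lambda^{-2r/(2r-1)}$ we get
\[
q_ns_\lambda\asymp L_ns_\lambda^{-1/(2r-1)}\to\infty,
\]
so Lemma~\ref{lem:spline-ritz-tail} applies. Moreover,
\[
\vartheta_{q_n,\lambda}=q_n^{1-2r}s_\lambda^{-2r}\asymp L_n^{1-2r}\to0.
\]
In the transition and dense regimes $a_{n,\lambda}$ is bounded away from zero, so \eqref{eq:growing-q-condition} follows. In the bounded sparse regime, \eqref{eq:sparse-q-condition} becomes $\vartheta_{q_n,\lambda}s_\lambda^{1/2}\to0$, which is immediate. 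For growing $m_H$ below the transition, I use the general order
\[
a_{n,\lambda}^2\asymp m_H^{-2}s_\lambda^{-1}+1\geq c,
\]
so $\vartheta_{q_n,\lambda}/a_{n,\lambda}\to0$ again. Corollary~\ref{cor:spline-truncation-scale} then also gives $a_{n,\lambda}^{(q)}/a_{n,\lambda}\to1$ and $\mu^{(q)}_{n,\lambda}\le\mu_{n,\lambda}$.

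\emph{Metric estimation.} The proof of Lemma~\ref{lem:bootstrap-covariance-consistency} gives, in both cluster regimes,
\[
\mu_{n,\lambda}\leq C\{m_H^{-1}s_\lambda^{-1}+s_\lambda^{-1/2}\}\quad\text{and}\quad \mu_{n,\lambda}/a_{n,\lambda}\le Cs_\lambda^{-1/2}.
\]
Combined with the truncation step, condition \eqref{eq:metric-estimation-testing-condition} therefore reduces to $\epsilon_{n,q_n,\lambda,\delta}\,s_\lambda^{-1/2}\to0$. The two score types are handled separately.

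For a Lipschitz score, take $L_\delta=L_\psi$. The choice of $\delta$ gives
\[
\epsilon\lesssim\{\log n/(ns_\lambda)\}^{1/2},
\]
so it suffices that $\log n/(ns_\lambda^2)\to0$. This holds because $ns_\lambda^2\gtrsim n^{1-1/r}$ and $r\geq2$.

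For a bounded jump score, take $L_\delta=C/\delta$. Then
\[
\epsilon\asymp\delta^\alpha+\{\log q_n/(ns_\lambda\delta)\}^{1/2}+\log q_n/(ns_\lambda\delta),
\]
and the choice \eqref{eq:explicit-jump-delta} balances the first two terms at $\{\log n/(ns_\lambda)\}^{\alpha/(2\alpha+1)}$; the third term is of smaller order. The quantity $\{\log n/(ns_\lambda)\}^{\alpha/(2\alpha+1)}s_\lambda^{-1/2}$ decreases as $s_\lambda$ increases, so the worst case is $s_\lambda\asymp n^{-1/(2r)}$. There the polynomial exponent of $n$ is
\[
-\frac{(1-1/(2r))\alpha}{2\alpha+1}+\frac1{4r}.
\]
This exponent is negative if and only if $2\alpha(2r-1)>2\alpha+1$, that is, $\alpha>1/\{4(r-1)\}$, which is exactly \eqref{eq:jump-holder-threshold}. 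The logarithmic factor is then absorbed. I expect this exponent bookkeeping for the jump case to be the main obstacle. In particular, the monotonicity in $s_\lambda$ that justifies reducing to the dense endpoint has to be checked, and so does $\delta\to0$, so that the $O(\delta^\alpha)$ bias bound in Theorem~\ref{thm:curvature-matrix-rate} is valid.

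\emph{Size.} With all hypotheses verified, Theorem~\ref{thm:feasible-finite-basis} gives three things: equivalence of $\widehat S^{(q_n)}_{n,\lambda}$ with $S_{n,\lambda}$ on the scale $a_{n,\lambda}$; consistency of the feasible centering and scale; and Kolmogorov consistency of the feasible bootstrap law for the standardized RKHS statistic. Every subsequential limit $\mathcal L_{\boldsymbol b}$ in Lemma~\ref{lem:spectral-subsequence} is continuous. I will therefore argue along subsequences as in Corollary~\ref{cor:bootstrap-optimal-scale}: the feasible conditional $(1-\gamma)$ quantile, after standardization, converges to the $(1-\gamma)$ quantile of the limit law. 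This yields $P_0\{\widehat S^{(q_n)}_{n,\lambda}>\widehat c^*_{n,\gamma}\}\to\gamma$ in all three regimes.
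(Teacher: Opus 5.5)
Your proposal is correct and follows the paper's proof essentially step for step. You verify $\vartheta_{q_n,\lambda}\asymp L_n^{1-2r}\to0$ and $q_ns_\lambda\to\infty$ for truncation, reduce the metric condition to $\epsilon_{n,q_n,\lambda,\delta}\,s_\lambda^{-1/2}\to0$, do the same worst-case exponent calculation at $s_\lambda\asymp n^{-1/(2r)}$ to obtain $\alpha>1/\{4(r-1)\}$, and conclude size via Theorem~\ref{thm:feasible-finite-basis} and Corollary~\ref{cor:bootstrap-optimal-scale}. The only differences are that you take the ratio bound directly from the proof of Lemma~\ref{lem:bootstrap-covariance-consistency} rather than splitting by regime, and that you make explicit checks the paper leaves implicit: the bootstrap rate condition, $\delta\to0$, and monotonicity in $s_\lambda$.
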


\begin{proof}
By \eqref{eq:explicit-optimal-q},
\[
\vartheta_{q_n,\lambda}=q_n^{1-2r}s_\lambda^{-2r}\asymp L_n^{1-2r}\to0.
\]
Hence \eqref{eq:growing-q-condition} holds whenever the trajectory component is first order because $a_{n,\lambda}$ is bounded away from zero. In the bounded sparse regime, $a_{n,\lambda}\asymp m_H^{-1}s_\lambda^{-1/2}\to\infty$, so \eqref{eq:sparse-q-condition} also holds. Moreover,
\[
q_ns_\lambda\asymp L_n s_\lambda^{-1/(2r-1)}\to\infty,
\]
so the spline space resolves the effective testing frequencies.

The trace bounds from the sparse and growing cluster theories imply, uniformly across the three regimes,
\begin{equation}
\frac{\mu_{n,\lambda}^{(q_n)}+a_{n,\lambda}^{(q_n)}}{a_{n,\lambda}^{(q_n)}}\leq Cs_\lambda^{-1/2}.
\label{eq:mu-over-a-universal}
\end{equation}
Indeed, when $m_H^2s_\lambda$ is bounded above, the diagonal component controls both the centering and the quadratic scale, giving a ratio of order $s_\lambda^{-1/2}$. When $m_H^2s_\lambda$ is bounded away from zero, $a_{n,\lambda}$ is bounded away from zero and
$m_H^{-1}s_\lambda^{-1}\leq Cs_\lambda^{-1/2}$; the off diagonal trace is also $O(s_\lambda^{-1/2})$. Corollary~\ref{cor:spline-truncation-scale} transfers these orders to the projected quantities.

For a Lipschitz score, $L_\delta=O(1)$, and the stated choice of $\delta$ gives
\[
\epsilon_{n,q_n,\lambda,\delta}
=O\left[\{\log(2q_n)/(ns_\lambda)\}^{1/2}\right].
\]
Combining this with \eqref{eq:mu-over-a-universal}, the left hand side of \eqref{eq:metric-estimation-testing-condition} is bounded by
\[
C\left\{\frac{\log(2q_n)}{ns_\lambda^2}\right\}^{1/2}.
\]
At the optimal smoothing scale, $s_\lambda\gtrsim n^{-1/(2r)}$, so
\[
ns_\lambda^2\gtrsim n^{1-1/r}\to\infty.
\]
Since $\log q_n=O(\log n)$, the logarithmic factor is negligible for $r\geq2$. Thus \eqref{eq:metric-estimation-testing-condition} holds for Lipschitz scores.

For a bounded jump score, $L_\delta=O(\delta^{-1})$. Substituting \eqref{eq:explicit-jump-delta} into the definition of $\epsilon_{n,q,\lambda,\delta}$ gives
\[
\epsilon_{n,q_n,\lambda,\delta}
=O\left[\left\{\frac{\log(2q_n)}{ns_\lambda}\right\}^{\alpha/(2\alpha+1)}\right].
\]
After multiplication by the factor $s_\lambda^{-1/2}$ in \eqref{eq:mu-over-a-universal}, the largest order occurs at $s_\lambda\asymp n^{-1/(2r)}$. Ignoring the logarithmic factor, which is negligible under a strict polynomial inequality, convergence to zero is equivalent to
\[
\frac{\alpha}{2\alpha+1}\left(1-\frac1{2r}\right)>\frac1{4r}.
\]
This reduces to \eqref{eq:jump-holder-threshold}. Hence \eqref{eq:metric-estimation-testing-condition} also holds for the bounded jump score class.

Theorem~\ref{thm:feasible-finite-basis} and Corollary~\ref{cor:bootstrap-optimal-scale} now give asymptotic size $\gamma$ throughout the sparse, transition, and dense regimes.
\end{proof}

\begin{remark}
For absolute loss, the score is bounded and discontinuous, so $L_\delta=O(\delta^{-1})$. If the conditional residual density is symmetric about zero and locally Lipschitz there, then $\alpha=1$. Since
\(1>\frac{1}{4(r-1)}\)
for every $r\geq2$, condition \eqref{eq:jump-holder-threshold} is automatically satisfied. Thus the fully feasible calibration above applies to absolute loss throughout the sparse to dense transition.
\end{remark}

\subsection{Power of the fully feasible statistic}
\label{subsec:feasible-power}

Under an alternative $g=g_0+h$, the residual entering the curvature estimator is $e_{ij}^0=e_{ij}+h(T_{ij})$. Hence the expectation of $w_\delta(e_{ij}^0)$ is a shifted secant of the population score map rather than $I(t)$.

Recall the Sobolev ball $\mathcal B_R$ and let
\(M_R=\sup_{h\in\mathcal B_R}\|h\|_\infty<\infty.\)

\begin{assumption}\label{ass:shifted-secant}
There exist constants $0<c_{\mathrm{sc}}\leq C_{\mathrm{sc}}<\infty$ and $\delta_0>0$ such that, uniformly in $t\in[0,1]$, $|a|\leq M_R$, and $0<\delta\leq\delta_0$,
\begin{equation}
c_{\mathrm{sc}}I(t)\leq I_{\delta,t}(a):=\frac{\Psi_t(a-\delta)-\Psi_t(a+\delta)}{2\delta}\leq C_{\mathrm{sc}}I(t).
\label{eq:shifted-secant-comparability}
\end{equation}
\end{assumption}

Since $a\mapsto\Psi_t(a)$ is nonincreasing, $I_{\delta,t}(a)\geq0$. Assumption~\ref{ass:shifted-secant} strengthens the local identification condition in Assumption~\ref{ass:loss-primitive} only over the compact range of shifts generated by $\mathcal B_R$. It is used to obtain uniform power over the full Sobolev ball. For shrinking alternatives, a weaker assumption may suffice, but the local expansion of $\Psi_t$ alone does not control secants at an independently shrinking bandwidth. We retain Assumption~\ref{ass:shifted-secant} in the feasible power result.

For $h\in\mathcal B_R$, define
\begin{equation*}
A_{q,\delta}(h)=\int_0^1 I_{\delta,t}\{-h(t)\}B(t)B(t)^{\prime}\pi(t)\,dt
\end{equation*}
and
\(H_{q,\lambda,\delta}(h)=A_{q,\delta}(h)+\lambda P_q.\)
The argument $-h(t)$ follows from $e_{ij}^0=e_{ij}+h(T_{ij})$ and the definition $\Psi_t(a)=E\{\psi(e_{ij}-a)\mid T_{ij}=t\}$.

\begin{lemma}\label{lem:shifted-metric-concentration}
Suppose Assumptions~\ref{ass:loss-primitive} and~\ref{ass:shifted-secant} and \eqref{eq:curvature-envelope} hold. Uniformly over $h\in\mathcal B_R$,
\begin{equation}
c_HH_{q,\lambda}\preceq H_{q,\lambda,\delta}(h)\preceq C_HH_{q,\lambda}.
\label{eq:shifted-H-comparison}
\end{equation}
Let
\[
\zeta_{n,q,\lambda,\delta}
=\left\{\frac{L_\delta\log(2q)}{ns_\lambda}\right\}^{1/2}
+\frac{L_\delta\log(2q)}{ns_\lambda}.
\]
Then, uniformly over $h\in\mathcal B_R$,
\begin{equation}
\left\|H_{q,\lambda,\delta}(h)^{-1/2}
\{\widehat H_{q,\lambda}-H_{q,\lambda,\delta}(h)\}
H_{q,\lambda,\delta}(h)^{-1/2}\right\|_{\mathrm{op}}
=O_{P_h}(\zeta_{n,q,\lambda,\delta}).
\label{eq:shifted-H-concentration}
\end{equation}
\end{lemma}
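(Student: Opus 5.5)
The plan has two parts. The deterministic comparison \eqref{eq:shifted-H-comparison} comes from Assumption~\ref{ass:shifted-secant}. The concentration bound \eqref{eq:shifted-H-concentration} repeats the matrix Bernstein argument of Theorem~\ref{thm:curvature-matrix-rate} under $P_h$, with the shifted metric $H_{q,\lambda,\delta}(h)$ in place of $H_{q,\lambda}$. Because the ball $\mathcal B_R$ is fixed, every constant is uniform in $h$.

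For \eqref{eq:shifted-H-comparison}, fix $\theta\in\mathbb R^q$ and put $f=B^{\prime}\theta\in\mathcal S_q$. Then
\[
\theta^{\prime}A_{q,\delta}(h)\theta=\int_0^1 I_{\delta,t}\{-h(t)\}f(t)^2\pi(t)\,dt .
\]
Since $|h(t)|\leq M_R$, Assumption~\ref{ass:shifted-secant} places this integral between $c_{\mathrm{sc}}V(f,f)=c_{\mathrm{sc}}\theta^{\prime}A_q\theta$ and $C_{\mathrm{sc}}\theta^{\prime}A_q\theta$. Adding $\lambda\theta^{\prime}P_q\theta$ to each term gives the claim with $c_H=\min(c_{\mathrm{sc}},1)$ and $C_H=\max(C_{\mathrm{sc}},1)$.

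For \eqref{eq:shifted-H-concentration}, set $x_{q,h}(t)=H_{q,\lambda,\delta}(h)^{-1/2}B(t)$. By \eqref{eq:shifted-H-comparison} and the bound on $B^{\prime}H_{q,\lambda}^{-1}B$ in Theorem~\ref{thm:curvature-matrix-rate},
\[
\|x_{q,h}(t)\|^2\leq c_H^{-1}B(t)^{\prime}H_{q,\lambda}^{-1}B(t)\leq Cs_\lambda^{-1},
\]
uniformly in $t$ and $h$. Define
\[
Y_i^{(h)}=m_i^{-1}\sum_j w_\delta(e^0_{ij})\,x_{q,h}(T_{ij})x_{q,h}(T_{ij})^{\prime}.
\]
The next step is to identify the mean exactly. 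Under $P_h$ we have $e^0_{ij}=e_{ij}+h(T_{ij})$, and $e_{ij}$ has the null conditional law, since alternatives leave the residual law unchanged. Hence
\[
E\{w_\delta(e^0_{ij})\mid T_{ij}=t\}=\frac{\Psi_t\{-h(t)-\delta\}-\Psi_t\{-h(t)+\delta\}}{2\delta}=I_{\delta,t}\{-h(t)\}.
\]
Therefore
\[
E Y_i^{(h)}=H_{q,\lambda,\delta}(h)^{-1/2}A_{q,\delta}(h)H_{q,\lambda,\delta}(h)^{-1/2}\preceq I_q .
\]
There is no bias term, which is why $\zeta$ lacks the $\delta^\alpha$ of $\epsilon$. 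Moreover the penalty parts of $\widehat H_{q,\lambda}$ and $H_{q,\lambda,\delta}(h)$ cancel, so the target matrix equals $n^{-1}\sum_i\{Y_i^{(h)}-EY_i^{(h)}\}$.

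By \eqref{eq:curvature-envelope}, $0\preceq Y_i^{(h)}\preceq RI_q$ with $R=CL_\delta s_\lambda^{-1}$. Then $E(Y_i^{(h)})^2\preceq R\,EY_i^{(h)}\preceq RI_q$, so the variance proxy is at most $nR$. The matrix Bernstein inequality of \citet{tropp2012} then gives the tail bound
\[
P_h\Bigl(\Bigl\|n^{-1}\textstyle\sum_i\{Y_i^{(h)}-EY_i^{(h)}\}\Bigr\|_{\mathrm{op}}\geq u\Bigr)\leq 2q\exp\Bigl\{-\frac{nu^2/2}{R+Ru/3}\Bigr\}.
\]
Taking $u=K\zeta_{n,q,\lambda,\delta}$ with $K$ large makes the right side at most $2q\cdot(2q)^{-cK}$, which is arbitrarily small.

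The main point to check is uniformity over $h\in\mathcal B_R$. I would obtain it by observing that every constant in the tail bound is free of $h$: $R$, the variance proxy and the dimension $q$ depend on $h$ only through $M_R$, $c_H$, $C_H$ and the envelope. Therefore $\sup_{h\in\mathcal B_R}P_h(\cdot\geq K\zeta)\to0$ as $K\to\infty$, which is the meaning of the uniform $O_{P_h}$ statement. One further point must hold: the independence structure across subjects and the design independence are the same under $P_h$ as under the null, which is guaranteed by the way alternatives are specified.
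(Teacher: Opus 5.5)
Your proposal is correct and follows essentially the same route as the paper: you obtain \eqref{eq:shifted-H-comparison} from Assumption~\ref{ass:shifted-secant} after adding $\lambda P_q$, then rerun the matrix Bernstein argument of Theorem~\ref{thm:curvature-matrix-rate} in the shifted metric, where the shifted secant identifies $E_hY_i(h)$ exactly, so no $\delta^\alpha$ bias term appears. Your explicit tail bound $(2q)^{1-cK}$ simply makes precise the paper's remark that all constants are independent of $h\in\mathcal B_R$.
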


\begin{proof}
For every $u\in\mathbb R^q$, Assumption~\ref{ass:shifted-secant} gives
\[
c_{\mathrm{sc}}u^{\prime}A_qu\leq u^{\prime}A_{q,\delta}(h)u\leq C_{\mathrm{sc}}u^{\prime}A_qu.
\]
Adding $\lambda u^{\prime}P_qu$ to all three terms proves \eqref{eq:shifted-H-comparison} with
$c_H=\min(c_{\mathrm{sc}},1)$ and $C_H=\max(C_{\mathrm{sc}},1)$.

Under $P_h$, conditional on $T_{ij}=t$,
\[
E_h\{w_\delta(e_{ij}^0)\mid T_{ij}=t\}=I_{\delta,t}\{-h(t)\}.
\]
Set $x_{q,h}(t)=H_{q,\lambda,\delta}(h)^{-1/2}B(t)$. By \eqref{eq:shifted-H-comparison} and the evaluation inequality,
\[
\|x_{q,h}(t)\|^2\leq C B(t)^{\prime}H_{q,\lambda}^{-1}B(t)\leq Cs_\lambda^{-1}
\]
uniformly over $h\in\mathcal B_R$.

Define
\[
Y_i(h)=\frac1{m_i}\sum_{j=1}^{m_i}w_\delta(e_{ij}^0)x_{q,h}(T_{ij})x_{q,h}(T_{ij})^{\prime}.
\]
Then $0\preceq Y_i(h)\preceq CL_\delta s_\lambda^{-1}I_q$ uniformly in $h$. Moreover,
\[
E_hY_i(h)=H_{q,\lambda,\delta}(h)^{-1/2}A_{q,\delta}(h)H_{q,\lambda,\delta}(h)^{-1/2},
\]
so $\|E_hY_i(h)\|_{\mathrm{op}}\leq1$. Since the penalty term is identical in $\widehat H_{q,\lambda}$ and $H_{q,\lambda,\delta}(h)$, it cancels from their difference. Therefore the same self adjoint matrix Bernstein argument as in Theorem~\ref{thm:curvature-matrix-rate} gives
\[
\left\|\frac1n\sum_{i=1}^n\{Y_i(h)-E_hY_i(h)\}\right\|_{\mathrm{op}}
=O_{P_h}(\zeta_{n,q,\lambda,\delta})
\]
with constants independent of $h\in\mathcal B_R$. This is exactly \eqref{eq:shifted-H-concentration}.
\end{proof}

\begin{lemma}\label{lem:stable-spline-direction}
Let $\mathcal S_q$ be a quasi-uniform B-spline space of fixed degree $p\geq2r-1$. There exists a linear spline projector $Q_q:W^{r,2}[0,1]\to\mathcal S_q$ such that
\begin{equation}
\|Q_qh-h\|_{L^2}\leq Cq^{-r}\|h^{(r)}\|_{L^2}
\label{eq:spline-direction-L2}
\end{equation}
and
\begin{equation}
J(Q_qh,Q_qh)\leq CJ(h,h).
\label{eq:spline-direction-stability}
\end{equation}
Uniformly over $h\in\mathcal B_R$,
\begin{equation}
V(Q_qh-h,Q_qh-h)^{1/2}\leq C_Rq^{-r}.
\label{eq:spline-direction-V}
\end{equation}
\end{lemma}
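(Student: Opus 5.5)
Take $Q_q$ to be a quasi-interpolant that reproduces polynomials of degree $p$ and satisfies $(Q_qh)^{(r)}=\widetilde Q_q h^{(r)}$, where $\widetilde Q_q$ is a quasi-interpolant onto the derived spline space $\mathcal S_q^{(r)}=\{f^{(r)}:f\in\mathcal S_q\}$. This space consists of splines of degree $p-r\geq r-1\geq1$ on the same partition. The construction follows the commuting-diagram idea from \citet{schumaker2007}. It avoids a separate stability estimate for the penalty, since $J$-stability then reduces to $L^2$-boundedness of $\widetilde Q_q$.

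Concretely, pick $r$ fixed linear functionals $\ell_1,\ldots,\ell_r$ that are unisolvent on $\mathcal P_{r-1}$, for example integrals against fixed polynomials. Write any $h\in W^{r,2}$ as $h=P_h+I^r h^{(r)}$, where $I^r$ is the $r$-fold integration operator from $0$ adjusted so that $\ell_k(I^r g)=0$, and $P_h\in\mathcal P_{r-1}$ matches the $\ell_k$. Then define
\[
Q_qh=P_h+I^r\widetilde Q_q h^{(r)}.
\]
Two facts make this well defined. First, $I^r$ maps $\mathcal S_q^{(r)}$ into $\mathcal S_q$, because $\mathcal S_q$ has degree $p$, contains $\mathcal P_{r-1}$, and is closed under antiderivatives of its $r$th derivatives when it has maximal smoothness. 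For reduced smoothness one chooses the derived space so that it matches. Second, $\mathcal P_{r-1}\subset\mathcal S_q$. Linearity of $Q_q$ is immediate.

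For $\widetilde Q_q$, take a de Boor--Fix type $L^2$-stable quasi-interpolant on the quasi-uniform mesh. It has two standard properties: $\|\widetilde Q_q g\|_{L^2}\leq C\|g\|_{L^2}$, and, because it reproduces constants, $\|g-\widetilde Q_q g\|_{H^{-r}}$-type duality bounds hold. The first property gives \eqref{eq:spline-direction-stability} directly, since $J(Q_qh,Q_qh)=\|\widetilde Q_q h^{(r)}\|_{L^2}^2$.

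The main obstacle is \eqref{eq:spline-direction-L2}. We have $Q_qh-h=I^r(\widetilde Q_q-I)h^{(r)}$, and naive $L^2$-boundedness of $I^r$ gives only $O(\|h^{(r)}\|)$ with no decay. The $q^{-r}$ factor must come from an Aubin--Nitsche/duality step. One would try the following: for any $\phi\in L^2$, the adjoint $(I^r)^*\phi$ lies in $W^{r,2}$ with norm $\leq C\|\phi\|$. If $\widetilde Q_q$ is chosen as the $L^2$-orthogonal projection onto $\mathcal S_q^{(r)}$ (also $L^2$-stable), then
\[
\langle (\widetilde Q_q-I)g,(I^r)^*\phi\rangle=\langle (\widetilde Q_q-I)g,(I-\widetilde Q_q)(I^r)^*\phi\rangle\leq\|g\|\,Cq^{-r}\|\phi\|,
\]
using Jackson's estimate for degree $p-r\geq r-1$ splines applied to a $W^{r,2}$ function. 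This needs $p-r+1\geq r$, which holds since $p\geq2r-1$. Hence $\|Q_qh-h\|_{L^2}\leq Cq^{-r}\|h^{(r)}\|_{L^2}$. The boundary normalization of $I^r$ via the $\ell_k$ only adds finite-rank terms, which must be checked to be compatible: the adjusted $(I^r)^*\phi$ remains in $W^{r,2}$ with controlled norm.

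Finally, \eqref{eq:spline-direction-V} follows because $V(f,f)\leq C_IC_\pi\|f\|_{L^2}^2$ by Assumption~\ref{ass:loss-primitive}(iii) and the bounds on $\pi$, and $J(h,h)\leq R^2$ on $\mathcal B_R$. Substituting into \eqref{eq:spline-direction-L2} gives $C_R=CR$.
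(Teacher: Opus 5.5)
Your proof is correct and is essentially the paper's argument. Once $\widetilde Q_q$ is the $L^2$-orthogonal projection onto $\{f^{(r)}:f\in\mathcal S_q\}$ and the $\ell_k$ are moments against a basis of $\mathcal P_{r-1}$, your $Q_q$ satisfies $J(Q_qh,v)=J(h,v)$ for all $v\in\mathcal S_q$ together with the polynomial moment conditions. It is therefore exactly the paper's energy projection, and your duality step is the paper's Aubin--Nitsche argument with $(I^r)^*\phi$ in the role of $z^{(r)}$; this lets you skip the $H^{2r}$ regularity of the natural boundary value problem.

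The preliminary de Boor--Fix variant is not justified. Reproducing constants does not give an $H^{-r}$ error of order $q^{-r}$; that would require the adjoint to reproduce $\mathcal P_{r-1}$. What your argument actually uses is the self-adjointness of the orthogonal projection on a spline space of degree $p-r\geq r-1$.
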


\begin{proof}
Define $Q_qh$ by $J(Q_qh,v)=J(h,v)$ for every $v\in\mathcal S_q$, with the additional moment conditions
\[
\int_0^1(Q_qh-h)p\,dt=0,\qquad p\in\mathcal P_{r-1}.
\]
The polynomial moments determine the null-space component, and coercivity of $J$ on its orthogonal complement gives a unique linear projector. Taking $v=Q_qh$ and applying Cauchy--Schwarz proves \eqref{eq:spline-direction-stability}, with constant one.

Put $e=h-Q_qh$. Since $e$ is $L^2$-orthogonal to $\mathcal P_{r-1}$, the weak problem $J(z,v)=\int_0^1ev\,dt$ has a solution in the same orthogonal complement. The one-dimensional natural boundary value problem gives $\|z\|_{H^{2r}}\leq C\|e\|_{L^2}$. The spline approximation bound in Corollary~2 of \citet{sandeetal2020} supplies $z_q\in\mathcal S_q$ with $\|(z-z_q)^{(r)}\|_{L^2}\leq Cq^{-r}\|z\|_{H^{2r}}$. Galerkin orthogonality then yields
\[
\|e\|_{L^2}^2=J(e,z-z_q)
\leq Cq^{-r}\|e^{(r)}\|_{L^2}\|e\|_{L^2}.
\]
The derivative orthogonality also gives $\|e^{(r)}\|_{L^2}\leq\|h^{(r)}\|_{L^2}$. This proves \eqref{eq:spline-direction-L2}. The upper bound on $I\pi$ and $J(h,h)\leq R^2$ give \eqref{eq:spline-direction-V}.
\end{proof}

For $h\in\mathcal B_R$, let
\[
b_q(h)=E_h(s_i)=\int_0^1\Psi_t\{-h(t)\}B(t)\pi(t)\,dt.
\]
Although the cluster sizes may vary, this expectation is common to all subjects because each subject score is an average over identically distributed design points.

\begin{lemma}\label{lem:feasible-signal-lower}
Under Assumption~\ref{ass:shifted-secant}, suppose $h\in\mathcal B_R$ satisfies
\begin{equation}
V(h,h)^{1/2}\geq\rho_n,\qquad q^{-r}=o(\rho_n).
\label{eq:finite-signal-resolution}
\end{equation}
Then, uniformly over such sequences,
\begin{equation}
b_q(h)^{\prime}H_{q,\lambda,\delta}(h)^{-1}b_q(h)
\geq c_R\frac{V(h,h)^2}{V(h,h)+\lambda R^2}\{1+o(1)\}.
\label{eq:finite-shifted-signal-lower}
\end{equation}
\end{lemma}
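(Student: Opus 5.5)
The plan is a finite-dimensional version of the global identification argument in Lemma~\ref{lem:global-identification}, using the stable spline direction $Q_qh$ from Lemma~\ref{lem:stable-spline-direction} as the test vector. The quadratic form is a supremum:
\[
b_q(h)^{\prime}H_{q,\lambda,\delta}(h)^{-1}b_q(h)
=\sup_{\theta\neq0}\frac{\{b_q(h)^{\prime}\theta\}^2}{\theta^{\prime}H_{q,\lambda,\delta}(h)\theta}.
\]
By \eqref{eq:shifted-H-comparison}, the denominator is at most $C_H\,\theta^{\prime}H_{q,\lambda}\theta=C_H\|v\|_\lambda^2$, where $v=B^{\prime}\theta$. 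It therefore suffices to bound $\{b_q(h)^{\prime}\theta\}^2/\|v\|_\lambda^2$ from below for one well-chosen $\theta$. By definition,
\[
b_q(h)^{\prime}\theta=\int_0^1\Psi_t\{-h(t)\}v(t)\pi(t)\,dt.
\]

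I would take $\theta$ so that $v=Q_qh$. Split the numerator as
\[
\int\Psi_t(-h)h\pi\,dt+\int\Psi_t(-h)(Q_qh-h)\pi\,dt.
\]
The first term is at least $c_RV(h,h)$ by \eqref{eq:uniform-score-identification}. For the second term, first bound $|\Psi_t(-a)|$. On $|a|\leq M_R$ one has $|\Psi_t(-a)|\leq C_R|a|$: for small $|a|$ this follows from the linearization \eqref{eq:score-linearization}, and for larger $|a|$ from the moment bound, since $|a|$ is then bounded below. Cauchy--Schwarz and \eqref{eq:spline-direction-V} then bound the second term by
\[
C_RV(h,h)^{1/2}q^{-r}=o\{V(h,h)\}
\]
uniformly, because $q^{-r}=o(\rho_n)$ and $\rho_n\leq V(h,h)^{1/2}$. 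Hence the numerator is at least $c_RV(h,h)\{1+o(1)\}$.

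For the denominator, write
\[
\|Q_qh\|_\lambda^2=V(Q_qh,Q_qh)+\lambda J(Q_qh,Q_qh).
\]
By \eqref{eq:spline-direction-V} and the triangle inequality, $V(Q_qh,Q_qh)^{1/2}\leq V(h,h)^{1/2}+C_Rq^{-r}=V(h,h)^{1/2}\{1+o(1)\}$. By \eqref{eq:spline-direction-stability}, $\lambda J(Q_qh,Q_qh)\leq C\lambda R^2$. Combining the two bounds gives
\[
b_q(h)^{\prime}H_{q,\lambda,\delta}(h)^{-1}b_q(h)\geq \frac{c_R^2V(h,h)^2\{1+o(1)\}}{C_H\{V(h,h)(1+o(1))+C\lambda R^2\}}.
\]
Absorbing constants, and noting $V+C\lambda R^2\leq\max(1,C)(V+\lambda R^2)$, yields \eqref{eq:finite-shifted-signal-lower}.

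The main obstacle is keeping every $o(1)$ uniform over the class $\{h\in\mathcal B_R:V(h,h)^{1/2}\geq\rho_n\}$. This requires three ingredients: the linear-growth bound $|\Psi_t(-a)|\leq C_R|a|$ holding uniformly in $t$ on the compact shift range $[-M_R,M_R]$; the constant in \eqref{eq:spline-direction-V} depending only on $R$; and the relative error $q^{-r}/V(h,h)^{1/2}\leq q^{-r}/\rho_n$ tending to zero independently of $h$. The shifted secant condition enters only through the metric comparison \eqref{eq:shifted-H-comparison}, so $\delta$ plays no further role.
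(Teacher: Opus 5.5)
Your proposal is correct and follows the paper's proof essentially step for step: the variational formula with the test vector $\theta_q$ representing $Q_qh$, global identification for the main numerator term, a linear growth bound on $\Psi_t$ with Cauchy--Schwarz and \eqref{eq:spline-direction-V} for the projection error, and \eqref{eq:shifted-H-comparison} with spline stability for the denominator. The only difference is how you get $|\Psi_t(a)|\leq C|a|$ on $[-M_R,M_R]$. The paper derives it from the shifted secant condition by a midpoint-secant and partition argument; you derive it from the local linearization \eqref{eq:score-linearization} together with the shifted moment bound in Assumption~\ref{ass:loss-primitive}(ii). Both are valid, and your route shows that the secant condition is needed here only for the metric comparison.
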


\begin{proof}
Assumption~\ref{ass:shifted-secant} implies a uniform Lipschitz bound for $\Psi_t$ on $[-M_R,M_R]$. Indeed, if $x<y$ and $y-x\leq2\delta_0$, take $a=(x+y)/2$ and $\delta=(y-x)/2$ in \eqref{eq:shifted-secant-comparability}. Then
\[
0\leq\Psi_t(x)-\Psi_t(y)\leq C I(t)(y-x).
\]
Partitioning longer intervals and using $\Psi_t(0)=0$ gives
\begin{equation}
|\Psi_t(a)|\leq CI(t)|a|,\qquad |a|\leq M_R.
\label{eq:Psi-global-linear-bound}
\end{equation}

Let $v_q=Q_qh$ and write $v_q(t)=B(t)^{\prime}\theta_q$. By Lemma~\ref{lem:global-identification},
\[
\int_0^1\Psi_t\{-h(t)\}h(t)\pi(t)\,dt\geq c_RV(h,h).
\]
Using \eqref{eq:Psi-global-linear-bound}, Cauchy--Schwarz in the $V$ inner product, and Lemma~\ref{lem:stable-spline-direction},
\[
\left|\int_0^1\Psi_t\{-h(t)\}\{v_q(t)-h(t)\}\pi(t)\,dt\right|
\leq C_RV(h,h)^{1/2}q^{-r}=o\{V(h,h)\},
\]
where the last relation follows from \eqref{eq:finite-signal-resolution}. Hence
\begin{equation}
b_q(h)^{\prime}\theta_q\geq\{c_R+o(1)\}V(h,h).
\label{eq:finite-direction-numerator}
\end{equation}

By \eqref{eq:shifted-H-comparison}, Lemma~\ref{lem:stable-spline-direction}, and \eqref{eq:spline-direction-V},
\[
\theta_q^{\prime}H_{q,\lambda,\delta}(h)\theta_q
\leq C\{V(v_q,v_q)+\lambda J(v_q,v_q)\}
\leq C_R\{V(h,h)+q^{-2r}+\lambda R^2\}.
\]
Since $q^{-r}=o(\rho_n)$ and $V(h,h)^{1/2}\geq\rho_n$,
\(q^{-2r}=o\{V(h,h)\}.\)
Therefore
\[
\theta_q^{\prime}H_{q,\lambda,\delta}(h)\theta_q
\leq C_R\{V(h,h)+\lambda R^2\}\{1+o(1)\}.
\]
Finally, using
\[
b^{\prime}H^{-1}b=\sup_{\theta\neq0}\frac{(b^{\prime}\theta)^2}{\theta^{\prime}H\theta}
\]
with $\theta=\theta_q$ and applying \eqref{eq:finite-direction-numerator} proves \eqref{eq:finite-shifted-signal-lower}.
\end{proof}

Let $\widehat c_{n,\gamma}^{*(q)}$ denote the conditional $(1-\gamma)$ quantile of the feasible multiplier statistic $\widehat S_{n,\lambda}^{*(q)}$, computed with the same $\widehat H_{q,\lambda}$ as the observed statistic.

\begin{theorem}\label{thm:fully-feasible-power}
Let $r\geq2$ and suppose the loss and sampling assumptions of Corollary~\ref{cor:fully-feasible-basis} and Assumption~\ref{ass:shifted-secant} hold. Choose $s_\lambda\asymp s_n^*$ and $q_n,\delta_n$ as in Corollary~\ref{cor:fully-feasible-basis}; for bounded jump scores also suppose \eqref{eq:jump-holder-threshold} holds. Let $M_n\to\infty$ satisfy $M_n\rho_n^*\to0$. Then
\begin{equation}
\inf_{\substack{h\in\mathcal B_R:\\ V(h,h)^{1/2}\geq M_n\rho_n^*}}
P_h\{\widehat S_{n,\lambda}^{(q_n)}>\widehat c_{n,\gamma}^{*(q_n)}\}\to1.
\label{eq:fully-feasible-uniform-power}
\end{equation}
Together with Corollary~\ref{cor:fully-feasible-basis}, the fully feasible test has asymptotic level $\gamma$ and consistency separation rate $\rho_n^*$. This rate matches the lower bound over model classes containing the Gaussian submodel of Theorem~\ref{thm:unified-minimax-lower}.
\end{theorem}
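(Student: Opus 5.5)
We have to show that the feasible statistic eventually exceeds its own bootstrap quantile, uniformly over the separation set. The plan is to bound both sides in the metric $H_h:=H_{q,\lambda,\delta}(h)$ rather than $H_{q,\lambda}$. We reduce to an upper bound on $\widehat c_{n,\gamma}^{*(q)}$ and a lower bound on $\widehat S_{n,\lambda}^{(q)}$, each valid with $P_h$-probability tending to one uniformly over $h\in\mathcal B_R$ with $V(h,h)^{1/2}\geq M_n\rho_n^*$. The guiding fact is that $q_n^{-r}=o(\rho_n^*)$, which follows from \eqref{eq:explicit-optimal-q}. Indeed, $q_n^{-r}\asymp L_n^{-r}s_\lambda^{2r^2/(2r-1)}$ and $s_\lambda^{r}\asymp\rho_n^*$, so Lemma~\ref{lem:feasible-signal-lower} applies with $\rho_n=M_n\rho_n^*$.

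\emph{Step 1 (metric).} By Lemma~\ref{lem:shifted-metric-concentration}, $\widehat H_{q,\lambda}^{-1}=H_h^{-1/2}(I+E_h)^{-1}H_h^{-1/2}$ with $\|E_h\|_{\mathrm{op}}=O_{P_h}(\zeta)$. The bound $\zeta\leq\epsilon_{n,q,\lambda,\delta}\to0$ holds by the choices in Corollary~\ref{cor:fully-feasible-basis}. Combining this with \eqref{eq:shifted-H-comparison}, on an event of probability tending to one both
\[
c\,H_{q,\lambda}^{-1}\preceq\widehat H_{q,\lambda}^{-1}\preceq C\,H_{q,\lambda}^{-1}
\qquad\text{and}\qquad
\widehat H_{q,\lambda}^{-1}\succeq\tfrac12 H_h^{-1}
\]
hold.

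\emph{Step 2 (observed statistic).} Write $\bar s=b_q(h)+(\bar s-b_q(h))$. Then
\[
\widehat S_{n,\lambda}^{(q)}\geq\tfrac12\,n\,\bar s'H_h^{-1}\bar s\geq\tfrac14\,n\,b_q'H_h^{-1}b_q-\tfrac12\,n\,\|H_h^{-1/2}(\bar s-b_q)\|^2 .
\]
Lemma~\ref{lem:feasible-signal-lower}, together with $V(h,h)/\lambda\to\infty$ (as in Theorem~\ref{thm:unified-upper-rate}), gives a signal term of at least $c_R\,n\,V(h,h)$. The noise term has expectation equal to the trace of the projected alternative covariance in the $H_h$ metric. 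By \eqref{eq:shifted-H-comparison} this is at most $C\operatorname{tr}(\Omega^{(h)}_{n,\lambda})$. Lemmas~\ref{lem:alternative-covariance} and \ref{lem:dense-covariance-perturbation}, with shifted moments uniform on $\mathcal B_R$, bound it by $C\{m_H^{-1}s_\lambda^{-1}+s_\lambda^{-1/2}\}$. However, this first-moment bound is too crude: it exceeds $nV(h,h)\asymp n M_n^2 s_\lambda^{2r}$ near the boundary in the sparse regime.

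I would instead use the sharper decomposition \eqref{eq:alternative-statistic-decomposition}, applied to the vector $Y=\sqrt n H_h^{-1/2}\bar s$. This gives the signal plus a linear term with variance $\leq Cn\,b_q'H_h^{-1}b_q$, plus a quadratic noise term equal to its trace $\mu^{(h)}$ plus $O_p(v_{n,\lambda})$. The quadratic noise is nonnegative and is absorbed at the quantile stage.

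\emph{Step 3 (bootstrap quantile).} Conditionally, $\widehat S^{*(q)}$ is a Gaussian quadratic form. Its covariance $\widehat C_q$ is the centered empirical covariance of the shifted scores, so the mean shift $b_q$ cancels. Chebyshev's inequality gives
\[
\widehat c^{*}\leq \operatorname{tr}(\widehat M)+\gamma^{-1/2}\{2\operatorname{tr}(\widehat M^2)\}^{1/2}.
\]
The argument of Lemma~\ref{lem:bootstrap-covariance-consistency} carries over under $P_h$, since the fourth moments are uniform on $\mathcal B_R$ and $\chi_{n,\lambda}\to0$. It shows that $H_h^{-1/2}\widehat C_qH_h^{-1/2}$ estimates the population alternative covariance in HS norm to within $o_p(a_{n,\lambda})$. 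Step 1 then converts the metric at relative cost $O_p(\zeta)$.

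Hence $\widehat c^*\leq\mu^{(h)}+O_p(\zeta\,\mu^{(h)})+O_p(a_{n,\lambda})$, where $\mu^{(h)}$ is the same trace appearing in Step 2. The crucial cancellation is that the quadratic noise in $\widehat S$ and the bootstrap centering share the leading term $\mu^{(h)}$. What remains is $\zeta\mu^{(h)}+a_{n,\lambda}$. By \eqref{eq:mu-over-a-universal} and \eqref{eq:metric-estimation-testing-condition}, $\zeta\mu^{(h)}=o(a_{n,\lambda})$, provided the perturbation bounds give $\mu^{(h)}\lesssim\mu_{n,\lambda}+o(a_{n,\lambda})$. This is where $\beta(2r-1)>2$ enters, exactly as in Theorem~\ref{thm:unified-upper-rate}.

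The comparison $a_{n,\lambda}/n\asymp(\rho_n^*)^2=o(M_n^2(\rho_n^*)^2)\lesssim V(h,h)$ then yields rejection. For $V(h,h)$ bounded below, the cruder bounds suffice, since $s_\lambda^{-1}=o(n)$.

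\textbf{Main obstacle.} The main obstacle is Step 3 under the alternative. There the statistic uses $\widehat H$ with the data-dependent metric error $E_h$. One must show that the same $(I+E_h)^{-1}$ perturbation acts on both the observed quadratic noise and the bootstrap centering, so the $\zeta\mu^{(h)}$ error is shared rather than compared against the small scale $a_{n,\lambda}$ separately. Concretely, I would handle this by bounding the observed noise term \emph{below} by zero. Then only an upper bound on $\widehat c^*$ is needed, of the form $(1+C\zeta)\,\mu^{(h)}_{H_h}+O_p(a_{n,\lambda})$. This is compared with the signal $c\,nV(h,h)$, using $\zeta\,\mu^{(h)}=o(a_{n,\lambda})=o(nV(h,h))$ uniformly in $h$.
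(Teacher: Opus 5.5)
Steps~2--3 of your proposal contain the mechanism the paper uses. In the shifted metric, $n\bar s^{\prime}H_h^{-1}\bar s=n\|d_h\|^2+2\sqrt n\langle d_h,Z_h\rangle+\|Z_h\|^2$ with $d_h=H_h^{-1/2}b_q(h)$. The noise satisfies $\|Z_h\|^2=\operatorname{tr}(\Omega_{q,h})+O_{P_h}(a_{n,\lambda})$, and the bootstrap quantile is at most $\operatorname{tr}(\Omega_{q,h})+O_{P_h}(a_{n,\lambda})$. The two traces cancel, and the signal $n\|d_h\|^2\gg a_{n,\lambda}$ decides the comparison.

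However, the way you carry this out destroys that cancellation. The factor in Step~1, $\widehat H_{q,\lambda}^{-1}\succeq\tfrac12H_h^{-1}$, is used in the first display of Step~2. Your concrete resolution of the main obstacle bounds the observed noise term below by zero. Together these discard half or all of $\operatorname{tr}(\Omega_{q,h})$ on the observed side, while the critical value still carries the full trace.

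Near the boundary that trace dominates the signal. With bounded clusters, $\operatorname{tr}(\Omega_{q,h})\asymp s_\lambda^{-1}$. For $V(h,h)^{1/2}=M_n\rho_n^*$, however, the signal is only of order $M_n^2n(\rho_n^*)^2\asymp M_n^2s_\lambda^{-1/2}$, using the linear bound on $\Psi_t$ implied by the shifted secant condition. This is $o(s_\lambda^{-1})$ when $M_n$ grows slowly. So the inequality you would need, $c\,nV(h,h)>(1+C\zeta)\mu^{(h)}+O_p(a_{n,\lambda})$, is false there. This is exactly the failure you diagnosed yourself in Step~2. Your closing sentence only checks $\zeta\mu^{(h)}=o\{nV(h,h)\}$, not $\mu^{(h)}=o\{nV(h,h)\}$.

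The fix is to keep the noise term and use a $1-o_P(1)$ factor instead of $\tfrac12$ or $0$. On the event $\|E_h\|_{\mathrm{op}}\leq1/2$ one has $(1-\|E_h\|_{\mathrm{op}})I\preceq(I+E_h)^{-1}\preceq(1+2\|E_h\|_{\mathrm{op}})I$, so
\[
\widehat S_{n,\lambda}^{(q_n)}\geq(1-\|E_h\|_{\mathrm{op}})\{n\|d_h\|^2+2\sqrt n\langle d_h,Z_h\rangle+\|Z_h\|^2\}.
\]
Because the bootstrap statistic uses the same $\widehat H_{q_n,\lambda}$, its conditional quantile is at most $(1+2\|E_h\|_{\mathrm{op}})\{\operatorname{tr}(\Omega_{q,h})+O_{P_h}(a_{n,\lambda})\}$.

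The traces then cancel up to $C\|E_h\|_{\mathrm{op}}\operatorname{tr}(\Omega_{q,h})$, which is $o_{P_h}(a_{n,\lambda})$. This holds because $\operatorname{tr}(\Omega_{q,h})\leq C\{m_H^{-1}s_\lambda^{-1}+s_\lambda^{-1/2}\}\leq Cs_\lambda^{-1/2}a_{n,\lambda}$ and $\zeta s_\lambda^{-1/2}\to0$ by \eqref{eq:shifted-metric-testing-small}. The linear term is $O_{P_h}\{(n\|d_h\|^2)^{1/2}\}$ and $n\|d_h\|^2/a_{n,\lambda}\to\infty$, which gives rejection uniformly; this is the paper's argument.

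Note also that the trace bound above follows directly from the uniform shifted moment condition on $\mathcal B_R$. You do not need to compare $\mu^{(h)}$ with the null centering through Lemma~\ref{lem:dense-covariance-perturbation}. The condition $\beta(2r-1)>2$ you invoke for that comparison is not a hypothesis of this theorem.
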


\begin{proof}
Write $H_h=H_{q_n,\lambda,\delta_n}(h)$ and define the ideal shifted metric statistic
\(S_{n,h}^{(q_n)}=n\bar s^{\prime}H_h^{-1}\bar s.\)
Lemma~\ref{lem:shifted-metric-concentration} and the choices in Corollary~\ref{cor:fully-feasible-basis} give, uniformly over $h\in\mathcal B_R$,
\[
\|H_h^{-1/2}(\widehat H_{q_n,\lambda}-H_h)H_h^{-1/2}\|_{\mathrm{op}}=o_{P_h}(1).
\]
Moreover, the rate calculation in that corollary gives
\begin{equation}
\zeta_{n,q_n,\lambda,\delta_n}s_\lambda^{-1/2}\to0.
\label{eq:shifted-metric-testing-small}
\end{equation}
Indeed, $\zeta_{n,q,\lambda,\delta}\leq\epsilon_{n,q,\lambda,\delta}$ and the metric condition was established after multiplication by $s_\lambda^{-1/2}$.

Put $E_h=H_h^{-1/2}(\widehat H_{q_n,\lambda}-H_h)H_h^{-1/2}$. Then
\[
\widehat H_{q_n,\lambda}^{-1}=H_h^{-1/2}(I+E_h)^{-1}H_h^{-1/2}
\]
and, with probability tending uniformly to one,
\[
\|(I+E_h)^{-1}-I\|_{\mathrm{op}}\leq C\|E_h\|_{\mathrm{op}}.
\]
Thus replacing $H_h^{-1}$ by $\widehat H_{q_n,\lambda}^{-1}$ produces a relative $o_{P_h}(1)$ perturbation of any corresponding quadratic form. For the centered stochastic quadratic term and the bootstrap critical value, \eqref{eq:shifted-metric-testing-small} makes the perturbation $o_{P_h}(a_{n,\lambda})$. The deterministic signal has a relative $o_{P_h}(1)$ error.

We next lower bound the deterministic signal. By \eqref{eq:explicit-optimal-q},
\[
q_n^{-r}\asymp L_n^{-r}s_\lambda^{2r^2/(2r-1)}=o(s_\lambda^r).
\]
At the optimal smoothing rate, $s_\lambda^r\asymp\rho_n^*$, so $q_n^{-r}=o(\rho_n^*)$. Lemma~\ref{lem:feasible-signal-lower} therefore gives uniformly over the alternative set,
\[
b_q(h)^{\prime}H_h^{-1}b_q(h)
\geq c_R\frac{V(h,h)^2}{V(h,h)+\lambda R^2}\{1+o(1)\}.
\]
Since $\lambda=s_\lambda^{2r}\asymp(\rho_n^*)^2$ and $V(h,h)^{1/2}\geq M_n\rho_n^*$ with $M_n\to\infty$,
\[
b_q(h)^{\prime}H_h^{-1}b_q(h)\geq c_RV(h,h)
\]
for all sufficiently large $n$. Hence
\begin{equation}
n\,b_q(h)^{\prime}H_h^{-1}b_q(h)
\geq c_RM_n^2n(\rho_n^*)^2.
\label{eq:feasible-deterministic-signal}
\end{equation}

Let $d_h=H_h^{-1/2}b_q(h)$ and let $\Omega_{q,h}$ denote the covariance operator of the centered transformed subject score. The shifted moment bound is uniform over $\mathcal B_R$. Under growing clusters, it bounds the shifted pair kernel in $L^2([0,1]^2)$, while the diagonal multiplication operator remains bounded. Conjugation by $R_\lambda$ therefore gives a pair Hilbert--Schmidt norm of order one and a diagonal Hilbert--Schmidt norm of order $s_\lambda^{-1/2}$. The corresponding trace bounds follow by pairing with $R_\lambda^2$. Projection and the metric comparison \eqref{eq:shifted-H-comparison} preserve these bounds. Under bounded clusters the earlier comparison with $A_\lambda$ gives the same bounds, with constants depending on $m_{\max}$. Thus
\[
\operatorname{tr}(\Omega_{q,h})
\leq C\{m_H^{-1}s_\lambda^{-1}+s_\lambda^{-1/2}\},
\]
whereas
\[
\{2\operatorname{tr}(\Omega_{q,h}^2)\}^{1/2}
\leq C\{m_H^{-2}s_\lambda^{-1}+1\}^{1/2}
\asymp Ca_{n,\lambda}.
\]
The fourth moment argument of Lemma~\ref{lem:bootstrap-covariance-consistency} depends only on the uniform shifted moment bound and therefore applies uniformly under $P_h$. Consequently, for the empirical covariance in the shifted metric,
\[
\widehat\mu_{q,h}
=\operatorname{tr}(\Omega_{q,h})+o_{P_h}(a_{n,\lambda})
\]
and
\[
\widehat a_{q,h}=O_{P_h}(a_{n,\lambda})
\]
uniformly over $h\in\mathcal B_R$.

Conditionally on the data, the ideal shifted metric multiplier statistic is a Gaussian quadratic form. Chebyshev's inequality therefore gives
\begin{equation}
c_{q,h,\gamma}^*
\leq\widehat\mu_{q,h}+\gamma^{-1/2}\widehat a_{q,h}
=\operatorname{tr}(\Omega_{q,h})+O_{P_h}(a_{n,\lambda}).
\label{eq:feasible-bootstrap-critical-order}
\end{equation}
By \eqref{eq:shifted-metric-testing-small}, replacing $H_h$ by $\widehat H_{q_n,\lambda}$ changes this bootstrap critical value by $o_{P_h}(a_{n,\lambda})$ on the testing scale.

Finally, write
\[
S_{n,h}^{(q_n)}
=n\|d_h\|^2+2\sqrt n\langle d_h,Z_h\rangle+\|Z_h\|^2,
\]
where $Z_h$ is the centered normalized transformed score sum. Uniformly over $h\in\mathcal B_R$,
\[
\|Z_h\|^2=\operatorname{tr}(\Omega_{q,h})+O_{P_h}(a_{n,\lambda}).
\]
Also, $\|\Omega_{q,h}\|_{\mathrm{op}}\leq C$, so
\[
\operatorname{Var}_h\{2\sqrt n\langle d_h,Z_h\rangle\}\leq Cn\|d_h\|^2.
\]
Thus the linear term is $O_{P_h}\{(n\|d_h\|^2)^{1/2}\}$.

At the optimal smoothing rate,
\[
n(\rho_n^*)^2\asymp a_{n,\lambda}
\]
up to constants in the three sampling regimes. Hence \eqref{eq:feasible-deterministic-signal} and $M_n\to\infty$ imply
\[
\frac{n\|d_h\|^2}{a_{n,\lambda}}\to\infty
\]
and
\[
\frac{(n\|d_h\|^2)^{1/2}}{n\|d_h\|^2}\to0
\]
uniformly over the stated alternative set. Comparing the preceding decomposition with \eqref{eq:feasible-bootstrap-critical-order} shows that the ideal shifted metric rejection probability tends uniformly to one. The metric perturbation is negligible relative to the stochastic scale or, when the signal is larger, relative to the signal itself. Therefore the same conclusion holds for the fully feasible statistic and its feasible bootstrap critical value, proving \eqref{eq:fully-feasible-uniform-power}.
\end{proof}

\begin{proposition}\label{prop:shifted-secant-examples}
Assumption~\ref{ass:shifted-secant} holds for squared loss; for absolute loss if the conditional residual density $f_t$ is uniformly bounded above and away from zero on $|x|\leq M_R+\delta_0$; and for Huber loss if
\[
p_t(u)=P(|e_{ij}-u|<c\mid T_{ij}=t)
\]
is uniformly bounded above and away from zero for $|u|\leq M_R+\delta_0$.
\end{proposition}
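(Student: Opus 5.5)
The plan is to compute the shifted secant $I_{\delta,t}(a)$ in closed form for each loss, in terms of the conditional law of $e_{ij}$ given $T_{ij}=t$. We then compare it with $I(t)$ using the uniform positivity and boundedness of $I$ from Assumption~\ref{ass:loss-primitive}(iii). Throughout, write $F_t$ for the conditional distribution function. We use $\Psi_t(a)=E\{\psi(e-a)\mid T=t\}$, so that
\[
I_{\delta,t}(a)=\frac{1}{2\delta}E\{\psi(e-a+\delta)-\psi(e-a-\delta)\mid T=t\}.
\]

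For squared loss, $\psi(u)=u$ gives $\Psi_t(a)=-a$, using $E(e\mid T=t)=0$ from Assumption~\ref{ass:mean-consistency}. Hence $I_{\delta,t}(a)=1=I(t)$ exactly, and we may take $c_{\mathrm{sc}}=C_{\mathrm{sc}}=1$.

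For absolute loss, $\psi=\operatorname{sign}$. By Assumption~\ref{ass:loss-primitive}(iv) the value at zero is irrelevant, so $\Psi_t(a)=1-2F_t(a)$. Therefore
\[
I_{\delta,t}(a)=\frac{F_t(a+\delta)-F_t(a-\delta)}{\delta}=\frac1\delta\int_{a-\delta}^{a+\delta}f_t(x)\,dx .
\]
For $|a|\leq M_R$ and $\delta\leq\delta_0$, the integration range lies in $|x|\leq M_R+\delta_0$. The assumed two-sided density bounds then place $I_{\delta,t}(a)$ between $2\inf f_t$ and $2\sup f_t$. Since $I(t)=2f_t(0)$ is bounded above and away from zero (Assumption~\ref{ass:absolute-loss}), dividing gives the comparability constants.

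For Huber loss, $\psi(u)=\max\{-c,\min(u,c)\}$ is $1$-Lipschitz with derivative $\mathbf 1\{|u|<c\}$ almost everywhere. For each fixed $e$, absolute continuity gives $\psi(e-a+\delta)-\psi(e-a-\delta)=\int_{-\delta}^{\delta}\mathbf 1\{|e-a+v|<c\}\,dv$. Taking conditional expectations and applying Fubini yields
\[
I_{\delta,t}(a)=\frac1{2\delta}\int_{-\delta}^{\delta}p_t(a-v)\,dv .
\]
Here $p_t(u)=P(|e-u|<c\mid T=t)$, and the arguments satisfy $|a-v|\leq M_R+\delta_0$. The assumed two-sided bounds on $p_t$ therefore bound this average above and below. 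Remark~\ref{rem:loss-examples} identifies $I(t)=p_t(0)$, which lies in the same bounded positive range, so the ratio is again controlled.

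The only delicate point is justifying the interchange of expectation and the $dv$ integral, together with the a.e.\ derivative representation. For Huber this is immediate from the Lipschitz bound. For absolute loss the discontinuity is handled by Assumption~\ref{ass:loss-primitive}(iv), which makes $\operatorname{sign}(0)$ immaterial. Beyond this, one must check that the constants obtained are uniform in $t$, $|a|\leq M_R$ and $\delta\leq\delta_0$. This holds because every bound used is a uniform sup or inf over the compact range $|x|\leq M_R+\delta_0$.
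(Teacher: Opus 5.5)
Your proof is correct and follows the paper's argument. You derive the same closed forms $I_{\delta,t}(a)=1$, $I_{\delta,t}(a)=\delta^{-1}\int_{a-\delta}^{a+\delta}f_t(x)\,dx$ and $I_{\delta,t}(a)=(2\delta)^{-1}\int_{a-\delta}^{a+\delta}p_t(u)\,du$, compare them with $I(t)$ using its two-sided bounds, and usefully justify the Huber formula via the a.e.\ derivative and Fubini. One small correction: Assumption~\ref{ass:loss-primitive}(iv) only rules out an atom of $e_{ij}$ at zero, so the identity $\Psi_t(a)=1-2F_t(a)$ at the shifted points $a\pm\delta$ actually rests on the assumed density on $|x|\leq M_R+\delta_0$, which your hypotheses supply.
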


\begin{proof}
For squared loss, $\Psi_t(a)=-a$, so $I_{\delta,t}(a)=I(t)=1$.

For absolute loss,
\[
I_{\delta,t}(a)=\frac{F_t(a+\delta)-F_t(a-\delta)}{\delta}
=\frac1\delta\int_{a-\delta}^{a+\delta}f_t(u)\,du.
\]
Since $I(t)=2f_t(0)$, the assumed uniform upper and lower bounds on $f_t$ imply \eqref{eq:shifted-secant-comparability}.

For Huber loss,
\[
I_{\delta,t}(a)=\frac1{2\delta}\int_{a-\delta}^{a+\delta}p_t(u)\,du.
\]
Since $I(t)=p_t(0)$, the uniform upper and lower bounds on $p_t$ again imply \eqref{eq:shifted-secant-comparability}.

A sufficient primitive condition for the Huber lower bound is
\[
f_t(x)\geq c_R>0,\qquad |x|\leq M_R+\delta_0+c,
\]
uniformly in $t$. Indeed, for $|u|\leq M_R+\delta_0$,
\[
p_t(u)=\int_{u-c}^{u+c}f_t(x)\,dx\geq2cc_R.
\]
Since $p_t(u)\leq1$, this gives the required uniform comparability and completes the proof.
\end{proof}

\section{Simulation design and implementation}
\label{sec:simulation}

The simulations examine size, power against three signal shapes, and sensitivity to heavy tails and contamination. We also vary the number of measurements per subject. Every multiplier replicate assigns one Gaussian weight to each subject, preserving the grouping of repeated observations.

The data are generated according to the observation model used in the theory. The null mean is $g_0(t)=0$ on $[0,1]$. Subject level latent variables are independent across subjects, and the measurement errors are independent of the latent trajectories and observation times.

Given a target sampling level $m_0$, we sample $m_i$ uniformly from the integers between $\max(2,m_0-1)$ and $m_0+1$, inclusive. Thus $m_0=2$ gives equal probabilities to sizes two and three. Conditional on $m_i$, the observation times are independent $\mathrm{Unif}(0,1)$ draws and are sorted only for computation.

An independent latent trajectory is then generated for each subject from the 50 component Karhunen--Lo\`eve expansion
\[
Z_i(t)=\sum_{k=1}^{50}\zeta_k\xi_{ik}\phi_k(t),
\]
where
\[
\phi_1(t)=1,\qquad \phi_k(t)=\sqrt2\cos\{(k-1)\pi t\},\quad k=2,\ldots,50.
\]
The scores $\xi_{ik}$ are independent $\mathrm{Unif}[-\sqrt3,\sqrt3]$ variables and therefore have mean zero and variance one. The trajectory eigenvalues satisfy $\zeta_k^2\propto k^{-1.1}$ and are normalized so that $\sum_{k=1}^{50}\zeta_k^2=0.5$. Since the cosine system is orthonormal in $L^2[0,1]$,
\[
E\{\|Z_i\|_{L^2}^2\}=0.5.
\]
The 50 components give a range of within subject covariance directions. This construction follows the broad simulation principle used by \citet{caiyuan2011}.

Measurement errors are generated independently across observations from one of three symmetric unit variance distributions: standard Gaussian, standardized Student $t_5$, or the centered mixture $0.95N(0,1)+0.05N(0,25)$ rescaled to variance one. The observation level error entering the response is $0.60\,\varepsilon_{ij}$. The latent KL process is identical across the three error settings, so changing the error distribution isolates sensitivity to heavy tails and contamination without changing the within subject covariance. Symmetry also aligns the mean, median, and Huber center under the null.

Under the alternative, a normalized departure $h$ is added to $g_0$. Three signal shapes are considered. The diffuse global departure is
\[
h_{\mathrm G}(t)=C_{\mathrm G}\sum_{k=1}^{10}(-1)^{k+1}k^{-1}\sqrt2\cos(k\pi t),
\]
where
\[
C_{\mathrm G}=\left(\sum_{k=1}^{10}k^{-2}\right)^{-1/2}.
\]
This alternative spreads signal across ten low frequency directions while assigning progressively less weight to higher frequencies. The oscillatory departure is
\[
h_{\mathrm O}(t)=\sqrt2\sin(6\pi t).
\]
The localized departure is
\[
h_{\mathrm L}(t)=C_{\mathrm L}\exp\left[-\frac{(t-0.55)^2}{2(0.07)^2}\right],
\]
where $C_{\mathrm L}$ is chosen so that $\|h_{\mathrm L}\|_{L^2}=1$. Hence all three alternatives have unit $L^2$ norm and differ only in their spatial and spectral structure.

For a nonnull cell, the departure amplitude is tied to the minimax separation rate established in Section~\ref{sec:growing-m}. Because the generated cluster sizes remain comparable to the target sampling level $m_0$, their harmonic mean satisfies $m_H\asymp m_0$. We therefore set
\[
a=c_{\mathrm{sig}}\rho_{n,m_0}^*,
\]
where $c_{\mathrm{sig}}>0$ is a dimensionless signal multiplier and, for $r=2$,
\[
\rho_{n,m_0}^*=(nm_0)^{-2r/(4r+1)}\vee n^{-1/2}.
\]
The multiplier $c_{\mathrm{sig}}$ sets the signal size relative to the separation rate, while $h$ determines its shape. Under the null, $a=0$.

After $m_i$, $T_{ij}$, $Z_i$, $\varepsilon_{ij}$, and the mean departure have been generated, the response is
\[
U_{ij}=g_0(T_{ij})+a\,h(T_{ij})+Z_i(T_{ij})+0.60\,\varepsilon_{ij}.
\]
Consequently,
\[
E\{U_{ij}\mid T_{ij}=t\}=g_0(t)+a h(t),
\]
while dependence among repeated measurements from the same subject is induced by the common latent trajectory $Z_i$.

The proposed score test is implemented with squared, Huber, and absolute loss and is denoted RGT-Squared, RGT-Huber, and RGT-Absolute. The Huber clipping constant is $1.345$. For absolute loss, the secant bandwidth is
\[
\delta=\{\log(2q)/(ns_\lambda)\}^{1/3},
\]
truncated to $[0.05,0.75]$. The squared loss curvature is exact, while the Huber secant uses $\delta=0.05$.

Two nonrobust global benchmarks are constructed from the same squared loss B spline smoother and calibrated with the same subject level multiplier bootstrap. L2-BS uses the integrated squared fitted departure, whereas Linf-BS uses the maximum absolute fitted departure standardized pointwise by its estimated standard deviation. Integrated $L^2$ testing is standard in functional mean comparisons; see \citet{staicu2015}. Supremum type procedures are more sensitive to localized alternatives, with the $F_{\max}$ construction of \citet{zhangetal2019} providing a representative example. L2-BS and Linf-BS are one sample B spline analogues adapted to the present irregular design and are not reproductions of those published statistics.

A separate paired experiment includes two published comparators. The first is pLRT-Staicu, the pseudo likelihood ratio test of \citet{staicu2014plrt}. The method compares a parametric null mean with a penalized spline alternative after estimating the within subject covariance and pre-whitening the observations. We apply it to $U_{ij}-g_0(T_{ij})$, so the null mean is identically zero. The alternative is represented by a linear penalized spline with 20 interior knots, matching the sparse data implementation used in \citet{staicu2014plrt}.

To estimate the covariance required by pLRT-Staicu, a flexible pooled spline is first removed from the observations. Off diagonal products of residuals from the same subject are pooled, symmetrized, binned on a $25\times25$ grid for computational efficiency, and smoothed with a bivariate thin plate regression spline using modified GCV with $\gamma=1.5$. A separate one dimensional smooth of the squared residuals estimates the total pointwise variance. The measurement error variance is obtained from the integrated positive part of the difference between the total variance smooth and the diagonal of the estimated latent covariance. Functional principal components with positive eigenvalues are retained until at least $99\%$ of the estimated latent covariance variation is explained. The resulting components define a positive semidefinite covariance matrix for each subject and are used for pre-whitening.

After pre-whitening, the pLRT statistic is the likelihood ratio statistic associated with the linear spline mixed model representation, with both the fixed polynomial component and the spline variance component restricted under the null. Its finite sample null distribution is simulated with the spectral algorithm of \citet{crainiceanuruppert2004}, implemented through \texttt{RLRsim::LRTSim}; see \citet{scheipl2008}. Each Monte Carlo data set uses 999 simulated null statistics. Because covariance smoothing and pLRT computation are substantially more expensive than the subject level score procedures, this published comparator is evaluated in a focused paired experiment rather than throughout the full simulation grid.

The second published comparator is the simultaneous confidence band procedure of \citet{louwu2026}, denoted LW-SCB. We follow their local linear construction with Epanechnikov kernel, subject weighting by $1/m_i$, jackknife bias correction based on bandwidths $b$ and $\sqrt{2}b$, and one Gaussian multiplier per subject. The multiplier process uses the corresponding higher order kernel $2K(u)-K(u/\sqrt{2})/\sqrt{2}$. For each combination of $m_0\in\{3,40\}$ and error distribution, the production bandwidth is the median of 100 independently selected leave one subject out cross-validation bandwidths over the published search range $0<b\leq0.30$. The resulting median is $b=0.30$ in all six design settings. The selected bandwidth is then held fixed across the null and the three alternative shapes within that design setting. LW-SCB is evaluated on the same generated samples, using 399 multiplier draws per replication. Its reported simulation rejection rates use the observed statistic exceeding the empirical $0.95$ bootstrap quantile. With finitely many draws, this rule can differ from thresholding the plus-one $p$-value at $0.05$.

For the five spline score procedures, cubic B splines are used with $r=2$. The smoothing scale follows the theoretical sparse to dense rule,
\[
s_\lambda=\max\{(nm_H)^{-2/(4r+1)},n^{-1/(2r)}\},
\]
where $m_H$ is the realized harmonic mean of the cluster sizes, and $\lambda=s_\lambda^{2r}$. The basis dimension is
\[
q=\left\lceil1.5s_\lambda^{-2r/(2r-1)}\right\rceil,
\]
truncated to $[12,35]$. Numerical quadrature uses 401 equally spaced points. These are finite sample tuning choices: the fixed basis cap, constant factor $1.5$, and positive secant bandwidth floor do not satisfy the diverging-dimension and vanishing-bandwidth conditions of Corollary~\ref{cor:fully-feasible-basis} as $n\to\infty$. The simulations assess their practical performance.

\subsection{Simulation settings}

The main size experiment uses $n\in\{100,150,200\}$, target within subject sample sizes $m_0\in\{3,10,40\}$, and all three measurement error distributions. Each of the 27 null settings uses 1000 Monte Carlo replications. The nominal level is $0.05$. For a cell with $R$ replications and rejection proportion $\widehat p$, its Monte Carlo standard error is $\{\widehat p(1-\widehat p)/R\}^{1/2}$. At rejection probability $0.05$, this is about $0.0069$ for $R=1000$; the largest possible standard errors are $0.0224$ for $R=500$ and $0.0289$ for $R=300$. The tables summarize cells and do not report uncertainty for paired differences.

The balanced core power experiment uses $n\in\{100,150,200\}$ and $m_0\in\{3,40\}$ and crosses the three error mechanisms with the diffuse, localized, and oscillatory alternatives. The signal grid depends on the alternative shape. For the diffuse alternative, $c_{\mathrm{sig}}\in\{0.5,1,1.5,2\}$; for the localized alternative, $c_{\mathrm{sig}}\in\{1.5,2,2.5,3.5\}$; and for the oscillatory alternative, $c_{\mathrm{sig}}\in\{3,4,5,6\}$. Each of the 216 cells uses 500 Monte Carlo replications. These shape specific grids cover nondegenerate power ranges despite the stronger smoothing attenuation of the oscillatory alternative. Comparisons among procedures are made within identical cells; the values of $c_{\mathrm{sig}}$ are not intended to compare intrinsic difficulty across alternative shapes.

Because the amplitude is scaled as $a=c_{\mathrm{sig}}\rho_{n,m_0}^*$, the absolute departure decreases as $n$ increases. The sample size dimension therefore evaluates performance at comparable multiples of the theoretical separation boundary rather than under a fixed nonvanishing effect. Approximate stability of power across $n$ is consequently compatible with the predicted rate, although finite sample changes may remain because the alternative geometry and spline approximation interact with $n$ and the sampling density.

For compact presentation, we report a robustness focused summary of the sparse global configurations together with separate shape specific challenges. The summary uses $m_0=3$, the diffuse alternative, $c_{\mathrm{sig}}\in\{1.5,2\}$, and $n\in\{100,150,200\}$, and combines Gaussian errors, standardized $t_5$ errors, the $5\%$ observation level contamination model, and two subject level functional contamination regimes with amplitudes $A=5$ and $A=7.5$. It therefore contains 30 power configurations drawn from the broader simulation program. The full balanced core design is described above, while the localized and oscillatory alternatives are retained separately so that settings favoring specialized competitors remain visible.

A further experiment isolates the effect of within trajectory sampling. It uses $n=200$, Gaussian errors, the diffuse alternative, $c_{\mathrm{sig}}=1.5$, and target values $m_0\in\{2,3,5,10,20,40\}$. Each setting uses 500 replications. The experiment records the realized harmonic mean $m_H$, the selected $s_\lambda$, and the empirical rejection probability, allowing the numerical transition to be compared directly with the covariance decomposition in Section~\ref{sec:growing-m}.

The named method comparison is paired within each generated data set. Its size component fixes $n=150$, uses $m_0\in\{3,40\}$, and includes Gaussian, standardized $t_5$, and contaminated errors. Its power component uses the same sampling levels and error laws and considers all three alternative shapes at representative nonsaturated signals: $c_{\mathrm{sig}}=1.5$ for the diffuse alternative, $c_{\mathrm{sig}}=2.5$ for the localized alternative, and $c_{\mathrm{sig}}=5$ for the oscillatory alternative. Each cell uses 300 Monte Carlo replications. The comparison includes RGT-Squared, RGT-Huber, RGT-Absolute, L2-BS, Linf-BS, pLRT-Staicu, and LW-SCB on exactly the same simulated samples.

Subject level functional contamination is qualitatively different from isolated measurement error contamination. Independently across subjects, let $B_i\sim\mathrm{Bernoulli}(0.10)$ and let $S_i$ be a Rademacher variable. The response receives the additional perturbation
\[
C_i(t)=B_iS_iAq(t),
\]
where
\[
q(t)=\{\sqrt2\cos(11\pi t)+\sqrt2\cos(13\pi t)\}/\sqrt2
\]
and $\|q\|_{L^2}=1$. Since $E(S_i)=0$, this contamination leaves the population mean unchanged under both the null and the alternative, while perturbing entire trajectories in directions outside the ten cosine frequencies used by $h_{\mathrm G}$. The focused pLRT stress experiment fixes $n=150$, $m_0=3$, Gaussian measurement errors, and $c_{\mathrm{sig}}=2$, with $A\in\{2.5,5,7.5\}$. For each value of $A$, one null and one alternative setting are evaluated using 500 replications. A cross sample size extension uses $A\in\{5,7.5\}$, $n\in\{100,150,200\}$, and $c_{\mathrm{sig}}\in\{1.5,2\}$ for the five spline score procedures. These cells form the subject contamination component of the robustness focused summary.

Each Monte Carlo replication of the spline score procedures uses 399 Gaussian multiplier draws. The production code stores scenarios separately, records numerical stabilization and failures, and can resume without recomputing completed settings. No ridge stabilization was required in the balanced production runs, and no pLRT fitting failures occurred in the focused named method comparison.

\subsection{Size and power comparisons}

Across the 27 main null settings, all 1000 replications were completed for every procedure and the numerical stabilization frequency was zero. Table~\ref{tab:main-size} summarizes the empirical rejection probabilities over $n\in\{100,150,200\}$, $m_0\in\{3,10,40\}$, and the three error distributions. All five spline score procedures remain close to the nominal level $0.05$. The largest rejection probability among the individual settings is $0.070$ for Linf-BS. RGT-Huber ranges from $0.037$ to $0.065$, while RGT-Squared ranges from $0.036$ to $0.065$.

\begin{table}[t]
\centering
\caption{Empirical size over the 27 main null settings. The first numeric column is the average rejection probability, and the last two columns give the minimum and maximum across settings.}
\label{tab:main-size}
\small
\begin{tabular}{lccc}
\hline
Procedure & Average & Minimum & Maximum\\
\hline
L2-BS         & 0.050 & 0.038 & 0.065\\
Linf-BS       & 0.053 & 0.043 & 0.070\\
RGT-Absolute  & 0.051 & 0.041 & 0.066\\
RGT-Huber     & 0.050 & 0.037 & 0.065\\
RGT-Squared   & 0.049 & 0.036 & 0.065\\
\hline
\end{tabular}
\end{table}

Table~\ref{tab:primary-global} summarizes the 30 sparse global configurations by averaging power over the six combinations of $n\in\{100,150,200\}$ and $c_{\mathrm{sig}}\in\{1.5,2\}$ within each data regime. The last two columns count the configurations in which RGT-Huber and, more generally, one of the three RGT procedures achieves the highest empirical power among the five spline score procedures. An RGT procedure is best in all 30 configurations: RGT-Huber leads in 20 and RGT-Squared in the remaining 10. Under Gaussian and standardized $t_5$ errors, RGT-Squared most often has the highest power, whereas RGT-Huber leads every observation level and subject level contamination cell. These counts concern only the five spline score procedures. The two published comparators, pLRT-Staicu and LW-SCB, are assessed in the focused paired experiment and are not included in these counts. The counts describe empirical rankings, not statistically established differences in power.

\begin{table}[t]
\centering
\caption{Robustness focused sparse global summary. Power columns average the six cells formed by $n\in\{100,150,200\}$ and $c_{\mathrm{sig}}\in\{1.5,2\}$ within each regime. ``Huber wins'' and ``RGT wins'' are counts out of six cells.}
\label{tab:primary-global}
\small
\resizebox{\textwidth}{!}{%
\begin{tabular}{lccccccc}
\hline
Regime & L2-BS & Linf-BS & RGT-Abs. & RGT-Huber & RGT-Sq. & Huber wins & RGT wins\\
\hline
Gaussian                         & 0.556 & 0.549 & 0.396 & 0.551 & 0.573 & 1 & 6\\
Standardized $t_5$               & 0.545 & 0.545 & 0.418 & 0.559 & 0.571 & 1 & 6\\
Observation contamination        & 0.558 & 0.558 & 0.445 & 0.614 & 0.582 & 6 & 6\\
Subject contamination $A=5$      & 0.170 & 0.175 & 0.356 & 0.440 & 0.170 & 6 & 6\\
Subject contamination $A=7.5$    & 0.104 & 0.113 & 0.330 & 0.420 & 0.105 & 6 & 6\\
\hline
\end{tabular}%
}
\end{table}

The subject contamination results are particularly stable across sample size. Over the 12 cells formed by $A\in\{5,7.5\}$, $n\in\{100,150,200\}$, and $c_{\mathrm{sig}}\in\{1.5,2\}$, RGT-Huber has the highest empirical power in every cell. Its advantage over the next strongest spline score procedure ranges from $0.046$ to $0.116$, with an average margin of $0.087$. The improvement therefore persists across both contamination amplitudes, both signal levels, and all three sample sizes rather than being driven by a single stress configuration.

Table~\ref{tab:shape-challenges} reports sparse sampling power for the localized and oscillatory alternatives, averaged over $n\in\{100,150,200\}$ and the three measurement error distributions. The localized alternative consistently favors Linf-BS, as expected from a standardized supremum statistic designed to emphasize spatially concentrated departures. RGT-Huber and RGT-Squared remain competitive as the localized signal becomes stronger, but neither matches Linf-BS on this challenge. For the oscillatory alternative, RGT-Huber or RGT-Squared has the highest average power at every reported signal level, although the differences from L2-BS and Linf-BS are generally modest. RGT-Absolute is less powerful in these regular and moderately heavy tailed settings, illustrating the efficiency cost associated with its more robust nonsmooth score.

\begin{table}[t]
\centering
\caption{Sparse sampling challenge power for $m_0=3$, averaged over $n\in\{100,150,200\}$ and the three measurement error distributions.}
\label{tab:shape-challenges}
\small
\begin{tabular}{llccccc}
\hline
Shape & $c_{\mathrm{sig}}$ & L2-BS & Linf-BS & RGT-Abs. & RGT-Huber & RGT-Sq.\\
\hline
Localized  & 1.5 & 0.190 & 0.246 & 0.161 & 0.215 & 0.208\\
           & 2.0 & 0.312 & 0.410 & 0.268 & 0.355 & 0.358\\
           & 2.5 & 0.446 & 0.582 & 0.382 & 0.521 & 0.522\\
           & 3.5 & 0.742 & 0.867 & 0.639 & 0.822 & 0.845\\
Oscillatory & 3  & 0.181 & 0.181 & 0.143 & 0.194 & 0.186\\
            & 4  & 0.279 & 0.278 & 0.212 & 0.294 & 0.290\\
            & 5  & 0.399 & 0.408 & 0.313 & 0.420 & 0.421\\
            & 6  & 0.513 & 0.514 & 0.413 & 0.551 & 0.544\\
\hline
\end{tabular}
\end{table}

The balanced sample size extension further illustrates the role of the theoretical signal scaling. Since the simulated amplitude is $a=c_{\mathrm{sig}}\rho_{n,m_0}^*$, it decreases with $n$ at fixed $m_0$ and $c_{\mathrm{sig}}$. Power therefore need not increase monotonically with sample size at a fixed signal multiplier. For the diffuse alternatives, the rejection probabilities remain broadly comparable across $n$, which is consistent with testing at fixed multiples of the separation boundary. The localized alternatives exhibit somewhat stronger finite sample gains as $n$ increases, while several cells with $m_0=40$ approach saturation at the larger signal multipliers.

Increasing within subject sampling produces a substantial gain in power. In the dedicated sampling density experiment, the average rejection probability across the five spline score procedures increases from $0.367$ at target $m_0=2$ to $0.410$, $0.594$, $0.776$, $0.912$, and $0.953$ at $m_0=3,5,10,20,40$, respectively. The corresponding realized harmonic means are $2.40$, $2.77$, $4.87$, $9.93$, $19.97$, and $39.98$. For $r=2$ and $n=200$, the asymptotic transition scale $n^{1/(4r)}$ is about $1.94$.
Thus even the sparsest setting lies near or just beyond the asymptotic transition scale, and the experiment primarily illustrates the continued finite sample benefit of increasing repeated measurements rather than providing a literal comparison of the two asymptotic regimes.

The focused paired comparison is summarized in Table~\ref{tab:plrt-power}, with each entry averaged over the three measurement error distributions. The two published procedures display distinct strengths. LW-SCB is especially effective for the localized alternative, reaching power $0.872$ at $m_0=3$, compared with $0.580$ for Linf-BS, $0.646$ for pLRT-Staicu, and $0.526$ for RGT-Huber. pLRT-Staicu is strongest for the selected oscillatory departure and reaches power one already at $m_0=3$. For the sparse diffuse departure, LW-SCB is less powerful, with power $0.349$, whereas pLRT-Staicu, RGT-Squared, and RGT-Huber attain $0.529$, $0.427$, and $0.423$, respectively. At $m_0=40$, all procedures have high power and several are essentially saturated. These results illustrate the expected geometry: a studentized supremum procedure can be highly sensitive to localized structure, whereas a quadratic score aggregates distributed departures.

\begin{table}[t]
\centering
\caption{Paired named method power averaged over the three measurement error distributions. Signals use $c_{\mathrm{sig}}=1.5$ for the diffuse alternative, $2.5$ for the localized alternative, and $5$ for the oscillatory alternative.}
\label{tab:plrt-power}
\small
\resizebox{\textwidth}{!}{%
\begin{tabular}{llccccccc}
\hline
Shape & $m_0$ & L2-BS & Linf-BS & LW-SCB & pLRT-Staicu & RGT-Abs. & RGT-Huber & RGT-Sq.\\
\hline
Diffuse     & 3  & 0.406 & 0.412 & 0.349 & 0.529 & 0.321 & 0.423 & 0.427\\
Diffuse     & 40 & 0.951 & 0.969 & 0.959 & 0.999 & 0.909 & 0.958 & 0.964\\
Localized   & 3  & 0.452 & 0.580 & 0.872 & 0.646 & 0.391 & 0.526 & 0.520\\
Localized   & 40 & 0.930 & 0.974 & 1.000 & 1.000 & 0.950 & 0.974 & 0.978\\
Oscillatory & 3  & 0.420 & 0.424 & 0.789 & 1.000 & 0.308 & 0.442 & 0.447\\
Oscillatory & 40 & 0.971 & 0.988 & 1.000 & 1.000 & 0.953 & 0.982 & 0.981\\
\hline
\end{tabular}%
}
\end{table}

The six LW-SCB null rejection probabilities range from $0.037$ to $0.073$ and average $0.054$. The two largest values, $0.073$ at $m_0=3$ and $0.070$ at $m_0=40$, occur under standardized $t_5$ errors; under Gaussian errors both values are $0.053$, and under the point contamination model both are $0.037$. Thus LW-SCB is close to nominal overall, with mild liberalness under the heavy tailed setting. The corresponding pLRT null rejection probabilities range from $0.057$ to $0.093$ across the six paired size settings, whereas those of the five spline score procedures range from $0.030$ to $0.067$. The largest pLRT rejection probability, $0.093$, occurs under Gaussian errors with $m_0=40$. Repeating this null setting with a finer $50\times50$ covariance binning grid gives a rejection probability of $0.087$ over 150 replications. Although the latter estimate has appreciable Monte Carlo uncertainty, it indicates that the moderate liberalness is not attributable solely to the original $25\times25$ binning approximation. The average number of retained covariance principal components is approximately $2.1$ for $m_0=3$ and $5.0$ for $m_0=40$. Neither published comparator has fitting failures in the production comparison, and their power values should be interpreted together with the corresponding null calibration.

Table~\ref{tab:functional-contamination} gives the focused comparison at $n=150$ that includes pLRT-Staicu. Under mild subject contamination with $A=2.5$, pLRT-Staicu and RGT-Huber have empirical powers $0.554$ and $0.564$, respectively. As the contamination amplitude increases, pLRT power decreases to $0.276$ at $A=5$ and $0.162$ at $A=7.5$, whereas RGT-Huber retains powers $0.512$ and $0.474$. RGT-Absolute is also comparatively resistant, with powers $0.420$ and $0.358$ in the two stronger contamination settings. By contrast, RGT-Squared, L2-BS, and Linf-BS lose most of their power when $A=7.5$. The corresponding RGT-Huber null rejection probabilities are $0.038$, $0.056$, and $0.048$, showing that its power advantage is not explained by inflated type I error.

\begin{table}[t]
\centering
\caption{Subject level functional contamination at $n=150$ and $m_0=3$. Each entry gives empirical size/power at $c_{\mathrm{sig}}=2$, based on 500 null and 500 alternative replications for each contamination amplitude.}
\label{tab:functional-contamination}
\small
\resizebox{\textwidth}{!}{%
\begin{tabular}{lcccccc}
\hline
$A$ & L2-BS & Linf-BS & pLRT-Staicu & RGT-Abs. & RGT-Huber & RGT-Sq.\\
\hline
2.5 & 0.036 / 0.410 & 0.046 / 0.392 & 0.072 / 0.554 & 0.040 / 0.432 & 0.038 / 0.564 & 0.036 / 0.432\\
5.0 & 0.056 / 0.180 & 0.052 / 0.200 & 0.078 / 0.276 & 0.050 / 0.420 & 0.056 / 0.512 & 0.046 / 0.178\\
7.5 & 0.042 / 0.114 & 0.040 / 0.126 & 0.074 / 0.162 & 0.044 / 0.358 & 0.048 / 0.474 & 0.038 / 0.116\\
\hline
\end{tabular}%
}
\end{table}

The comparisons show different strengths across signal shapes and error laws. Huber scores retain power under the contamination settings, while LW-SCB is strongest for the sparse localized signal and pLRT-Staicu for the selected oscillatory signal. These results support examining robustness and signal shape together when interpreting a method's power.

\section{Application to longitudinal CD4 counts}
\label{sec:cd4-application}

We illustrate the procedure using the \texttt{aids} data distributed with the \texttt{joineR} package described by \citet{philipson2018joiner}. The data arise from a randomized clinical trial comparing didanosine (ddI) with zalcitabine (ddC) in HIV infected patients who had failed or were intolerant of zidovudine therapy; see \citet{goldman1996}. The full data contain 1405 longitudinal CD4 measurements from 467 patients, with scheduled measurements at study entry and at 2, 6, 12, and 18 months. We use only the longitudinal CD4 component.

For each patient, the earliest recorded CD4 measurement is taken as baseline. If $C_{ij}$ denotes a subsequent CD4 count at month $t_{ij}$ and $C_{i0}$ denotes the baseline count at $t_{i0}$, we analyze
\[
U_{ij}=\log(1+C_{ij})-\log(1+C_{i0})
\]
at rescaled time
\[
\tau_{ij}=\frac{t_{ij}-t_{i0}}{18}.
\]
Because the same baseline measurement enters every post-baseline response from a subject, baseline subtraction induces additional within subject dependence. This dependence is retained by the subject level score construction and multiplier bootstrap. Time is rescaled to $[0,1]$. Within each analysis, the transformed responses are also divided by a common robust MAD scale so that the Huber clipping constant is dimensionless. Multiplication by a positive scale preserves the zero-reference null hypothesis, although the estimated scale is an additional nuisance quantity not covered by the fixed-score theory here.

The primary hypothesis is
\[
H_0:g(\tau)=0,\qquad 0<\tau\leq1.
\]
For squared loss, this is directly a mean function hypothesis. For Huber and absolute loss, it is a robust location hypothesis and coincides with the mean function hypothesis under the consistency condition in Assumption~\ref{ass:mean-consistency}. The pooled analysis is primary, while separate ddI and ddC analyses provide treatment arm sensitivity checks.

Only patients with at least one post-baseline measurement contribute to the analysis. Table~\ref{tab:cd4-sample} shows that the resulting design is very sparse. The pooled sample contains 406 patients and 938 post-baseline observations, with one to four measurements per patient and harmonic mean $m_H=1.894$. The two treatment arms have similar sampling frequencies. The five spline score procedures use the same tuning rule as in the simulation study, without application specific adjustment, and obtain $q=12$ in all three analyses. LW-SCB uses its own leave one subject out cross-validation criterion. Because the trial has only four scheduled post-baseline visit times, we allow a wider bandwidth sensitivity grid in the real data analysis. The selected bandwidths are $0.80$, $2.00$, and $0.68$ for the pooled, ddI, and ddC analyses, respectively.

\begin{table}[t]
\centering
\caption{CD4 application sample sizes and sampling frequencies after baseline subtraction.}
\label{tab:cd4-sample}
\small
\begin{tabular}{lrrrrrr}
\hline
Analysis & Subjects & Observations & Min.\ $m_i$ & Median $m_i$ & Max.\ $m_i$ & $m_H$\\
\hline
Pooled & 406 & 938 & 1 & 2 & 4 & 1.894\\
ddI    & 198 & 458 & 1 & 2 & 4 & 1.884\\
ddC    & 208 & 480 & 1 & 2 & 4 & 1.904\\
\hline
\end{tabular}
\end{table}

Each test uses 1999 subject level Gaussian multiplier draws. Table~\ref{tab:cd4-results} reports the resulting Monte Carlo $p$-values. For an observed statistic $T_{\mathrm{obs}}$ and bootstrap statistics $T_1^*,\ldots,T_B^*$, the Monte Carlo $p$-value is computed as
\[
\widehat p
=
\frac{1+\sum_{b=1}^{B} I(T_b^*\geq T_{\mathrm{obs}})}
{B+1},
\qquad B=1999.
\]
All five spline score procedures and LW-SCB reject the zero trajectory in the pooled sample and in both treatment arms. In every reported analysis, none of the 1999 multiplier statistics exceeds the corresponding observed statistic. Hence $\widehat p=1/2000=0.0005$, the smallest value attainable under this Monte Carlo convention. The equality of the reported $p$-values therefore reflects the finite Monte Carlo resolution and should not be interpreted as indicating identical strengths of evidence across the procedures. For LW-SCB, the widened bandwidth sensitivity gives statistics $0.396$, $0.309$, and $0.520$ for pooled, ddI, and ddC, compared with corresponding $0.95$ bootstrap critical values $0.116$, $0.147$, and $0.169$. The ddI cross-validation criterion selects the upper endpoint $b=2.00$, but the observed statistic remains more than twice its critical value. The empirical conclusion is therefore stable across the quadratic, robust, spline supremum, and published local linear supremum constructions.

\begin{table}[t]
\centering
\caption{CD4 application Monte Carlo $p$-values based on 1999 subject level multiplier draws.}
\label{tab:cd4-results}
\small
\begin{tabular}{lcccccc}
\hline
Analysis & L2-BS & Linf-BS & LW-SCB & RGT-Absolute & RGT-Huber & RGT-Squared\\
\hline
Pooled & 0.0005 & 0.0005 & 0.0005 & 0.0005 & 0.0005 & 0.0005\\
ddI    & 0.0005 & 0.0005 & 0.0005 & 0.0005 & 0.0005 & 0.0005\\
ddC    & 0.0005 & 0.0005 & 0.0005 & 0.0005 & 0.0005 & 0.0005\\
\hline
\end{tabular}
\end{table}

Figure~\ref{fig:cd4} shows the corresponding scatterplots and descriptive lowess smooths on the original log-change scale. The pooled and ddI smooths show a progressive decline from baseline. The ddC smooth declines through the intermediate visits and then partially recovers by month 18, while remaining below zero. These smooths are descriptive only and are not used in the construction of the test statistics or bootstrap critical values.

The trial uses a discrete visit schedule, whereas the asymptotic theory assumes random design points with a density on $[0,1]$. The CD4 analysis should therefore be interpreted as an illustration of the finite sample subject level procedure under severe sparsity, within subject dependence, and unbalanced follow-up, rather than as a direct verification of the continuous-design asymptotic assumptions. The analyses describe patients with observed follow-up and do not establish a treatment contrast or address potentially informative dropout.

\begin{figure}[t]
\centering
\includegraphics[width=0.96\textwidth]{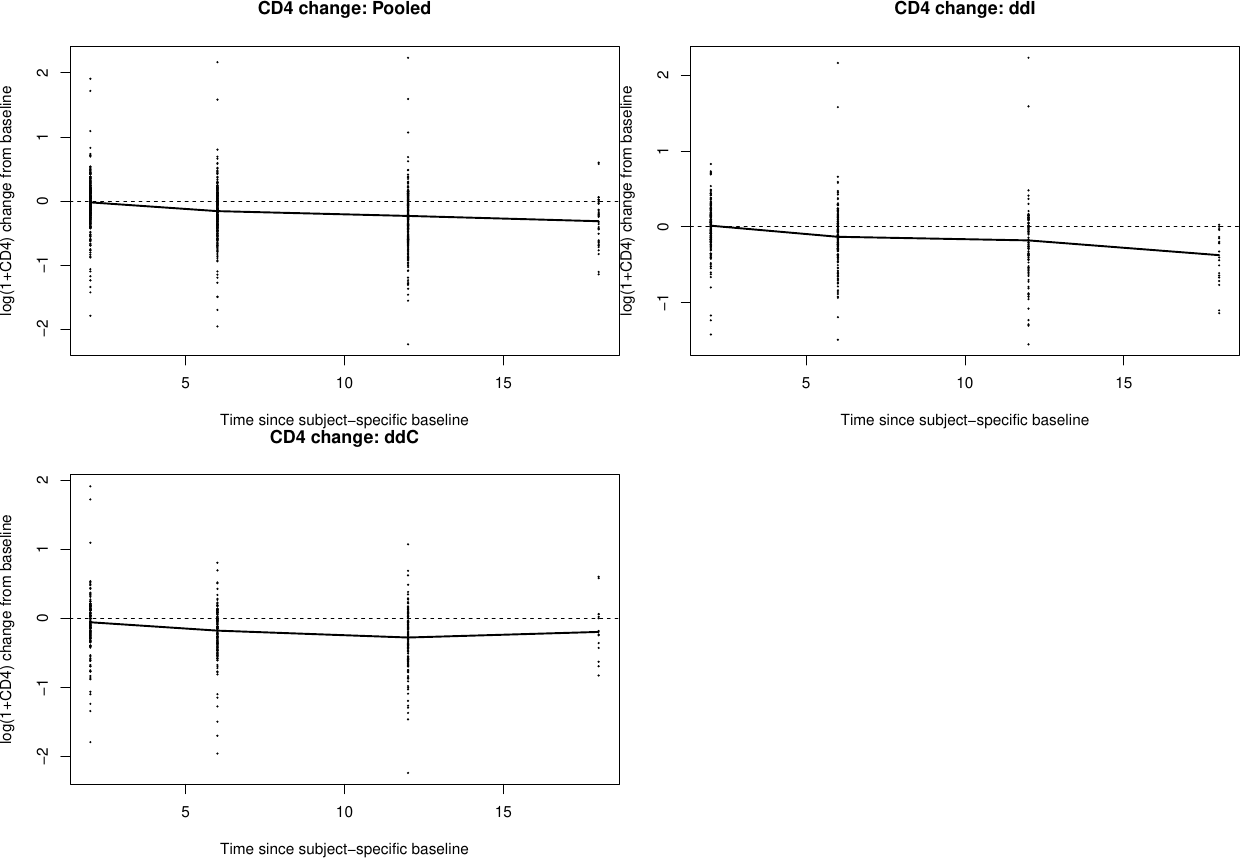}
\caption{Post-baseline change in $\log(1+\mathrm{CD4})$ for the pooled sample and the two randomized treatment arms. The dashed horizontal line denotes zero and the solid curve is a descriptive lowess smooth.}
\label{fig:cd4}
\end{figure}

The fixed pLRT-Staicu implementation is omitted from this application because its 20 interior knots produce a rank deficient design on the four scheduled post-baseline visits. An implementation adapted to this grid could be studied separately. This computational issue concerns the chosen pLRT specification, not pseudo likelihood ratio testing in general.

\section{Discussion}\label{sec:discussion}

The proposed method provides a common framework for global mean testing with squared, Huber, and absolute losses. Its main practical
benefit is the combination of flexible loss functions with subject
level calibration. Scores are averaged within each subject, retaining
both measurement variation and dependence among repeated observations.
The resulting test does not require a separate model for the full
within subject covariance function. Moreover, the score statistic is
evaluated directly under the null, so its computation does not require
fitting the penalized estimator under the alternative. Interpreting
the robust versions as tests of the population mean still requires
the stated score centering condition.

A further benefit is that the same multiplier procedure applies
across sparse, transition, and dense sampling. The user need not
choose between Gaussian and weighted quadratic reference
distributions according to the sampling regime. Instead, subject
multipliers reproduce the empirical covariance structure relevant
to the quadratic statistic. The theory explains why this matters.
At the optimal smoothing scale, the testing transition occurs at
$m_H\asymp n^{1/(4r)}$. Below this boundary, measurement variability
determines the nonparametric separation rate. Above it, the rate
reaches $n^{-1/2}$, with a matching lower bound supplied by the
Gaussian random intercept submodel. The feasible spline procedure
preserves calibration and attains the stated consistency separation
rate under the corresponding approximation, curvature, and power
conditions.

The numerical results illustrate the value of choosing the loss
according to the error distribution. In the reported settings,
Huber scores offer a useful balance between efficiency under regular
errors and resistance to contamination. The quadratic statistic
accumulates evidence across the domain, making it a useful choice
when departures are spread over the mean curve. The comparisons
also identify settings in which other procedures perform better.
LW-SCB is effective for localized departures, while pLRT-Staicu
performs well for the selected oscillatory departure. These findings
support the proposed test as a flexible option rather than a
uniformly dominant procedure. Small differences in empirical power
should also be interpreted alongside Monte Carlo uncertainty.

The CD4 application illustrates how the procedure can be used with
irregular repeated measurements and different loss functions.
The methods considered detect departures from a zero post-baseline
trajectory. Agreement across losses provides a useful sensitivity
comparison, although it does not establish a treatment effect.
The discrete visit schedule, selected follow-up sample, and
estimated response scale are not fully covered by the present
theory. Similarly, the fixed computational caps used in the
simulations are practical choices rather than tuning sequences
satisfying the asymptotic conditions indefinitely. The
sampling-density experiment demonstrates finite sample gains from
additional measurements, but its settings do not directly establish
a crossing of the theoretical transition boundary.

Several extensions would broaden the method's applicability.
Adaptive selection of the smoothing parameter could reduce
dependence on a specified smoothness order. Combining quadratic
and supremum statistics could improve sensitivity to both diffuse
and localized departures. Extending the theory to estimated
response scales, informative observation schedules, and dependent
subjects would also bring the framework closer to the conditions
encountered in longitudinal applications.

\end{document}